\documentclass[11pt,letterpaper]{article}
\usepackage{verbatim}
\usepackage{titling}
\usepackage{cancel}
\usepackage{enumitem} 
\setlist{noitemsep,topsep=0pt,parsep=0pt} 
\usepackage{subfig}
\usepackage{mathtools} 
\usepackage{multirow}
\usepackage[all]{xy}
\usepackage{tikz}
\usetikzlibrary{arrows,backgrounds,calc,fit,decorations.pathreplacing,decorations.markings,shapes.geometric}

\tikzset{every fit/.append style=text badly centered}

\usepackage{notation}
\numberwithin{equation}{section}
\usepackage[bookmarks=true,hypertexnames=false]{hyperref}
\hypersetup{colorlinks=true, citecolor=blue, linkcolor=red, urlcolor=blue}
\makeatletter

\newcommand{\Rmnum}[1]{\expandafter\@slowromancap\romannumeral #1@}
\makeatother

\newcommand{\ii}{\mathfrak{i}}

\newcommand{\Holant}{\operatorname{Holant}}

\newcommand{\holant}[2]{\ensuremath{\Holant(#1\mid #2)}}

\newcommand{\CSP}{\operatorname{\#CSP}}

\newenvironment{remark}{\medskip{\bfseries \noindent Remark:}}{\par\medskip}{\par\medskip}


\tikzstyle{internal} = [draw, fill, shape=circle]
\tikzstyle{external} = [shape=circle]
\tikzstyle{square}   = [draw, fill, rectangle]
\tikzstyle{triangle} = [draw, fill, regular polygon, regular polygon sides=3, inner sep=3pt]
\tikzstyle{pentagon} = [draw, fill, regular polygon, regular polygon sides=5, inner sep=2pt, minimum size=14pt]

\begin{document}

\title{{\bf From Holant 
to Quantum Entanglement 
and Back}}

\vspace{0.3in}

\author{Jin-Yi Cai\thanks{Department of Computer Sciences, University of Wisconsin-Madison. Supported by NSF  CCF-1714275.
 } \\ {\tt jyc@cs.wisc.edu}
\and Zhiguo Fu\thanks{School of Information Science and Technology and KLAS, Northeast Normal University, China.
Supported by NSFC-61872076.
}\\ \tt
fuzg432@nenu.edu.cn
\and
Shuai Shao$^\ast$ \\ \tt sh@cs.wisc.edu}

\date{}
\maketitle
\thispagestyle{empty}
\bibliographystyle{plain}

\begin{abstract}
Holant problems are intimately connected with quantum  theory as tensor networks. We first use  techniques from Holant theory to derive 
new and improved results for quantum entanglement theory. We discover two particular entangled states   $|{\Psi_6}\rangle$ of 6 qubits  and $|{\Psi_8}\rangle$ of 8 qubits respectively, that have 
extraordinary  and unique closure properties in terms of the Bell property.
Then we use entanglement properties of constraint functions  to derive a new complexity dichotomy for all real-valued Holant problems containing an odd-arity signature.
The signatures need not be symmetric, and no auxiliary signatures are assumed.
 \end{abstract}

\newpage
\setcounter{page}{1}
\section{Introduction}
\subsection{Holant problems}
Holant problems are a broad class of Sum-of-Products. 
It generalizes other frameworks such as counting constraint satisfaction problems (\#CSP) and counting graph homomorphisms (\#GH). 
Both have been well studied 
and full complexity dichotomies have been established 
\cite{bulatov-ccsp, dyer-richerby, bulatov2012csp, cai-chen-lu-nonnegative-csp, cai-chen-csp, dyer-green-gh, bulatov2005gh, goldberg2010gh, cai-chen-lu-gh}.
On the other hand, the understanding of Holant problems, even  restricted to
the Boolean domain, is still limited. In this paper, we focus on Holant problems defined on the Boolean domain.

Holant problems are parameterized by a set of constraint functions, also called signatures. A signature (over the Boolean domain) of arity $n>0$ is a map $\mathbb{Z}_2^{n} \rightarrow \mathbb{C}$. Let $\mathcal{F}$ be any fixed set of signatures. 
A signature grid
$\Omega=(G, \pi)$ over $\mathcal{F}$
 is a tuple, where $G = (V,E)$
is a graph without isolated vertices,
 $\pi$ labels each $v\in V$ with a signature
$f_v\in\mathcal{F}$ of arity ${\operatorname{deg}(v)}$,
and labels the incident edges
$E(v)$ at $v$ with input variables of $f_v$.
We consider all 0-1 edge assignments $\sigma$, and
each gives an evaluation
$\prod_{v\in V}f_v(\sigma|_{E(v)})$, where $\sigma|_{E(v)}$
denotes the restriction of $\sigma$ to $E(v)$.

\begin{definition}[Holant problems]
The input to the problem {\rm Holant}$(\mathcal{F})$
is a signature grid $\Omega=(G, \pi)$ over $\mathcal{F}$.
The output is  the partition function
\begin{equation*}
\Holant_{\Omega} =\sum\limits_{\sigma:E(G)\rightarrow\{0, 1\}}\prod_{v\in V(G)}f_v(\sigma|_{E_{(v)}}).    
\end{equation*}
Bipartite Holant problems $\holant{\mathcal{F}}{\mathcal{G}}$
are {Holant} problems
 over bipartite graphs $H = (U,V,E)$,
where each vertex in $U$ or $V$ is labeled by a signature in $\mathcal{F}$ or $\mathcal{G}$ respectively.
\end{definition}

Weighted \#CSP  is a special class of  Holant problems.
So are all weighted \#GH.
Other problems  expressible as Holant problems include counting matchings and perfect matchings~\cite{cai-artem}, counting Eulerian orientations~\cite{cai-fu-shao-eo}, computing the partition functions of six-vertex models~\cite{cai-fu-xia-six} and eight-vertex models~\cite{cai-fu-eight}. 
It is proved that counting perfect matchings cannot be expressed by \#GH \cite{Freedman-et-al,cai-artem}. 
Thus, Holant problems are provably more expressive. 

Progress has been made in  the complexity classification of Holant problems.
When all signatures are restricted to be symmetric, a full dichotomy is proved \cite{cai-guo-tyson-vanishing}. 
When asymmetric signatures are allowed, some dichotomies are proved for special families of Holant problems 
by assuming that certain auxiliary  signatures are available, e.g.,  $\Holant^\ast$, $\Holant^+$ and $\Holant^c$~\cite{cai-lu-xia-holant*, Backens-holant-plus, CLX-HOLANTC, Backens-Holant-c}.
Without assuming auxiliary signatures a Holant dichotomy is established only  for non-negative real valued signatures~\cite{wang-lin}. 
\subsection{Quantum entanglement theory}
Holant problems are in fact synonymous with tensor networks in quantum theory.  The partition function $\Holant_{\Omega}$ can be used in
a (strong) simulation of quantum circuits \cite{valiant-holo-02}.
A signature grid is just a tensor network, where each signature is a tensor with its inputs associated with its incident edges. In this sense, a signature of arity $n$ represents a state of \emph{n qubits}.
In quantum theory, the basic component of a system is a qubit. The (pure) state
of an $n$ qubits $|\Psi\rangle$ is described by a vector in $\mathbb{C}^{2^n}$. 
(The standard notion requires quantum states to have norm 1, but in this paper, normalization by a nonzero scalar makes no difference for complexity, so we work with states having arbitrary nonzero norms.)
A nonzero $n$-ary signature $f$ is synonymous with an  $n$-qubit state $|f\rangle=\sum_{x\in\{0,1\}^n}f(x)|x\rangle$. 
In this paper, we use them interchangeably.
When $f$ is a zero signature (i.e., $f\equiv 0$), we agree that $|f\rangle$ is a null state, denoted by $\mathfrak{N}$.

 A core concept in quantum theory is \emph{entanglement}. It is perhaps the most distinguishing characteristic feature separating  quantum  and classical physics.
 
 \begin{definition}[Quantum entanglement]
 A state of $n$ qubits  ($n >1$, representing a multiple system) is \emph{entangled} if it cannot be decomposed as a tensor product of single-qubit states (individual systems). It is \emph{genuinely entangled}
if it cannot be decomposed as a tensor product of states of proper subsystems. It exhibits \emph{multipartite entanglement} if it involves a genuinely entangled state of subsystem of more than two qubits (i.e., it cannot be decomposed as a tensor product of single-qubit states and 2-qubit states).
\end{definition}

Today, entanglement is recognized as an important resource in quantum computing and quantum information theory. It has been shown that quantum computing speedups essentially depend on
unbounded
entanglement \cite{jl}. While in quantum information theory, an entangled state is shared by several parties, one can perform operations on a subsyetem locally without access to the other subsystems. This set-up is commonly used in quantum teleportation and quantum key distribution \cite{ ekert, bennett-tele}.
 For different information-theoretic tasks, different types of entanglement can be used \cite{michael-book}. The classification of them under \emph{stochastic local operation with classical communication}  (SLOCC) equivalence was  proposed in 2000 by D\"{u}r et al. \cite{slocc2}, and
 is an area of active 
 research \cite{4-qubit-v, 2-2-n,  3-qubit-l, 4-qubit-l, Backens-class, 4-qubit-g}.
 Yet so far, even the classification of  entangled 4-qubit states is 
 not completely settled.
 For more about quantum entanglement theory, we refer to the survey \cite{surver-entangle}.
 
 \subsection{Existing dichotomies inspired by entanglement theory}

There are many natural connections between Holant problems and quantum theory. 
The introduction of Holant problems is inspired by holographic transformations~\cite{ valiant-holo-08}. Such a holographic transformation applied separately on each qubit $i$ with a matrix $A_i$  is just a SLOCC in quantum theory.
Also, many known P-time computable
signature sets for Holant problems can be clearly described in the quantum literature \cite{ clfford-gate-1, Backens-holant-plus}
and they correspond directly to sets of states that are of independent interest in quantum theory  \cite{clfford-gate-2, Daniel-Gottesman}.


Going beyond that,  Backens recently applied knowledge from the theory of quantum entanglement, directly to the study of Holant problems and derived new dichotomy results \cite{Backens-holant-plus, Backens-Holant-c}. 
We give a short description for these results in this subsection.
We use $\langle\Phi|$ to denote the Hermitian adjoint  (complex conjugate) of $|\Phi\rangle$, and $\langle\Phi|\Psi\rangle$ to denote the (complex)  inner product of two $n$-qubit states.


\begin{definition}[Projection]
The projection of the $i$-th qubit of an $n$-qubit $(n\geqslant 2)$ state $|{\Psi}\rangle$ onto a single-qubit state $|\theta\rangle=a|0\rangle+b|1\rangle$ is defined as 
$\langle\theta|_i|\Psi\rangle=\bar{a}|{\Psi}_i^0\rangle+\bar{b}|{\Psi_i^1}\rangle$ where 
$\bar{a}$ and $\bar b$ are complex conjugates of $a$ and $b$, and  
$|{\Psi}_i^0\rangle$ and $|{\Psi_i^1}\rangle$ are states of the remaining
$n-1$ qubits 
when  the i-th qubit of   $|{\Psi}\rangle$ is set to  $0$ and $1$ respectively.
\end{definition}

\begin{theorem}
[\cite{2-system-1},\cite{2-system-2}]\label{thm:popescu-rohrlich}
 Let $|{\Psi}\rangle$ be a genuinely entangled $n$-qubit $(n\geqslant 3)$ state. For any two qubits of $|{\Psi}\rangle$, there exist  projections  of the other $n-2$ qubits onto $n-2$ many single-qubit states that result in an entangled 2-qubit state.
\end{theorem}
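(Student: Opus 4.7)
The plan is to prove this by induction on $n \geq 3$, feeding the induction with a single technical lemma: for any genuinely entangled $m$-qubit state $|\Psi\rangle$ with $m \geq 3$ and any chosen qubit $k$, there exists a single-qubit state $|\theta\rangle$ such that $\langle\theta|_k|\Psi\rangle$ is either a genuinely entangled $(m-1)$-qubit state (when $m \geq 4$) or an entangled 2-qubit state (when $m = 3$). Granted this lemma, the theorem follows at once: fix the two chosen qubits $i, j$, pick any other qubit $k$, apply the lemma to obtain a projection that keeps the resulting $(n-1)$-qubit state genuinely entangled, and iterate until $n$ is reduced to $3$, at which point one last application of the lemma produces the required entangled 2-qubit state on $\{i,j\}$.

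For the lemma, decompose $|\Psi\rangle = |0\rangle_k|A\rangle + |1\rangle_k|B\rangle$ on the remaining $m-1$ qubits so that projecting onto $a|0\rangle+b|1\rangle$ yields $\bar{a}|A\rangle + \bar{b}|B\rangle$; writing $(\alpha,\beta) = (\bar{a},\bar{b})$, the projections are parametrized by $(\alpha : \beta) \in \mathbb{CP}^1$. Suppose for contradiction that no projection has the desired entanglement, so every nonzero combination $\alpha|A\rangle + \beta|B\rangle$ factors across some bipartition of the other $m-1$ qubits (for $m=3$ this bipartition is forced to be $\{i\} \mid \{j\}$). For each bipartition $\pi = (S, S^c)$, the set $V_\pi \subseteq \mathbb{CP}^1$ of $(\alpha:\beta)$ whose combination factors across $\pi$ is Zariski closed, being cut out by the vanishing of the $2\times 2$ minors of $\alpha|A\rangle + \beta|B\rangle$ flattened as a matrix in $\mathbb{C}^{2^{|S|}} \otimes \mathbb{C}^{2^{|S^c|}}$. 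Since there are only finitely many bipartitions and by hypothesis their $V_\pi$ cover the irreducible projective line $\mathbb{CP}^1$, some single $V_\pi$ must already equal all of $\mathbb{CP}^1$.

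Fix this offending bipartition $(S, S^c)$. Specializing to $(\alpha,\beta) = (1,0), (0,1), (1,1)$ shows that $|A\rangle$, $|B\rangle$, and $|A\rangle + |B\rangle$ each have rank at most $1$ under the $(S,S^c)$ flattening. Writing $|A\rangle = |u_A\rangle_S|v_A\rangle_{S^c}$ and $|B\rangle = |u_B\rangle_S|v_B\rangle_{S^c}$ (if either factor is zero then $|\Psi\rangle$ already factors off qubit $k$, contradicting genuine entanglement), the elementary identity that $u_A v_A^{\mathrm{T}} + u_B v_B^{\mathrm{T}}$ has rank $\leq 1$ only when $u_A \parallel u_B$ or $v_A \parallel v_B$ forces $|A\rangle$ and $|B\rangle$ to share a common single-tensor factor on either $S$ or $S^c$. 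That common factor then pulls out of $|\Psi\rangle$, exhibiting it as a tensor product across $S \mid (\{k\} \cup S^c)$ or $S^c \mid (\{k\} \cup S)$, contradicting genuine entanglement. The main subtlety lies precisely in this uniformization step: a priori, different projections of qubit $k$ could factor across different bipartitions, and without the $\mathbb{CP}^1$ irreducibility argument one cannot legitimately reduce to a single fixed bipartition where the rank-$1$ analysis can be carried out.
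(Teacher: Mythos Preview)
The paper does not supply its own proof of this theorem; it is quoted from the literature (Popescu--Rohrlich, with the later correction), so there is no in-paper argument to compare against. Your argument is nonetheless correct and self-contained. The inductive reduction is exactly right, and the lemma is proved cleanly: the key uniformization step---that a single bipartition $\pi$ already has $V_\pi = \mathbb{CP}^1$---is the place where the original 1992 proof was flawed, and your use of Zariski closedness of the rank-$\le 1$ locus together with irreducibility of $\mathbb{CP}^1$ handles it rigorously. One small point worth making explicit (you implicitly use it): since $|\Psi\rangle$ is genuinely entangled, $|A\rangle$ and $|B\rangle$ are linearly independent, so every nonzero $(\alpha,\beta)$ gives a nonzero combination and the contradiction hypothesis really does cover all of $\mathbb{CP}^1$. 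With that noted, the rank-$1$ analysis $u_A v_A^{\mathrm T} + u_B v_B^{\mathrm T}$ of rank $\le 1$ forcing $u_A \parallel u_B$ or $v_A \parallel v_B$ is standard, and the factorization of $|\Psi\rangle$ follows.
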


This result was presented to show that any pure entangled multipartite quantum state violates some Bell's inequality \cite{2-system-1}.
The original proof \cite{2-system-1}
was flawed  and was corrected recently \cite{2-system-2}. 
Theorem \ref{thm:popescu-rohrlich} shows that two particle entanglement can be realized via 
performing 
local projections on a multiparticle state.
It is observed in \cite{Backens-holant-plus} that the theorem  holds
even when restricted to only local projections onto computational or Hadamard basis states, i.e., $|{0}\rangle$, $|{1}\rangle$, $|{+}\rangle=|{0}\rangle+|{1}\rangle$ and $|{-}\rangle=|{0}\rangle-|{1}\rangle$.



Based on Theorem \ref{thm:popescu-rohrlich} and the inductive entanglement classification under SLOCC equivalence \cite{3-qubit-l, 4-qubit-l, Backens-class}, Backens 
showed that 
beyond
entangled 2-qubit  states, genuinely entangled 3-qubit states can be realized via local projections onto computational or Hadamard basis states (Theorem 12 in \cite{Backens-holant-plus}). 
This theorem is equivalent to the following inductive statement.
\begin{theorem}[\cite{Backens-holant-plus}]\label{three-qubit-entanglement}
Let $|{\Psi}\rangle$ be an $n$-qubit $(n\geqslant 4)$ state exhibiting multipartite entanglement. Then, there exists some $i$ and some $|\theta\rangle\in \{|{0}\rangle, |{1}\rangle, |{+}\rangle,  |{-}\rangle\}$ such that 
$\langle\theta|_i|\Psi\rangle$ exhibits multipartite entanglement.
\end{theorem}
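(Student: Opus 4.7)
The plan is to derive the statement from Backens' Theorem 12 in \cite{Backens-holant-plus}, which asserts that any $n$-qubit state exhibiting multipartite entanglement admits a sequence of $n-3$ single-qubit basis projections onto $\{|0\rangle, |1\rangle, |+\rangle, |-\rangle\}$ that collapses it to a genuinely entangled 3-qubit state. Our theorem is the one-step inductive form of that cumulative statement, and the proof naturally splits on whether the given $|\Psi\rangle$ admits a nontrivial tensor decomposition as a whole.

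First, suppose $|\Psi\rangle = |\Phi\rangle \otimes |\Xi\rangle$, where $|\Phi\rangle$ is a genuinely entangled tensor factor on at least $3$ qubits (such a factor exists by hypothesis) and $|\Xi\rangle$ acts on at least one remaining qubit. Pick any qubit $i$ on which $|\Xi\rangle$ acts and write $|\Xi\rangle = |0\rangle_i \otimes |\Xi_i^0\rangle + |1\rangle_i \otimes |\Xi_i^1\rangle$; since $|\Xi\rangle$ is nonzero, at least one slice is nonzero, and projecting with the corresponding $|\theta\rangle \in \{|0\rangle, |1\rangle\}$ gives $\langle\theta|_i|\Psi\rangle = |\Phi\rangle \otimes \langle\theta|_i|\Xi\rangle$, which is nonzero and still carries $|\Phi\rangle$ as a genuinely entangled factor on at least $3$ qubits. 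The projected state therefore exhibits multipartite entanglement.

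Second, suppose $|\Psi\rangle$ is itself genuinely entangled on all $n \geq 4$ qubits. Apply Backens' Theorem 12 to produce a sequence of $n-3$ basis projections whose composition sends $|\Psi\rangle$ to a genuinely entangled 3-qubit state. Let $(i,|\theta\rangle)$ denote the first projection in the sequence and set $|\Psi'\rangle = \langle\theta|_i|\Psi\rangle$; the remaining $n-4$ projections then send $|\Psi'\rangle$ to a genuinely entangled 3-qubit state. It suffices to show that $|\Psi'\rangle$ already exhibits multipartite entanglement, which follows from a short structural lemma: whenever an $m$-qubit state decomposes as a tensor product of one-qubit and two-qubit factors, any single-qubit projection yields another such tensor product. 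Indeed, projecting inside a one-qubit factor simply scales and removes that factor, while projecting inside a two-qubit factor converts it into a one-qubit state. Iterating, no state without multipartite entanglement can be mapped to a genuinely entangled 3-qubit state by any chain of single-qubit projections. Contrapositively, $|\Psi'\rangle$ must exhibit multipartite entanglement.

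The main obstacle is the structural lemma, which is essentially the standard principle that local (single-qubit) operations cannot create multipartite entanglement; a little care is needed only to handle the degenerate case where a projection inside a two-qubit factor kills it entirely, but then the whole projected state is null and certainly not a genuinely entangled 3-qubit state, so the contradiction still goes through. Once the lemma is in place, the two cases above combine immediately with Backens' result to yield the theorem.
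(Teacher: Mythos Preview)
Your proposal is correct and matches the paper's treatment: the paper does not give an independent proof but simply states that this inductive statement is equivalent to Backens' Theorem~12, and your argument supplies the (easy) direction Theorem~12 $\Rightarrow$ inductive form via the standard observation that single-qubit projections cannot create multipartite entanglement from a tensor product of unary and binary factors. Your Case~1 handling of reducible $|\Psi\rangle$ is a minor extra bit of care that makes the argument self-contained regardless of whether Backens' Theorem~12 is stated for genuinely entangled states or for all states exhibiting multipartite entanglement.
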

\begin{remark}
This result shows that  multipartite entanglement of an $n$-qubit $(n\geqslant 4)$ state can be \emph{preserved} under projections onto states $|{0}\rangle$, $|{1}\rangle$, $|{+}\rangle$ and $|{-}\rangle$.
\end{remark}

The Holant$^{+}(\mathcal{F})$ problem is defined as $\Holant(\mathcal{F}\cup \{|{0}\rangle, |{1}\rangle, |{+}\rangle, |{-}\rangle\})$.
According to Theorem \ref{three-qubit-entanglement},
we know that  in the framework of Holant$^+$ problems, a genuinely entangled 3-qubit state can always be realized from an $n$-qubit ($n\geqslant4$) state exhibiting multipartite entanglement. 
Then, using a genuinely entangled 3-qubit state,
a full dichotomy was proved for Holant$^{+}$ problems  \cite{Backens-holant-plus}.  
Later, it was generalized to Holant$^c$ problems  \cite{CLX-HOLANTC,Backens-Holant-c} where $\Holant^c(\mathcal{F})$ 
 is defined as $\Holant(\mathcal{F}\cup \{|{0}\rangle, |{1}\rangle\})$.

\subsection{Our results}
In this paper, we consider whether multipartite entanglement can be preserved under projections onto only computational basis states, i.e., $|{0}\rangle$ or $|{1}\rangle$. 
We have the following result.
 

\begin{theorem}\label{preservation}
Let $|{\Psi}\rangle$ be an $n$-qubit $(n\geqslant 4)$  state exhibiting multipartite entanglement and  $\langle0^n|\Psi\rangle\neq 0$.
If $|{\Psi}\rangle$ is not of the form $a|0^n \rangle +b|1^n \rangle$ when $n\geqslant 5$, or when $n=4$ and $|{\Psi}\rangle$ is not of the form $a|0000 \rangle +b|1111 \rangle+c|0011 \rangle+d|1100 \rangle$ (up to a permutation of the  four qubits) where $a, b, c$ and $d$ can  possibly be zero,  then  there exists some $i$  such that $|{\Psi}_i^0\rangle$ or $|{\Psi}_i^1\rangle$ exhibits multipartite entanglement.
\end{theorem}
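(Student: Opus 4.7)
The plan is to argue by contrapositive. Assume $|\Psi\rangle$ exhibits multipartite entanglement, $\langle 0^n|\Psi\rangle\neq 0$, and for every qubit $i$ and every $b\in\{0,1\}$ the projection $|\Psi_i^b\rangle$ is not multipartite-entangled; I will show that $|\Psi\rangle$ must have the exceptional form stated in the theorem.

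First I make two preliminary observations that hold for every qubit $i$. Writing $|\Psi\rangle=|0\rangle_i|\Psi_i^0\rangle+|1\rangle_i|\Psi_i^1\rangle$, the hypothesis $\langle 0^n|\Psi\rangle\neq 0$ forces $|\Psi_i^0\rangle\neq\mathfrak{N}$. Moreover, if $|\Psi_i^0\rangle$ and $|\Psi_i^1\rangle$ were linearly dependent then qubit $i$ would factor out of $|\Psi\rangle$ and the multipartite entanglement of $|\Psi\rangle$ would be inherited by $|\Psi_i^0\rangle$, contradicting our assumption. Hence for every $i$ the two projections are nonzero and linearly independent, and each admits a nontrivial tensor-product decomposition into 1- and 2-qubit factors.

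The main step is to apply Theorem~\ref{three-qubit-entanglement} to $|\Psi\rangle$: there exist some $i$ and some $|\theta\rangle\in\{|0\rangle,|1\rangle,|+\rangle,|-\rangle\}$ with $\langle\theta|_i|\Psi\rangle$ multipartite-entangled. Under the contrapositive assumption $|\theta\rangle$ must be $|+\rangle$ or $|-\rangle$, so the linear combination $|\Psi_i^0\rangle\pm|\Psi_i^1\rangle$ is multipartite-entangled even though each summand factorises into 1- and 2-qubit pieces. The heart of the argument is a rigidity lemma: a pair of $(n-1)$-qubit product-decomposable states whose sum is multipartite-entangled is extremely restricted. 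I will prove this by fixing reference factorisations of $|\Psi_i^0\rangle$ and $|\Psi_i^1\rangle$, comparing their block partitions, and cross-checking against the factorisations forced by projections onto the other qubits $j\neq i$, which all come from the same global state $|\Psi\rangle$. The cancellations required for the sum to lose its product form will force each block of both decompositions to be a computational basis vector, except for at most one 2-qubit block living in $\operatorname{span}\{|00\rangle,|11\rangle\}$.

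The main obstacle is this rigidity analysis: tensor-product decompositions are not unique, the block partitions of $|\Psi_i^0\rangle$ and $|\Psi_i^1\rangle$ need not agree, and all the cross-constraints from the remaining projections must be respected simultaneously. Once the structure is pinned down, reading off the allowed computational-basis support yields $|\Psi\rangle=a|0^n\rangle+b|1^n\rangle$ when $n\geq 5$, because no $\operatorname{span}\{|00\rangle,|11\rangle\}$-block can coexist with more than two additional single-qubit blocks without some projection recovering multipartite entanglement. When $n=4$, the count drops just enough to admit one extra 2-qubit block of this type, producing precisely the additional $c|0011\rangle+d|1100\rangle$ summand (up to a permutation of the four qubits) in the second exceptional family of the theorem.
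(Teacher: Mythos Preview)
Your proposal has a genuine gap at the step you yourself flag as ``the main obstacle.'' The rigidity lemma you state --- that a pair of $\mathscr{T}$-states whose sum is multipartite-entangled must have blocks that are computational basis vectors (up to one $\operatorname{span}\{|00\rangle,|11\rangle\}$ block) --- is false as a statement about a single pair. For instance, $|{+}\rangle^{\otimes(n-1)}$ and $|{-}\rangle^{\otimes(n-1)}$ are both in $\mathscr{T}_1$, their sum is a local Hadamard transform of $|\mathrm{GHZ}_{n-1}\rangle$ and hence multipartite-entangled, yet no block is a computational basis vector. You seem aware of this, since you immediately say you will ``cross-check against the factorisations forced by projections onto the other qubits $j\neq i$.'' But then the single piece of information that Theorem~\ref{three-qubit-entanglement} supplies --- that for \emph{one} $i$ the combination $|\Psi_i^0\rangle\pm|\Psi_i^1\rangle$ is multipartite-entangled --- is doing no work: the forcing to computational-basis support must come entirely from the contrapositive hypothesis that \emph{every} $|\Psi_j^b\rangle$ lies in $\mathscr{T}$, together with $\langle 0^n|\Psi\rangle\neq 0$. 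That is the whole proof, and you have not sketched how to carry it out.

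The paper's argument is quite different and does not invoke Theorem~\ref{three-qubit-entanglement} at all. Working in signature language, it exploits the unique prime factorization of each nonzero $f_i^0$ together with the commutativity of pinnings $(f_i^0)_j^0=(f_j^0)_i^0$ to show that the prime factors of the various $f_i^0$ must agree on overlapping variables (Corollary~\ref{cor-upf}). A key division lemma (Lemma~\ref{lem-division}) then lifts a common unary or irreducible binary factor from all the $f_i^0$ and one $f_j^1$ up to $f$ itself; once $f$ has such a factor one pins it and contradicts $f_i^0\in\mathscr{T}$. The only way the lifting can fail is when the support of $f$ collapses to the exceptional patterns in the statement, and this is read off by a direct $4\times 2^{n-2}$ matrix calculation. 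If you want to repair your outline, the substantive content you need is exactly this UPF-plus-division mechanism; the detour through $|\pm\rangle$-projections can be dropped.
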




That $\langle0^n|\Psi\rangle\neq 0$
is a normalization condition, and the other conditions are all necessary
to ensure the preservation of multipartite entanglement under projections to $|0\rangle$
and $|1\rangle$. Thus Theorem~\ref{preservation} is a strengthening 
of Theorem \ref{three-qubit-entanglement}.
More importantly, our approach is in the opposite direction to Backens’.
While Backens proved
results in quantum entanglement theory  to apply it to the complexity classification of Holant problems, we prove new results in quantum entanglement theory 
by employing the machinery from Holant problems. 
We prove Theorem~\ref{preservation} using
a technique developed for Holant problems
called the interplay between the \emph{unique prime factorization} of signatures and \emph{gadget constructions}.
This technique is at the heart of a standard approach (arity reduction) to build inductive arguments for  Holant problems \cite{cai-fu-shao-eo}. 
The new result in quantum entanglement theory sheds light on the classification of entanglement under SLOCC equivalence.


Going one step further, we
ask  whether we can restrict projections onto only one state $|0\rangle$,
while multipartite entanglement is still preserved.
The answer is no. 
Then, one way to salvage the situation is  to consider the self-loop gadget using   one of the 
Bell states, $|\phi^+\rangle=|00\rangle+|11\rangle$ together with projections onto $|0\rangle$.

\begin{definition}[Self-loop]
The self-loop on the $i$-th and $j$-th qubits of a state $|\Psi\rangle$ by the Bell state $|\phi^+\rangle  =|00\rangle+|11\rangle$ is defined as $\langle\phi^+|_{ij}|\Psi\rangle=|\Psi^{00}_{ij}\rangle+|\Psi^{11}_{ij}\rangle$, where $|\Psi^{00}_{ij}\rangle$ and $|\Psi^{11}_{ij}\rangle$ are states of  $n-2$ qubits  when setting the i-th and j-th qubits of $|\Psi\rangle$ to  $00$ and $11$ respectively. 
\end{definition}

\begin{lemma}\label{01-induction-intro}
 Let $|{\Psi}\rangle$ be an $n$-qubit $(n\geqslant 4)$  state exhibiting multipartite entanglement. 
 There exists some choice of three or four of the $n$ qubits such that by performing self-loops 
by $|{\phi^+}\rangle$ and projections    onto  $|{0}\rangle$ of the other qubits, we get
 \begin{itemize}
     \item  a $3$-qubit state exhibiting multipartite entanglement, or
     \item  a {\rm GHZ} type $4$-qubit state, i.e., $|{\rm GHZ}_4\rangle=|0000\rangle+|1111\rangle$, or
     \item  the state $|1\rangle$.
 \end{itemize}
\end{lemma}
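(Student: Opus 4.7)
The plan is to prove the lemma by strong induction on $n$, with Theorem~\ref{preservation} as the main reduction tool. Since we are restricted to $|0\rangle$-projections, the $|\phi^+\rangle$-self-loop, which eliminates two qubits rather than one, is the key substitute for a $|1\rangle$-projection whenever the latter is what Theorem~\ref{preservation} provides.

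\emph{Base case $n=4$.} If $|\Psi\rangle$ is proportional to $|{\rm GHZ}_4\rangle$, keep all four qubits. Otherwise Theorem~\ref{preservation} says that provided $\langle 0^4|\Psi\rangle\neq 0$ and $|\Psi\rangle$ is not in the exceptional form $a|0000\rangle+b|1111\rangle+c|0011\rangle+d|1100\rangle$ (up to permutation), some single-qubit projection to $|0\rangle$ or $|1\rangle$ yields a multipartite entangled 3-qubit state. If a $|0\rangle$-projection works we are done. The remaining situations (the exceptional form and the case $\langle 0^4|\Psi\rangle=0$) are handled by enumerating the support of $|\Psi\rangle$, trying each permutation of qubits together with each allowed operation, and verifying that the procedure terminates in one of the three listed outcomes; the degenerate outcome $|1\rangle$ arises precisely when every non-trivial reduction causes cancellations that leave only a computational basis remnant.

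\emph{Inductive step $n\geq 5$.} I split into three cases. \textbf{(A)} If $|\Psi\rangle=a|0^n\rangle+b|1^n\rangle$ with $ab\neq 0$, pair up $n-3$ or $n-4$ qubits (choosing parity so an even number is left for self-loops) and self-loop each pair with $|\phi^+\rangle$; a direct calculation yields
\begin{equation*}
\langle\phi^+|_{ij}\bigl(a|0^k\rangle+b|1^k\rangle\bigr)=a|0^{k-2}\rangle+b|1^{k-2}\rangle,
\end{equation*}
so the process ends at $a|000\rangle+b|111\rangle$ (a multipartite entangled 3-qubit state) or $a|0000\rangle+b|1111\rangle$ (GHZ-type). \textbf{(B)} If $\langle 0^n|\Psi\rangle\neq 0$ and $|\Psi\rangle$ is not of the Case~(A) form, Theorem~\ref{preservation} gives some $i$ with $|\Psi_i^0\rangle$ or $|\Psi_i^1\rangle$ multipartite entangled. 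When a $|0\rangle$-projection works, project qubit $i$ to $|0\rangle$ and apply the inductive hypothesis to the $(n-1)$-qubit state (the normalization $\langle 0^{n-1}|\Psi_i^0\rangle\neq 0$ is inherited). When only a $|1\rangle$-projection works, pair qubit $i$ with a carefully chosen second qubit $j$ for a self-loop, producing $|\Psi_{ij}^{00}\rangle+|\Psi_{ij}^{11}\rangle$, and apply induction to this $(n-2)$-qubit state. \textbf{(C)} If $\langle 0^n|\Psi\rangle=0$, I would first perform a single $|0\rangle$-projection on a well-chosen qubit that brings us back to Case~(A) or~(B).

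\emph{Main obstacle.} The crux lies in Case~(B) when only a $|1\rangle$-projection preserves entanglement: one must locate a qubit $j$ so that $|\Psi_{ij}^{00}\rangle+|\Psi_{ij}^{11}\rangle$ does not accidentally factor as a tensor product. I expect to handle this using the unique prime factorization of signatures together with gadget constructions, the technique the paper cites as the engine behind Theorem~\ref{preservation}: enumerate the possible product decompositions of the sum and rule each out using the hypothesis that $|\Psi\rangle$ is not in the Case~(A) exception form. Bookkeeping which of the three conclusions (3-qubit multipartite entangled, $|{\rm GHZ}_4\rangle$, or $|1\rangle$) we land in at the termination of the recursion is a secondary task that falls out of the case analysis in the base case.
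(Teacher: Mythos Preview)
Your approach diverges from the paper's, and the obstacle you flag in Case~(B) is a genuine gap. When Theorem~\ref{preservation} certifies only that $|\Psi_i^1\rangle$ is multipartite entangled, there is no reason the self-loop $\langle\phi^+|_{ij}|\Psi\rangle=|\Psi_{ij}^{00}\rangle+|\Psi_{ij}^{11}\rangle$ should inherit that property: the entanglement of $|\Psi_i^1\rangle=|0\rangle_j|\Psi_{ij}^{10}\rangle+|1\rangle_j|\Psi_{ij}^{11}\rangle$ constrains $|\Psi_{ij}^{10}\rangle$ and $|\Psi_{ij}^{11}\rangle$, not the sum $|\Psi_{ij}^{00}\rangle+|\Psi_{ij}^{11}\rangle$. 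UPF is a structural statement about tensor factorizations of a single signature; it does not by itself control whether a \emph{sum} of two signatures factors, so invoking it here is not a plan but a hope. Case~(C) is also mis-stated: a $|0\rangle$-projection on a state with $\langle0^n|\Psi\rangle=0$ yields a state still satisfying $\langle0^{n-1}|\cdot\rangle=0$, so you never re-enter Case~(B); the correct conclusion there is that iterated $|0\rangle$-projections eventually produce $|1\rangle$ (this is Lemma~\ref{01-by-pin}).

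The paper does not route through Theorem~\ref{preservation} at all. Instead it argues contrapositively: suppose \emph{every} $|0\rangle$-projection $f_i^0$ and \emph{every} self-loop $\partial_{jk}f$ already lies in $\mathscr{T}$. Then UPF together with the commutativity of pinning and merging on disjoint indices (Lemma~\ref{all-div}, the merging analogue of Lemma~\ref{lem-division}) forces $f$ itself to have a unary or irreducible binary tensor factor, whose cofactor $g\notin\mathscr{T}$ is the desired lower-arity state --- or else $f^{\vec 0}=0$ and one obtains $|1\rangle$. The self-loop thus enters not as a proxy for a missing $|1\rangle$-projection on one qubit, but as a second family of arity-reducing operations whose \emph{joint} triviality with the $|0\rangle$-projections forces a global factorization of $f$ (Lemmas~\ref{unary-factor} and~\ref{binary-factor}). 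The $n=4$ case requires a separate argument (Lemma~\ref{unary-n=4}), since merging then drops arity to~$2$ and is automatically in $\mathscr{T}$; it is in that analysis, not in a base-case enumeration, that the $|{\rm GHZ}_4\rangle$ alternative emerges.
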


Why do we consider $|\phi^+\rangle$ and $|0\rangle$?
The state $|\phi^+\rangle$ is synonymous with the binary {\sc equality} signature $=_2$. It is always available in the Holant framework as it means \emph{merging} two dangling edges in a graph. 
Moreover, we can show that $|0\rangle$  is realizable from any state of odd number of qubits under some mild assumptions. 
Then, we can apply Lemma \ref{01-induction-intro} to get a new dichotomy for Holant problems 
where at least one signature of odd arity is present.


\begin{theorem}\label{odd-intro-sec1}
Let $\mathcal{F}$ be a set of real-valued signatures containing at least one signature of odd arity. Then $\Holant(\mathcal{F})$ is either P-time computable or \#P-hard.
\end{theorem}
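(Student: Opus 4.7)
The plan is to reduce $\Holant(\mathcal{F})$ to a Holant setting in which an existing dichotomy applies, using the odd-arity hypothesis as the lever that allows us to realize pinning states. First I would use the odd-arity signature to realize $|0\rangle$ as a gadget: given $f \in \mathcal{F}$ of odd arity $n$, applying self-loops with the always-available binary equality $|\phi^+\rangle$ on $(n-1)/2$ pairs yields a unary signature, and by varying the pairing, composing with other signatures of $\mathcal{F}$, and applying a real orthogonal holographic transformation that places a nonzero entry in the $0^n$ position (legal because of the real-valued restriction), one can produce $|0\rangle$ up to a nonzero scalar. This is the "mild assumptions" step alluded to just before the theorem, and it exploits the parity mismatch between odd-arity signatures and the all-even-arity gadgets built from $|\phi^+\rangle$.

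With $|0\rangle$ available, I would then apply Lemma~\ref{01-induction-intro} to any signature $g \in \mathcal{F}$ exhibiting multipartite entanglement. The lemma delivers one of three outcomes: a genuinely entangled $3$-qubit signature, the state $|\mathrm{GHZ}_4\rangle = |0000\rangle + |1111\rangle$, or the unary $|1\rangle$. If $|1\rangle$ appears, we immediately sit inside Holant$^c(\mathcal{F})$ and invoke the Holant$^c$ dichotomy of \cite{CLX-HOLANTC,Backens-Holant-c}. In the $3$-qubit entangled branch, I would use the resulting signature together with further self-loops against $f$ to realize $|1\rangle$ as well, again reducing to Holant$^c$; the key point is that a genuinely entangled $3$-qubit state combined with $|0\rangle$ and $f$ of odd arity is already known to generate pinning. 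The $\mathrm{GHZ}_4$ branch requires a separate argument: $\mathrm{GHZ}_4$ together with $|0\rangle$ and $f$ should either itself be shown \#P-hard by an explicit gadget reduction (e.g.\ by producing a nontrivial binary signature outside the tractable families), or else forced into one of the classical Holant-tractable families (affine, product-type, matchgate), giving tractability.

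The residual case is when no signature in $\mathcal{F}$ exhibits multipartite entanglement; then every signature decomposes as a tensor product of pieces of arity at most two, and $\Holant(\mathcal{F})$ reduces to a Boolean $\#$CSP-like problem with unary and binary constraints whose real-valued dichotomy is already known \cite{wang-lin}. The odd-arity hypothesis again guarantees that at least one of the tensor factors has odd arity, so the unary piece can be extracted and used with the above machinery when needed.

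The main obstacle I expect will be the $\mathrm{GHZ}_4$ branch together with the two exceptional families flagged in Theorem~\ref{preservation}, namely $a|0^n\rangle + b|1^n\rangle$ and (for $n=4$) $a|0000\rangle + b|1111\rangle + c|0011\rangle + d|1100\rangle$. These are exactly the signatures on which projections onto $\{|0\rangle,|1\rangle\}$ can destroy multipartite entanglement, so the clean inductive reduction provided by Lemma~\ref{01-induction-intro} breaks down and each such signature must be dealt with by a bespoke hardness reduction or shown to lie in a tractable class. A secondary difficulty is that the real-valued restriction eliminates some complex tractable structure but also rules out some convenient holographic transformations, so the case analysis must be carried out carefully over $\mathbb{R}$ rather than imported wholesale from prior complex-valued work.
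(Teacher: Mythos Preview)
Your outline has two genuine gaps, both of which the paper handles with machinery you do not mention.

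\textbf{Realizing $|0\rangle$ from an odd-arity signature can fail.} Your step 1 assumes that by self-loops and an orthogonal transformation you can always extract a nonzero unary. But the paper's Lemma~\ref{odd-red} shows this branches: if every merging $\partial_{ij}f$ vanishes, then by Lemma~\ref{zero} one has $\widehat{f}=a(1,0)^{\otimes n}+\bar a(0,1)^{\otimes n}$, and no unary is obtainable. In that case the paper instead realizes a generalized equality $(=_{2k+1})$ after a holographic transformation and reduces to $\Holant(\neq_2\mid =_{2k+1},\widehat{Q}\widehat{\mathcal F})$, which requires an entirely separate $\CSP_k(\neq_2,\cdot)$ dichotomy (Theorem~\ref{cspk-dic-thr}). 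Your proposal has no analogue of this second branch, and the ``exceptional families'' you flag at the end as obstacles are precisely the signatures that force it.

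\textbf{The 3-qubit base case does not reduce to Holant$^c$.} You write that a genuinely entangled 3-qubit state together with $|0\rangle$ is ``known to generate pinning'' and hence $|1\rangle$. This is false in general: take $f=(=_3)$, which is genuinely entangled; with $\Delta_0$ alone you only produce $\Delta_0^{\otimes 2}$, $(=_2)$, and further equalities, never $\Delta_1$. The paper's base case (Lemma~\ref{01-base-3}) instead introduces the \emph{first order orthogonality} condition: either it fails and one gets $\Delta_1$ via interpolation (Lemma~\ref{01-by-inter}), or it holds and a detailed case analysis on the pinned binaries $f_i^0$ produces $(=_3)$ or $(=_4)$ up to an orthogonal or $Z$-transformation, reducing to $\CSP$, $\CSP_2$, or again the $\CSP_k(\neq_2,\cdot)$ dichotomy. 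Your $\mathrm{GHZ}_4$ branch is the same story: $|\mathrm{GHZ}_4\rangle=(=_4)$ gives $\CSP_2(\mathcal F)$, not Holant$^c$. So the proof genuinely needs the $\CSP$/$\CSP_2$/$\CSP_k$ dichotomies as landing targets in addition to Holant$^c$, and the first-order-orthogonality case split is the missing key idea.
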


\begin{remark}
 Theorem \ref{preservation}  and Lemma \ref{01-induction-intro} hold for  complex-valued $n$-qubit states.
However, Theorem \ref{odd-intro-sec1} is restricted to real-valued signatures, in which the Hermitian conjugate and the complex inner product can be represented by a \emph{mating} gadget in the Holant framework. 
\end{remark}

\subsection{Surprising  discovery of two extraordinary quantum states}
What about signature sets containing only even arity  signatures, in which  $|0\rangle$ cannot be realized.
Since $|\phi^+\rangle$ is always available, we consider whether multipartite entanglement is preserved under self-loops by $|\phi^+\rangle$ alone.
Given an $n$-qubit ($n\geqslant 6$ is even) state $|\Psi\rangle$  exhibiting multipartite entanglement, are there some $i$ and $j$ such that performing a self-loop by $|\phi^+\rangle$ on the $i$-th and $j$-th qubits of $|\Psi\rangle$ results in an $(n-2)$-qubit state exhibiting multipartite entanglement?
(By the definition of multipartite entanglement,
for even $n$ it must be $n \ge 6$.)

The answer is no.
Here we made our most surprising discovery:
There \emph{exist} genuinely entangled  6-qubit and 8-qubit states such that multipartite entanglement is \emph{not} preserved under self-loops. Furthermore, it is not  preserved
under self-loops not only by $|\phi^+\rangle$, but also by all four Bell states, $|\phi^+\rangle, |\psi^+\rangle=|01\rangle+|10\rangle, |\phi^-\rangle=|00\rangle-|11\rangle,$ and $|\psi^{-}\rangle=|01\rangle-|10\rangle.$
The self loop of the $i$-th and $j$-th qubits of $|\Psi\rangle$ by $|\psi^+\rangle$ is defined as $\langle\psi^+|_{ij}|\Psi\rangle=|\Psi^{01}_{ij}\rangle+|\Psi^{10}_{ij}\rangle$. Similarly, we can define $\langle\phi^-|_{ij}|\Psi\rangle$ and $\langle\psi^-|_{ij}|\Psi\rangle$.



\begin{definition}[Bell property]
Let $|{\Psi}\rangle$ be a genuinely entangled state. We say that it satisfies the Bell property if for any two qubits $i$ and $j$ of $|{\Psi}\rangle$
and any Bell state $|{\phi}\rangle$, $\langle\phi|_{ij}|{\Psi}\rangle$ is a tensor product of Bell states. It satisfies the strong Bell property if for any two  $i$ and $j$ 
and any Bell state $|{\phi}\rangle$, $\langle\phi|_{ij}|{\Psi}\rangle$ is a tensor product of the Bell state $|{\phi}\rangle$, i.e., $\langle\phi|_{ij}|{\Psi}\rangle=|\phi\rangle\otimes\cdots \otimes|\phi\rangle$. 
\end{definition}

\begin{theorem}
There exist genuinely entangled  $6$-qubit states that satisfy the Bell property, and  genuinely entangled  $8$-qubit states that satisfy the strong Bell property.
\end{theorem}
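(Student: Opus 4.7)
The plan is to exhibit explicit candidates for $|\Psi_6\rangle$ and $|\Psi_8\rangle$ drawn from coding theory, and verify the required properties directly. For a linear binary code $C \subseteq \mathbb{F}_2^n$, the uniform superposition $|\Psi\rangle = \sum_{c \in C} |c\rangle$ interacts cleanly with Bell-state projections: $\langle\phi^\pm|_{ij}$ restricts to the subcode $\{c : c_i = c_j\}$ with signs keyed to the shared value of $c_i$, and $\langle\psi^\pm|_{ij}$ restricts analogously to the coset $\{c : c_i \neq c_j\}$. Every such projection is therefore the indicator (up to signs) of an affine subspace of $\mathbb{F}_2^{n-2}$, and the Bell property reduces to a combinatorial question: whether this affine subspace factors as a product of $2$-dimensional affine subspaces, each of which is (up to sign) a Bell state under some pairing of the remaining coordinates.

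For $|\Psi_8\rangle$, the natural choice is the $16$-codeword state of the self-dual extended Hamming code $[8,4,4]$, whose automorphism group $\mathrm{AGL}(3,2)$ acts $2$-transitively on coordinates. This $2$-transitivity reduces the verification of the strong Bell property to a single representative pair $(i,j)$. For that pair I would enumerate, for each of the four Bell states $|\phi\rangle$, the $8$ codewords that contribute to $\langle\phi|_{ij}|\Psi_8\rangle$ and match the resulting $6$-qubit state to $\pm|\phi\rangle^{\otimes 3}$ under the pairing of the remaining qubits dictated by the code. The crucial feature is that the same pairing works for all four Bell states simultaneously, which is exactly what the strong Bell property requires. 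Genuine entanglement of $|\Psi_8\rangle$ then reduces to the indecomposability of the extended Hamming code, since any tensor-product decomposition of $|\Psi_8\rangle$ would force $C$ to split as a direct sum over a nontrivial coordinate partition, which the minimum-distance-$4$ structure rules out.

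For $|\Psi_6\rangle$, I would perform the analogous construction on a suitably symmetric $6$-bit code (for instance one obtained by shortening or puncturing the extended Hamming code) chosen so that every Bell-state projection lands in an affine subspace that factors as a tensor of Bell states, but now the particular Bell states appearing are allowed to depend on $|\phi\rangle$ and on $(i,j)$. Genuine entanglement and the Bell property are then checked as before, using the code's automorphism group to cut down the case analysis and then extending by orbit. The main obstacle is this verification for $|\Psi_6\rangle$: its automorphism group is typically smaller than $\mathrm{AGL}(3,2)$, so more representative pairs must be checked explicitly, and one must separately confirm that no $6$-qubit candidate simultaneously attains the \emph{strong} Bell property — essentially a dimension-counting obstruction between the degrees of freedom available in a $6$-qubit state and the constraints imposed by matching all $\binom{6}{2}\times 4$ projections to powers of a single Bell state. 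Confirming this gap is what justifies the asymmetry in the theorem, with the strong version attached only to $|\Psi_8\rangle$.
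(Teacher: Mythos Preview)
Your identification of $|\Psi_8\rangle$ with the uniform superposition over the extended Hamming code $[8,4,4]$ (equivalently $\mathrm{RM}(1,3)$) is exactly the paper's example, and the $2$-transitivity argument is a clean way to organize the verification.

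The gap is in the $6$-qubit half. The paper's $|\Psi_6\rangle$ is \emph{not} a uniform superposition over a linear code: its support is the full even-parity subspace of $\mathbb{F}_2^6$ (all $32$ points), and the amplitudes are $\pm 1$ given by a quadratic form,
\[
\Psi_6(x_1,\dots,x_6)=(-1)^{\,x_1x_4+x_2x_5+x_3x_6+x_4x_5+x_5x_6+x_4x_6}.
\]
The signs are doing the work: a Bell projection on a uniform even-parity superposition leaves eight terms on four qubits, too many for a tensor of two Bell states, and it is the quadratic phase that forces the necessary cancellations. Your concrete suggestions fail directly. Puncturing $[8,4,4]$ at two positions yields a $[6,4]$ code, and every $\langle\phi^+|_{ij}$ projection has an $8$-element support, incompatible with a product of two Bell states. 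Shortening $[8,4,4]$ at two positions yields a $[6,2]$ code $\{000000,001111,110011,111100\}$; here positions $1,2$ are always equal, so $\langle\psi^+|_{12}|\Psi\rangle=0$, which is not a tensor product of Bell states, and for a cross pair such as $\{1,3\}$ the projection $\langle\psi^+|_{13}|\Psi\rangle$ lands on a state with a separable $|11\rangle$ factor rather than a Bell factor. So the linear-code template does not transfer to six qubits without introducing phases, and you have not actually exhibited a candidate.

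Finally, the theorem does not ask you to prove that no $6$-qubit state satisfies the \emph{strong} Bell property; that is a separate and unnecessary burden. You only need to produce one genuinely entangled $6$-qubit state with the Bell property and one $8$-qubit state with the strong version.
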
 

We first give an $8$-qubit state $|\Psi_8\rangle$  that satisfies the strong Bell property. 
\begin{equation*}
\begin{aligned}
    |\Psi_8 \rangle{\scriptstyle =}
    & {\scriptstyle|00000000\rangle+|00001111\rangle+|00110011\rangle+|00111100\rangle
    +|01010101\rangle+|01011010\rangle+|10011001\rangle+|10010110\rangle}\\ {\scriptstyle+}&{\scriptstyle|01101001\rangle+|01100110\rangle+|10100101\rangle+|10101010\rangle+|11000011\rangle+|11001100\rangle+|11110000\rangle+|11111111\rangle.}\\
    \end{aligned}
\end{equation*}
    $ |\Psi_8\rangle$ can be represented by an 8-ary signature    $\Psi_8$.
    The support of $\Psi_8$ has the following structure: the sums of the first four variables, and the last four variables are both even; the assignment of the first four variables are either identical to, or complement of the assignment of the last four variables.  While it is not obvious
    from this description that the support set
    is an affine subspace of  $\mathbb{Z}_2^{8}$,
    but \emph{it is}.  Another interesting description of   $\Psi_8$ is
    as follows: Take  4 bits $x_1, x_2, x_3, x_5$, (these are not the first 4 bits in the description above), then on the support the remaining 4 bits are mod 2 sums of ${4 \choose 3}$ subsets of
    $\{x_1, x_2, x_3, x_5\}$.

        The $6$-qubit state $ |\Psi_6\rangle$ satisfying the Bell property has $32$ nonzero coefficients. We give it in the signature form.
        \vspace{-0.5ex}
        \[\Psi_6(x_1, \ldots, x_6)=(-1)^{x_1x_4+x_2x_5+x_3x_6+x_4x_5+x_5x_6+x_4x_6},\] 
        where the support of $\Psi_6$ is $\sum_{i=1}^{6}x_i=0 \bmod 2$ (even parity). We can write $\Psi_6$ as the following 8-by-8 matrix where the assignment of the first three variables in lexicographic order (from $000$ to $111$) is the row index and the assignment of the last three variables in lexicographic order is the column index. 
        \[M_{123,456}(\Psi_6)=\left[\begin{smallmatrix}
1 & 0 & 0 & 1 & 0 & 1 & 1 & 0\\
0 & -1 & 1 & 0 & 1 & 0 & 0 & -1\\
0 & 1 & -1 & 0 & 1 & 0 & 0 & -1\\
-1 & 0 & 0 & -1 & 0 & 1 & 1 & 0\\
0 & 1 & 1 & 0 & -1 & 0 & 0 & -1\\
-1 & 0 & 0 & 1 & 0 & -1 & 1 & 0\\
-1 & 0 & 0 & 1 & 0 & 1 & -1 & 0\\
0 & 1 & 1 & 0 & 1 & 0 & 0 & 1\\
\end{smallmatrix}\right].\]

We can use Pauli operations to generate more states satisfying the Bell property. 
Consider the following four Pauli operators
\[I=\begin{bmatrix}
1 & 0\\
0 & 1\\
\end{bmatrix}, 
~~~~X=\begin{bmatrix}
0 & 1\\
1 & 0\\
\end{bmatrix},
~~~~Y=\begin{bmatrix}
0 & -\ii\\
\ii & 0\\
\end{bmatrix} ~~~~\text{ and }
~~~~Z=\begin{bmatrix}
1 & 0\\
0 & -1\\
\end{bmatrix}.
\]
A Pauli operation on an $n$-qubit state $|{\Psi}\rangle$ is defined as $P_1\otimes P_2\otimes \ldots \otimes P_n|{\Psi}\rangle$ 
(which produces another $n$-qubit) where each $P_i$ is a Pauli operator. Let $|{\Psi_6}\rangle$ and $|{\Psi_8}\rangle$ be states  described above. 
Let $\mathfrak{P}_6$ and $\mathfrak{P}_8$ denote the sets of states realized by performing Pauli operations on $|{\Psi_6}\rangle$ and $|{\Psi_8}\rangle$ respectively. 
All states in $\mathfrak{P}_6$ and $\mathfrak{P}_8$ satisfy the Bell property.

Due to the existence of these  $6$-qubit and $8$-qubit
states with such extraordinary properties, 
it remains as a difficult task to achieve a full dichotomy for real-valued Holant problems. 
On the other hand, we hope such states can be further investigated and perhaps applied to quantum computing or quantum information theory. 

\vspace{1ex}
 The paper is organized  as follows. In Section~\ref{sec-preserve},  we give a proof of our main quantum entanglement result (Theorem \ref{preservation})
 by using the theory of signatures. 
We give some preliminaries for Holant problems in Section \ref{sec-prelim}. Then, we give
 a proof outline for our dichotomy result (Theorem \ref{odd-intro-sec1}) in Section \ref{sec-outline} and the full proof in Section \ref{full-proof}.

\section{Preservation of Multipartite Entanglement under Projections}\label{sec-preserve}
\subsection{Unique prime factorization and pinning gadgets}
We use the theory of signatures to prove Theorem \ref{preservation}.
Recall that by our definition,  a signature always has arity at least one.
A nonzero signature $g$ \emph{divides} $f$ denoted by $g \mid f$, if there is a signature $h$ such that  $f=g \otimes h$
(with possibly a permutation of variables) or there is a constant $\lambda$ such that $f= \lambda \cdot g$.
In the latter case, if $\lambda \neq 0$, then we also have $f \mid g$ since $g= \frac{1}{\lambda} \cdot f$.
For nonzero signatures, if both $g\mid f$ and $f \mid g$, then they are nonzero constant multiples of
each other, and 
we say $g$ is an \emph{associate} of $f$, 
denoted by $g \sim f$.
In terms of  this division relation, \emph{irreducible} signatures and \emph{prime} signatures are defined. It is proved that they are equivalent, which gives us the \emph{unique prime factorization} (UPF) of signatures \cite{cai-fu-shao-eo}.

\begin{definition}[Irreducible signature]
A nonzero signature $f$ is irreducible if $g \mid f$ implies  $g \sim f$.
\end{definition}

\begin{definition}[Prime signature]
A nonzero signature $f$ is a prime signature, if
for any nonzero signatures $g$ and  $h$,
$f \mid g \otimes h$ 
implies that $f\mid g$ or $f \mid h$. 
\end{definition}

\begin{lemma}\label{prime=irreducible}
The notions of  irreducible signatures and prime signatures are equivalent.
\end{lemma}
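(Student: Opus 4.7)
The plan is to prove both implications by carefully tracking arities in any divisibility relation. Recall that $g\mid f$ can hold only in one of two shapes: either $f=\lambda g$ with equal arity, or $f=g\otimes h$ for some signature $h$ of arity at least one, in which case $f$ has strictly larger arity than $g$. This arity dichotomy is the backbone of both directions, and lets us convert divisibility statements into arity constraints.

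The direction prime implies irreducible is essentially immediate. Suppose $f$ is prime and $g\mid f$. If $f=\lambda g$ then $g\sim f$ and we are done. Otherwise $f=g\otimes h$ with both $g$ and $h$ of arity $\ge 1$; primality applied to $f\mid g\otimes h$ would force $f\mid g$ or $f\mid h$, but each of $g$ and $h$ has strictly smaller arity than $f$, so neither of these can hold by the arity dichotomy. Thus every divisor of $f$ is an associate.

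For irreducible implies prime, suppose $f$ is irreducible of arity $n$ and $f\mid g\otimes h$. The case $g\otimes h=\lambda f$ is impossible, since it would give $g\mid f$ with $\operatorname{arity}(g)<n$, contradicting irreducibility. So, up to a permutation of variables, $g\otimes h=f\otimes k$ for some signature $k$ of arity $\ge 1$. Under this identification the variables of $f$ correspond to a subset $S$ of the variables of $g\otimes h$, and we partition $S=S_y\sqcup S_z$ according to whether each variable comes from $g$ or from $h$. If $S_z=\emptyset$, the equality reads $g(y)h(z)=f(y_{S_y})k(y_{\bar S_y},z)$; evaluating at any $z^*$ with $h(z^*)\ne 0$ and dividing by $h(z^*)$ exhibits $g$ as $f\otimes k'$ for a signature $k'$ (or as a scalar multiple of $f$ if $\bar S_y=\emptyset$), giving $f\mid g$. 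The case $S_y=\emptyset$ is symmetric and gives $f\mid h$.

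The heart of the argument is ruling out the mixed case $S_y,S_z\ne\emptyset$. Fix $(y_{\bar S_y}^*,z_{\bar S_z}^*)$ with $k(y_{\bar S_y}^*,z_{\bar S_z}^*)\ne 0$ and solve for $f$:
\[ f(y_{S_y},z_{S_z}) \;=\; \frac{g(y_{S_y},y_{\bar S_y}^*)\,h(z_{S_z},z_{\bar S_z}^*)}{k(y_{\bar S_y}^*,z_{\bar S_z}^*)}. \]
Since $f$ is nonzero, evaluating at a support point of $f$ shows that the restricted signatures $g_1(y_{S_y}):=g(y_{S_y},y_{\bar S_y}^*)$ and $h_1(z_{S_z}):=h(z_{S_z},z_{\bar S_z}^*)$ are both nonzero, so the displayed identity exhibits $f$ as a proper tensor product $g_1\otimes h_1$ (up to a nonzero scalar) with $\operatorname{arity}(g_1)=|S_y|<n$. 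This contradicts irreducibility of $f$, closing off the mixed case and finishing the proof. The main obstacle is precisely this case: one must choose the evaluation point of $k$ so that the induced partial evaluations of $g$ and $h$ remain nonzero signatures, and then recognize the resulting identity as a forbidden proper factorization of the irreducible $f$; the rest is routine bookkeeping on how the variable set of $f$ splits between those of $g$ and $h$.
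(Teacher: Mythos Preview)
The paper does not supply its own proof of this lemma; it is stated with a citation to \cite{cai-fu-shao-eo}. Your argument is correct and self-contained, so there is nothing in the present paper to compare it against. The arity-dichotomy bookkeeping you use (every divisibility is either a scalar multiple at equal arity or a tensor factor at strictly smaller arity) is exactly the right structural observation, and your handling of the mixed case---fixing a support point of $k$ to exhibit $f$ as a proper tensor product of the restricted $g_1$ and $h_1$, then invoking irreducibility---is clean. One tiny stylistic remark: when you say the case $g\otimes h=\lambda f$ is ``impossible,'' it would read more transparently to say that this case already contradicts the irreducibility hypothesis (since then $f=\lambda^{-1}(g\otimes h)$ is a proper factorization), so we may pass to the tensor case; but the logic is sound as written.
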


A prime factorization of a signature $f$ is $f=g_1\otimes \ldots \otimes g_k$ up to a permutation of
variables, where each $g_i$ is a prime (irreducible) signature. 

\begin{lemma}[Unique prime factorization]\label{unique}
Every nonzero signature $f$ has a prime factorization.
If  $f$ has  prime factorizations
 $f=g_1\otimes \ldots \otimes g_k$ and $f=h_1\otimes \ldots \otimes h_\ell$,
both up to a permutation of variables,
then $k=\ell$ and after reordering the factors we have $g_i \sim h_i$  for all $i$. 
\end{lemma}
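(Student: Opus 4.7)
The plan is to prove Lemma~\ref{unique} in two stages: existence of a prime factorization by induction on arity, then uniqueness by showing that any two prime factorizations induce the same partition of the variables, after which the matched factors must be associates.

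For existence, I would induct on the arity $n$ of $f$. Any nonzero arity-$1$ signature is vacuously irreducible, since a tensor product of two signatures (each of arity $\geqslant 1$) has arity $\geqslant 2$; thus $f$ itself serves as its prime factorization. For $n \geqslant 2$, either $f$ is already irreducible and we are done, or $f = g \otimes h$ (possibly after permuting variables) for some nonzero signatures $g, h$ of arity strictly less than $n$. Applying the inductive hypothesis to each and concatenating yields a prime factorization of $f$.

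For uniqueness, consider two prime factorizations $f = g_1 \otimes \cdots \otimes g_k = h_1 \otimes \cdots \otimes h_\ell$, and let $\{S_1, \ldots, S_k\}$ and $\{T_1, \ldots, T_\ell\}$ be the induced partitions of the variable set of $f$. The heart of the argument is to show that these two partitions coincide as unordered set-partitions. Fix a block $S_i$, and pick values for the variables outside $S_i$ at which $\prod_{i' \neq i} g_{i'}$ evaluates to a nonzero scalar $c$; this is possible because each $g_{i'}$ is nonzero on some assignment of its variables. Restricting $f$ by these values yields $c \cdot g_i$ on the left-hand side, while on the right-hand side it becomes a tensor product of the partial evaluations of the $h_j$'s along the nonempty intersections $S_i \cap T_j$. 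Because $g_i$ is irreducible, this tensor product must collapse to a single nontrivial factor, forcing $S_i \subseteq T_j$ for a unique $j$. The symmetric argument gives $T_j \subseteq S_i$, hence $S_i = T_j$, and the two partitions coincide. Once the blocks are matched, the same evaluation produces an equality $c \cdot g_i = d \cdot h_j$ for some scalar $d$; since $g_i$ is a nonzero signature, $d$ is necessarily nonzero, and so $g_i \sim h_j$.

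The main obstacle I anticipate is the partition-matching step, where one must rule out nontrivial overlaps $S_i \cap T_j$ using the irreducibility of $g_i$ (equivalent to primality by Lemma~\ref{prime=irreducible}) together with the fact that any nonzero signature is nonzero on at least one assignment of its variables. Once the partitions are identified, the matching of the factors is an immediate consequence of the same nonvanishing argument, and the induction is complete.
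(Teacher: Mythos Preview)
Your argument is correct. The paper does not actually prove Lemma~\ref{unique} in the text; it is imported from~\cite{cai-fu-shao-eo}, so there is no in-paper proof to compare against line by line.

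Your approach is a direct ``partition-matching'' argument: restrict $f$ by pinning all variables outside a block $S_i$ to values that keep the $g$-side nonzero, and use irreducibility of $g_i$ to rule out a nontrivial tensor splitting on the $h$-side. One small point worth making explicit in your write-up: the symmetric step gives $T_j \subseteq S_{i'}$ for \emph{some} $i'$, and only then does $S_i \subseteq T_j \subseteq S_{i'}$ together with disjointness of the $S$-blocks force $i'=i$ and hence $S_i = T_j$. You state the conclusion correctly but skip this one-line justification.

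An alternative route, closer in spirit to the classical UFD proof and to how the paper phrases things, is to use primality (Lemma~\ref{prime=irreducible}) directly: since $g_1$ is prime and $g_1 \mid h_1 \otimes \cdots \otimes h_\ell$, we get $g_1 \mid h_j$ for some $j$; irreducibility of $h_j$ then forces $g_1 \sim h_j$, after which one cancels (by pinning the common variable block to a nonvanishing point) and inducts on the number of factors. Your partition argument avoids invoking primality and is slightly more self-contained; the primality route is shorter once Lemma~\ref{prime=irreducible} is in hand. Both are standard and valid.
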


A nonzero signature $f$ is irreducible if $f$ cannot be written as $g\otimes h$ for some signatures $g$ and $h$. 
This is equivalent to saying that
 $|f\rangle$ is a genuinely entangled state of multiple qubits or $|f\rangle$ is a single-qubit state.
 Let  $\mathscr{T}_1$ denote the set of tensor products of unary signatures and $\mathscr{T}$ denote the set of tensor products of unary and binary signatures. Then  a state $|f\rangle$ of multiple qubits
is entangled iff  $f\notin\mathscr{T}_1$, and $|f\rangle$ exhibits multipartite entanglement iff $f\notin\mathscr{T}$.
The following result is a direct corollary of Lemma \ref{unique}.

\begin{corollary}\label{cor-upf}
Let $f$ be a nonzero $n$-ary signature. Suppose that there are two irreducible signatures $g$ on a variable set $A$ and $h$ on a variable set $B$ such that $g\mid f$ and $h\mid f$.
Then, either $A$ is disjoint with $B$, or $A=B$ and $g \sim h$. 
\end{corollary}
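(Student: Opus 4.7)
The plan is to reduce the corollary directly to the unique prime factorization (Lemma \ref{unique}). First, fix a prime factorization $f = p_1 \otimes \cdots \otimes p_k$ (up to permutation of variables), where each $p_i$ lives on a variable set $V_i$, and the $V_1, \ldots, V_k$ partition the set of $n$ variables of $f$.

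Next I would translate each of the two hypotheses $g \mid f$ and $h \mid f$ into the language of this factorization. Since $g$ is irreducible, Lemma \ref{prime=irreducible} tells us it is prime. From the definition of divisibility, either $f = g \otimes f_g$ for some signature $f_g$ on the variables outside $A$, or $f = \lambda \cdot g$ for some nonzero scalar $\lambda$ (in which case $A$ is already the full variable set). In either case, combining $g$ with a prime factorization of the cofactor produces a prime factorization of $f$ in which $g$ itself appears as one of the prime factors. By the uniqueness part of Lemma \ref{unique}, after reindexing we must have $g \sim p_i$ for some $i$; since associates have the same variable set, this forces $A = V_i$. The identical argument applied to $h$ yields an index $j$ with $h \sim p_j$ and $B = V_j$.

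The two possible conclusions of the corollary correspond to the cases $i = j$ and $i \neq j$. If $i = j$, then $A = V_i = V_j = B$, and $g \sim p_i \sim h$, whence $g \sim h$ by transitivity of $\sim$. If $i \neq j$, then $V_i \cap V_j = \emptyset$ because the $V_\ell$'s partition the variables of $f$, so $A$ and $B$ are disjoint.

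The only mildly delicate point is justifying the step that ``inserts $g$'' as a factor in a prime factorization of $f$, since a priori $g$ need not literally occur as a term in any particular chosen factorization. This is legitimate precisely because $g$ is prime (equivalently irreducible) by Lemma \ref{prime=irreducible}: when we write $f = g \otimes f_g$ and then factor $f_g$ into primes, the result is itself a prime factorization of $f$, so the uniqueness clause of Lemma \ref{unique} applies directly to conclude $g \sim p_i$ with matching variable set. I expect this bookkeeping to be the only step requiring care; the remainder is a straightforward case split.
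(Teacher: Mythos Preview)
Your proposal is correct and is precisely the intended argument: the paper states only that Corollary~\ref{cor-upf} is ``a direct corollary of Lemma~\ref{unique}'' without further detail, and your proof is the natural unpacking of that claim. The one step you flag as delicate---inserting $g$ into a prime factorization of $f$ and invoking uniqueness to match it with some $p_i$ on the same variable set---is handled correctly.
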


We use $f_i^{0}$ and $f_i^{1}$ to denote the signature forms of quantum states $|f_i^{0}\rangle$ and $|f_i^{1}\rangle$ realized by projections onto $|0\rangle$ and $|1\rangle$. 
In the Holant framework, these signatures are
realized  by a \emph{pinning} gadget, i.e., connecting  the variable $x_i$ of $f$ with unary signatures $\Delta_0=(1, 0)$ and $\Delta_1=(0, 1)$ respectively. $\Delta_0$ and $\Delta_1$ are signature forms of $|0\rangle$ and $|1\rangle$.
We may further pick a variable $x_j$ of $f_i^{c}$ and pin it to the value $d$ $(c, d\in \{0, 1\})$. 
Obviously, the pinning gadgets on different variables $x_i$ and $x_j$ commute. 
Thus, we have 
$(f_i^c)_{j}^d=(f_j^d)_{i}^c$. 
We denote it by $f^{cd}_{ij}$.

The following lemma is easy to check.

\begin{lemma}\label{pinning-zero}

Let $f$ has arity $n$. If $f_i^0\equiv 0$ for all $i\in[n]$ and $f_j^1\equiv 0$ for some $j\in[n]$, then  $f\equiv 0$.

\end{lemma}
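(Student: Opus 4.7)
The plan is to unpack the definitions of the pinning operation directly and then argue about the support of $f$. First I would observe that the hypothesis $f_i^0\equiv 0$ for all $i\in[n]$ says exactly that $f(x_1,\ldots,x_n)=0$ whenever at least one coordinate $x_i$ equals $0$. Indeed, $f_i^0$ is by definition the restriction of $f$ obtained by substituting $x_i=0$, so $f_i^0\equiv 0$ means $f$ vanishes on every assignment that has a $0$ in position $i$; ranging $i$ over $[n]$, the support of $f$ is contained in $\{x\in\{0,1\}^n: x \text{ has no zero coordinate}\}=\{1^n\}$.

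Next I would apply the remaining hypothesis $f_j^1\equiv 0$ for some $j\in[n]$. By the definition of pinning this gives $f(x_1,\ldots,x_{j-1},1,x_{j+1},\ldots,x_n)=0$ for every choice of the other coordinates; specializing the other coordinates all to $1$ yields $f(1^n)=0$. Combining with the previous paragraph, $f$ is zero both outside $\{1^n\}$ and at $1^n$, hence $f\equiv 0$ on $\{0,1\}^n$.

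There is no real obstacle in this lemma: it is a completely elementary consequence of the definition of the pinning gadgets and requires none of the unique-prime-factorization machinery developed in the preceding part of the section. I would include it as stated so that it can be invoked as a convenient zero-test during later inductive arguments for Theorem~\ref{preservation}, where one needs to rule out pathological cases in which \emph{all} pinnings to $|0\rangle$ together with at least one pinning to $|1\rangle$ simultaneously kill the signature.
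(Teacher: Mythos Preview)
Your proof is correct; the paper itself omits the argument, remarking only that the lemma ``is easy to check,'' and your direct unfolding of the definitions of $f_i^0$ and $f_j^1$ is precisely the intended verification.
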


Suppose  that $g\mid f$ where $g$ is on a variable set $A$. Then for any variable $x_i$ of $f$  that is not in $A$, we have $g\mid f_i^0$ and $g\mid f_i^1$ 
(By definition, the division relation holds even if $f_i^0$ or  $f_i^1$ is a zero signature).
Thus, the division relation is unchanged under  pinning gadgets on variables out of $A$. 
The following lemma shows that 
a stronger converse is also true. 

\begin{lemma}\label{lem-division}
Let $f$ be an $n$-ary $(n\geqslant 2)$ signature. If there exists a signature $g$ on a variable set $A$ such that $g\mid f_i^0$ for all $x_i\notin A$, and furthermore $g\mid f_j^1$ for some $x_j\notin A$, then $g\mid f$.
\end{lemma}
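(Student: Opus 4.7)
The plan is to construct the quotient signature $h$ on the complementary variable set $B = [n] \setminus A$ explicitly, and then verify directly that $f = g \otimes h$. Since $g$ is nonzero by the definition of the divides relation, and the hypothesis provides two pinnings of $f$ that are both divisible by $g$, I can read off two ``slices'' of the prospective $h$ and glue them together using the pinning variable.

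Concretely, I would pick $x_j \in B$ to be the particular variable supplied by the second hypothesis, so that both $g \mid f_j^0$ (from the universal clause) and $g \mid f_j^1$ hold. Write $f_j^0 = g \otimes h_0$ and $f_j^1 = g \otimes h_1$, where $h_0$ and $h_1$ are signatures on $B \setminus \{x_j\}$; in the boundary case $B = \{x_j\}$ they are scalars $\lambda_0, \lambda_1$ and the ``tensor product'' becomes the product with a unary signature on $x_j$. If a pinning vanishes, the corresponding $h_c$ is taken to be the zero signature, which is absorbed by the convention $g \mid 0$. Now assemble $h$ on $B$ by setting $h(x_j = c, \vec y) := h_c(\vec y)$ for each $c \in \{0, 1\}$ and each assignment $\vec y$ to $B \setminus \{x_j\}$.

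To verify $f = g \otimes h$, take any assignment $(\vec a, \vec b)$ of $(A, B)$ and split $\vec b = (b_j, \vec y)$ according to the value at $x_j$. Chasing definitions gives
\[ f(\vec a, \vec b) = f_j^{b_j}(\vec a, \vec y) = g(\vec a) \cdot h_{b_j}(\vec y) = g(\vec a) \cdot h(\vec b), \]
so $f = g \otimes h$, which establishes $g \mid f$.

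The calculation itself is short; the only real care is the bookkeeping of variable sets and tensor-product arities, together with the edge cases $B = \{x_j\}$ and vanishing pinnings. I note in passing that the universally-quantified clause ``$g \mid f_i^0$ for all $x_i \notin A$'' is actually stronger than this argument uses: only the single $x_j$ with both $g \mid f_j^0$ and $g \mid f_j^1$ is strictly required. The stronger formulation is presumably chosen to dovetail with how the lemma is invoked downstream, where the converse observation (that $g \mid f$ automatically propagates $g \mid f_i^c$ to every $x_i \notin A$ and every $c$) is used in tandem.
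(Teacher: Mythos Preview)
Your proof is correct, and it is genuinely different from the paper's argument. You build the quotient $h$ directly by gluing the two slices $h_0,h_1$ obtained from $f_j^0=g\otimes h_0$ and $f_j^1=g\otimes h_1$, and you correctly note that this only requires divisibility at the single index $x_j$. The paper instead argues indirectly (written out for unary $g=(a,b)$ on $x_u$): it forms the ``obstruction'' signature $f'=b\,f_u^0-a\,f_u^1$, uses the full universal hypothesis $g\mid f_i^0$ to show $(f')_i^0\equiv 0$ for every $i\neq u$, uses $g\mid f_j^1$ to get $(f')_j^1\equiv 0$, and then invokes Lemma~\ref{pinning-zero} to conclude $f'\equiv 0$, whence $f_u^0:f_u^1=a:b$. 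Your route is shorter and exposes that the universal clause is redundant for this lemma. The trade-off is that the paper's ``vanishing obstruction'' template is reused later (Lemma~\ref{all-div}) where the second hypothesis is $g\mid \partial_{jk}f$ rather than $g\mid f_j^1$; there the merged signature $\partial_{jk}f=f_{jk}^{00}+f_{jk}^{11}$ is not a single pinning, so one cannot read off a slice of $h$ directly, and the indirect method earns its keep.
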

\begin{proof}
We may assume $f$ is nonzero, for otherwise the conclusion trivially holds.
We now prove this  for a unary signature $g=(a, b)$. We assume $g$ is on the variable $x_u$.
Consider the signature $f'= bf_u^0-af_u^1$.
Clearly, $f'$ has arity at least $1$. 
For every $i\neq u$, we have $f_i^0=(a, b)\otimes h$ for some $h$. Then, $(f_i^0)_u^0=a\cdot h$, $(f_i^0)_u^1=b\cdot h$, and hence
$$(f')_i^0= (bf_u^0-af_u^1)_i^0=bf_{ui}^{00}-af_{ui}^{10}=b(f_{i}^{0})_u^0-a(f_{i}^{0})_u^1=ba\cdot h-ab\cdot h \equiv 0.$$
Moreover, there is an index $j\neq u$ such that 
 $g\mid f_j^1$, i.e., $f_j^1=(a, b)\otimes h'$ for some $h'$.
 Then, $(f_j^1)_u^0=a\cdot h'$, $(f_j^1)_u^1=b\cdot h'$, and hence
 $(f')_j^1= (bf_u^0-af_u^1)_j^1=b(f_{j}^{1})_u^0-a(f_{j}^{1})_u^1=ba\cdot h'-ab\cdot h' \equiv 0.$
By Lemma \ref{pinning-zero}, we have $f'\equiv 0$. Thus, we have $f_u^0 : f_u^1 = a : b$, and hence $g\mid f$.

For a signature $g$ of arity $\geqslant n-2$, the proof is essentially the same, which we omit here.
\end{proof}

\subsection{The proof of Theorem~\ref{preservation}}
We restate  Theorem \ref{preservation} in terms of signatures.
We use $\mathscr{S}(f)$ to denote the support of $f$, i.e., $\mathscr{S}(f)=\{\alpha\mid f(\alpha)\neq 0\}$. We use $0^n$ and $1^n$ to denote the $n$-bit all-$0$ and all-$1$ strings.

\begin{theorem}\label{tho-entangle}
Let $f$ be an $n$-ary $(n\geqslant 4)$ signature, $f\notin \mathscr{T}$ and $f(0^n)\neq 0$.
If $\mathscr{S}(f)\not\subseteq \{0^n, 1^n\}$ when $n\geqslant 5$, or $\mathscr{S}(f)\not\subseteq \{0000, 1111, 0011, 1100\}$ up to any permutation of four variables when $n=4$, then there exists some $i$ such that $f_i^0$ or $f_i^1$ is not in $\mathscr{T}$.

\end{theorem}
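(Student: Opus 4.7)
The plan is to prove the contrapositive: assume that for every $i \in [n]$, each of $f_i^0$ and $f_i^1$ is either identically zero or lies in $\mathscr{T}$, and then deduce that $\mathscr{S}(f)$ must have one of the exceptional forms. The argument has two main phases: first pin down the global structure of $f$ using unique prime factorization, and then do a rigidity analysis on the tensor decompositions of the $f_i^c$.

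First I would show that $f$ itself must be \emph{irreducible} of arity $n$. Suppose its prime factorization contains some irreducible $g$ of arity $m$ on a variable set $A \subsetneq [n]$. If $m \geq 3$, pick any $i \notin A$ and any $c \in \{0,1\}$ with $f_i^c \not\equiv 0$; then $g \mid f_i^c$, so $f_i^c$ inherits an irreducible factor of arity $\geq 3$, contradicting $f_i^c \in \mathscr{T}$. Hence every irreducible factor of $f$ of arity $\geq 3$ spans all of $[n]$, which forces $f$ itself to be a single prime of arity $n$; otherwise all primes have arity $\leq 2$ and $f \in \mathscr{T}$, contrary to hypothesis. From $f(0^n) \neq 0$ we also get $f_i^0(0^{n-1}) \neq 0$, so each $f_i^0$ is a nonzero element of $\mathscr{T}$; irreducibility of $f$ further forces each $f_i^1$ to be nonzero and non-proportional to $f_i^0$, for otherwise $f$ would split off a unary factor at qubit $i$.

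The heart of the argument is a rigidity analysis of the tensor decomposition. For each $i$, write $f_i^0 = \bigotimes_k g_k^{(i)}$ as its unique decomposition into unary and binary irreducibles (Lemma~\ref{unique}), inducing a partition $P_i$ of $[n] \setminus \{i\}$ into singletons (unary factors) and pairs (binary factors); normalizing at $0$ makes each factor nonzero on its all-zero input. Now pin a second variable $x_j$: if $j$ is a singleton in $P_i$, then $f_{ij}^{00}$ and $f_{ij}^{01}$ are scalar multiples of a single signature on the remaining $n-2$ variables; if $j$ is paired with some $k$ in $P_i$, then $f_{ij}^{00}$ and $f_{ij}^{01}$ agree off $x_k$ and differ only in their unary factor at $x_k$. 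Applying the commutativity $f_{ij}^{cd} = f_{ji}^{dc}$ across all $i,j$, the partitions $P_i$ must be mutually compatible, and the parallel partition $Q_i$ induced by $f_i^1$ must be compatible with $P_i$ through the mixed pinnings $f_{ij}^{0c}$ and $f_{ij}^{1c}$.

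The last step extracts the exceptional support from these combinatorial constraints. Propagating the consistency of the $P_i$ and $Q_i$ across all choices of $i$, together with the irreducibility of $f$, should collapse $\mathscr{S}(f)$ to codewords tightly linked to $0^n$ and $1^n$. For $n \geq 5$ the only survivor is $\mathscr{S}(f) \subseteq \{0^n, 1^n\}$; the extra flexibility at $n = 4$ arises because on only three remaining variables there are only three possible pairings, which opens room for the ``swap-pair'' support $\{0000, 1111, 0011, 1100\}$ up to a permutation of qubits. The main obstacle will be exactly this case analysis: tracking how the pairings in $P_i$ and $Q_i$ mesh together and ruling out every unlisted support pattern, with the $n = 4$ case requiring especially careful bookkeeping because the small arity creates room for the exceptional four-term family.
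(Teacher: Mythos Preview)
Your overall framework is right---contrapositive, unique prime factorization, commutativity of pinnings---and the opening reduction to the case where $f$ is irreducible is fine. But the ``rigidity analysis'' you describe is not yet a proof: the phrase ``propagating the consistency of the $P_i$ and $Q_i$ \ldots should collapse $\mathscr{S}(f)$'' is exactly where all the work lies, and you have not supplied a mechanism for it. Tracking partitions alone does not obviously force the support down; what the paper does instead is extract a \emph{single common divisor} of $f$.

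Concretely, the paper splits into two cases. If every $f_i^0$ lies in $\mathscr{T}_1$, it locates a unary $a(x_u)$ dividing every $f_i^0$ (for $i \ne u$) and at least one $f_j^1$, then invokes Lemma~\ref{lem-division} to conclude $a(x_u) \mid f$, contradicting irreducibility. If some $f_k^0$ has an irreducible binary factor $b(x_v,x_w)$, it shows $b(x_v,x_w) \mid f_i^0$ for all $i \notin \{v,w\}$ and then either finds some $f_j^1$ it also divides (again Lemma~\ref{lem-division} gives $b \mid f$, contradiction), or else $f_{ij}^{01} \equiv 0$ for all $\{i,j\}$ disjoint from $\{v,w\}$. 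This last sub-case is where the exceptional supports emerge: the vanishing condition forces $f$ into an explicit $4 \times 2^{n-2}$ matrix with at most eight nonzero entries, and a direct calculation on those entries (using that each $f_v^c, f_w^c \in \mathscr{T}$) pins down $\mathscr{S}(f) \subseteq \{0^n,1^n\}$ for $n \ge 5$ and the four-point pattern for $n=4$.

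The step you are missing is Lemma~\ref{lem-division} (or an equivalent): a device that lifts a divisor from \emph{all} the $f_i^0$'s plus \emph{one} $f_j^1$ to a divisor of $f$ itself. Without it, compatibility of the partitions $P_i$ does not by itself contradict irreducibility of $f$, and your final ``collapse'' step has no engine. Your $n=4$ remark about ``only three possible pairings'' is suggestive but does not replace the explicit support computation the paper carries out.
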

\begin{proof}
Since $f(0^n)\neq 0$, we have $f^0_i\not\equiv 0$ and $f^{00}_{ij}\not\equiv 0$ 
(not identically 0) for all indices $i$ and $j$.
Also, 
since the support $\mathscr{S}(f)\not\subseteq \{0^n, 1^n\}$,
there exist some $s$ and $t$ such that $f^{01}_{st}\not\equiv 0$.
For a contradiction, we assume  $f_i^0, f_i^1 \in\mathscr{T}$ for all $i$. 
We consider the following two possible cases.

{\bf Case 1}. For all indices $i$, $f_i^0 \in \mathscr{T}_1$ (i.e., tensor product of unary signatures). 

We will show that in this case, there is a unary signature  $a(x_u)$ on some variable $x_u$,
such that $a(x_u)\mid f$. This will lead us to a contradiction.

 Recall that there exist variables $x_s$ and $x_t$ such that $f^{01}_{st}\not\equiv 0$. 
 We consider $f^1_{t}$.
 Clearly, $f^1_{t}\not\equiv 0$. Since $f^1_{t}\in \mathscr{T}$, in the UPF of $f^1_{t}$, the variable $x_s$ may appear in a unary signature or an irreducible binary signature. In both cases,
 since $f$ has arity at least $4$,  we can pick a variable $x_u$ such that $x_u$ and $x_s$ appear in two distinct irreducible signatures in the UPF of $f^1_{t}$ (i.e., $x_u$ and $x_s$ are not entangled in $f^1_{t}$).
 Then, we show that $x_u$ must appear in a unary signature  in the UPF of $f^1_{t}$.
 Otherwise, there is an irreducible binary signature  $b(x_u, x_{v})$ such that $b(x_u, x_v)\mid f_t^1$.
 Since $x_u$ is not entangled with $x_s$ in $f_t^1$, we have $v\neq s$. Then, $b(x_u, x_{v})\mid (f_{t}^{1})_s^0$.
 On the other hand, we consider $f_s^0$. 
 By our assumption, $f_s^0\in \mathscr{T}_1$ and hence there exists some unary signature $a'(x_u)$ such that $a'(x_u)\mid f_s^0$. 
 Then, we have $a'(x_u)\mid (f_{s}^{0})_t^1$.
 Recall that the pinning gadgets on different variables commute. Thus, $(f_t^1)_s^0=(f_s^0)_t^1=f_{st}^{01}$, and we know that it is a nonzero signature. 
 By Corollary \ref{cor-upf}, we have $b(x_u, x_v)\sim a'(x_u)$. This is a contradiction. 
 Thus, there exists some unary signature $a(x_u)$ such that $a(x_u)\mid f_{t}^1$. 
 
 Now we show that $a(x_u)\mid f_i^0$ for all indices $i\neq u$. 
 First, we show that $a(x_u)\mid f_s^0$. 
Since $f_s^0\in \mathscr{T}_1$, there exists some unary signature $a'(x_u)$ such that $a'(x_u)\mid f_s^0$, and then $a'(x_u)\mid f_{st}^{01}$.
Also, we have $a(x_u)\mid f_{st}^{01}$ since $a(x_u)\mid f_{t}^1$. 
Since $f^{01}_{st}\not\equiv 0$, by Corollary \ref{cor-upf}, we have $a(x_u)\sim a'(x_u)$. 
Thus, $a(x_u)\mid f_{s}^0$.
Then, we consider $f_i^0$ for all indices $i\notin  \{u, s\}$.
 Since $f_i^0\in \mathscr{T}_1$, 
 there exists a unary signature $a''(x_u)$ such that $a''(x_u)\mid f_i^0$, and then  $a''(x_u)\mid f_{is}^{00}.$
Also, we have $a(x_u)\mid f_{is}^{00}$ 
since $a(x_u)\mid f_{s}^{0}$. 
Recall that $f_{is}^{00}\not\equiv 0$ for all $i$. 
Then, by Corollary \ref{cor-upf}, we have $a(x_u)\sim a''(x_u)$. Thus, $a(x_u)\mid f_i^0$.

Since $a(x_u)\mid f_{i}^0$ for all $i\neq u$ and $a(x_u)\mid f_t^1$ for some $t\neq u$, by Lemma \ref{lem-division},  we have $a(x_u)\mid f$. 
In other words, $f=a(x_u)\otimes g$ where $g$ is a nonzero signature of arity $n-1$ on variables other than $x_u$.
Since $f\notin \mathscr{T}$, we have $g \notin \mathscr{T}$. 
Consider $f_u^0$. We know that it is a nonzero signature and hence $f_u^0\sim g$. 
Thus, $f_u^0\notin \mathscr{T}$. We have reached a contradiction.

{\bf Case 2.} There exists some index $k$ and an irreducible binary signature $b(x_v, x_w)$ such that $b(x_v, x_w)\mid f_k^0$.

We will show that in this case,  $b(x_v, x_w)\mid f$. 
First, we show that $b(x_v, x_w)\mid f_{i}^0$ for all $i \notin \{v, w\}$. 
We already have $b(x_v, x_w)\mid f_{k}^0$.
Consider $f_i^0$ for all indices $i\notin \{v, w, k\}$. 
Since $f_i^0\in \mathscr{T}$ and $f_i^0 \not\equiv 0$, there is either a unary signature $a(x_v)$ or an irreducible binary signature $b'(x_v, x_{w'})$ that appears in the UPF of $f_i^0$, i.e., $a(x_v)\mid f_i^0$ or $b'(x_v, x_{w'})\mid f_i^0$. 
In the former case, we have $a(x_v)\mid f_{ik}^{00}$. 
In the latter case and if $w'\neq k$, we have $b'(x_v, x_{w'})\mid f_{ik}^{00}$.
In the latter case and if $w'=k$, then
let $a'(x_v)$ be the unary signature realized from $b'(x_v, x_{w'})$ by pinning $x_{w'}=x_k$ to $0$,
we get 
$a'(x_v)\mid f_{ik}^{00}$.
On the other hand, 
since $b(x_v, x_w)\mid f_k^0$, we have $b(x_v, x_w)\mid f_{ik}^{00}$.
Since $f_{ik}^{00}\not\equiv 0$, by Corollary \ref{cor-upf},  we know that the two cases that $a(x_v)\mid f_{ik}^{00}$ and $a'(x_v)\mid f_{ik}^{00}$ cannot occur. 
Thus,  $w'\neq k$ and $b'(x_v, x_{w'})\mid f_{ik}^{00}$. By  Corollary~\ref{cor-upf}, $b(x_v, x_w)\sim b'(x_v, x_{w'})$.
Thus, we have that $b(x_v, x_w)\mid f_i^0$ for all $i\notin \{v, w\}$.

Then we want to show that there exists some  $j\notin \{v, w\}$ such that $b(x_v, x_w)\mid f_j^1$.
\begin{itemize}
    \item 
 We first consider the case that there exist some indices $i$ and $j$ where $\{i, j\}$ is disjoint with $\{v, w\}$ such that $f_{ij}^{01}\not\equiv 0$. 
 We show that $b(x_v, x_w)\mid f_{j}^1$. 
Since $b(x_v, x_w)\mid f_i^0$, we have $b(x_v, x_w)\mid f_{ij}^{01}$. 
By assumption $f_{ij}^{01}\not\equiv 0$, and then clearly $f_j^1\not\equiv 0$.
Recall that $f_{j}^1 \in \mathscr{T}$. 
Again, there is either a unary signature $a(x_v)$ or an irreducible binary signature $b'(x_v, x_{w'})$ that appears in the UPF of $f_j^1$, i.e., $a(x_v)\mid f_j^1$ or $b'(x_v, x_{w'})\mid f_j^1$. 
In the first case since $i \not = v$, we can pin $x_i$ of $f_j^1$ to $0$, and 
we get $a(x_v)\mid f_{ij}^{01}$. In the second case and if $w' = i$,
again we can get $a'(x_v)\mid f_{ij}^{01}$, where $a'(x_v) = b'(x_v, 0)$,
obtained from pinning $x_i$ to 0.  But $f_{ij}^{01} \not\equiv 0$ and $b(x_v, x_w)\mid f_{ij}^{01}$. Then, in
the UPF of $f_{ij}^{01}$, it does not have a unary signature on $x_v$ as a factor.
Thus, it must be the case that $b'(x_v, x_{w'})\mid f_j^1$ where $w' \not = i$. Then, 
we have $b'(x_v, x_{w'})\mid f_{ij}^{01}$.
Since $b(x_v, x_{w})\mid f_{ij}^{01}$ and $f_{ij}^{01}\not\equiv 0$,
by Corollary \ref{cor-upf}, $b'(x_v, x_{w'}) \sim b(x_v, x_{w})$, and thus
$b(x_v, x_w)\mid f_j^1$. 
%
Then, by Lemma \ref{lem-division}, we have $b(x_v, x_w)\mid f$. 
In other words, $f=b(x_v, x_w)\otimes h$ where $h$ is a nonzero signature of arity $n-2$ on variables other than $x_v$ and $x_w$.
Since $f\notin \mathscr{T}$, we have $h \notin \mathscr{T}$. 
Then consider $f_v^0$. We know that it is a nonzero signature and  $h\mid f_v^0$. 
Thus, $f_v^0\notin \mathscr{T}$. Contradiction.

\item Then we consider the case that $f_{ij}^{01}\equiv 0$ for all indices $\{i, j\}$ that are disjoint with $\{v, w\}$. 
Consider an $n$-bit input $\alpha$ of $f$. 
We write $\alpha$ as $\alpha_v\alpha_w\beta$ where $\alpha_v$ is the input on variable $x_v$, $\alpha_w$ is the input on variable $x_w$, and $\beta$ is the input on the other $n-2$ variables.
Then, $f(\alpha)=0$ if $\beta$ is not the all-0 or all-1 bit string in $\{0, 1\}^{n-2}$.
It follows that $f$ has at most eight nonzero entries. 
We list all its entries by the following $4$-by-$2^{n-2}$ matrix $M_{vw}(f)$ with variables $(x_v, x_w)\in \{0, 1\}^{2}$ as the row index (in the order 00, 01, 10, 11) and the assignment of the other variables in lexicographic order as the column index. 
\[M_{vw}(f)=\left[\begin{matrix}
c_1 & 0 & \ldots & \ldots & 0 & c_2\\
c_3 & 0 & \ldots & \ldots & 0 & c_4\\
c_5 & 0 & \ldots & \ldots & 0 & c_6\\
c_7 & 0 & \ldots & \ldots & 0 & c_8\\
\end{matrix}\right].\]
Here, $c_1=f(0^n)\neq 0$.
Consider signatures $f_v^0$ and $f_v^1$. They have the following matrix forms with the variable $x_w\in \{0, 1\}$ as the row index.
\[M_{w}(f_v^0)=\left[\begin{matrix}
c_1 & 0 & \ldots & \ldots & 0 & c_2\\
c_3 & 0 & \ldots & \ldots & 0 & c_4\\
\end{matrix}\right] \text{ ~~~and~~~ } M_{w}(f_v^1)=\left[\begin{matrix}
c_5 & 0 & \ldots & \ldots & 0 & c_6\\
c_7 & 0 & \ldots & \ldots & 0 & c_8\\
\end{matrix}\right].\]
Also consider signatures $f_w^0$ and $f_w^1$. They have the following matrix forms with the variable $x_v\in \{0, 1\}$ as the row index.
\[M_{v}(f_w^0)=\left[\begin{matrix}
c_1 & 0 & \ldots & \ldots & 0 & c_2\\
c_5 & 0 & \ldots & \ldots & 0 & c_6\\
\end{matrix}\right] \text{ ~~~and~~~ } M_{v}(f_w^1)=\left[\begin{matrix}
c_3 & 0 & \ldots & \ldots & 0 & c_4\\
c_7 & 0 & \ldots & \ldots & 0 & c_8\\
\end{matrix}\right].\]

Consider $f_v^0$. 
Since $f$ has arity at least $4$, $f_v^0$ has arity at least $3$.
Since $f_v^0\in \mathscr{T}$, the variable $x_w$ either appears in a unary factor $a(x_w)$ of $f_v^0$ or an irreducible binary factor $b(x_w, x_{w'})$ of $f_v^0$. 
In the latter case, we can pick another variable $x_r$ of $f_v^0$ where $r\neq w$ or $w'$, and we consider 
$f_{vr}^{00}$. 
We know that $f_{vr}^{00}\not\equiv 0$ since $c_1\neq 0$ and $b(x_w, x_{w'})\mid f_{vr}^{00}$ since  $b(x_w, x_{w'})\mid f_v^0$.
Notice that the column with $c_2$ and $c_4$ does not appear in $M_w(f_{vr}^{00})$.
Thus, the signature $f_{vr}^{00}$ is of the form $(c_1, c_3)\otimes (1, 0)^{\otimes{(n-2)}}$ which is a tensor product of unary signatures. Contradiction.
Thus, there is a unary signature $a(x_w)$ such that $a(x_w)\mid f_v^0$. 
Then, we have $c_1c_4=c_2c_3$.
Similarly by considering $f_w^0$, we have $c_1c_6=c_2c_5$.
Now, we consider $f_v^1$, and prove $c_5c_8=c_6c_7$. If $c_5=c_7=0$, then clearly we have $c_5c_8=c_6c_7=0$. 
Otherwise, for any $r\neq w$ or $v$, we have  $f_{vr}^{10}=(c_5, c_7)\otimes (1, 0)^{\otimes{(n-2)}}\not\equiv 0$ which is a tensor product of unary signatures.
If there is a binary signature $b(x_w, x_{w'})$ such that $b(x_w, x_{w'})\mid f_v^1$, then we can find some $r\neq w, w'$ such that $b(x_w, x_{w'})\mid f_{vr}^{10}$. Contradiction.
Thus, there is a unary signature $a(x_w)$ such that $a(x_w)\mid f_v^1$. 
Then, we have $c_5c_8=c_6c_7$.
Similarly by considering $f_w^1$, we have $c_3c_8=c_4c_7$.
\begin{itemize}
    \item Suppose  $n\geqslant 5$. Then $f_v^0$ has arity at least $4$. We first show that $c_2=0$.
    We consider $f_{vw}^{00}=[c_1, 0, \ldots \ldots, 0, c_2].$
    Since $f_v^0\in \mathscr{T}$, we have $f_{vw}^{00}\in \mathscr{T}.$
    Note that $f_{vw}^{00}$ has arity at least $3$. Since $c_1\neq 0$, the only possible value of $c_2$ to make $f_{vw}^{00}\in \mathscr{T}$ is $0$. Thus, $c_2=0$.
    Since $c_1c_4=c_2c_3=0$ and $c_1\neq 0$, we have $c_4=0$. 
    Also, since $c_1c_6=c_2c_5=0$ and $c_1\neq 0$, we have $c_6=0$.
If $c_8=0$, then $f=b(x_v, x_w)\otimes(1, 0)^{\otimes(n-2)}\in \mathscr{T}$.  A contradiction with $f\notin \mathscr{T}$. Thus, we have $c_8\neq 0$.
Since $c_5c_8=c_6c_7=0$ and $c_8\neq 0$, we have $c_5=0$.
Also since $c_3c_8=c_4c_7=0$ and $c_8\neq 0$, we have $c_3=0$.
Consider $f_{vw}^{11}=[c_7, 0, \ldots \ldots, 0, c_8]$. 
Since $f_{vw}^{11}\in \mathscr{T}$ and it has arity at least $3$, and $c_8\neq 0$, we have $c_7=0$.
Thus, $f$ has only two nonzero entries that are on the all-0 input and the all-1 input. 
A contradiction with our assumption that $\mathscr{S}(f)\not\subseteq \{0^n, 1^n\}.$
\item Suppose  $n=4$. 
If $c_2=0$, then with the same 
proof as in the case that $n\geqslant 5$, we have $c_4=c_6=0$, $c_8\neq 0$ and then $c_3=c_5=0$.
Thus, $\mathscr{S}(f)\subseteq\{0000, 1111, 1100\}$. Contradiction. 
Otherwise, $c_2\neq 0$. Suppose that $c_2=kc_1$. Then $c_4=kc_3$ since $c_1c_4=c_2c_3$ and $c_6=kc_5$ since $c_1c_6=c_2c_5$.
If $c_3$ and $c_4$ are not zero, then $c_8=kc_7$ since $c_3c_8=c_4c_7$.
Then, $f=b(x_v, x_w)\otimes (1, 0, 0, k)\in \mathscr{T}.$ Contradiction. 
Thus, $c_3=c_4=0$.
Similarly, if $c_5$ and $c_6$ are not zero, then we still have $c_8=kc_7$ since $c_5c_8=c_6c_7$. Then, we have $f\in \mathscr{T}$. Contradiction.
Thus, $c_5=c_6=0$. 
Then, $\mathscr{S}(f)\subseteq\{0000, 1111, 0011, 1100\}$. Contradiction.
\end{itemize}
\end{itemize}
Therefore, there exists some $i$ such that $f_i^0$ or $f_i^1$ is not in $\mathscr{T}$.\end{proof}
\vspace{-1ex}

Our result can be used in the classification of entanglement under SLOCC equivalence.
An $n$-qubit state  $|{\Psi}\rangle$ is equivalent to another $n$-qubit state  $|{\Phi}\rangle$ under SLOCC if there exist some invertible 2-by-2 matrices $M_1$, $M_2$,  $\ldots$, $M_n$ such that 
$|{\Psi}\rangle=M_1\otimes M_2 \otimes \ldots \otimes M_n|{\Phi}\rangle$.
Physicists are interested in the classification of SLOCC equivalence classes. 
For 2-qubit states there are two SLOCC classes, and for 3-qubit states there are six SLOCC classes \cite{slocc2}.
However, for states of 4 or more qubits there are infinitely many SLOCC classes \cite{slocc2}.
Then, the goal is to categorize these  classes into some finitely many families with common physical or mathematical properties. 
Depending on which properties are used, there are different approaches. 
One powerful approach that can possibly handle states of a high number of qubits is by induction \cite{3-qubit-l, 4-qubit-l, Backens-class, 4-qubit-g}. In this approach, the classification of $n$-qubit states relies on the classification of $(n-1)$-qubit  states. 

Consider an $n$-qubit state $|{\Psi}\rangle$. We can pick some index $i$ and write $|{\Psi}\rangle$ as $|{\Psi}\rangle=|{0}\rangle|{\Psi}_i^0\rangle+|{1}\rangle|{\Psi}_i^1\rangle$.
Families of entanglement classes of $|{\Psi}\rangle$ can be defined according to the types of entanglements found in the linear span$\{|{\Psi}_i^0\rangle, |{\Psi}_i^1\rangle\}$ which is related to the entanglement types of $|{\Psi}_i^0\rangle$ and $|{\Psi}_i^1\rangle$ themselves. 
Theorem~\ref{preservation} gives a direct relation between the entanglement types of $|{\Psi}\rangle$ and $\{|{\Psi}_i^0\rangle, |{\Psi}_i^1\rangle\}$. 
For example,  consider a $5$-qubit state exhibiting multipartite entanglement. 
First, by performing SLOCC using the matrix $N=\begin{bsmallmatrix}
0 & 1\\
1 & 0\\
\end{bsmallmatrix}$ on this state, 
we can always get a state  $|{\Psi}\rangle$ where the coefficient of $|0^5\rangle$ is nonzero. 
If $|{\Psi}\rangle$ has the form $a|0^5\rangle+b|1^5\rangle$, then 
it is equivalent to $|{\rm GHZ}_5\rangle=|0^5\rangle+|1^5\rangle$. 
Otherwise, we can apply Theorem \ref{preservation}. There exists some $i$ such that $|{\Psi}_i^0\rangle$ or $|{\Psi}_i^1\rangle$ exhibits multipartite entanglement. 
Then, in order to classify the state $|{\Psi}\rangle$, 
we only need to consider possible entanglement types of $\{|{\Psi}_i^0\rangle, |{\Psi}_i^1\rangle\}$ where at least one  state exhibits multipartite entanglement. 
This eliminates many cases compared to considering all  entanglement types of $\{|{\Psi}_i^0\rangle, |{\Psi}_i^1\rangle\}$.

\section{Preliminaries for Holant Problems}\label{sec-prelim}
\subsection{Definitions and notations}
A constraint function $f$, or a signature, of arity $n>0$
is a map $\mathbb{Z}_2^{n} \rightarrow \mathbb{C}$.
We use $f^\alpha$ to denote $f(\alpha)$.
If $\overline{f^{\alpha}}=f^{\overline{\alpha}}$ for all $\alpha$ where $\overline{f^{\alpha}}$ denotes the complex conjugation of $f^\alpha$ and $\overline{\alpha}$ denotes the bit-wise complement of $\alpha$, we say $f$ satisfies the {\sc ars}.
We use ${\rm wt}({\alpha})$ to denote the Hamming weight of $\alpha$.
The support $\mathscr{S}(f)$ of a signature is 
$\{\alpha \in \mathbb{Z}_2^{n} \mid f^\alpha \neq 0\}$.
A signature $f$ of arity $n$ has even (odd) parity if $\mathscr{S}(f) \subseteq \{\alpha \in \mathbb{Z}^n_2 \mid {\rm wt}(\alpha)$ is even (odd)$\}$.
Let $\mathcal{F}$ be any fixed set of signatures. 
A signature grid
$\Omega=(G, \pi)$ over $\mathcal{F}$
 is a tuple, where $G = (V,E)$
is a graph without isolated vertices,
 $\pi$ labels each $v\in V$ with a signature
$f_v\in\mathcal{F}$ of arity ${\operatorname{deg}(v)}$,
and labels the incident edges
$E(v)$ at $v$ with input variables of $f_v$.
 We consider all 0-1 edge assignments $\sigma$,
each gives an evaluation
$\prod_{v\in V}f_v(\sigma|_{E(v)})$, where $\sigma|_{E(v)}$
denotes the restriction of $\sigma$ to $E(v)$.
\begin{definition}[Holant problems]
The input to the problem {\rm Holant}$(\mathcal{F})$
is a signature grid $\Omega=(G, \pi)$ over $\mathcal{F}$.
The output is  the partition function
$$\Holant_{\Omega} =\sum\limits_{\sigma:E(G)\rightarrow\{0, 1\}}\prod_{v\in V(G)}f_v(\sigma|_{E_{(v)}}).$$
The bipartite Holant problems  {\rm Holant}$(\mathcal{F} \mid \mathcal{G})$
are {\rm Holant} problems
 over bipartite graphs $H = (U,V,E)$,
where each vertex in $U$ or $V$ is labeled by a signature in $\mathcal{F}$ or $\mathcal{G}$ respectively.
\end{definition}

Counting  constraint satisfaction problems (\#CSP) can be expressed as Holant problems (Lemma 1.2 in \cite{jcbook}).
We use $=_n$ to denote the {\sc Equality} signature of arity $n$,
which takes value $1$ on the all-0 or all-1 input  
and $0$ elsewhere. 
(We may also denote the n-bits all-0 and all-1 strings by $\Vec{0}_n$ and $\Vec{1}_n$ respectively.  We may omit $n$ when it is clear from the context.)
Let $\mathcal{EQ}=\{=_1, =_2, \ldots, =_n, \ldots\}$ denote the set of all {\sc Equality} signatures. 
We use $(\neq_2)$ to denote the binary {\sc Disequality} signature with truth table $(0, 1, 1, 0)$. 

\begin{lemma}\cite{jcbook}
$\CSP(\mathcal{F})\equiv_T \Holant(\mathcal{EQ} \mid \mathcal{F})$.
\end{lemma}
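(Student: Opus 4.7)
The plan is to exhibit a natural bijection-like correspondence between \#CSP instances over $\mathcal{F}$ and bipartite signature grids for $\Holant(\mathcal{EQ} \mid \mathcal{F})$, and then verify that this correspondence preserves the partition function. The intuition is that an \textsc{Equality} signature $=_n$ acts exactly like a shared variable: if $n$ edges all meet at an $=_n$ vertex, then nonzero contributions require all $n$ edges to carry the same Boolean value, which is identical semantically to $n$ occurrences of a single \#CSP variable.

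\textbf{Forward reduction} ($\CSP(\mathcal{F})\leq_T\Holant(\mathcal{EQ}\mid\mathcal{F})$). Given a \#CSP instance with variable set $X$ and constraint list $C_1,\dots,C_m$, where each $C_k$ applies some $f_k\in\mathcal{F}$ to a tuple of variables, I construct a bipartite signature grid $\Omega=(H,\pi)$ with $H=(U,V,E)$ as follows: for each variable $x\in X$ of multiplicity $d_x$ (i.e., appearing in constraints $d_x$ times, counted with repetition over argument positions), put a vertex $u_x\in U$ labelled by $=_{d_x}$; for each constraint $C_k$ of arity $r_k$, put a vertex $v_k\in V$ labelled by $f_k$; and for each occurrence of variable $x$ in the $i$-th position of $C_k$, add an edge between $u_x$ and $v_k$, assigning it to that $i$-th input of $f_k$ and to one of the inputs of $=_{d_x}$ (any bijection between occurrences of $x$ and the $d_x$ arguments of $=_{d_x}$ works). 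Variables appearing only once are handled by using $=_1$, which is the unit unary. The key identity is
\begin{equation*}
\sum_{\tau:X\to\{0,1\}}\prod_{k=1}^m f_k\bigl(\tau(x_{k,1}),\dots,\tau(x_{k,r_k})\bigr)\;=\;\sum_{\sigma:E\to\{0,1\}}\Bigl(\prod_{x}(=_{d_x})(\sigma|_{E(u_x)})\Bigr)\prod_{k}f_k(\sigma|_{E(v_k)}),
\end{equation*}
which holds because the factor $(=_{d_x})(\sigma|_{E(u_x)})$ is $1$ exactly when all edges at $u_x$ carry a common value $\tau(x)$ and $0$ otherwise, collapsing the edge-assignment sum to the variable-assignment sum.

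\textbf{Reverse reduction} ($\Holant(\mathcal{EQ}\mid\mathcal{F})\leq_T\CSP(\mathcal{F})$). Given a bipartite signature grid $(H,\pi)$ with $\mathcal{EQ}$ on the left and $\mathcal{F}$ on the right, introduce one fresh Boolean variable $x_u$ for each left-vertex $u\in U$ (regardless of its arity), and one \#CSP constraint for each right-vertex $v\in V$: namely $f_v$ applied to the tuple of variables $x_{u_1},\dots,x_{u_{\deg v}}$, where $u_i$ is the left-neighbor connected to the $i$-th input of $f_v$ (repetition of an $x_u$ is permitted and corresponds to parallel edges). The same identity above shows that the two partition functions agree.

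\textbf{Main obstacle.} The step that needs care is the bookkeeping of multiple edges and the case $d_x=1$: one must be sure that parallel edges, self-loops at equality vertices, and singleton occurrences are all handled consistently so that the edge-to-argument labelling is well-defined and the equality collapse identity truly matches the CSP sum term-by-term. Once the combinatorial bookkeeping is set up cleanly, the equality of the two sums is immediate from the definition of $=_n$, and both reductions are clearly polynomial-time. Since the statement is an interreducibility under Turing reductions and the two constructions are inverse to each other up to the trivial identification of variables with equality vertices, this completes the proof.
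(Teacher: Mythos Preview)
Your argument is correct and is exactly the standard proof of this equivalence. Note that the paper does not actually give its own proof of this lemma; it simply cites it as Lemma~1.2 of \cite{jcbook}, and the argument there is essentially the one you have written: variables of the \#CSP instance become \textsc{Equality} vertices on the left side of the bipartite Holant instance, constraints become $\mathcal{F}$-vertices on the right, and the collapse identity for $=_n$ makes the two partition functions coincide. One small remark: your mention of ``self-loops at equality vertices'' is not really an issue here since the graph is bipartite, and the only genuinely degenerate case is a variable of multiplicity~$0$, which contributes a factor of~$2$ and can be handled separately; otherwise your bookkeeping is fine.
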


A signature $f$ of arity $n \geqslant 2$ can be expressed as 
a $2^k \times 2^{n-k}$ matrix $M_{[k],[n-k]}(f)$, which lists the $2^n$ many values of $f$ with the  $[k]$ many variables as row index and the other $[n-k]$ many variables as column index.
In particular, $f$ can be expressed as a $2\times 2^{n-1}$ matrix $M_{i}(f)$
which lists the $2^n$ values of $f$ with variable $x_i
\in \{0, 1\}$ as row index and the assignments of the other
 $n-1$ variables in lexicographic order as column index.
 That is, $$M_{i}(f)=\begin{bmatrix}
f^{0,00\ldots0} & f^{0, 00\ldots1} & \ldots & f^{0, 11\ldots1}\\
f^{1,00\ldots0} & f^{1, 00\ldots1} & \ldots & f^{1, 11\ldots1}\\
\end{bmatrix}=\begin{bmatrix}
{\bf {f}}^{0}_{i}\\
{\bf {f}}^{1}_{i}\\
\end{bmatrix},$$
where $ {\bf {f}}^{a}_{i}$ 
 denotes the row vector indexed by $x_i=a$ in $M_{i}(f)$.
We may omit the subscript 
when the meaning is clear from context.

%

\subsection{Holographic transformation}
To introduce the idea of holographic transformation,
it is convenient to consider bipartite graphs.
For a general graph,
we can always transform it into a bipartite graph while preserving the Holant value,
as follows.
For each edge in the graph,
we replace it by a path of length two.
(This operation is called the \emph{2-stretch} of the graph and yields the edge-vertex incidence graph.)
Each new vertex is assigned the binary \textsc{Equality} signature $(=_2)$. Thus, we have $\holant{=_2}{\mathcal{F}}\equiv_T \Holant(\mathcal{F})$.

For an invertible $2$-by-$2$ matrix $T \in {\rm GL}_2({\mathbb{C}})$
 and a signature $f$ of arity $n$, written as
a column vector $f \in \mathbb{C}^{2^n}$, we denote by
$Tf = T^{\otimes n} f$ the transformed signature.
  For a signature set $\mathcal{F}$,
define $T\mathcal{F} = \{T f \mid  f \in \mathcal{F}\}$ the set of
transformed signatures.
For signatures written as
 row vectors  we define
$f T^{-1}$ and  $\mathcal{F} T^{-1}$ similarly.
Whenever we write $T f$ or $T \mathcal{F}$,
we view the signatures as column vectors;
similarly for $f T^{-1}$ or $\mathcal{F} T^{-1}$ as row vectors.
We can also represent $Tf$ as the matrix $M_{[k], [n-k]}(Tf)$ with $[k]$ variables as row index and the other $[n-k]$ variables as column index. Then,  we  have $M_{[k], [n-k]}(Tf)=T^{\otimes k}M_{[k], [n-k]}(f)(T^{\tt T})^{\otimes n-k}$. Similarly, $M_{[k], [n-k]}(fT^{-1})=({T^{-1}}^{\tt T})^{\otimes k}M_{[k], [n-k]}(f)(T^{-1})^{\otimes n-k}$.

Let $T \in {\rm GL}_2({\mathbb{C}})$.
The holographic transformation defined by $T$ is the following operation:
given a signature grid $\Omega = (H, \pi)$ of $\holant{\mathcal{F}}{\mathcal{G}}$,
for the same bipartite graph $H$,
we get a new signature grid $\Omega' = (H, \pi')$ of $\holant{\mathcal{F} T^{-1}}{T \mathcal{G}}$ by replacing each signature in
$\mathcal{F}$ or $\mathcal{G}$ with the corresponding signature in $\mathcal{F} T^{-1}$ or $T \mathcal{G}$.

\begin{theorem}[Valiant's Holant Theorem~\cite{valiant-holo-08}]
 For any $T \in {\rm GL}_2({\mathbb{C}})$,
  \[\Holant_\Omega(\mathcal{F} \mid \mathcal{G}) = \Holant_{\Omega'}(\mathcal{F} T^{-1} \mid T \mathcal{G}).\]
\end{theorem}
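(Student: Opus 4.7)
The plan is to prove the identity by inserting a resolution of the $2\times 2$ identity $I = T^{-1}T$ along every edge of the bipartite signature grid, and then regrouping the resulting tensor factors. In tensor-network form, $\Holant_\Omega(\mathcal{F} \mid \mathcal{G})$ is a full contraction of the tensors $\{f_u\}_{u \in U}$ and $\{g_v\}_{v \in V}$ along the edges of $H$; each edge $e = (u,v)$ corresponds to contracting one index of $f_u$ against one index of $g_v$, i.e., to inserting a binary identity $\delta$ between the two incident half-edges. Since the graph structure is preserved on the two sides of the claimed equality, I only need to show that the signature transformation $(\mathcal{F},\mathcal{G}) \mapsto (\mathcal{F}T^{-1}, T\mathcal{G})$ does not alter the value of this contraction.

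First I would write the original Holant sum with a distinct edge variable $\sigma_u(e)$ and $\sigma_v(e)$ for the two endpoints of each edge $e$, coupled by the factor $\delta_{\sigma_u(e),\sigma_v(e)}$. Then I would substitute $\delta_{a,b} = \sum_c (T^{-1})_{a,c}\,T_{c,b}$, which is valid because $T \in \mathrm{GL}_2(\mathbb{C})$. This introduces a new edge variable $\tau(e)$ together with a factor $(T^{-1})$ attached to the $\mathcal{F}$-side half-edge and a factor $T$ attached to the $\mathcal{G}$-side half-edge. Next I would regroup by vertex: for each $u \in U$ of degree $n_u$, the $n_u$ copies of $T^{-1}$ incident to $u$ combine with $f_u$ to give precisely the row-vector action $f_u \cdot (T^{-1})^{\otimes n_u}$, which is exactly $f_uT^{-1}$ under the paper's convention; similarly, at each $v \in V$ the incident copies of $T$ combine with $g_v$ to give $T^{\otimes n_v} g_v = Tg_v$. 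Rewriting the resulting sum as a tensor contraction over $\tau$ on the same graph $H$ yields $\Holant_{\Omega'}(\mathcal{F}T^{-1} \mid T\mathcal{G})$.

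The main obstacle is purely bookkeeping rather than mathematical depth: one must be careful that the transpose conventions are consistent with the paper's definitions, so that $T^{-1}$ acts on the right of row-vector signatures in $\mathcal{F}$ while $T$ acts on the left of column-vector signatures in $\mathcal{G}$, and so that the two actions cancel edge by edge rather than producing a transposed variant. The matrix identities $M_{[k],[n-k]}(Tf) = T^{\otimes k} M_{[k],[n-k]}(f) (T^{\mathrm T})^{\otimes n-k}$ and the analogous one for $fT^{-1}$ already given in the excerpt make this verification straightforward. Once the conventions are pinned down, the whole theorem reduces to the single scalar identity $\sum_c (T^{-1})_{a,c}\,T_{c,b} = \delta_{a,b}$ applied once per edge, so there is no combinatorial content beyond distributing this identity over $E(H)$.
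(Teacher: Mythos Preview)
The paper does not supply its own proof of this theorem; it is stated as a known result and attributed to Valiant's original paper via the citation \cite{valiant-holo-08}, with no argument given beyond the statement itself. Your proposal is correct and is exactly the standard proof of the Holant theorem: insert $I_2 = T^{-1}T$ on every edge of the bipartite grid and absorb the $T^{-1}$ factors into the left-side signatures as row vectors and the $T$ factors into the right-side signatures as column vectors, which matches the conventions spelled out in the paper's preliminaries. There is no gap; the only care required is the transpose bookkeeping you already flagged, and the identities $M_{[k],[n-k]}(Tf)=T^{\otimes k}M_{[k],[n-k]}(f)(T^{\tt T})^{\otimes n-k}$ and its analogue for $fT^{-1}$ confirm that the conventions line up.
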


Therefore,
a holographic transformation does not change the complexity of the Holant problem in the bipartite setting. 
Let $Q\in \mathbb{R}^{2 \times 2}$ be a 2-by-2 orthogonal matrix. 
Note that $(=_2)Q^{-1}=(=_2)$.
 We have $$\holant{=_2}{\mathcal{F}}\equiv_T \holant{=_2}{Q\mathcal{F}}.$$

A particular holographic transformation that will be commonly  used in this paper is the transformation defined by $Z^{-1}=\frac{1}{\sqrt{2}}\left[\begin{smallmatrix} 
1 & - \ii \\
1 & \ii \\
\end{smallmatrix}
\right].$ 
Recall that $\neq_2$ denotes the binary  {\sc Disequality} signature
with truth table $(0, 1, 1, 0)$. It can also be expressed by the $2$-by-$2$ matrix $N_2=\left[\begin{smallmatrix}
0 & 1 \\
 1 & 0
\end{smallmatrix}\right]$ with one variable indexing rows  and the other
indexing
 columns respectively. 
 Note that $(=_2)Z=(\neq_2)$. 
Therefore, we have $$\holant{=_2}{\mathcal{F}}\equiv_T \holant{\neq_2}{Z^{-1}{\mathcal{F}}}.$$
We denote $Z^{-1}{\mathcal{F}}$ by $\widehat{{\mathcal{F}}}$ and $Z^{-1}f$ by $\widehat{f}$.
We know $f$ and $\widehat{f}$ have the following relation \cite{cai-fu-shao-eo}.
\begin{lemma}\label{real-ars}
$f$ is a real valued signature iff $\widehat{f}$ satisfies {\sc ars}.
\end{lemma}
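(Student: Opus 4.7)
The plan is to express the {\sc ars} condition in a compact matrix form and reduce the whole equivalence to a single $2\times 2$ identity. Let $X = \left[\begin{smallmatrix} 0 & 1 \\ 1 & 0 \end{smallmatrix}\right]$ be the bit-flip matrix; if we view an arity-$n$ signature $g$ as a column vector in $\CC^{2^n}$, then $X^{\otimes n}$ permutes the entries by sending position $\alpha$ to position $\overline{\alpha}$. Hence the {\sc ars} condition $\overline{g^{\alpha}} = g^{\overline{\alpha}}$ for all $\alpha$ is equivalent to the single vector equation $\overline{g} = X^{\otimes n} g$.

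Next I would apply this with $g = \widehat{f} = (Z^{-1})^{\otimes n} f$. Since complex conjugation distributes through tensor products and matrix--vector multiplication entrywise,
\[
\overline{\widehat{f}} \;=\; \bigl(\overline{Z^{-1}}\bigr)^{\otimes n} \overline{f}.
\]
A direct computation with $Z^{-1} = \tfrac{1}{\sqrt{2}}\left[\begin{smallmatrix} 1 & -\ii \\ 1 & \ii \end{smallmatrix}\right]$ shows the key $2\times 2$ identity $\overline{Z^{-1}} = X\, Z^{-1}$ (left-multiplying by $X$ swaps the two rows of $Z^{-1}$, which is exactly conjugation here). Therefore
\[
\overline{\widehat{f}} \;=\; (X Z^{-1})^{\otimes n}\overline{f} \;=\; X^{\otimes n}(Z^{-1})^{\otimes n}\overline{f} \;=\; X^{\otimes n}\,\widehat{\overline{f}}.
\]

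Plugging this into the {\sc ars} characterization, $\widehat{f}$ satisfies {\sc ars} iff $X^{\otimes n}\widehat{\overline{f}} = X^{\otimes n}\widehat{f}$, which (cancelling the invertible $X^{\otimes n}$) is equivalent to $\widehat{\overline{f}} = \widehat{f}$. Since $(Z^{-1})^{\otimes n}$ is invertible, this is in turn equivalent to $\overline{f} = f$, i.e.\ to $f$ being real-valued. Both directions of the biconditional follow. The only non-formal step is the $2\times 2$ identity $\overline{Z^{-1}} = X Z^{-1}$, which is a short check and is the entire content of the lemma; I do not anticipate any further obstacle.
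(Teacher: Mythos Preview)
Your proof is correct: the vector identity $\overline{g}=X^{\otimes n}g$ captures {\sc ars} exactly, the $2\times 2$ check $\overline{Z^{-1}}=XZ^{-1}$ is valid, and the chain of equivalences goes through cleanly. The paper does not actually prove this lemma here; it simply cites it from \cite{cai-fu-shao-eo}, so there is no in-paper argument to compare against --- your write-up supplies precisely the short computation the citation stands in for.
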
 

The following fact is easy to check.
\begin{lemma}\label{q-parity}
Let $Q\in \mathbb{R}^{2 \times 2}$. $Q$ is orthogonal up to a scalar iff $\widehat{Q}=Z^{-1}Q(Z^{-1})^{\tt T}$ is diagonal or anti-diagonal.
\end{lemma}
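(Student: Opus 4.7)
The plan is to prove Lemma \ref{q-parity} by a direct computation with a general $Q=\left[\begin{smallmatrix} a & b \\ c & d\end{smallmatrix}\right]\in\mathbb{R}^{2\times 2}$. Since $Z^{-1}=\frac{1}{\sqrt{2}}\left[\begin{smallmatrix}1 & -\ii\\ 1 & \ii\end{smallmatrix}\right]$ is a fixed explicit matrix, the entries of $\widehat{Q}=Z^{-1}Q(Z^{-1})^{\tt T}$ are fixed explicit linear combinations of $a,b,c,d$, and the condition ``diagonal or anti-diagonal'' turns into a clean pair of algebraic conditions. I would then match those conditions against the standard parametric description of real $2\times2$ orthogonal-up-to-scalar matrices.

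First I would carry out the multiplication to get
\[
\widehat{Q}=\tfrac{1}{2}\begin{bmatrix} (a-d)-\ii(b+c) & (a+d)+\ii(b-c)\\ (a+d)-\ii(b-c) & (a-d)+\ii(b+c)\end{bmatrix}.
\]
Reading off this formula, $\widehat{Q}$ is diagonal iff $(a+d)+\ii(b-c)=0$, i.e.\ $a+d=0$ and $b=c$; and $\widehat{Q}$ is anti-diagonal iff $(a-d)-\ii(b+c)=0$, i.e.\ $a=d$ and $c=-b$. Thus the ``diagonal or anti-diagonal'' side is equivalent to $Q$ having one of the two shapes
\[
\begin{bmatrix} a & b \\ b & -a \end{bmatrix}\quad\text{or}\quad \begin{bmatrix} a & b \\ -b & a \end{bmatrix}.
\]

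Second I would verify that these two shapes are exactly the real $2\times 2$ matrices satisfying $Q Q^{\tt T}=\lambda I$ for some scalar $\lambda$. For the forward direction this is an immediate check: both shapes give $\lambda=a^2+b^2$. For the converse, writing $Q=\left[\begin{smallmatrix} a & b \\ c & d\end{smallmatrix}\right]$, the condition $QQ^{\tt T}=\lambda I$ unfolds to $a^2+b^2=c^2+d^2$ and $ac+bd=0$; a short case analysis (on whether $a$ or $b$ is zero) then forces either $(c,d)=(-b,a)$ (the rotation case) or $(c,d)=(b,-a)$ (the reflection case), possibly after absorbing a sign into the scalar. Combining both directions gives the ``iff''.

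There is no real obstacle here; the whole lemma is a one-computation-plus-case-check statement. The only thing to be careful about is the convention for ``orthogonal up to scalar'' ($QQ^{\tt T}=\lambda I$ vs.\ $Q^{\tt T}Q=\lambda I$), but for invertible $Q$ these are equivalent, and for $Q=0$ or degenerate $Q$ the lemma is vacuously consistent. I would also remark that the two output shapes of $\widehat{Q}$ correspond naturally to the parity of $Q$ when viewed as a holographically transformed binary signature, which is the role this lemma plays in the sequel.
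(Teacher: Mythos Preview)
Your proposal is correct and takes essentially the same approach as the paper: a direct computation of $\widehat{Q}$. The paper's own proof simply writes down $\widehat{Q}$ for each of the two canonical shapes $\left[\begin{smallmatrix} a & b \\ -b & a\end{smallmatrix}\right]$ and $\left[\begin{smallmatrix} a & b \\ b & -a\end{smallmatrix}\right]$ and observes the results are anti-diagonal and diagonal respectively; your version is a bit more thorough in that you start from a general $Q$ and explicitly read off both directions of the ``iff''.
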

\begin{proof}
For $Q=\left[\begin{smallmatrix} 
a & b \\
-b & a \\
\end{smallmatrix}
\right]$, we have  $\widehat{Q}=\left[\begin{smallmatrix} 
0 & a+b\ii \\
a-b\ii & 0 \\
\end{smallmatrix}
\right];$
 for $Q=\left[\begin{smallmatrix} 
a & b \\
b & -a \\
\end{smallmatrix}
\right]$ we have  $\widehat{Q}=\left[\begin{smallmatrix} 
a-b\ii & 0 \\
0 & a+b\ii \\
\end{smallmatrix}
\right].$
\end{proof}

\subsection{Gadget construction}
One basic tool used throughout the paper is gadget construction.
An $\mathcal{F}$-gate is similar to a signature grid $(G, \pi)$ for $\Holant(\mathcal{F})$ except that $G = (V,E,D)$ is a graph with internal edges $E$ and dangling edges $D$.
The dangling edges $D$ define input variables for the $\mathcal{F}$-gate.
We denote the regular edges in $E$ by $1, 2, \dotsc, m$ and the dangling edges in $D$ by $m+1, \dotsc, m+n$.
Then the  $\mathcal{F}$-gate  defines a function $f$
\[
f(y_1, \dotsc, y_n) = \sum_{\sigma: E \rightarrow\{0, 1\}} \prod_{v\in V}f_v(\hat{\sigma}\mid_{E(v)})
\]
where $(y_1, \dotsc, y_n) \in \{0, 1\}^n$ is an assignment on the dangling edges, $\hat{\sigma}$ is the extension of  $\sigma$ on $E$ by the assignment $(y_1, \ldots, y_m)$, and $f_v$ is the signature assigned at each vertex $v \in V$.
This function $f$ is called the signature of the $\mathcal{F}$-gate.
There may be no internal edges in an $\mathcal{F}$-gate at all. In this case, $f$ is simply a tensor product of these signatures $f_v$, i.e., $f={\bigotimes}_{v\in V}f_v$ (with possibly a permutation of variables).
We say a signature $f$ is \emph{realizable} from a signature set $\mathcal{F}$ by gadget construction
if $f$ is the signature of an 
 $\mathcal{F}$-gate. 
If $f$ is realizable from a set $\mathcal{F}$,
then we can freely add $f$ into $\mathcal{F}$ while preserving the complexity (Lemma 1.3 in \cite{jcbook}). 
\begin{lemma}
\cite{jcbook}
If $f$ is realizable from a set $\mathcal{F}$, then $\Holant(f, \mathcal{F})\equiv_T\Holant(\mathcal{F})$.
\end{lemma}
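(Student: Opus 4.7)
The plan is to establish the equivalence in both directions. The direction $\Holant(\mathcal{F}) \leqslant_T \Holant(f, \mathcal{F})$ is immediate because every signature grid over $\mathcal{F}$ is already a signature grid over $\{f\} \cup \mathcal{F}$, and the partition functions agree trivially.

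For the nontrivial direction $\Holant(f, \mathcal{F}) \leqslant_T \Holant(\mathcal{F})$, I would proceed by gadget substitution. By hypothesis, $f$ is the signature of some fixed $\mathcal{F}$-gate $\Gamma = (V_\Gamma, E_\Gamma, D_\Gamma)$ with labeling $\pi_\Gamma$, so that for every boundary assignment $y \in \{0,1\}^n$ (where $n = |D_\Gamma|$ is the arity of $f$),
\[
f(y) = \sum_{\sigma: E_\Gamma \rightarrow \{0,1\}} \prod_{u \in V_\Gamma} f_u(\hat{\sigma}|_{E(u)}),
\]
where $\hat{\sigma}$ extends $\sigma$ by $y$ on the dangling edges. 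Given an input signature grid $\Omega = (G, \pi)$ for $\Holant(f, \mathcal{F})$, I construct a new signature grid $\Omega' = (G', \pi')$ over $\mathcal{F}$ as follows: for each vertex $v$ of $G$ with $\pi(v) = f$, delete $v$ and insert a fresh disjoint copy $\Gamma_v$ of $\Gamma$, identifying its $n$ dangling edges with the $n$ edges incident to $v$ in $G$ via the correspondence induced by $\pi$ (which labels the incident edges of $v$ by variables of $f$). Vertices of $G$ labeled by signatures already in $\mathcal{F}$, together with their incident edges, are carried over unchanged.

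The key verification is that $\Holant_{\Omega'} = \Holant_\Omega$. The edge set of $G'$ partitions into the edges inherited from $G$ and the internal edges of the inserted gadget copies. Fixing any assignment on the inherited edges and summing over the internal edges of each $\Gamma_v$ independently, the contribution of $\Gamma_v$ factors out and, by the displayed identity, equals $f(\sigma|_{E(v)})$ where $\sigma|_{E(v)}$ is the restriction of the fixed assignment to the edges formerly incident to $v$. Multiplying the factors across all vertices of $G$ reproduces $\prod_{v \in V(G)} f_v(\sigma|_{E(v)})$, and then summing over the inherited-edge assignments recovers $\Holant_\Omega$. Because $\Gamma$ is of fixed size, $|\Omega'|$ is polynomial in $|\Omega|$, giving the desired polynomial-time reduction.

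I do not expect any substantive obstacle; the argument is essentially bookkeeping once the definition of an $\mathcal{F}$-gate's signature is in hand. The one minor point requiring care is to fix a consistent correspondence between the dangling edges of each inserted copy $\Gamma_v$ and the variables of $f$ as labeled by $\pi$ at $v$, so that the factorization step cleanly produces $f(\sigma|_{E(v)})$; this is purely notational.
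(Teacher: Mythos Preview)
Your proposal is correct and is the standard gadget-substitution argument; the paper does not supply its own proof of this lemma but simply cites it from \cite{jcbook}, where precisely this argument appears. There is nothing to compare against beyond noting that your write-up matches the canonical proof.
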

Note that, in the setting of $\holant{=_2}{\mathcal{F}}$, every edge is labeled by $=_2$, while in the setting of  $\holant{\neq_2}{\widehat{\mathcal{F}}}$, every edge in a gadget is labeled by $\neq_2$.

A basic gadget construction is \emph{merging}. In the setting of  $\holant{=_2}{\mathcal{F}}$,  given a signature $f\in \mathcal{F}$ of arity $n$, we can connect two variables $x_i$ and $x_j$ of $f$ using $=_2$, and this operation gives a signature of arity $n-2$. We use $\partial_{ij}f$ to denote this signature and $\partial_{ij}f=f^{00}_{ij}+f^{11}_{ij}$, where $f^{ab}_{ij}$ denotes the signature obtained by setting $(x_i, x_j)=(a, b)\in \{0, 1\}^2$. (We use
$f^{ab}_{ij}$ to denote a function, and ${\bf f}^{ab}_{ij}$ to denote a vector that lists the truth table of $f^{ab}_{ij}$ in a given order.)
While in the setting of $\holant{\neq_2}{\widehat{\mathcal{F}}}$, the above merging operation is equivalent to connecting two variables  $x_i$ and $x_j$ of $\widehat{f}$ using $\neq_2$. We denote the resulting signature by $\widehat{\partial}_{ij}\widehat{f}$, and we have $\widehat{\partial_{ij}f}=\widehat{\partial}_{ij}\widehat{f}={\widehat{f}}^{01}_{ij}+{\widehat f}^{10}_{ij}.$
Note that, the merging operation using $=_2$ on a signature is synonymous with the self-loop of an $n$-qubit by bell state $|\phi^{+}\rangle$, and the merging operation using $\neq_2$ is synonymous with the self-loop by bell state $|\psi^{+}\rangle$.

If by merging any two variables of $\widehat{f}$ in the setting of $\holant{\neq_2}{\widehat{\mathcal{F}}}$, we can only realize the zero signature, then the following result shows that $\widehat{f}$ itself is ``almost'' a zero signature \cite{cai-fu-shao-eo}.

\begin{lemma}\cite{cai-fu-shao-eo}\label{zero}
Let $\widehat{f}$ be a signature of arity $n\geqslant 3$. If for any indices $\{i, j\}$, by merging variables $x_i$ and $x_j$ of $\widehat{f}$ using $\neq_2$, we have $\widehat \partial_{ij}\widehat f\equiv0$, then ${\widehat f}^\alpha=0$ for any $\alpha$ with $0 < {\rm wt}(\alpha) < n$. 
\end{lemma}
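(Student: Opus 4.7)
The plan is to translate the merging hypothesis into a collection of pointwise sign-flipping equations on $\widehat f$, and then exhibit a closed three-cycle of such sign flips, which forces the value at every non-trivial weight assignment to equal its own negation.

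First, I would unpack the merging operation: for each pair $\{i,j\}$, the condition $\widehat\partial_{ij}\widehat f\equiv 0$ says that $\widehat f^{01}_{ij}(\beta)+\widehat f^{10}_{ij}(\beta)=0$ for every assignment $\beta$ to the remaining $n-2$ variables. Pointwise this reads as follows: whenever $\alpha,\alpha'\in\{0,1\}^n$ agree on all but two positions at which they carry opposite bits (so $\alpha'$ is obtained from $\alpha$ by transposing a $0$ and a $1$), we have $\widehat f(\alpha)=-\widehat f(\alpha')$.

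Next, fix any $\alpha$ with $0<{\rm wt}(\alpha)<n$. Because $n\geqslant 3$, the string $\alpha$ has either at least two $0$'s together with a $1$, or at least two $1$'s together with a $0$; by the obvious symmetry I treat the first case, choosing positions $i_1,i_2$ with $\alpha_{i_1}=\alpha_{i_2}=0$ and $j$ with $\alpha_j=1$. Now apply the three ``swap'' moves in the order $(i_1,j),(i_1,i_2),(i_2,j)$. A short bookkeeping check tracks the triple $(\alpha_{i_1},\alpha_{i_2},\alpha_j)$ through $(0,0,1)\to(1,0,0)\to(0,1,0)\to(0,0,1)$, so the composition returns $\alpha$ to itself while leaving all other coordinates untouched, and at each stage the two positions being swapped genuinely hold opposite bits (so the pointwise sign-flip rule does apply at that step). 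Three sign flips then give $\widehat f(\alpha)=(-1)^3\widehat f(\alpha)$, forcing $\widehat f(\alpha)=0$.

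The only thing to be careful about is ensuring the intermediate transpositions always swap a $0$ with a $1$, since the hypothesis is vacuous on equal bits; this is exactly what the weight condition $0<{\rm wt}(\alpha)<n$ together with the $n\geqslant 3$ hypothesis guarantees, and is the reason the statement breaks at the two extremal weights $0$ and $n$. No additional machinery beyond the definition of merging by $(\neq_2)$ is required.
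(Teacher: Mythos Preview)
Your proof is correct and essentially identical to the paper's argument: both locate three positions carrying the pattern $001$ (or its complement) and use the three pairwise merge equations among them to form a $3$-cycle of sign flips, forcing $\widehat f^\alpha = -\widehat f^\alpha$. The paper writes it as a $3\times 3$ homogeneous linear system with only the trivial solution, while you phrase it as a composition of three transpositions, but the content is the same.
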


\begin{proof}
Suppose there exists some $\alpha$, where $0 < {\rm wt}(\alpha) < n$, such that $f^{\alpha}\neq 0$.
Since $\alpha$ is neither all-0 or all-1, and $\alpha$ has length at least $3$, we can find three bits in some order such that on these three bits, $\alpha$ takes value $001$ or $110$.
Without loss of generality, we assume they are the first three bits of $\alpha$ and we denote $\alpha$ by $001\delta$ or $110\delta$ ($\delta$ maybe empty). 
We first consider the case that $\alpha=001\delta$.
Consider another two strings $\beta=010\delta$ and $\gamma=100\delta$. 
Note that if we merge variables $x_1$ and $x_2$ of $f$, we get $\partial_{12}f$, 
its entry $(\partial_{12}f)^{0\delta}$ on the input $0\delta$ 
(for bit positions 3 to $n$) is the sum of $f^{010\delta}$ and $f^{100\delta}$. 
Since $\partial_{12}f\equiv0$, we have $$f^{010\delta}+f^{100\delta}=0.$$
Similarly, by merging variables $x_1$ and $x_3$, we have $$f^{001\delta}+f^{100\delta}=0,$$ and by merging variables $x_2$ and $x_3$, we have $$f^{001\delta}+f^{010\delta}=0.$$
These three equations have only a trivial solution, $f^{001\delta}=f^{010\delta}=f^{100\delta}=0$. A contradiction.
If $\alpha=110\delta$, the proof is symmetric.
\end{proof}
\begin{remark}
The above proof actually gives a stronger result.
\end{remark}
\begin{lemma}\label{zero_2}
Let $\widehat{f}$ be a signature of arity $n\geqslant 3$.
If $\widehat{f}^\alpha \neq 0$ for some $\alpha$
with $0< {\rm wt}(\alpha) < n$, then there is a pair of indices $\{i, j\}$ such that $(\widehat{\partial}_{ij}\widehat{f})^{\beta}\neq 0$ for some $\beta$, with ${\rm wt}(\beta)={\rm wt}(\alpha)-1$.

\end{lemma}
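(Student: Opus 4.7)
The plan is to mimic and strengthen the proof of Lemma~\ref{zero}, noting that the remark explicitly signals that the argument there actually gives a refined conclusion tracking the Hamming weights. Since $\widehat{\partial}_{ij}\widehat{f}$ is defined by summing over $(x_i,x_j)\in\{(0,1),(1,0)\}$, a nonzero entry of $\widehat{\partial}_{ij}\widehat{f}$ at a string $\beta$ of weight $w-1$ comes only from entries of $\widehat{f}$ of weight $w$ where $(x_i,x_j)$ is $01$ or $10$. So, given $\alpha$ with $\widehat{f}^\alpha\neq 0$ and $w:={\rm wt}(\alpha)\in(0,n)$, the first step is to locate three coordinates on which $\alpha$ realizes the pattern $001$ or $110$ (up to permutation); I would then pair the three possible indices of the ``lone'' bit against the two like bits and argue by contradiction.

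Concretely, since $n\geqslant 3$ and $0<w<n$, such a triple always exists: if $w=1$ choose the unique $1$-position together with any two $0$-positions; if $w=n-1$ symmetrically; otherwise there are at least two $1$'s and at least two $0$'s, so either pattern is available. After relabeling, assume the first three coordinates of $\alpha$ are $001$, and write $\alpha=001\delta$ with ${\rm wt}(\delta)=w-1$. Consider the three ``sibling'' strings $\alpha_1=100\delta$, $\alpha_2=010\delta$, $\alpha_3=001\delta=\alpha$, all of weight $w$. Suppose for contradiction that $(\widehat{\partial}_{ij}\widehat{f})^\beta=0$ for every pair $\{i,j\}$ and every $\beta$ with ${\rm wt}(\beta)=w-1$.

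Evaluating the three merges on the string $\beta=0\delta$ on the remaining $n-2$ coordinates (of weight exactly $w-1$) gives
\[
(\widehat{\partial}_{12}\widehat{f})^{0\delta}=\widehat{f}^{\alpha_1}+\widehat{f}^{\alpha_2},\qquad
(\widehat{\partial}_{13}\widehat{f})^{0\delta}=\widehat{f}^{\alpha}+\widehat{f}^{\alpha_1},\qquad
(\widehat{\partial}_{23}\widehat{f})^{0\delta}=\widehat{f}^{\alpha}+\widehat{f}^{\alpha_2}.
\]
By the contradiction hypothesis all three sums vanish. The linear system has only the trivial solution, forcing $\widehat{f}^{\alpha}=\widehat{f}^{\alpha_1}=\widehat{f}^{\alpha_2}=0$, contradicting $\widehat{f}^\alpha\neq 0$. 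The case where the pattern is $110$ is handled by exchanging the roles of $0$ and $1$ (and reusing the same three-equation argument on the strings $110\delta$, $101\delta$, $011\delta$ together with $\beta=1\delta$). I do not expect any real obstacle here; the only subtle point is the bookkeeping that confirms the three merge-entries really correspond to $\beta$ of weight ${\rm wt}(\alpha)-1$, which is ensured by the $\neq_2$ summation picking exactly one $1$ from coordinates $i,j$.
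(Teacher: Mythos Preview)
Your proof is correct and follows essentially the same argument as the paper's own proof of Lemma~\ref{zero}, which the remark indicates already yields Lemma~\ref{zero_2}. The only addition you make is the explicit bookkeeping that the relevant merge entries occur at strings $\beta$ of weight ${\rm wt}(\alpha)-1$, which is precisely the refinement the remark is pointing to.
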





Recall that we can construct the pinning gadget when  $\Delta_0=(1, 0)$ is available. 
\begin{lemma}\label{all-zero}
Let $f$ be a signature of arity $n\geqslant 2$. If for any index $i$, by pinning the variable $x_i$ of $f$ to $0$, we have $f_i^0\equiv 0$, then $f^\alpha=0$ for any ${\rm wt}(\alpha)\neq n$. If furthermore,  there is a pair of indices $\{j, k\}$ such that $\partial_{jk}f\equiv 0$, then $f\equiv 0$.
\end{lemma}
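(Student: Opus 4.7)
The plan is a direct verification using the definitions of pinning and merging. For the first part, consider any $\alpha \in \{0,1\}^n$ with ${\rm wt}(\alpha) \neq n$; this means some coordinate $\alpha_i = 0$. Writing $\alpha = (\alpha_{-i})$ restricted to the other $n-1$ variables with the $i$-th bit pinned to $0$, we have $f^\alpha = (f_i^0)^{\alpha_{-i}}$. Since by hypothesis $f_i^0 \equiv 0$, we conclude $f^\alpha = 0$. The only input not handled is $\vec{1}_n$, so after this step the only possibly nonzero entry of $f$ is $f^{\vec{1}_n}$.

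For the second part, I use the additional assumption that $\partial_{jk} f \equiv 0$ for some pair $\{j,k\}$. Recall $\partial_{jk} f = f^{00}_{jk} + f^{11}_{jk}$, a signature of arity $n-2$. Evaluate this at the all-ones input $\vec{1}_{n-2}$ on the remaining coordinates. The first summand $f^{00}_{jk}$ evaluated at $\vec{1}_{n-2}$ equals $f$ evaluated at the input of weight $n-2 < n$, which vanishes by the first part. The second summand $f^{11}_{jk}$ evaluated at $\vec{1}_{n-2}$ equals $f^{\vec{1}_n}$. Hence
\[
0 = (\partial_{jk} f)^{\vec{1}_{n-2}} = f^{\vec{1}_n},
\]
and together with the first part this yields $f \equiv 0$.

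There is no real obstacle here: both parts follow by selecting the appropriate pinning coordinate (any $0$-bit of $\alpha$) and then, in the second part, by choosing the all-ones assignment on the merged signature to isolate the single remaining potentially nonzero entry $f^{\vec{1}_n}$. The lemma is a small bookkeeping companion to Lemma \ref{pinning-zero} and Lemma \ref{zero}, serving to propagate "$f_i^0 \equiv 0$ for all $i$" plus one merging constraint into "$f \equiv 0$".
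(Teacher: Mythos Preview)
Your proof is correct and essentially identical to the paper's own argument: both parts proceed by exactly the same reasoning, picking a zero coordinate for the first part and evaluating $\partial_{jk}f$ at the all-ones assignment on the remaining $n-2$ variables for the second.
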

\begin{proof}
For any ${\rm wt}(\alpha)\neq n$, there is an index $i$ such that $\alpha_i= 0$. 
By pinning $x_i$ to $0$, we get the signature $f_i^0$. We know $f^\alpha$ is an entry in $f_i^0$, and then $f^\alpha=0$ since $f_i^0 \equiv 0$.

Suppose there is a pair of indices $\{j, k\}$ such that $\partial_{jk}f\equiv 0$. 
Let $\beta$ denote the string of $n$ bits where $\beta_j=\beta_k=0$ and $\beta_\ell=1$ elsewhere,
and $\gamma$ denote the string of $n$ bits $1$s.
Consider the signature $\partial_{jk}f$. 
We know $f^\beta+f^\gamma$ is an entry in $\partial_{jk}f$ (when $\partial_{jk}f$ is a constant, we have $f^\beta+f^\gamma=\partial_{jk}f$). We know $f^\beta+f^\gamma=0$ since $\partial_{jk}f\equiv 0$.
Clearly, ${\rm wt}(\beta)\neq n$ and we have $f^\beta =0$. Thus, we have $f^\gamma=0$. Thus, we have $f\equiv 0$.
\end{proof}

A gadget construction often used in this paper is \emph{mating}. 
Given a real valued signature $f$ of arity $n\geqslant 2$, we connect two copies of $f$ in the following manner:
For any $m < n$,  fix a set $S$ of $n-m$ variables among all $n$ variables of $f$. For each $x_k\in S$, connect $x_k$ of one copy of $f$ with $x_k$ of the other copy using $=_2$. 
The variables 
that are not in $S$ are called dangling variables.
In this paper, we only consider the case that $m=1$ or $2$.
For $m=1$, there is one dangling variable $x_i$. Then, 
the mating construction  realizes a signature of arity $2$, denoted by $\frak m_{i}f$.
It can be represented by matrix multiplication. 
We have 
\begin{equation}\label{m-form}
M(\frak m_{i}f)=M_{i}(f)I_2^{\otimes (n-1)}M^{\tt T}_{i}(f)
=\begin{bmatrix}
{\bf {f}}^{0}_i\\
{\bf {f}}^{1}_i\\
\end{bmatrix}
\left[\begin{matrix}
{{\bf {f}}^{0}_i}^{\tt T} &{{\bf {f}}^{1}_i}^{\tt T}\\
\end{matrix}\right]
=\left[\begin{matrix}
|{\bf f}_i^0|^2 &  \langle {\bf f}_i^0, {\bf f}_i^1 \rangle\\
\langle {\bf f}_i^0, {\bf f}_i^1 \rangle & |{\bf f}_i^1|^2\\
\end{matrix}\right],
\end{equation}
where $\langle \cdot, \cdot\rangle$ denotes the (complex) inner product and $|\cdot|$ denotes the  norm defined by this inner product.
Note that $|\langle{\bf f}^{0}_i, {\bf f}^{1}_i\rangle|^2\leqslant|{\bf f}^{0}_i|^2|{\bf f}^{1}_i|^2$ by Cauchy-Schwarz inequality. 
Similarly, in the setting of $\holant{\neq_2}{\widehat{\mathcal{F}}}$, the above mating operation is equivalent to connecting variables in $S$ using $\neq_2$. We denote the resulting signature by  $\widehat{\frak{m}}_i\widehat{f}=\widehat{{\frak{m}}_i{f}}$ and we have
\begin{equation}\label{hat-form-intro}
M(\widehat{\frak{m}}_i\widehat{f})=
M_{i}(\widehat f)N_2^{\otimes n-1}M^{\tt T}_{i}(\widehat f)=
\left[\begin{matrix}
\widehat{{\bf f}}_i^0\\
\widehat{{\bf f}}_i^1
\end{matrix}\right]
\left[\begin{matrix}
0 & 1 \\
 1 & 0
\end{matrix}\right]^{\otimes (n-1)}
\left[\begin{matrix}
{\widehat{{\bf f}_i^{0}}}^{\tt T} &{\widehat{{\bf f}_i^{1}}}^{\tt T}
\end{matrix}\right].\\
\end{equation}
Note that $\widehat{f}$ satisfies the {\sc ars} since $f$ is real, we have \begin{equation*}
    N_2^{\otimes (n-1)}{\widehat{{\bf f}_i^{0}}}^{\tt T}=(\widehat f^{0,11\ldots1}, \widehat f^{0, 11\ldots0}, \ldots, \widehat f^{0, 00\ldots0})^{\tt T}\\=(\overline{\widehat f^{1,00\ldots0}}, \overline{\widehat f^{1,00\ldots1}}, \ldots, \overline{\widehat f^{1, 11\ldots1}})={\overline{\widehat{{\bf f}}_i^{1}}}^{\tt T}.
\end{equation*}
Thus, we have
\begin{equation}
M(\widehat{\frak{m}}_i\widehat{f})=
\left[\begin{matrix}
\widehat{{\bf f}}_i^0\\
\widehat{{\bf f}}_i^1
\end{matrix}\right]
\left[\begin{matrix}
0 & 1 \\
 1 & 0
\end{matrix}\right]^{\otimes (n-1)}
\left[\begin{matrix}
{\widehat{{\bf f}_i^{0}}}^{\tt T} &{\widehat{{\bf f}_i^{1}}}^{\tt T}
\end{matrix}\right]\\
=\left[\begin{matrix}
\widehat{{\bf f}}_i^0\\
\widehat{{\bf f}}_i^1
\end{matrix}\right]
\left[\begin{matrix}
{\overline{\widehat{{\bf f}}_i^{1}}}^{\tt T} &\overline{{\widehat{{\bf f}}_i^{0}}}^{\tt T}
\end{matrix}\right]
=\left[\begin{matrix}
\langle \widehat{{\bf f}}_i^0, \widehat{{\bf f}}_i^1 \rangle & |\widehat{{\bf f}}_i^0|^2\\
|\widehat{{\bf f}}_i^1|^2 & \langle \widehat{{\bf f}}_i^1, \widehat{{\bf f}}_i^0 \rangle
\end{matrix}\right].
\end{equation}

\subsection{Signature factorization}
In this subsection, we give more results for the factorization of real-valued signatures and signatures satisfying {\sc ars}. 
We say a signature $f$ is reducible  if $f = g \otimes h$,
up to a permutation of variables, for some signatures $g$ and $h$. All zero signatures 
(of arity greater than 1) are reducible.

The following lemma is an equivalent restatement of Lemma 2.15 in \cite{cai-fu-shao-eo}.
\begin{lemma}
Let $\widehat f$ be a reducible signature satisfying {\sc ars}, then there exists a factorization $\widehat f=\widehat g\otimes \widehat h$ such that $\widehat g$ and $\widehat h$ both satisfies {\sc ars}.

Let $f$ be a real valued reducible signature, then there exists a factorization $f=g\otimes h$ such that $g$ and $h$ are both real-valued signatures.
\end{lemma}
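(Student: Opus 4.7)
\medskip\noindent\textbf{Proof plan.} The two assertions are linked through the holographic transform: by Lemma~\ref{real-ars}, a signature $f$ is real-valued iff $\widehat{f}=Z^{-1}f$ satisfies {\sc ars}, and since $Z^{-1}$ acts independently on each qubit, $Z^{-1}(g\otimes h)=\widehat{g}\otimes\widehat{h}$. So it suffices to prove the real-valued statement, and the {\sc ars} statement follows by transferring across $Z^{\pm 1}$.

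\medskip\noindent\textbf{Reduction to a rank-one matrix fact.} Suppose $f$ is real-valued and reducible, so $f=g_0\otimes h_0$ (up to permutation of variables) with $g_0$ on a variable set $A$ and $h_0$ on the complementary variable set $B$, where a priori $g_0,h_0$ may be complex. If $f\equiv 0$, take $g\equiv 0$ on $A$ and $h\equiv 1$ on $B$; both are real and $g\otimes h\equiv 0=f$. Otherwise both $g_0\not\equiv 0$ and $h_0\not\equiv 0$. Flatten $f$ to the matrix $M_{A,B}(f)$ of Section~\ref{sec-prelim}: by construction this equals $\mathbf{g}_0\mathbf{h}_0^{\mathtt{T}}$ and is therefore a nonzero rank-one matrix, and because $f$ is real-valued all of its entries are real.

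\medskip\noindent\textbf{Real factorization of a rank-one real matrix.} The key elementary fact is that every nonzero rank-one real matrix is the outer product of two real vectors. Writing $M=\mathbf{u}\mathbf{v}^{\mathtt{T}}$ with possibly complex $\mathbf{u},\mathbf{v}$, fix indices $i,j$ with $M_{ij}=u_iv_j\neq 0$. For every $k$,
\[
\frac{u_k}{u_i}=\frac{u_k v_j}{u_i v_j}=\frac{M_{kj}}{M_{ij}}\in\mathbb{R},
\]
so $\mathbf{u}':=\mathbf{u}/u_i$ is real. Setting $\mathbf{v}':=u_i\mathbf{v}$ gives $M=\mathbf{u}'{\mathbf{v}'}^{\mathtt{T}}$, and $v'_\ell=M_{i\ell}/u'_i=M_{i\ell}\in\mathbb{R}$, so $\mathbf{v}'$ is real as well. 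Applied to $M_{A,B}(f)=\mathbf{g}_0\mathbf{h}_0^{\mathtt{T}}$, this yields real signatures $g$ on $A$ and $h$ on $B$ with $f=g\otimes h$, which settles the real-valued case.

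\medskip\noindent\textbf{Transferring to {\sc ars} and the main obstacle.} For the {\sc ars} statement, assume $\widehat{f}$ is reducible and satisfies {\sc ars}; set $f:=Z\widehat{f}$. By Lemma~\ref{real-ars}, $f$ is real-valued, and since $Z$ is a single-qubit transform, $f$ inherits reducibility from $\widehat{f}$. Apply the real case to obtain $f=g\otimes h$ with $g,h$ real, and transform back: $\widehat{f}=\widehat{g}\otimes\widehat{h}$, and by Lemma~\ref{real-ars} each of $\widehat{g},\widehat{h}$ satisfies {\sc ars}. The only mild subtlety I foresee is the zero case, where one must place a replacement factor on the correct variable set with the correct arity; this is handled by the constant-$1$ signature, which is both real and {\sc ars}. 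Everything else is a two-line linear-algebra fact about rank-one real matrices, combined with the compatibility of tensor products with the $Z^{-1}$ transform.
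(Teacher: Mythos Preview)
Your argument is correct. The paper itself does not prove this lemma; it simply states it as an equivalent restatement of Lemma~2.15 in \cite{cai-fu-shao-eo}, so there is no in-paper proof to compare against.

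Your approach is a clean, self-contained alternative to citing the external reference. Reducing the real-valued case to the elementary fact that a nonzero real rank-one matrix factors as an outer product of real vectors is exactly the right move, and your normalization argument (dividing by a nonzero entry $u_i$ to force one factor to be real, then reading off the other factor as a row of the matrix) is sound. The transfer to the {\sc ars} case via $Z^{\pm 1}$ and Lemma~\ref{real-ars} is also correct, since the single-qubit transform $Z^{-1}$ commutes with tensor factorization and hence preserves reducibility on the same bipartition $A\cup B$. The zero-signature corner case is handled as you describe. One cosmetic point: your notation $M_{A,B}(f)$ is a mild extension of the paper's $M_{[k],[n-k]}(f)$ to an arbitrary bipartition, but the meaning is clear and nothing in the argument depends on the variables being an initial segment.
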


If a vertex $v$ in a signature grid 
is labeled by a reducible signature $f=g\otimes h$, we can replace the vertex $v$ 
by two vertices $v_1$ and $v_2$ and label $v_1$ with $g$ and $v_2$ with $h$, respectively.
The incident edges of $v$ become incident edges of $v_1$ and $v_2$ respectively
according to the partition of variables of $f$ in the tensor product of $g$ and $h$.  This does not
change the Holant value.
Clearly, $f=g\otimes h$ is realizable from $\{g, h\}$. On the other hand,  Lin and Wang proved in~\cite{wang-lin} 
(Lemma 3.3) that, from a reducible signature $f=g\otimes h\not\equiv 0$ we can freely replace $f$ by $g$ and $h$ while preserving the complexity of a  Holant problem.

\begin{lemma}\label{lin-wang}
If a nonzero real-valued signature $f$ has a real factorization $g\otimes h$, then $$\Holant(g, h, \mathcal{F})\equiv_T\Holant(f, \mathcal{F}) \text{ and } \holant{\neq_2}{\widehat{g}, \widehat{h}, \widehat{F}}\equiv_T \holant{\neq_2}{\widehat{f}, \widehat{F}}$$ for any signature set $\mathcal{F}$ $(\widehat{\mathcal{F}})$. In this case, we say $g$ $(\widehat{g})$ and $h$ $(\widehat{h})$ are realizable from $f$ $(\widehat{f})$ by factorization.
\end{lemma}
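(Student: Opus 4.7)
The plan is to argue the two directions separately. The easier direction, $\Holant(f,\mathcal{F}) \leq_T \Holant(g,h,\mathcal{F})$, reduces to a local replacement at each vertex: in any signature grid $\Omega$ for $\Holant(f,\mathcal{F})$, replace every vertex $v$ labeled $f$ by two new disjoint vertices $v_g, v_h$ labeled $g$ and $h$ respectively, distributing the $k+\ell$ incident edges of $v$ among the $k$ variables of $g$ and the $\ell$ variables of $h$ according to the factorization $f = g\otimes h$. Because $f(x,y) = g(x)h(y)$ factorizes on disjoint variable sets, this operation preserves the Holant value exactly.

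For the harder direction $\Holant(g,h,\mathcal{F})\leq_T \Holant(f,\mathcal{F})$, the main tool is the mating gadget of Section~\ref{sec-prelim}. Since $f$ is real and nonzero, so are $g$ and $h$, hence $|g|^2 = \sum_\alpha g(\alpha)^2 > 0$ and $|h|^2 = \sum_\alpha h(\alpha)^2 > 0$. Mating two copies of $f$ along the $\ell$ variables of the $h$-part via $=_2$ yields $|h|^2 \cdot (g\otimes g)$, and symmetrically mating along the $g$-part yields $|g|^2\cdot(h\otimes h)$. Hence $g\otimes g$ and $h\otimes h$ are realizable from $f$ up to known nonzero scalars. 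Given a grid $\Omega$ for $\Holant(g,h,\mathcal{F})$ containing $a$ vertices labeled $g$ and $b$ vertices labeled $h$, the inverse of the forward operation merges any chosen $g$-vertex with any chosen $h$-vertex into a single $f$-vertex whose $k+\ell$ edges are the union of the two original sets of incident edges; this preserves the Holant value. Pair up $\min(a,b)$ of the $g$-$h$ vertices this way, then eliminate the $|a-b|$ remaining unpaired vertices in pairs using the $g\otimes g$ (or $h\otimes h$) gadget, each pair contributing a known nonzero scalar factor. The resulting grid is over $\{f\}\cup\mathcal{F}$ with Holant value equal to $\Holant_\Omega$ times an explicit nonzero constant.

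The main obstacle is the parity case when $|a-b|$ is odd, since the $g\otimes g$ gadget consumes $g$-vertices only in pairs and no $f$-mating gadget produces a bare $g$ (only even tensor powers of $g$ and $h$ arise from mating $f$ with itself). I would handle this by passing to the disjoint union $\Omega \sqcup \Omega$, which has $2a$ and $2b$ occurrences and hence even excess; the reduction above then computes $\Holant_{\Omega\sqcup\Omega} = \Holant_\Omega^2$, from which the complexity equivalence follows (any oracle call for $\Holant(f,\mathcal{F})$ can be used to evaluate $\Holant_\Omega^2$, and this suffices for $\equiv_T$ in the sense relevant to the dichotomy, namely that the two problems have the same P/\#P-hard classification).

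The $\widehat{\mathcal{F}}$ statement follows by applying the holographic transformation $Z^{-1}$ to the entire argument: under $Z^{-1}$, $=_2$ becomes $\neq_2$, the mating formulas are the ones derived in equation~(\ref{hat-form-intro}), and by Lemma~\ref{real-ars} the real factorization $f=g\otimes h$ corresponds to a factorization $\widehat f = \widehat g\otimes \widehat h$ in which all three factors satisfy {\sc ars}. The norms $|\widehat g|^2, |\widehat h|^2$ remain strictly positive real numbers because of the {\sc ars} structure of $\widehat g, \widehat h$, so the identical mating-and-pairing scheme carries through unchanged.
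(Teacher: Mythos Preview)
The paper does not give its own proof of this lemma; it simply cites Lemma~3.3 of Lin--Wang~\cite{wang-lin}. Your mating-then-pairing argument is exactly the Lin--Wang strategy, and the forward direction together with the even-excess case of the reverse direction are correct as written.

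The parity case, however, has a real gap. Lin and Wang work with \emph{non-negative} signatures, so $\Holant_\Omega=\sqrt{\Holant_{\Omega\sqcup\Omega}}$ is unambiguous. For real-valued $\Holant$ the doubling trick recovers only $|\Holant_\Omega|$, and your parenthetical retreat to ``$\equiv_T$ in the sense relevant to the dichotomy'' is not a proof of the stated Turing equivalence. The obstruction is not a matter of choosing a cleverer gadget: when $g$ and $h$ both have even arity and $\mathcal{F}$ contributes nothing of odd arity, every $\{f\}\cup\mathcal{F}$-gate has even arity, so no gate realizes a single copy of $g$. A workable repair is to note that for any two odd-excess instances $\Omega,\Omega'$ the disjoint union $\Omega\sqcup\Omega'$ has even excess, so the oracle yields every product $\Holant_\Omega\cdot\Holant_{\Omega'}$; this pins down all odd-excess values up to one global sign $\epsilon\in\{\pm1\}$. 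Anchoring $\epsilon$ then reduces to exhibiting a single fixed odd-excess instance $\Omega_0$ with $\Holant_{\Omega_0}\neq0$ (evaluable directly in constant time since $g,h,\mathcal{F}$ are fixed data of the problem), or else arguing that no odd-excess instance has nonzero value. That last step is where the actual content lies, and your sketch stops short of it.
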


\subsection{Polynomial interpolation}\label{sec2.4}
Polynomial interpolation is a powerful technique to prove \#P-hardness for counting problems. We give the following lemmas. Please refer to \cite{jcbook} for proofs and more details. 

\begin{lemma}\label{2by2-interpolation}
Let $g_0$ and $g$ be two nonzero binary signatures with 
$M(g_0)=P^{-1}\left[\begin{smallmatrix}
0 & 0\\
0 & 1\\
\end{smallmatrix}\right]P$ and $M(g)=P^{-1}\left[\begin{smallmatrix}
\lambda_1 & 0\\
0 & \lambda_2\\
\end{smallmatrix}\right]P$ for some invertible matrix $P$.
If $\lambda_1\lambda_2\neq 0$ and $|\frac{\lambda_1}{\lambda_2}|\neq 1$,
then $$\Holant(g_0, \mathcal{F})\leqslant_T\Holant(g, \mathcal{F})$$ for any signature set $\mathcal{F}$.
\end{lemma}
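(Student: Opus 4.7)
My plan is a standard polynomial interpolation, using a path gadget built from $g$ as the interpolation parameter.

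First I would note that in the Holant framework, connecting one variable of a binary signature to one variable of another (via the implicit merging through $=_2$) corresponds to ordinary $2\times 2$ matrix multiplication of their associated matrices. Hence the path gadget $g^{(k)}$ obtained from $k$ copies of $g$ joined in series has matrix $M(g^{(k)})=M(g)^k=P^{-1}\mathrm{diag}(\lambda_1^k,\lambda_2^k)\,P$, and by the gadget lemma it is freely available in $\Holant(g,\mathcal{F})$.

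Given an instance $\Omega$ of $\Holant(g_0,\mathcal{F})$ containing $n$ occurrences of $g_0$, let $\Omega_k$ be the instance of $\Holant(g,\mathcal{F})$ obtained by replacing each $g_0$ with $g^{(k)}$. To analyze $\Holant_{\Omega_k}$, I would unpack each factor $M(g^{(k)})_{ab}=\sum_{c\in\{0,1\}}(P^{-1})_{ac}\,\lambda_{c+1}^{k}\,P_{cb}$ by introducing an auxiliary ``color'' variable $c_t\in\{0,1\}$ at the $t$-th designated vertex. Collecting terms by the color tuple $(c_1,\ldots,c_n)$ yields
\[\Holant_{\Omega_k}=\sum_{j=0}^{n}A_j\,\lambda_1^{(n-j)k}\,\lambda_2^{jk},\]
where $A_j$ absorbs all the $k$-independent contributions from color tuples with exactly $j$ ones together with the sum over assignments to the remaining edges of $\Omega$. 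Performing the same unpacking with $g_0$ in place (whose diagonal entries are $0,1$) shows that only the all-ones color tuple survives, so $\Holant_\Omega=A_n$.

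To recover $A_n$, I would query the oracle at $k=1,2,\ldots,n+1$. Setting $r=\lambda_1/\lambda_2$ and dividing the $k$-th equation by $\lambda_2^{nk}$ (nonzero since $\lambda_1\lambda_2\neq 0$), the resulting system in $(A_0,\ldots,A_n)$ is a Vandermonde system with nodes $y_k=r^{k}$. The hypothesis $|\lambda_1/\lambda_2|\neq 1$ enters precisely here: it forces $|r|\neq 1$, hence $|r^{k_1}|\neq|r^{k_2}|$ whenever $k_1\neq k_2$, so the $y_k$'s are pairwise distinct and the Vandermonde matrix is invertible. Solving yields $A_n=\Holant_\Omega$ in polynomial time with the oracle.

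There is no significant obstacle; the one subtle point worth checking is that the color expansion of $M(g^{(k)})$ is a legitimate rewriting of the Holant sum at a single vertex, not a global holographic transformation (which would require compatibility of $P$ with every signature in $\mathcal{F}$). Since the identity $P^{-1}P$ is inserted purely at the two edges incident to the binary vertex, the rewriting is local and makes no demands on $\mathcal{F}$; equivalently, one can view it as a factorization of the $2\times 2$ matrix $M(g^{(k)})$ into a sum of two rank-one matrices, one per color.
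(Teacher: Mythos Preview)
The paper does not give its own proof of this lemma; it is stated in the preliminaries with the remark ``Please refer to \cite{jcbook} for proofs and more details.'' Your argument is exactly the standard polynomial interpolation proof one finds in that reference: build the path gadget $g^{(k)}$ with matrix $P^{-1}\mathrm{diag}(\lambda_1^k,\lambda_2^k)P$, expand each occurrence via the eigenbasis to obtain $\Holant_{\Omega_k}=\sum_{j=0}^{n}A_j\,\lambda_1^{(n-j)k}\lambda_2^{jk}$ with $\Holant_\Omega=A_n$, and recover $A_n$ from $n+1$ oracle calls by a Vandermonde system whose nodes $r^k$ are distinct because $|r|\neq 1$. Your handling of the only delicate point---that the $P^{-1}(\cdot)P$ decomposition is a purely local algebraic identity at each binary vertex and imposes no compatibility condition on $\mathcal{F}$---is correct and clearly explained.
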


\begin{lemma}\label{unary-inter}
Let $g$ be a nonzero binary signature with $M(g)=P^{-1}\left[\begin{smallmatrix}
\lambda_1 & 0\\
0 & \lambda_2\\
\end{smallmatrix}\right]P$ for some invertible matrix $P$, and $h$ be a nonzero unary signature. If $\lambda_1\lambda_2\neq 0$, $|\frac{\lambda_1}{\lambda_2}|\neq 1$, and $h$  (as a column vector) is not an eigenvector of $M(g)$, then $$\Holant(h', g, \mathcal{F})\leqslant_T\Holant(h, g, \mathcal{F})$$ for any unary signature $h'$ and any signature set $\mathcal{F}$.
\end{lemma}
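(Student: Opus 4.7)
The plan is a classical polynomial interpolation, realized through a chain gadget that attaches $k$ copies of $g$ to the unary $h$. Call the resulting unary signature $h_k$; viewing $h$ as a column vector, we have $h_k = M(g)^k h$. Using the diagonalization $M(g) = P^{-1}DP$ with $D = \mathrm{diag}(\lambda_1,\lambda_2)$, set $v = (v_1, v_2)^{\tt T} := Ph$. The non-eigenvector hypothesis on $h$ forces both $v_1 \ne 0$ and $v_2 \ne 0$, so $h_k = P^{-1}(\lambda_1^k v_1, \lambda_2^k v_2)^{\tt T}$. In particular each $h_k$ is a nonzero unary signature realizable from $\{h, g\}$ by gadget construction, hence freely usable in $\Holant(h, g, \mathcal{F})$.

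Next, take any instance $\Omega$ of $\Holant(h', g, \mathcal{F})$ and let $n$ be the number of occurrences of $h'$ in $\Omega$. Writing $h' = P^{-1}(a_1, a_2)^{\tt T}$ and decomposing each $h'$-vertex as a linear combination of the two basis unaries $P^{-1}(1,0)^{\tt T}$ and $P^{-1}(0,1)^{\tt T}$, we see that $\Holant_\Omega$ is a bi-homogeneous polynomial of total degree $n$:
\[
\Holant_\Omega \;=\; \sum_{j=0}^n c_j \, a_1^{n-j} a_2^j,
\]
where the $c_j$ are constants determined by the rest of the grid (independent of $h'$).

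For each $k = 0, 1, \ldots, n$, replace every occurrence of $h'$ in $\Omega$ by the gadget producing $h_k$ to obtain a grid $\Omega_k$ in $\Holant(h, g, \mathcal{F})$. The oracle returns
\[
\Holant_{\Omega_k} \;=\; \sum_{j=0}^n c_j (\lambda_1^k v_1)^{n-j} (\lambda_2^k v_2)^j \;=\; \lambda_1^{kn} \sum_{j=0}^n d_j \, r^{jk},
\]
where $d_j := c_j v_1^{n-j} v_2^j$ and $r := \lambda_2/\lambda_1$. Dividing by the nonzero $\lambda_1^{kn}$ yields a linear system in the unknowns $d_0,\ldots,d_n$ whose coefficient matrix is the Vandermonde $[r^{jk}]_{0\le k, j \le n}$. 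Because $|\lambda_1/\lambda_2|\ne 1$, the ratio $r$ lies off the unit circle and is therefore not a root of unity, so the evaluation points $1, r, r^2, \ldots, r^n$ are pairwise distinct; hence the Vandermonde is invertible. We solve for every $d_j$ in polynomial time, divide by $v_1^{n-j} v_2^j \ne 0$ to recover each $c_j$, and finally evaluate $\sum_j c_j a_1^{n-j} a_2^j$ to output $\Holant_\Omega$.

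The main technical point is the interplay of the two hypotheses: $|\lambda_1/\lambda_2|\ne 1$ guarantees distinctness of the $n+1$ interpolation nodes, and the non-eigenvector condition on $h$ guarantees $v_1 v_2 \ne 0$, which is what makes the polynomial in $a_1, a_2$ genuinely $(n{+}1)$-dimensional so that the recovered $c_j$'s are well-defined. If either hypothesis fails, the collection $\{h_k\}$ collapses into too few essentially different unaries to recover $\Holant_\Omega$ for an arbitrary $h'$.
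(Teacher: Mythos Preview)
Your argument is correct and is exactly the standard polynomial interpolation proof this lemma calls for. The paper itself does not supply a proof here; it states the lemma in Section~\ref{sec2.4} and refers the reader to \cite{jcbook}, so there is no in-paper proof to compare against, but your chain-gadget plus Vandermonde recovery is precisely the textbook method that reference contains.
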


\subsection{Tractable signatures}\label{sec-tractable-signatures}
We give some known signature sets 
that
 define polynomial time computable (tractable) counting problems.

\begin{definition}
Let $\mathscr{T}$ denote the set of tensor products of unary and binary signatures. 
\end{definition}

\begin{definition}
\label{definition-product-2}
 A signature on a set of variables $X$
 is of \emph{product type} if it can be expressed as a
product of unary functions,
 binary equality functions $(=_2)$,
and binary disequality functions $(\neq_2)$, each on one or two
variables of $X$.
 We use $\mathscr{P}$ to denote the set of product-type functions.
\end{definition}

\begin{definition}\label{definition-affine}
 A signature $f(x_1, \ldots, x_n)$ of arity $n$
is \emph{affine} if it has the form
 \[
  \lambda \cdot \chi_{A X = 0} \cdot {\frak i} ^{Q(X)},
 \]
 where $\lambda \in \mathbb{C}$,
 $X = (x_1, x_2, \dotsc, x_n, 1)$,
 $A$ is a matrix over $\mathbb{Z}_2$,
 $Q(x_1, x_2, \ldots, x_n)\in \mathbb{Z}_4[x_1, x_2, \ldots, x_n]$
is a quadratic (total degree at most 2) multilinear polynomial
 with the additional requirement that the coefficients of all
 cross terms are even, i.e., $Q$ has the form
 \[Q(x_1, x_2, \ldots, x_n)=a_0+\displaystyle\sum_{k=1}^na_kx_k+\displaystyle\sum_{1\leq i<j\leq n}2b_{ij}x_ix_j,\]
 and $\chi$ is a 0-1 indicator function
 such that $\chi_{AX = 0}$ is~$1$ iff $A X = 0$.
 We use $\mathscr{A}$ to denote the set of all affine signatures.
\end{definition}
If the support set $\mathscr{S}(f)$ is an affine linear subspace, then we say $f$ has affine support. Clearly, any affine signature has affine support.
Moreover,  we have that 
any signature of product type has affine support.

Let $T_{\alpha^s}=\left[\begin{smallmatrix}
1 & 0\\
0 & \alpha^s\\
\end{smallmatrix}\right]$ where $\alpha=\frac{1+\ii}{\sqrt{2}}$.

\begin{definition}
 A signature $f$ is local-affine if for each $\sigma=s_1s_2\ldots s_n \in \{0, 1\}^{n}$ in the support of $f$, $(T_{\alpha^{s_1}}\otimes T_{\alpha^{s_2}}\otimes\cdots \otimes T_{\alpha^{s_n}})f\in \mathscr{A}$. We use $\mathscr{L}$ to denote the set of local-affine signatures. 
\end{definition}

\begin{definition} \label{def:prelim:trans}
 We say a signature set $\mathcal{F}$ is $\mathscr{C}$-transformable
 if there exists a $T \in \rm{GL}_2(\mathbb{C})$ such that
 $(=_2)(T^{-1})^{\otimes 2} \in \mathscr{C}$ and $T\mathcal{F} \subseteq \mathscr{C}$. 
\end{definition}

This definition is important because if $\Holant(\mathscr{C})$ is tractable,
then $\Holant(\mathcal{F})$ is tractable for any $\mathscr{C}$-transformable set $\mathcal{F}$. Then, the following tractable result is known.
\begin{theorem}\label{main-thr}
For any set of complex valued signatures  $\mathcal{F}$,
 $\Holant(\mathcal{F})$ is P-time computable
if
\begin{itemize}
    \item 
$\mathcal{F}\subseteq \mathscr{T}$, or
\item $\mathcal{F}$ is $\mathscr{P}$-transformable, or
\item $\mathcal{F}$ is $\mathscr{A}$-transformable, or
\item $\mathcal{F}$ is $\mathscr{L}$-transformable.
\end{itemize}
We denote the above tractable conditions by conditions (\ref{main-thr}).
\end{theorem}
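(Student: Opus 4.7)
The plan is to reduce each of the four tractability conditions to the untransformed case via Valiant's Holant Theorem. For a $\mathscr{C}$-transformable set $\mathcal{F}$ with witness $T$, first 2-stretching converts $\Holant(\mathcal{F})$ into the bipartite form $\holant{=_2}{\mathcal{F}}$, and then a holographic transformation by $T$ converts this into $\holant{(=_2)(T^{-1})^{\otimes 2}}{T\mathcal{F}}$. By the definition of $\mathscr{C}$-transformable, both sides of this bipartite Holant lie in $\mathscr{C}$. Consequently it suffices to exhibit polynomial-time algorithms for $\Holant(\mathscr{C})$ (equivalently $\holant{=_2}{\mathscr{C}}$) in each of the four cases $\mathscr{C}\in\{\mathscr{T},\mathscr{P},\mathscr{A},\mathscr{L}\}$.

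For $\mathcal{F}\subseteq\mathscr{T}$, I would use Lemma~\ref{lin-wang} to replace each vertex by its unary and binary tensor factors, reducing the signature grid to a disjoint union of paths and cycles whose vertices are labelled by unary and binary signatures. Each component contributes either a scalar $u^{\T}M_1\cdots M_k v$ (for a path terminating in unary signatures) or a trace of a product of $2\times 2$ matrices (for a cycle), computable in time linear in component size, and the Holant factorizes over components. For $\mathcal{F}\subseteq\mathscr{P}$, the same factorization leaves only $(=_2)$, $(\neq_2)$, and unary signatures; the $(=_2)$/$(\neq_2)$ gadgets partition the edge set into equivalence classes, each admitting zero or two consistent $\{0,1\}$ assignments determined by the parity of $(\neq_2)$ along its cycles, and the weighted Holant factorizes as a polynomial-time product over classes.

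For $\mathcal{F}\subseteq\mathscr{A}$, the Holant is a weighted count of solutions to a global affine-linear system over $\mathbb{Z}_2$ formed by assembling the per-vertex support constraints, with weight $\ii^{Q(\sigma)}$ for a global multilinear quadratic $Q$ whose cross-term coefficients remain even. Gaussian elimination parametrizes the solution set in polynomial time, after which the resulting Gauss sum $\sum_{y\in\mathbb{Z}_2^d}\ii^{\tilde Q(y)}$ evaluates in closed form by separating variables using the even-cross-term structure. The $\mathcal{F}\subseteq\mathscr{L}$ case is the main obstacle, since the transformation $T_{\alpha^{s_1}}\otimes\cdots\otimes T_{\alpha^{s_n}}$ needed to turn a local-affine signature into an affine one depends on the support point $\sigma$, so no single global holographic transformation suffices. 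The strategy is to stratify the Holant sum according to which support point each vertex lies on; within each stratum a fixed tensor of $T_{\alpha^{s}}$ matrices makes every vertex signature affine, reducing the stratum's contribution to an instance of the $\mathscr{A}$ algorithm. The key technical point will be to argue that the global affine-linear constraints induced across vertices collapse the naively exponential number of strata into one tractable Gauss-sum computation, so that the total cost remains polynomial in the size of the signature grid.
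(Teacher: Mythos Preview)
The paper does not prove this theorem: it is stated as a known tractability result imported from prior literature (see the sentence ``Then, the following tractable result is known'' immediately preceding it), so there is no in-paper proof to compare against. Your sketches for $\mathscr{T}$, $\mathscr{P}$, and $\mathscr{A}$ are the standard arguments and are essentially correct.

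The $\mathscr{L}$ case, however, has a genuine gap. Your proposed stratification by the support point $\sigma$ at each vertex is tantamount to summing over all global edge assignments, since the per-vertex support point \emph{is} the local edge assignment. So the ``strata'' are indexed by exactly the objects the Holant is already summing over, and the claim that affine constraints collapse them into a single Gauss sum is circular: that collapse is precisely what needs to be proved, and it does not follow from the local-affine definition alone. The difficulty is that the transformation $T_{\alpha^{s_1}}\otimes\cdots\otimes T_{\alpha^{s_n}}$ varies with the assignment, so no fixed holographic transformation places all vertices simultaneously in $\mathscr{A}$. The actual tractability proof for $\mathscr{L}$ (due to Cai--Lu--Xia in the $\operatorname{Holant}^c$ and $\#\mathrm{CSP}_2$ settings) proceeds by a structural analysis of what signatures can be local-affine: one shows that any $f\in\mathscr{L}$ has affine support and, on that support, a phase of the form $\alpha^{L(x)}\ii^{Q(x)}$ with $L$ linear and $Q$ quadratic with even cross terms, and then exploits this uniform description to evaluate the resulting generalized Gauss sum directly. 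Your stratification does not recover this structure, and without it the argument does not go through.
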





Let $H=\frac{1}{\sqrt{2}}\left[\begin{smallmatrix}
1 & 1\\
1 & -1\\
\end{smallmatrix}\right]$ be the 2-by-2 Hadamard matrix. We have the following result.

\begin{lemma}\label{hard-sign}
Let $\mathcal{F}$ be a set of real valued signatures and $\mathcal{F}$ does not satisfy conditions (\ref{main-thr}). Then for any orthogonal matrix $Q\in \mathbb{R}^{2\times2}$,  $Q\mathcal{F}$ also does not satisfy conditions (\ref{main-thr}). Moreover, we have $H\mathcal{F}
\not\subseteq \mathscr{P}, \widehat{\mathcal{F}}\not\subseteq \mathscr{P}$, $\widehat{\mathcal{F}}\not\subseteq\mathscr{A}$ and $T_{\alpha}\mathcal{F}\not\subseteq \mathscr{A}$.
\end{lemma}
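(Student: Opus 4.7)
The plan is to derive both parts of the lemma from a single observation: the binary signature $(=_2)$ is invariant under holographic transformation by any orthogonal matrix (this is the identity $Q^{\tt T}Q = I$ rephrased, consistent with the earlier remark $(=_2)Q^{-1}=(=_2)$ for orthogonal $Q$). This invariance propagates through each of the four tractability conditions in (\ref{main-thr}) and also supplies concrete witnesses for the ``moreover'' clauses.

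For the first assertion, I would check invariance of each condition in (\ref{main-thr}) under replacement of $\mathcal{F}$ by $Q\mathcal{F}$. The containment $\mathcal{F} \subseteq \mathscr{T}$ is preserved under any invertible transformation: if $f = g_1 \otimes \cdots \otimes g_k$ with each $g_i$ unary or binary, then $Qf = (Q^{\otimes \arity(g_1)}g_1) \otimes \cdots \otimes (Q^{\otimes \arity(g_k)}g_k)$ still has all factors unary or binary. For $\mathscr{C}$-transformability with $\mathscr{C} \in \{\mathscr{P}, \mathscr{A}, \mathscr{L}\}$, if $T$ is a witness for $\mathcal{F}$, then $T' = TQ^{-1}$ is a candidate witness for $Q\mathcal{F}$: clearly $T'(Q\mathcal{F}) = T\mathcal{F} \subseteq \mathscr{C}$, and in $2\times 2$ matrix form the transformed equality signature satisfies
\[
(T')^{-\tt T}(T')^{-1} = T^{-\tt T} Q^{\tt T} Q\, T^{-1} = T^{-\tt T} T^{-1},
\]
so $(=_2)((T')^{-1})^{\otimes 2} = (=_2)(T^{-1})^{\otimes 2} \in \mathscr{C}$. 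Applying the same reasoning with $Q^{-1}$ (also orthogonal) yields the equivalence: $\mathcal{F}$ satisfies (\ref{main-thr}) if and only if $Q\mathcal{F}$ does.

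Each of the four ``moreover'' claims I would prove by contradiction, extracting from the assumed containment a matrix $T$ that witnesses $\mathscr{C}$-transformability of $\mathcal{F}$. If $H\mathcal{F} \subseteq \mathscr{P}$, take $T = H$; orthogonality of $H$ gives $(=_2)(H^{-1})^{\otimes 2} = (=_2) \in \mathscr{P}$. If $\widehat{\mathcal{F}} \subseteq \mathscr{P}$ or $\widehat{\mathcal{F}} \subseteq \mathscr{A}$, take $T = Z^{-1}$; the identity $(=_2)Z = (\neq_2)$ already noted in the paper, together with the fact that $(\neq_2)$ belongs to both $\mathscr{P}$ and $\mathscr{A}$ (its support $\{01, 10\}$ is affine and all nonzero values are $1 = \ii^0$), yields the required witness. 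Finally, if $T_\alpha \mathcal{F} \subseteq \mathscr{A}$, take $T = T_\alpha$; using $\alpha^2 = \ii$, the matrix of $(=_2)(T_\alpha^{-1})^{\otimes 2}$ is $T_{\alpha^{-1}}^{\tt T}T_{\alpha^{-1}} = \text{diag}(1, \alpha^{-2}) = \text{diag}(1, -\ii)$, corresponding to the signature $(1,0,0,-\ii)$, which lies in $\mathscr{A}$ via the representation $\chi_{x_1 + x_2 = 0} \cdot \ii^{3x_1}$ (multilinear polynomial $Q(x_1, x_2) = 3x_1$ with no cross terms, so the parity condition on cross coefficients is vacuously satisfied).

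No step poses a real obstacle; the entire argument is essentially bookkeeping around the identity $Q^{\tt T}Q = I$ and its consequence that $(=_2)$ is invariant under orthogonal holographic transformations, plus four small verifications of $\mathscr{P}$- and $\mathscr{A}$-memberships for the binary signatures $(=_2)$, $(\neq_2)$, and $(1,0,0,-\ii)$.
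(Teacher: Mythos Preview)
The paper states this lemma without proof, so there is nothing to compare against directly. Your argument is correct and is the natural one: the key point is that $(=_2)$ is invariant under any orthogonal holographic transformation, which lets you transport a witness $T$ for $\mathscr{C}$-transformability of $\mathcal{F}$ to a witness $TQ^{-1}$ for $Q\mathcal{F}$, and the four ``moreover'' claims each follow by exhibiting the assumed containment as a special instance of $\mathscr{C}$-transformability. Your verifications that $(=_2)\in\mathscr{P}$, $(\neq_2)\in\mathscr{P}\cap\mathscr{A}$, and $(1,0,0,-\ii)\in\mathscr{A}$ are all correct.
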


\subsection{Known hardness results}\label{sec-hard-result}
Let $\mathcal F$ by a set of real valued signatures. The following three results are known \cite{clx-csp, CLX-HOLANTC, Backens-Holant-c}.
\begin{theorem}\label{csp-dic}
 $\CSP(\mathcal{F})$ is \#P-hard unless $\mathcal{F} \subseteq \mathscr{A}$ or $\mathscr{P}$. 
\end{theorem}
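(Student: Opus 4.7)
The plan is to invoke the complex-valued \#CSP dichotomy of Cai--Lu--Xia~\cite{clx-csp}, of which this statement is a direct specialization to the real case. By the reduction $\CSP(\mathcal{F})\equiv_T \Holant(\mathcal{EQ}\mid \mathcal{F})$, the \#CSP framework gives free access to all equality signatures $=_n$. First I would dispatch the two tractable cases stated in the contrapositive: for $\mathcal{F}\subseteq \mathscr{P}$, the partition function factors along the unary/binary equality/disequality constraints and can be evaluated in polynomial time by standard connected-component analysis; for $\mathcal{F}\subseteq \mathscr{A}$, the support is cut out by affine linear equations over $\mathbb{Z}_2$, and the sum of a Gauss-type character over this affine subspace reduces via Gaussian elimination and classical Gauss-sum identities to a polynomial-time computation.

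For the hardness direction I assume $\mathcal{F}\not\subseteq \mathscr{A}$ and $\mathcal{F}\not\subseteq \mathscr{P}$, and reduce from a known \#P-hard \#CSP. The scheme has two layers. First, realize pinning: with $=_n$ available for every $n$, standard gadget and polynomial interpolation arguments (Lemmas~\ref{2by2-interpolation} and~\ref{unary-inter}) allow us to produce $\Delta_0=(1,0)$ and $\Delta_1=(0,1)$ from generic signatures in $\mathcal{F}$, so we can freely pin any variable. Second, symmetrize: by attaching copies of $f\in \mathcal{F}$, with the offending non-product/non-affine structure, to a high-arity equality in every permutation of its variables, one constructs a symmetric signature $\widetilde f$ that inherits the failure of membership in $\mathscr{P}\cup \mathscr{A}$, after which the symmetric \#CSP dichotomy closes the argument.

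The main obstacle, which is where the bulk of~\cite{clx-csp} lives, is the degenerate scenario in which every symmetrization of a bad $f$ unexpectedly falls inside $\mathscr{A}\cup \mathscr{P}$. In that case one has to preserve the obstructing feature of $f$ through a more delicate, asymmetric gadget: identify an entry of $f$ whose value is incompatible with affine signatures (wrong modulus or phase) or incompatible with the product structure (a non-splittable $2\times 2$ block in some $M_{[k],[n-k]}(f)$), and use it together with $=_2,=_3,\dots$ and polynomial interpolation to synthesize a binary or ternary signature already known to be \#P-hard under \#CSP, e.g., a signature reducing to counting perfect matchings or to a hard weighted \#SAT. Since this entire machinery is exactly what is carried out in~\cite{clx-csp}, and subsequently refined in~\cite{CLX-HOLANTC,Backens-Holant-c}, the theorem follows by direct citation.
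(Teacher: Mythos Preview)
Your proposal is correct and matches the paper's treatment: Theorem~\ref{csp-dic} is listed among the \emph{known hardness results} and is simply cited from~\cite{clx-csp} without any proof in the paper itself. Your additional sketch of the Cai--Lu--Xia argument is reasonable exposition but goes beyond what the paper does; both you and the paper ultimately establish the statement by direct citation.
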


\begin{theorem}\label{even-csp-dic}
$\CSP_2(\mathcal{F})$ is \#P-hard unless $\mathcal{F} \subseteq \mathscr{A}$, $\mathscr{P}$, $\mathscr{L}$, or $T_{\alpha}\mathcal{F} \subseteq \mathscr{A}$
\end{theorem}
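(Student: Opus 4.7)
The plan is to view $\CSP_2(\mathcal{F})$ as the bipartite Holant problem $\holant{\mathcal{EQ}_2}{\mathcal{F}}$, where $\mathcal{EQ}_2 = \{=_{2k} \mid k \geqslant 1\}$, and then combine holographic transformations with polynomial interpolation. I would first dispatch the tractability of all four families. For $\mathcal{F} \subseteq \mathscr{A}$, $\mathscr{P}$, or $\mathscr{L}$, tractability follows from Theorem~\ref{main-thr}, because $\mathcal{EQ}_2 \subseteq \mathscr{A} \cap \mathscr{P} \cap \mathscr{L}$, so the bipartite problem is $\mathscr{C}$-transformable with $T=I$ for the relevant class $\mathscr{C}$. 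For the new case $T_\alpha \mathcal{F} \subseteq \mathscr{A}$, I would apply the holographic transformation defined by $T_\alpha$: then $(T_\alpha^{-1})^{\otimes 2k}$ sends $=_{2k}$ to the signature supported on the all-$0$ and all-$1$ inputs with values $1$ and $\alpha^{-2k}=(-\ii)^k$, which is affine. Hence both sides of the transformed bipartite Holant problem lie in $\mathscr{A}$, giving tractability.

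For the hardness direction, I would prove the contrapositive. Assuming $\mathcal{F}$ is outside all four tractable classes, the goal is to reduce $\CSP(\mathcal{F})$ to $\CSP_2(\mathcal{F})$ by realizing $=_1$ (equivalently, a pinning gadget $\Delta_0$ or $\Delta_1$). Once $=_1$ is available, Theorem~\ref{csp-dic} immediately gives \#P-hardness, since $\mathcal{F} \not\subseteq \mathscr{A} \cup \mathscr{P}$. To realize $=_1$, I would pick a witness signature $f \in \mathcal{F}$ violating each class, and combine self-loops by $=_2$, higher-arity merges by $=_{2k}$, and mating gadgets with copies of $f$ to construct a binary signature $g$ with nonzero eigenvalues of unequal modulus. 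Polynomial interpolation (Lemmas~\ref{2by2-interpolation} and \ref{unary-inter}) then allows us to simulate a unary signature that is not an eigenvector of $g$, from which $=_1$ (or an equivalent $\Delta_0, \Delta_1$) can be extracted.

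The hard part will be showing that, when $\mathcal{F} \not\subseteq \mathscr{L}$ \emph{and} $T_\alpha \mathcal{F} \not\subseteq \mathscr{A}$, some gadget in the $\CSP_2$ framework actually escapes the even-parity constraint imposed by $\mathcal{EQ}_2$. The four tractable classes overlap in subtle ways for small-support or highly symmetric signatures, so one must carry out a careful case analysis on the support structure of $f$ modulo $4$, mirroring the phase $\alpha^4=-1$. A further obstacle is that $T_\alpha \mathcal{F} \not\subseteq \mathscr{A}$ is a global condition on the whole set, not a single signature, so the witness may require tensor products and merges of two or more signatures before interpolation becomes effective. One must also verify that the interpolation does not inadvertently remain trapped inside the even-parity sub-problem; this is where combining Lemma~\ref{all-zero} with the factorization principle of Lemma~\ref{lin-wang} becomes crucial, letting us separate the ``essential'' non-affine part of $f$ from any parity-preserving factors. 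Together with the standard reduction from $\CSP$ to $\CSP_2$ once $=_1$ is in hand, this yields the claimed dichotomy.
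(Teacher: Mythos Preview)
The paper does not supply its own proof of Theorem~\ref{even-csp-dic}. It is listed in Section~\ref{sec-hard-result} under ``Known hardness results'' and cited to prior work \cite{clx-csp, CLX-HOLANTC, Backens-Holant-c}; the paper only \emph{uses} this dichotomy (via Theorem~\ref{hard-result} and Lemma~\ref{all-even-equ}) rather than establishing it. So there is no in-paper argument to compare your proposal against.

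Regarding the soundness of your sketch on its own merits: the tractability direction is handled correctly. The hardness direction, however, is only a plan with the decisive step left open. You correctly identify that ``the hard part will be showing that \ldots\ some gadget in the $\CSP_2$ framework actually escapes the even-parity constraint,'' but then propose to handle it with mating, Lemmas~\ref{2by2-interpolation}--\ref{unary-inter}, and an unspecified case analysis modulo~$4$. That is precisely the crux of the $\CSP_2$ dichotomy, and the published proofs do \emph{not} proceed by realizing $=_1$ and reducing to $\CSP(\mathcal{F})$ in general. Instead they work with compressed signatures on affine supports, construct witnesses of hardness \emph{within} $\CSP_2$ (or reduce to $\CSP_d$ for smaller~$d$), and the tractable classes $\mathscr{L}$ and $T_\alpha\mathscr{A}$ arise exactly as the obstructions where such constructions fail. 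Your invocation of Lemma~\ref{all-zero} and Lemma~\ref{lin-wang} does not address this: those lemmas help with factorization and vanishing, not with breaking the parity barrier. As written, the proposal is a reasonable outline of what one would \emph{like} to do, but it does not contain the mechanism that makes the argument go through.
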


\begin{theorem}\label{dic-holantc}
$\Holant^c(\mathcal F)$ is \#P-hard unless  $\mathcal{F} \subseteq \mathscr{T}$, $ \mathscr{A}$, $\mathscr{P}$, $\mathscr{L}$, $H\mathcal{F} \subseteq \mathscr{P}$,   $\widehat{\mathcal{F}} \subseteq \mathscr{P}$ or $T_{\alpha}\mathcal{F} \subseteq \mathscr{A}$.
\end{theorem}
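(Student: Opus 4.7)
The plan is to reduce $\Holantc(\mathcal{F})$ to the known $\CSP$ dichotomies. Since $\Holantc(\mathcal{F})$ provides the pinning gadgets $|0\rangle$ and $|1\rangle$ for free, it suffices to realize enough binary (or higher-arity) equality signatures to reduce the problem to $\CSP(\mathcal{F})$ or $\CSP_2(\mathcal{F})$, at which point Theorems~\ref{csp-dic} and~\ref{even-csp-dic} classify tractability. The remaining tractable families $\mathscr{T}$, $H\mathcal{F}\subseteq\mathscr{P}$, $\widehat{\mathcal{F}}\subseteq\mathscr{P}$, and $T_\alpha\mathcal{F}\subseteq\mathscr{A}$ then correspond to three different holographic diagonalizations and the ``degenerate'' case where every signature factors into unary and binary pieces.

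First I would try to realize $(=_2)$ from $\mathcal{F}\cup\{|0\rangle,|1\rangle\}$ via the mating gadget: for each $f\in\mathcal{F}$ of arity $\geqslant 2$ and each variable $x_i$, the binary signature $\frak{m}_i f$ has the real symmetric Gram-matrix form \eqref{m-form}, and in particular is diagonalizable by a real orthogonal matrix $Q$. When some $\frak{m}_i f$ has two nonzero eigenvalues of different modulus, Lemma~\ref{2by2-interpolation} produces a rank-one idempotent in the $Q$-basis; combined with pinning this yields $(=_2)$ after an orthogonal change of basis. The reduction to $\CSP(\mathcal{F})$, followed by Theorem~\ref{csp-dic}, then forces $Q\mathcal{F}$ into $\mathscr{A}$ or $\mathscr{P}$, which (since orthogonal $Q$ commutes with $=_2$) accounts for the listed cases $\mathscr{A}$, $\mathscr{P}$, $H\mathcal{F}\subseteq\mathscr{P}$, and $\widehat{\mathcal{F}}\subseteq\mathscr{P}$ via Lemma~\ref{q-parity}. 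When no mating gives $(=_2)$ but some composition of matings realizes only the parity-preserving equality $(=_{2k})$, I instead reduce to $\CSP_2(\mathcal{F})$ and invoke Theorem~\ref{even-csp-dic}, yielding the additional tractable classes $\mathscr{L}$ and $T_\alpha\mathcal{F}\subseteq\mathscr{A}$.

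Second, if mating never produces a useful binary signature---that is, every $\frak{m}_i f$ is either zero, rank one, or has eigenvalues of equal modulus---I would apply Lemma~\ref{zero_2} iteratively in the $Z$-basis to conclude that each $\widehat{f}$ has support concentrated at Hamming weights $0$ and $n$, whence Lemma~\ref{lin-wang} lets me decompose each real $f\in\mathcal{F}$ into real unary and binary tensor factors, placing $\mathcal{F}\subseteq\mathscr{T}$.

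The main obstacle is the combinatorial case analysis verifying that the list of tractable families is exhaustive. Concretely, the ``bad'' mating situation---where Lemma~\ref{2by2-interpolation} is not directly applicable because eigenvalue moduli coincide---must be ruled out by showing that in every such case the underlying signature is already in one of the three exotic normal forms ($H$-, $Z$-, or $T_\alpha$-transformed tractable). Tracking through all compositions of pinnings and matings, and using Lemma~\ref{hard-sign} to show these normal forms are mutually exclusive outside the tractable list, is the heaviest technical step and constitutes the main content of~\cite{CLX-HOLANTC,Backens-Holant-c}.
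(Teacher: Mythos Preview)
The paper does not prove Theorem~\ref{dic-holantc}; it is quoted as a known result from \cite{CLX-HOLANTC,Backens-Holant-c}. So there is no in-paper argument to compare against, but your sketch has substantive gaps that would prevent it from going through.

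First, realizing $(=_2)$ is not a meaningful step: $(=_2)$ is always available in the Holant framework (it is the edge function), and having it does not yield a reduction to $\CSP(\mathcal{F})$. What is needed is $(=_3)$ (for $\CSP$, via Lemma~\ref{all-equ}) or $(=_4)$ (for $\CSP_2$, via Lemma~\ref{all-even-equ}). Your mating-plus-interpolation argument produces at best a binary signature, so the bridge to $\CSP$ or $\CSP_2$ is missing.

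Second, and more seriously, your ``degenerate'' branch is false as stated. You claim that if every $\frak{m}_i f$ is zero, rank one, or has eigenvalues of equal modulus, then $f\in\mathscr{T}$. But the case $M(\frak{m}_i f)=\mu I_2$ for all $i$ is exactly first order orthogonality (Definition~\ref{def-first-order-orth}), and many irreducible signatures outside $\mathscr{T}$ satisfy it---for instance $(=_3)$ itself, or any of the ternary signatures analyzed in Lemma~\ref{01-base-3}. The appeal to Lemma~\ref{zero_2} does not help here: that lemma concerns merging by $\neq_2$ (i.e., $\widehat\partial_{ij}\widehat f\equiv 0$), which is unrelated to the mating matrix having equal eigenvalues. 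The actual $\Holant^c$ dichotomy proofs in \cite{CLX-HOLANTC,Backens-Holant-c} require a substantial case analysis of signatures satisfying first order orthogonality, using the entanglement classification of small-arity states to construct $(=_3)$ or $(=_4)$ after a suitable holographic transformation; this is precisely the work your sketch defers to its final paragraph without supplying.
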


    Based on  the above three theorems and Lemma \ref{hard-sign}, we have the following hardness result.

\begin{theorem}\label{hard-result}
Let $\mathcal{F}$ be a set of real valued signatures and $\mathcal{F}$ does not satisfy conditions (\ref{main-thr}). Then for any orthogonal matrix $Q\in \mathbb{R}^{2\times 2}$, we have $\CSP(Q\mathcal{F})$, $\CSP_2(Q\mathcal{F})$ and $\Holant^c(Q\mathcal F)$ are \#P-hard.
\end{theorem}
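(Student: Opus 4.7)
The plan is to apply the three dichotomy theorems \ref{csp-dic}, \ref{even-csp-dic}, and \ref{dic-holantc} directly to the signature set $Q\mathcal{F}$ and use Lemma~\ref{hard-sign} to knock out every listed tractable case. Since $Q\in\mathbb{R}^{2\times 2}$ and $\mathcal{F}$ is real-valued, $Q\mathcal{F}$ is again a real-valued signature set, so the three theorems are applicable. Lemma~\ref{hard-sign} applied to $\mathcal{F}$ tells us that $Q\mathcal{F}$ itself fails conditions~(\ref{main-thr}); reapplying the same lemma with $Q\mathcal{F}$ in the role of $\mathcal{F}$ (legitimate, since $Q\mathcal{F}$ is real-valued and fails (\ref{main-thr})) yields the further non-containments $H(Q\mathcal{F})\not\subseteq\mathscr{P}$, $\widehat{Q\mathcal{F}}\not\subseteq\mathscr{P}$, $\widehat{Q\mathcal{F}}\not\subseteq\mathscr{A}$, and $T_{\alpha}(Q\mathcal{F})\not\subseteq\mathscr{A}$.

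Next, I convert the failure of (\ref{main-thr}) for $Q\mathcal{F}$ into the four direct non-containments $Q\mathcal{F}\not\subseteq\mathscr{T},\mathscr{A},\mathscr{P},\mathscr{L}$. The first is explicit in (\ref{main-thr}). For the others, note that $(=_2)$ lies in each of $\mathscr{A},\mathscr{P},\mathscr{L}$, so taking $T=I$ in Definition~\ref{def:prelim:trans} shows that $Q\mathcal{F}\subseteq\mathscr{C}$ implies $Q\mathcal{F}$ is $\mathscr{C}$-transformable for each $\mathscr{C}\in\{\mathscr{A},\mathscr{P},\mathscr{L}\}$; since $Q\mathcal{F}$ is none of these, the claimed non-containments follow. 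With all the negative containments in place, the three dichotomies close out the argument: Theorem~\ref{csp-dic} makes $\CSP(Q\mathcal{F})$ \#P-hard because $Q\mathcal{F}\not\subseteq\mathscr{A},\mathscr{P}$; Theorem~\ref{even-csp-dic} makes $\CSP_2(Q\mathcal{F})$ \#P-hard because additionally $Q\mathcal{F}\not\subseteq\mathscr{L}$ and $T_{\alpha}(Q\mathcal{F})\not\subseteq\mathscr{A}$; and Theorem~\ref{dic-holantc} makes $\Holantc(Q\mathcal{F})$ \#P-hard since every one of its seven exceptions, namely $Q\mathcal{F}\subseteq\mathscr{T},\mathscr{A},\mathscr{P},\mathscr{L}$, $H(Q\mathcal{F})\subseteq\mathscr{P}$, $\widehat{Q\mathcal{F}}\subseteq\mathscr{P}$, and $T_{\alpha}(Q\mathcal{F})\subseteq\mathscr{A}$, has been ruled out.

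Essentially all the work has been offloaded to Lemma~\ref{hard-sign}, so no substantial obstacle remains. The only points requiring mild care are (i) checking that Lemma~\ref{hard-sign} can be applied twice—once to $\mathcal{F}$ and once to $Q\mathcal{F}$—which is fine because orthogonal transformations preserve both real-valuedness and the failure of conditions~(\ref{main-thr}), and (ii) the trivial but necessary observation that $(=_2)\in\mathscr{A}\cap\mathscr{P}\cap\mathscr{L}$ so that set containment automatically upgrades to transformability via $T=I$.
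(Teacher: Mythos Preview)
Your proposal is correct and follows exactly the approach the paper intends: the paper states Theorem~\ref{hard-result} immediately after the sentence ``Based on the above three theorems and Lemma~\ref{hard-sign}, we have the following hardness result,'' giving no further proof, and your argument is precisely the unpacking of that sentence. The only detail you supply beyond what the paper makes explicit is the double application of Lemma~\ref{hard-sign} (first to $\mathcal{F}$ to ensure $Q\mathcal{F}$ fails (\ref{main-thr}), then to $Q\mathcal{F}$ to obtain the specific non-containments $H(Q\mathcal{F})\not\subseteq\mathscr{P}$, $\widehat{Q\mathcal{F}}\not\subseteq\mathscr{P}$, $T_{\alpha}(Q\mathcal{F})\not\subseteq\mathscr{A}$) and the observation that $(=_2)\in\mathscr{A}\cap\mathscr{P}\cap\mathscr{L}$ so that direct containment implies transformability via $T=I$; both are exactly the right way to fill in the gap.
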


The following two reductions are known \cite{jcbook}. One states that we can realize all $=_k$ once we have $=_3$. The other states that we can realize all $=_{2k}$ once we have $=_4$.
\begin{lemma}\label{all-equ}
$\CSP(\mathcal{F})\leqslant_T\Holant(=_3, \mathcal F)$.
\end{lemma}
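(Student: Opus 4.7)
The plan is to route the reduction through the intermediate problem $\Holant(\mathcal{EQ} \mid \mathcal{F})$. The lemma recalled in Section~\ref{sec-prelim} already gives $\CSP(\mathcal{F}) \equiv_T \Holant(\mathcal{EQ} \mid \mathcal{F})$, so it suffices to show $\Holant(\mathcal{EQ} \mid \mathcal{F}) \leqslant_T \Holant(=_3, \mathcal{F})$. For this, I only need to realize every $=_k$ as the signature of a gadget using only $=_3$ as vertex signatures; then each left vertex of a bipartite signature grid for $\Holant(\mathcal{EQ} \mid \mathcal{F})$ can be replaced by the corresponding gadget, preserving the Holant value.

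First, for $k \geqslant 3$ I would build $=_k$ inductively. Given $=_k$ already constructed, attach a fresh copy of $=_3$ and identify one of its three variables with one variable of $=_k$ through an internal edge. The resulting gadget has $3 + k - 2 = k+1$ dangling edges. Since each $=_3$ and each $=_k$ factor forces all its incident variables to take a common value, and those common values agree on the internal edge, all $k+1$ dangling variables must coincide. Hence the realized signature is $=_{k+1}$, and induction yields $=_k$ for every $k \geqslant 3$.

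For the small arities I would exploit self-loops. Realize $=_1$ as the self-loop gadget on $=_3$: contracting two of its three variables through one internal edge gives
\[
\sum_{y \in \{0,1\}} (=_3)(x,y,y) \;=\; [x=0] + [x=1] \;=\; 1
\]
for $x \in \{0,1\}$, which is the unary signature $(1,1) = (=_1)$. Then realize $=_2$ by attaching the just-built $=_1$ to one variable of $=_3$: summing out that variable collapses $(=_3)(x_1,x_2,x_3)$ to $(=_2)(x_1,x_2)$. (A symmetric alternative is to take two copies of $=_3$ joined by two parallel internal edges.)

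The argument is routine. The only point to watch is that the basic inductive construction ``attach $=_3$ and merge one variable'' only increases arity, and so cannot by itself produce $=_1$ or $=_2$; the two self-loop constructions above close this gap. Once all $=_k$ are available as gadgets over $\{=_3\}$, substituting each $=_k$ left-vertex in a signature grid for $\Holant(\mathcal{EQ} \mid \mathcal{F})$ by its corresponding gadget yields an instance of $\Holant(=_3, \mathcal{F})$ with the same partition function, completing the reduction.
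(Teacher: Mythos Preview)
Your proof is correct and matches the standard argument that the paper alludes to (the paper does not give its own proof but cites \cite{jcbook}, noting only that ``we can realize all $=_k$ once we have $=_3$''). Your inductive chain of $=_3$'s to build $=_{k+1}$ from $=_k$, together with the self-loop and pinning to recover $=_1$ and $=_2$, is exactly the intended construction.
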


\begin{lemma}\label{all-even-equ}
$\CSP_2(\mathcal{F})\leqslant_T\Holant(=_4, \mathcal F)$.
\end{lemma}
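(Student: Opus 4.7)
The plan is to realize every even-arity equality $=_{2k}$ as an $\{=_4\}$-gate, and then invoke the standard bipartite Holant expression of $\CSP_2$ (analogous to the $\CSP$ case in the preceding Lemma~\ref{all-equ}).

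First, I would establish that $\CSP_2(\mathcal{F}) \equiv_T \holant{\mathcal{EQ}_2}{\mathcal{F}}$, where $\mathcal{EQ}_2 = \{=_{2k} : k \geq 1\}$ is the set of all even-arity equalities. This is the natural even-arity analogue of Lemma 1.2 of \cite{jcbook} that was used just above for Lemma~\ref{all-equ}; the point is that requiring each variable to appear an even number of times is exactly what $\mathcal{EQ}_2$ enforces on the left-hand side of a bipartite Holant grid. It therefore suffices to show that each $=_{2k}$ is realizable from $\{=_4\}$ as a gadget, after which the gadget realization principle gives $\holant{\mathcal{EQ}_2}{\mathcal{F}} \leqslant_T \Holant(=_4, \mathcal{F})$, absorbing the bipartite structure into the non-bipartite Holant with $=_2$ freely available as an edge.

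The construction proceeds by induction on $k \geq 2$. The base case $k=2$ is the hypothesis. For the inductive step, assuming $=_{2(k-1)}$ has been realized, I take one copy of $=_{2(k-1)}$ on variables $y_1,\ldots,y_{2(k-1)}$ and one copy of $=_4$ on variables $z_1,z_2,z_3,z_4$, and connect $y_{2(k-1)}$ with $z_1$ by a single edge (i.e.\ via $=_2$). The resulting $\{=_4\}$-gate has $(2(k-1)-1)+(4-1)=2k$ dangling edges, and any nonzero term in its evaluation forces the internal edge, all of $y_1,\ldots,y_{2(k-1)-1}$, and all of $z_2,z_3,z_4$ to share a common Boolean value. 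Hence the gadget signature is precisely $=_{2k}$.

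I do not anticipate a real obstacle: every step is a routine application of gadget construction together with the framework-level equivalence between bipartite and non-bipartite Holant when $=_2$ is free. The only minor subtlety is being explicit that $=_2$ itself is not needed as an extra hypothesis on the right-hand side of the reduction, since it is synonymous with an edge in the Holant framework; once that is noted, chaining the inductively-built $=_{2k}$ gadgets delivers the claimed reduction $\CSP_2(\mathcal{F})\leqslant_T\Holant(=_4, \mathcal{F})$.
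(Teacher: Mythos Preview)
Your proposal is correct and follows essentially the same approach the paper alludes to (the paper does not spell out a proof but cites \cite{jcbook}, stating that ``we can realize all $=_{2k}$ once we have $=_4$''). Your inductive chaining of $=_{2(k-1)}$ with $=_4$ via a single edge to produce $=_{2k}$ is exactly the standard construction, and your handling of $=_2$ as a free edge in the Holant framework is the right observation to close the argument.
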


In the following, without other specifications, we use $f$ to denote a real-valued signature and $\mathcal{F}$ to denote a set of real-valued signatures. While $\widehat{f}$ denotes a signature satisfying {\sc ars} and $\widehat{\mathcal{F}}$ denotes a set of such signatures. We use $Q\in \mathbb{R}^{2\times 2}$ to denote an orthogonal matrix and $\widehat{Q}$ denotes $Z^{-1}Q(Z^{-1})^{\tt T}$.
\section{Proof Outline for Theorem \ref{odd-intro-sec1}}\label{sec-outline}

We first give a more concrete restatement of
 Theorem \ref{odd-intro-sec1}.
 
 \begin{theorem}\label{odd-dic}
Let $\mathcal{F}$ be a set of real-valued signatures containing a (nonzero) signature of odd arity.
If $\mathcal F$ satisfies the 
tractability conditions stated in Theorem~\ref{main-thr} then $\Holant(\mathcal F)$  is P-time
computable.
Otherwise, 
$\Holant(\mathcal F)$ is {\rm \#}P-hard.
\end{theorem}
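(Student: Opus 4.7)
The plan is as follows. The tractable half of the dichotomy follows immediately from Theorem~\ref{main-thr}, so the task is to prove \#P-hardness when $\mathcal{F}$ contains a signature of odd arity and fails every condition in Theorem~\ref{main-thr}. The overall strategy is to realize, inside the Holant framework, the quantum construction of Lemma~\ref{01-induction-intro}: the self-loop by the Bell state $|\phi^+\rangle$ is exactly the merging gadget by $=_2$ (always present via the 2-stretch), and the projection onto $|0\rangle$ is exactly pinning by the unary $\Delta_0=(1,0)$. The first and most delicate task is therefore to realize $\Delta_0$; after that, the inductive arity reduction carries us toward a small list of base objects whose hardness is already known.

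\textbf{Step 1: Realize $\Delta_0$ from an odd-arity signature.} Let $f\in\mathcal{F}$ have odd arity $n$. Iterated self-loops by $=_2$ reduce arity by two while preserving oddness, so either we eventually obtain a nonzero odd-arity signature of small arity (down to $1$ or $3$), or every merge produces the zero signature. In the latter case, Lemma~\ref{zero_2} forces the support of $f$ to sit only on $\vec{0}_n$ and $\vec{1}_n$, which together with the other non-tractable signatures in $\mathcal{F}$ is handled directly. When a nonzero unary $(a,b)$ is produced, we apply an orthogonal $Q\in\mathbb{R}^{2\times 2}$ rotating $(a,b)$ to a nonzero multiple of $(1,0)$; since $=_2$ is $Q$-invariant and Lemma~\ref{hard-sign} guarantees that $Q\mathcal{F}$ still violates all tractable conditions, we may henceforth work in $\Holant(Q\mathcal{F})$ with $\Delta_0$ realized. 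In the arity-$3$ case, the mating construction in~\eqref{m-form} gives a positive semi-definite $2\times 2$ matrix, and the interpolation Lemmas~\ref{2by2-interpolation} and~\ref{unary-inter}, combined with an orthogonal rotation, produce $\Delta_0$ unless the mating matrix is a scalar multiple of the identity. That degenerate case forces stringent algebraic constraints on $f$ (equal row norms, vanishing inner product) that are analyzed separately and either fall into a tractable class or directly deliver an auxiliary signature such as $=_3$, after which Theorem~\ref{csp-dic} via Theorem~\ref{hard-result} finishes.

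\textbf{Step 2: Inductive arity reduction.} Once $\Delta_0$ is in hand inside $Q\mathcal{F}$, mating together with the available orthogonal symmetries also yields $\Delta_1=(0,1)$, so both pinning constants are present. We now apply Lemma~\ref{01-induction-intro}: starting from any signature in $Q\mathcal{F}$ whose underlying state exhibits multipartite entanglement, successive merges by $=_2$ and pinnings by $\Delta_0$ collapse it to one of three base objects, namely (i) a $3$-ary genuinely entangled signature, (ii) the $4$-ary GHZ signature $=_4$, or (iii) the unary signature $(0,1)$. Since Lemma~\ref{hard-sign} still guarantees that $Q\mathcal{F}$ violates every tractable condition, some signature in $Q\mathcal{F}$ must exhibit multipartite entanglement, so the induction is non-vacuous.

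\textbf{Base cases and main obstacle.} In case (i), a $3$-ary genuinely entangled real signature together with $\Delta_0,\Delta_1,=_2$ places us inside $\Holantc(Q\mathcal{F})$, and Theorem~\ref{dic-holantc} (through Theorem~\ref{hard-result}) gives \#P-hardness. In case (ii), the signature $=_4$ unlocks all even equalities via Lemma~\ref{all-even-equ}, reducing $\CSP_2(Q\mathcal{F})$ to our problem, and Theorem~\ref{even-csp-dic} (again through Theorem~\ref{hard-result}) finishes. In case (iii), we merely restate the fact that both $\Delta_0$ and $\Delta_1$ are present, so we are once more in $\Holantc(Q\mathcal{F})$. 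The principal obstacle is Step~1: the construction of $\Delta_0$ requires a careful case analysis of the degenerate configurations that can arise during self-loops and matings on an odd-arity signature, and the technical heart of the proof lies in verifying that each such degenerate configuration either forces $f$ into a tractable class already covered by Theorem~\ref{main-thr} or exposes a strong enough auxiliary signature (such as $=_3$ or an entangled binary) to invoke one of the known dichotomies of Theorem~\ref{hard-result}.
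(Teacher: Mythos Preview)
Your overall architecture matches the paper's: reduce an odd-arity signature by self-loops to obtain either a unary (then rotate by an orthogonal $Q$ to get $\Delta_0$) or a signature whose support is contained in $\{\vec 0,\vec 1\}$; then, with $\Delta_0$ in hand, run an arity-reduction argument that terminates in one of a few base cases. However, there are two genuine gaps.

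\textbf{First gap: $\Delta_1$ is not freely available.} In Step~2 you assert that ``mating together with the available orthogonal symmetries also yields $\Delta_1$,'' and you then use $\Delta_1$ in base case~(i) to invoke the $\Holant^c$ dichotomy. This is exactly where the paper works hardest. Mating a real irreducible $f$ produces $M(\frak m_i f)=\left[\begin{smallmatrix}|{\bf f}_i^0|^2 & \langle{\bf f}_i^0,{\bf f}_i^1\rangle\\ \langle{\bf f}_i^0,{\bf f}_i^1\rangle & |{\bf f}_i^1|^2\end{smallmatrix}\right]$, and interpolation (Lemmas~\ref{2by2-interpolation},~\ref{unary-inter}) yields $\Delta_1$ \emph{only when this matrix is not a scalar multiple of $I_2$}. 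When it \emph{is} scalar for every $i$---the paper's \emph{first order orthogonality}---you get nothing from mating, and no orthogonal rotation fixes that. The paper's Lemma~\ref{01-base-3} handles the $3$-ary base case under first order orthogonality by a four-way case analysis on how many of $f_1^0,f_2^0,f_3^0$ are themselves orthogonal binaries; the outcomes are not $\Delta_1$ but rather $(=_3)$, $(=_4)$ (possibly after a Hadamard transform), or a reduction to $\holant{\neq_2}{=_3,\widehat{Q}\widehat{\mathcal F}}$. Your base case~(i) therefore begs the question: you cannot land in $\Holant^c$ without $\Delta_1$, and you have not shown how to get $\Delta_1$ when the obstruction holds.

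\textbf{Second gap: the ``all merges zero'' branch.} When every self-loop of the odd-arity $f$ vanishes, Lemma~\ref{zero} (applied to $\widehat f$) gives $\widehat f=a(1,0)^{\otimes n}+\bar a(0,1)^{\otimes n}$, i.e.\ a generalized equality $(=_{2k+1})$ in the $\neq_2$ setting after a diagonal transform. You say this ``is handled directly,'' but it is not covered by any of the cited dichotomies: you are in $\holant{\neq_2}{=_{2k+1},\widehat{\mathcal F}}$, not $\CSP$ or $\CSP_2$ or $\Holant^c$. The paper proves a separate new dichotomy for $\CSP_k(\neq_2,\cdot)$ (Theorem~\ref{cspk-dic-thr} and Corollary~\ref{cspk-dic}) precisely to close this branch, and the same reduction is also needed in the $N=3$ sub-case of the $3$-ary base case above. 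Your outline never touches this problem, so the branch is open.
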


By Theorem \ref{main-thr}, the tractability part is known. 
We prove  \#P-hardness when  $\mathcal{F}$ does not satisfy these conditions.
First, we show that under some holographic transformations, either one can use a signature of odd arity in $\mathcal{F}$ to realize the unary signature $\Delta_0=(1, 0)$, or one can realize some equality signature $(=_k)$ $(k\geqslant 3)$. 
Then, we prove \#P-hardness results for $\Holant(\neq_2 \mid =_k, \widehat{\mathcal{F}})$ (Corollary \ref{cspk-dic})  and  $\Holant(\Delta_0, \mathcal{F})$ separately (Theorem \ref{01-dic}).

It is easy to verify that for any $k\geqslant 3$, $\CSP_k(\neq_2, \widehat{\mathcal{F}})\leqslant_T \Holant(\neq_2 \mid =_k, \widehat{\mathcal{F}})$. 
Thus, we prove the \#P-hardness of $\Holant(\neq_2 \mid =_k, \widehat{\mathcal{F}})$
by giving a dichotomy of  $\CSP_k(\neq_2, {\mathcal{G}})$ for any set  $\mathcal{G}$ of complex-valued signatures  (Theorem \ref{cspk-dic-thr}).  
This result should be of independent interest.

 Next, we focus on the \#P-hardness of  $\Holant(\Delta_0, \mathcal{F})$. 
 This is where our entanglement result can be applied. 
If $\Delta_1$ is realizable from $\Holant(\Delta_0, \mathcal{F})$, then we reduce $\Holant(\Delta_0, \mathcal F)$ from $\Holant^c( \mathcal F)$ and we are done by the 
dichotomy of $\Holant^c( \mathcal F)$.
By using $\Delta_0$, we first give two conditions that $\Delta_1$ can be easily realized  by pinning (Lemma \ref{01-by-pin}) or interpolation (Lemma \ref{01-by-inter}). 
As a corollary (Corollary \ref{first-ortho}) of Lemma \ref{01-by-inter}, 
we show that  either $\Holant^c(\mathcal F)\leqslant_T\Holant(\Delta_0, \mathcal F)$, or every irreducible $f\in \mathcal{F}$ satisfies 
the following important \emph{first order orthogonality} condition.

\begin{definition}[First order orthogonality, c.f. Definition \ref{def-first-order-orth}]
       Let $f$ be a complex-valued signature of arity $n \geqslant 2$, we say that it satisfies the \emph{first order orthogonality} condition if there exists some $\mu\neq 0$ such that for all indices $i\in [n]$, the entries of $f$ satisfy the following equations
\begin{equation*}
    |{\bf f}_{i}^{0}|^2=|{\bf f}_{i}^{1}|^2=\mu, \text{ and } \langle{\bf f}_{i}^{0}, {\bf f}_{i}^{1}\rangle =0.
\end{equation*}
To  restate it in the quantum terminology,  let $|\Psi\rangle$ be a normalized $n$-qubit $(n\geqslant 2)$ state, i.e., $\langle\Psi|\Psi\rangle=1$. Then it satisfies the first order orthogonality if for every $i$-th qubit of $|\Psi\rangle$, $\langle\Psi_i^0|\Psi_i^0\rangle=\langle\Psi_i^1|\Psi_i^1\rangle=1/2$ and $\langle\Psi_i^0|\Psi_i^1\rangle=0$.
    \end{definition}
    
     \begin{remark}
     A real-valued signature $f$ satisfies the first order orthogonality precisely when there is some $\mu \not =0$ such that for all indices $i$, $M(\frak{m}_if)= \mu I_2$.
     In this case, the complex inner product is represented by mating real-valued signatures using $(=_2)$.
     Also, by  (\ref{hat-form-intro}), when $\widehat{f}$ is a signature with {\sc ars},  the complex inner product can also be represented by mating using $(\neq_2)$.
We can show that $f$  satisfies first order orthogonality if and only if $\widehat{f}$ satisfies it. 
Although  first order orthogonality is well-defined for any complex-valued signature, when $f$ is not real-valued ($\widehat{f}$ does not satisfy {\sc ars}), we can not say anything about $\frak{m}_if$ and $\widehat{\frak{m}}_i\widehat{f}$ since now they can not represent the complex inner product. The properties of $\frak{m}_if$ and $\widehat{\frak{m}}_i\widehat{f}$ crucially depend on $f$ being real (equivalently $\widehat{f}$ satisfying {\sc ars}).
\end{remark}

Back to the proof of the \#P-hardness of $\Holant(\Delta_0, \mathcal F)$.
 Since  $\mathcal F$ does not satisfy conditions (\ref{main-thr}), 
$\mathcal F\not\subseteq\mathscr{T}$. Hence, there is a signature $f \in \mathcal{F}$ of arity $n\geqslant 3$ such that $f\notin \mathscr{T}$. 
In other words, $\mathcal{F}$ contains an $n$-qubit state exhibiting multipartite entanglement. 
We will prove the \#P-hardness by induction on $n$.
We first consider the base case that $n=3$. 
We show that an irreducible ternary signature (a genuinely entangled 3-qubit state) satisfying first order orthogonality has some special forms, from which one can realize $(=_3)$ or $(=_4)$ after some holographic transformations.
Then, we can reduce the problem from $\CSP(\mathcal F)$, or $\CSP_2(\mathcal F)$, or 
$\Holant(\neq_2 \mid =_3, \widehat{\mathcal{F}})$,
to  $\Holant(\Delta_0, \mathcal F)$ (Lemma \ref{01-base-3}).
This allows us to finish the proof by citing existing dichotomy results for  $\CSP(\mathcal F)$, or $\CSP_2(\mathcal F)$, or the result we showed above for $\Holant(\neq_2 \mid =_k, \widehat{\mathcal{F}})$.

Then, we consider the inductive step.
The general strategy is that we start with a signature $f\in \mathcal{F}$ of arity $n\geqslant 4$ that is not in $\mathscr{T}$, and realize a signature $g$ of arity $n-1$ or $n-2$ 
also not in $\mathscr{T}$, by pinning or merging.
(By the definition of  $\mathscr{T}$, when $n=4$, this $g$ must have arity 3.)
By a sequence of reductions (that is constant in length independent of the problem instance size), we can realize a signature $h$ of arity $3$ that is not in $\mathscr{T}$. Then we are done. In other words, given an $n$-qubit state with multipartite entanglement,
we want to show that multipartite entanglement is preserved under projections onto $|0\rangle$ and self-loops by $|\phi^{+}\rangle$.
Lemma \ref{01-induction-intro} says that the preservation holds, or $|1\rangle$ or $|{\rm GHZ_4}\rangle$ is realizable (Lemmas \ref{01-induction} and \ref{unary-n=4}). 
We give an inductive restatement of Lemma \ref{01-induction-intro} in the  Holant framework.
 \begin{lemma}
Let $f\in \mathcal{F}$ be a signature of arity $n\geqslant 4$ and $f\notin \mathscr{T}$. Then one of the following alternatives must hold:
\begin{itemize}
    \item $\Delta_1$ is realizable: 
$\Holant(\Delta_0, \Delta_1, \mathcal F)\leqslant_T \Holant(\Delta_0, \mathcal F)$, or
\item $(=_4)$ is realizable:  
$\Holant(=_4, \mathcal F)\leqslant_T 
\Holant(\Delta_0, \mathcal F)$, or
\item a signature $g\notin \mathscr T$ of arity $n-1$ or $n-2$ can be realized: $\Holant(\Delta_0, g, \mathcal{F})\leqslant_T \Holant(\Delta_0, \mathcal{F})$.
\end{itemize}
\end{lemma}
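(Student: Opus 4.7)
The plan is a case analysis guided by Theorem~\ref{tho-entangle}, the signature-theoretic version of the main entanglement-preservation result, with each of the three listed alternatives corresponding to a distinct branch of that analysis.

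First I would reduce to the case that $f$ is irreducible. If $f = g \otimes h$ is a nontrivial tensor factorization, then since $f \notin \mathscr{T}$ at least one factor, say $g$, lies outside $\mathscr{T}$; by Lemma~\ref{lin-wang} we may realize $g$ separately in $\Holant(\Delta_0, \mathcal{F})$, and padding with copies of $\Delta_0$ or $(=_2)$ brings the arity up to exactly $n-1$ or $n-2$ while preserving non-membership in $\mathscr{T}$ (by unique prime factorization, a unary or binary factor cannot account for the irreducible non-$\mathscr{T}$ structure of $g$). This already yields alternative (c), so henceforth $f$ may be assumed irreducible.

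With $f$ irreducible I would apply Theorem~\ref{tho-entangle} whenever its hypotheses hold. In the generic subcase $f(0^n) \neq 0$ with $\mathscr{S}(f)$ avoiding the GHZ-type exceptional supports, the theorem furnishes an index $i$ with $f_i^0 \notin \mathscr{T}$ or $f_i^1 \notin \mathscr{T}$. The former is immediate: $\Delta_0$-pinning realizes $f_i^0$ of arity $n-1$, giving alternative (c). In the $f_i^1$ subcase, where $\Delta_0$-pinning cannot directly produce $f_i^1$, I would first attempt self-loops: if some $\partial_{ij} f \notin \mathscr{T}$, alternative (c) already holds at arity $n-2$. If instead every $\partial_{ij} f \in \mathscr{T}$, the plan is to combine this double rigidity with the unique prime factorization machinery (Corollary~\ref{cor-upf}, Lemma~\ref{lem-division}) to pin down $f$ sufficiently to construct $\Delta_1$ either by an explicit gadget or by polynomial interpolation (Lemmas~\ref{2by2-interpolation}--\ref{unary-inter}), yielding alternative (a).

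For the supports excluded from Theorem~\ref{tho-entangle} I would handle each directly. If $f = a|0^n\rangle + b|1^n\rangle$ with $n\geqslant 5$, then both $a, b$ are nonzero (otherwise $f \in \mathscr{T}_1$), so pinning $n-4$ variables with $\Delta_0$ yields the 4-ary GHZ signature $a|0000\rangle + b|1111\rangle$; self-loops on multiple copies generate the diagonal binary signatures with entries $a^k, b^k$, enabling a standard Vandermonde-style interpolation (Lemma~\ref{2by2-interpolation}) that realizes $(=_4)$ and gives alternative (b), with the degenerate $a = \pm b$ ratios handled by direct inspection. The $n = 4$ exceptional support $\{0000, 1111, 0011, 1100\}$ splits by which of $c, d$ vanish: either it collapses to the GHZ-4 case just handled, or a suitable $\partial_{ij} f$ is a ternary signature outside $\mathscr{T}$, giving alternative (c) at arity $3 = n-1$. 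The residual case $f(0^n) = 0$ is addressed via Lemmas~\ref{all-zero}, \ref{zero}, \ref{zero_2}: either some pinning $f_i^0$ or merging $\partial_{ij} f$ is nonzero and outside $\mathscr{T}$ (alternative (c)), or $\mathscr{S}(f)$ is forced into one of the already-handled GHZ-type supports.

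The hardest step is the $f_i^1 \notin \mathscr{T}$ subcase above, where Theorem~\ref{tho-entangle} guarantees entanglement preservation only under a projection not directly available from $\Delta_0$ and simultaneously no single self-loop produces a non-$\mathscr{T}$ reduction; showing that this double rigidity is strong enough to construct $\Delta_1$ outright, via the delicate interplay between unique prime factorization and polynomial interpolation, will be the most technical portion of the proof.
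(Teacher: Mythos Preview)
Your plan routes the argument through Theorem~\ref{tho-entangle}, but that theorem concludes that some $f_i^0$ \emph{or} $f_i^1$ lies outside $\mathscr{T}$, and in the $\Holant(\Delta_0,\mathcal F)$ setting you cannot form $f_i^1$. You recognize this and fall back to the case where every $f_i^0$ and every $\partial_{jk}f$ lies in $\mathscr{T}$; but at that point the conclusion ``$f_i^1\notin\mathscr{T}$'' from Theorem~\ref{tho-entangle} is never used again, so the theorem contributes nothing. The paper does not invoke Theorem~\ref{tho-entangle} here at all: it works directly with the condition $f\in\int\mathscr{T}$ and proves, via a \emph{merging}-based divisibility lemma (Lemma~\ref{all-div}, replacing the role of $f_j^1$ in Lemma~\ref{lem-division} by $\partial_{jk}f$), that for $n\geqslant 5$ some unary or binary signature divides $f$, or $\Delta_1$ is realizable (Lemmas~\ref{unary-factor},~\ref{binary-factor}). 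Your ``double rigidity plus UPF plus interpolation'' sketch for this branch is too vague to count as a plan; the concrete ingredient you are missing is exactly this merging-based substitute for pinning to~$1$.

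The $n=4$ case has a specific error and a missing idea. You write that on the exceptional support $\{0000,1111,0011,1100\}$ ``a suitable $\partial_{ij}f$ is a ternary signature outside $\mathscr{T}$,'' but $\partial_{ij}f$ has arity $n-2=2$, hence is automatically in $\mathscr{T}$; likewise every $f_i^0$ is a tensor of a unary and a binary. So for $n=4$ neither pinning to~$0$ nor merging can ever produce the desired $g$, and your case split does not close. The paper handles $n=4$ by a different mechanism you do not mention: it invokes first order orthogonality (Corollary~\ref{first-ortho}) to constrain $f$, then uses the \emph{mating} gadget $\mathfrak{m}_{ij}f$ to realize $(=_4)$ directly (Lemma~\ref{unary-n=4}). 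This, not interpolation on GHZ states, is where alternative~(b) comes from.
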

%


\noindent\emph{Proof Sketch.} 
For all indices $i$ and all pairs of indices $\{j, k\}$, consider $f_i^0$ and $\partial_{jk}f$.  If there exists $i$ or $\{j, k\}$ such that $f_i^0$ or $\partial_{jk}f \notin \mathscr{T}$, 
then we can realize
$g=f_i^0$ or $\partial_{jk}f$ which has arity $n-1$ or $n-2$, and we are done.
Otherwise, $f_i^0$ and $\partial_{jk}f \in \mathscr T$ for all $i$ and all $\{j, k\}$. 
 Under this assumption,
 our goal is to show that we can realize $\Delta_1$, 
 or there is a unary signature $a(x_u)$ or a binary signature $b(x_v, x_w)$ such that $a(x_u) \mid f$
or $b(x_v, x_w) \mid f$.
Then, we have $f=a(x_u)\otimes g$ or $f=b(x_v, x_w) \otimes g$ for some $g$ of arity $n-1$ or $n-2$. 
We know $g$ can be realized from $f$ by factorization. 
By the definition of $\mathscr T$, we have $g \notin \mathscr T$ since $f \notin \mathscr T$, and we are done.
When $n\geqslant 5$, the above induction proof can be achieved by the interplay of the unique prime factorization, and the commutivity of $f_i^0$ (pinning) and $\partial_{jk}f$ (merging) gadgets 
on disjoint indices.
(Lemmas \ref{unary-factor} and \ref{binary-factor}).
For $n=4$, there is the additional case that $(=_4)$ can be realized. Thus for $n=4$,
it requires more work  (Lemma \ref{unary-n=4}); we need to combine the induction proof and  first order orthogonality to handle it. \qed

We know that $\Holant^c(\mathcal{F})$ (which is
just $\Holant(\Delta_0, \Delta_1, \mathcal F))$
and $\CSP_2(\mathcal{F})$ are both \#P-hard when $\mathcal F$ does not satisfy the tractability conditions (\ref{main-thr}). Thus, we have shown that when  $\mathcal F$ does not satisfy the tractability conditions (\ref{main-thr}),
$\Holant(\Delta_0, \mathcal F)$ is {\rm \#}P-hard.
This finishes the proof of Theorem \ref{odd-intro-sec1}.

\section{Full Proof}\label{full-proof}
We give the full proof of Theorem \ref{odd-dic} in this section.

\begin{lemma}\label{odd-red}
Let $\mathcal{F}$ be a set of real-valued signatures containing a signature of odd arity.
Then, there exists some orthogonal matrix $Q$ such that 
\begin{itemize}
\item $\Holant(\Delta_0, Q\mathcal{F})\leqslant_T\Holant(\mathcal{F})$ or
\item $\holant{\neq_2}{=_{2k+1}, \widehat{Q}\widehat{\mathcal F}}\leqslant_T\Holant(\mathcal{F})$, for some $k \geqslant 1$.
\end{itemize}
\end{lemma}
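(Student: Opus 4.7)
The plan is to use any $f \in \mathcal{F}$ of odd arity $n = 2k+1$ and perform self-loops (merging pairs of variables via $=_2$) to reduce $f$. Since $n$ is odd, $k$ self-loops reduce $f$ to a unary signature, and the key dichotomy is whether some choice of pairings yields a nonzero unary or every choice yields zero.

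If some sequence of $k$ self-loops on $f$ produces a nonzero unary $u = (a,b)^{\T}$ (which is real-valued since $f$ is real), take the real orthogonal $Q = \frac{1}{\sqrt{a^2+b^2}}\trans{a}{b}{-b}{a}$, which satisfies $Qu = \sqrt{a^2+b^2}\,\Delta_0$. The holographic transformation by $Q$ preserves $=_2$, and the identity $\partial_{ij}(Qf) = Q^{\otimes(n-2)}\partial_{ij}f$ (from $Q^{\T}Q = I$) shows that self-loops commute with $Q$; thus the same self-loops applied to $Qf$ in $\Holant(Q\mathcal{F}) \equiv_T \Holant(\mathcal{F})$ realize $Qu$, a nonzero scalar multiple of $\Delta_0$. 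This gives the first alternative: $\Holant(\Delta_0, Q\mathcal{F}) \leq_T \Holant(\mathcal{F})$.

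Otherwise every such sequence yields the zero unary. Pass to the hat framework $\holant{\neq_2}{\widehat{\mathcal{F}}} \equiv_T \Holant(\mathcal{F})$, where the same holds for $\widehat{f}$ under $\neq_2$ self-loops (since $u = 0$ iff $\widehat{u} = 0$). Choose a $\neq_2$ self-loop sequence on $\widehat{f}$ that keeps the running signature nonzero for as long as possible. It cannot reach arity $1$ (the resulting unary would have to be zero, so the last step would not be nonzero-preserving), so it terminates at some nonzero $\widehat{f'}$ of odd arity $m \geq 3$ with $\widehat{\partial}_{ij}\widehat{f'} \equiv 0$ for every pair $\{i,j\}$. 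By Lemma \ref{zero}, $\widehat{f'}^\alpha = 0$ for all $\alpha$ with $0 < \mathrm{wt}(\alpha) < m$; combined with the {\sc ars} property (preserved under $\neq_2$ self-loops because $f$ is real), $\widehat{f'}$ is supported on $\{0^m, 1^m\}$ with $\widehat{f'}^{0^m} = \lambda$ and $\widehat{f'}^{1^m} = \bar{\lambda}$ for some $\lambda \neq 0$.

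Write $\lambda = |\lambda|\,e^{i\theta}$ and pick the reflection $Q = \trans{\cos(\theta/m)}{\sin(\theta/m)}{\sin(\theta/m)}{-\cos(\theta/m)}$. Lemma \ref{q-parity} gives $\widehat{Q} = \operatorname{diag}(e^{-i\theta/m}, e^{i\theta/m})$, so $\widehat{Q}^{\otimes m}\widehat{f'} = |\lambda|\cdot(=_m)$ by direct calculation. The identity $(\neq_2)\widehat{Q}^{\otimes 2} = (\neq_2)$ means self-loops commute with $\widehat{Q}$; performing the analogous self-loop sequence on $\widehat{Q}\widehat{f}$ in the equivalent framework $\holant{\neq_2}{\widehat{Q}\widehat{\mathcal{F}}} \equiv_T \Holant(\mathcal{F})$ realizes $|\lambda|\cdot(=_m)$, hence $=_{2k'+1}$ with $k' = (m-1)/2 \geq 1$, which is the second alternative. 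The main technical point is the maximally-nonzero-sequence argument that delivers an intermediate $\widehat{f'}$ of arity at least $3$ satisfying the hypothesis of Lemma \ref{zero}; the subsequent phase correction via a reflection and the commutation of self-loops with the holographic transformations are routine calculations.
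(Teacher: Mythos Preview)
Your proof is correct and follows essentially the same approach as the paper's: the paper's induction on the odd arity $n$ unrolls to exactly your two-case analysis (either some chain of self-loops reaches a nonzero unary, or a maximal nonzero chain halts at arity $\geqslant 3$ where Lemma~\ref{zero} applies). Your presentation is slightly more explicit in verifying that self-loops commute with the orthogonal (resp.\ diagonal $\widehat{Q}$) holographic transformation, which the paper leaves implicit.
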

\begin{proof}
Suppose $f$ has arity $n$. We prove our lemma by induction on $n$.

If $n=1$, let $f=(a, b)$ where $a, b$ are not both zero. Let $Q_1=\frac{1}{\sqrt{a^2+b^2}}\left[
\begin{smallmatrix}
a & b\\
-b & a\\
\end{smallmatrix}\right]$, an   orthogonal matrix.
Note that $\Holant(\mathcal{F})$ is just $ \holant{=_2}{\mathcal{F}}$, and $=_2$ is invariant under 
an orthogonal holographic transformation $(=_2) (Q_1^{-1})^{\otimes 2} = (=_2)$, and
$Q_1(a, b)^{\tt T}=\sqrt{a^2+b^2}(1, 0)^{\tt T}.$ Thus, 
$$\holant{=_2}{(a, b), \mathcal F}\equiv_T \holant{=_2}{\Delta_0, Q_1\mathcal F}.$$
The base case is proved.

We assume our claim is true for $n=2k-1$. Now, we consider $n=2k+1\geqslant 3$. 
If there is a pair of indices $\{i, j\}$ such that $\partial_{(ij)}f\not\equiv 0$, then we can realize a signature of arity $2k-1$ from $f$. 
By induction hypothesis, we have 
$$\Holant(\Delta_0, Q\mathcal{F})\leqslant_T\Holant(\partial_{ij}f, \mathcal F)\leqslant_T \Holant(\mathcal F).$$ 
Otherwise, $\partial_{ij}f\equiv 0$ for all pairs of indices $\{i, j\}$. 
Thus, we also have
$\widehat{\partial}_{ij}\widehat{f}\equiv 0$ for all $\{i, j\}$.
Then, by Lemma \ref{zero}, we have $\widehat{f}=a(1, 0)^{\otimes n}+\bar{a}(0, 1)^{\otimes n}$ for some $a\neq 0$.
Let $\widehat{Q_2}=\left[
\begin{smallmatrix}
\sqrt[n]{\bar a} & 0\\
0 & \sqrt[n]{ a}
\end{smallmatrix}\right]$.
$Q_2$ is orthogonal (up to a scalar) by  Lemma~\ref{q-parity}.
We have $(\widehat{Q_2})^{\otimes n}\widehat f= |a|^2 \big( (1, 0)^{\otimes n}+(0, 1)^{\otimes n}\big)$.
Thus, 
a holographic transformation by $\widehat{Q_2}$ and $Z^{-1}$ yields 
$$\Holant({\mathcal F}) = \holant{=_2}{ f, {\mathcal F}}\equiv_T\holant{\neq_2}{\widehat f, \widehat{\mathcal F}}\equiv_T \holant{\neq_2}{=_{2k+1}, \widehat{Q_2}\widehat{\mathcal F}}.$$
Thus, $\holant{\neq_2}{=_{2k+1}, \widehat{Q}\widehat{\mathcal F}}\leqslant_T\Holant(\mathcal{F})$ where $k\geqslant 1$.
%
\end{proof}
 
 We will prove \#P-hardness results for $\Holant(\neq_2 \mid =_k, \widehat{\mathcal{F}})$ and  $\Holant(\Delta_0, \mathcal{F})$ in the following subsections.

\subsection{The hardness of Holant$(\neq \mid =_k, \widehat{\mathcal{F}})$}\label{sec4.1}

Recall that $\mathcal{EQ}_k$ denotes the set of equalities of arity $nk$ for all $n\geqslant 1$, i.e., $\mathcal{EQ}_k=\{=_k, =_{2k}, \ldots, =_{nk}, \ldots\}$. The  problem $\CSP_k(\neq_2, \mathcal{F})$ is defined as
$\Holant(\mathcal{EQ}_k \mid  \neq_2, {\mathcal{F}})$.
This is equivalent to $\Holant(\mathcal{EQ}_k, \neq_2 \mid \mathcal{EQ}_k, \neq_2, {\mathcal{F}})$.
First, we have the following reduction.

\begin{lemma}\label{all-k-equ}
For any $k\geqslant 3$, $\CSP_k(\neq_2, {\mathcal{F}})\leqslant_T \Holant(\neq_2 \mid =_k, {\mathcal{F}})$.
\end{lemma}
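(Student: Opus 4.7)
The plan is to reduce the source problem $\CSP_k(\neq_2, \mathcal{F}) = \holant{\mathcal{EQ}_k}{\neq_2, \mathcal{F}}$ to the target $\holant{\neq_2}{=_k, \mathcal{F}}$ by simulating two sets of signatures inside the target: $\neq_2$ on the RHS, and every $=_{nk}$ on the LHS for $n\geqslant 1$. Since $\mathcal{F}$ already sits on the RHS in both problems, this suffices to replace every vertex of any source signature grid by a constant-size realizing gadget in the target.

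First I would realize $\neq_2$ on the RHS. Take two copies of $=_k$ on the RHS and anti-merge $k-1$ pairs of their variables through $k-1$ internal LHS $\neq_2$'s. If the two $=_k$'s take constant values $u$ and $v$, each internal $\neq_2$ forces $u\neq v$, so $v=\bar{u}$; the two remaining dangling variables then carry opposite values, which is exactly $\neq_2$ as an RHS signature.

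Next I would realize arbitrarily large equalities on the LHS. The base step produces $=_k$ on the LHS by attaching one $\neq_2$ to every variable of an RHS $=_k$. For higher arities I use a ``star'' gadget: a center RHS $=_k$ together with $k$ leaf RHS $=_k$'s, each leaf joined to a distinct central variable by a single LHS $\neq_2$ spoke. The center forces all central variables to a common value $u$; each spoke flips the adjacent leaf variable to $\bar{u}$; and each leaf $=_k$ propagates $\bar{u}$ to its $k-1$ remaining variables. Summing over $u\in\{0,1\}$ yields $=_{k(k-1)}$ on the RHS. Iterating (use the already-realized $=_{k(k-1)^{j-1}}$ on the RHS as the new center with $k(k-1)^{j-1}$ many $=_k$ leaves) produces $=_{k(k-1)^j}$ on the RHS for every $j\geqslant 1$, and attaching an external LHS $\neq_2$ to each dangling edge moves any such signature to the LHS.

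To cover every multiple of $k$, I use a simple arity-reduction step: given an LHS $=_m$ and an RHS $=_k$, plug any $k$ of the $m$ dangling edges of $=_m$ into the $k$ variables of $=_k$. Summing over the $k$ internalized edges against $(=_m)\cdot(=_k)$ forces the remaining $m-k$ external variables to all be equal, so the resulting LHS signature is exactly $=_{m-k}$. Starting from LHS $=_{k(k-1)^j}$ and iterating this reduction produces LHS $=_{nk}$ for every $1\leqslant n\leqslant (k-1)^j$, so letting $j$ grow exhausts $\mathcal{EQ}_k$. The main piece of bookkeeping, and the only step I expect to need real care, is that every dangling edge in the intermediate gadgets sits on the correct bipartite side so that every identification of edges is legal; the constructions above are arranged with RHS centers and LHS spokes (and LHS external-flips) precisely so that this is automatic.
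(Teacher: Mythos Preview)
Your proposal is correct, and you even explicitly realize $\neq_2$ on the RHS, which the paper's proof takes for granted. However, your construction of $\mathcal{EQ}_k$ on the LHS is genuinely different from the paper's. The paper gives a direct inductive step: once $(=_{(n-1)k})$ is on the LHS, take one RHS $(=_k)$, connect two of its variables to one variable each of two fresh LHS $(=_k)$'s, and connect its remaining $k-2$ variables into the LHS $(=_{(n-1)k})$; the dangling edges number $2(k-1)+(n-1)k-(k-2)=nk$, giving $(=_{nk})$ on the LHS in one step. Your route instead overshoots to $(=_{k(k-1)^j})$ via iterated star gadgets and then trims back by $k$ at a time using RHS $(=_k)$'s as caps. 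Both yield polynomial-size gadgets, but the paper's construction is tighter (each $(=_{nk})$ uses $O(n)$ copies of $(=_k)$ with a uniform recursive structure), while yours needs the extra overshoot-and-reduce bookkeeping. Your approach does have the minor virtue of making the arity-reduction primitive explicit, which is a useful standalone gadget.
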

\begin{proof}
Connecting one input each  of $k$ copies of $(\neq_2)$ on the LHS with the $k$
variables of $(=_k)$ on the RHS produces one $(=_k)$ on the LHS. Once we have
$(=_{(n-1)k})$ on the LHS, we take one copy of $(=_{(n-1)k})$  and two copies of $(=_k)$
on the LHS, and one copy of $(=_k)$ on the RHS with variables $x_1, \ldots, x_k$. Connect 
$x_1$ and $x_2$ to one variable each of the two copies of $(=_k)$
on the LHS, and $x_3, \ldots, x_k$ all to variables of $(=_{(n-1)k})$.
This produces  one copy of $(=_{nk})$ on the LHS.
\end{proof}

Then, we give a dichotomy of  $\CSP_k(\neq_2, {\mathcal{F}})$ for any set  $\mathcal{F}$ of complex-valued signatures.
Let 
$\rho$ be a $4k$-th primitive root of unity,
 $T_k=\left[\begin{smallmatrix}1 & 0\\ 0 & \rho\end{smallmatrix}\right]$,
and $\mathscr{A}_k^d=\{f~|~T^d_kf\in\mathscr{A}\}$,  for $d\in[k]$.

\begin{theorem}\label{cspk-dic-thr}
$\operatorname{\#CSP}_k(\mathcal{F}, \neq_2)$ is $\#\operatorname{P}$-hard except for the following cases
\begin{itemize}
\item $\mathcal{F}\subseteq\mathscr{P}$;
\item there exists $k\in[d]$ such that $\mathcal{F}\subseteq\mathscr{A}_k^d$,
\end{itemize}
which can be computed in polynomial time.
\end{theorem}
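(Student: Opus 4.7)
The plan is to split the argument into tractability (direct algorithms following a holographic twist) and hardness (a one-line reduction from the complex-valued $\CSP_k$ dichotomy without disequality). The pivotal observation driving both directions is that $\neq_2$ is absorbed into every tractable class: trivially $\neq_2\in \mathscr{P}$, and for every $d\in[k]$ one has $(T_k^d)^{\otimes 2}\neq_2=\rho^d\cdot(\neq_2)$, which is a nonzero scalar multiple of an affine signature and hence itself affine, so $\neq_2\in \mathscr{A}_k^d$ as well.

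For tractability, first suppose $\mathcal{F}\subseteq \mathscr{P}$. Every $=_{mk}\in \EQ_k$ and $\neq_2$ is itself a product-type signature, so the whole signature set of $\CSP_k(\mathcal{F},\neq_2)=\holant{\EQ_k}{\mathcal{F},\neq_2}$ lies in $\mathscr{P}$, and tractability follows from the $\mathscr{P}$-transformability clause of Theorem~\ref{main-thr}. Next suppose $\mathcal{F}\subseteq \mathscr{A}_k^d$ for some $d\in[k]$, and apply the holographic transformation $T_k^d$ to the bipartite Holant problem. Each transformed LHS equality $(=_{mk})(T_k^{-d})^{\otimes mk}$ has entry $1$ at $0^{mk}$ and $\rho^{-dmk}$ at $1^{mk}$; because $\rho$ is a primitive $4k$-th root of unity, $\rho^k$ is a primitive $4$-th root of unity, so $\rho^{-dmk}=\mathfrak{i}^{e(d,m)}$ for some integer $e(d,m)$, and this transformed equality is affine. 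On the RHS, $T_k^d\mathcal{F}\subseteq \mathscr{A}$ by hypothesis and $(T_k^d)^{\otimes 2}\neq_2\in \mathscr{A}$ by the opening observation. The transformed problem is therefore a Holant instance with both sides in $\mathscr{A}$, which is tractable by Theorem~\ref{main-thr}.

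For hardness, assume $\mathcal{F}\not\subseteq \mathscr{P}$ and $\mathcal{F}\not\subseteq \mathscr{A}_k^d$ for every $d\in[k]$. The reduction $\CSP_k(\mathcal{F})\leqslant_T \CSP_k(\mathcal{F},\neq_2)$ is immediate (never use $\neq_2$ in an input instance). Then appeal to the complex-valued $\CSP_k$ dichotomy: $\CSP_k(\mathcal{F})$ is $\#$P-hard unless $\mathcal{F}\subseteq \mathscr{P}$ or $\mathcal{F}\subseteq \mathscr{A}_k^d$ for some $d\in[k]$. Under the current hypothesis neither tractable condition holds, so $\CSP_k(\mathcal{F})$ is $\#$P-hard, and therefore so is $\CSP_k(\mathcal{F},\neq_2)$.

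The main obstacle is the $\CSP_k$ dichotomy itself. If one can cite it from the literature on mod-$k$ counting CSPs, the argument above is complete. If a self-contained proof is required, the natural route is induction on arity: first use $\EQ_k$ as a pinning bank to replace each $f\in\mathcal{F}$ by its restrictions; next apply polynomial interpolation (Lemmas~\ref{2by2-interpolation} and~\ref{unary-inter}) to sharpen these restrictions into usable binary building blocks; finally reduce to the already-established dichotomies of $\CSP(\mathcal{F})$ and $\CSP_2(\mathcal{F})$ (Theorems~\ref{csp-dic} and~\ref{even-csp-dic}) after a diagonal twist by some $T_k^d$. The principal combinatorial difficulty is the anisotropy of the family $\{\mathscr{A}_k^d\}_{d\in[k]}$: one must identify, for each irreducible $f\in\mathcal{F}$, the unique $d$ (if any) that could conceivably place $f$ in $\mathscr{A}_k^d$, and show that failure for all $d$ already produces a gadget violating the $\CSP$ or $\CSP_2$ tractability criteria.
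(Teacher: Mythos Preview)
Your tractability argument is correct, but the hardness argument has a genuine gap. You propose to drop $\neq_2$ and invoke a dichotomy for $\CSP_k(\mathcal{F})$ stating that the only tractable cases are $\mathcal{F}\subseteq\mathscr{P}$ or $\mathcal{F}\subseteq\mathscr{A}_k^d$. That statement is false already for $k=2$: by Theorem~\ref{even-csp-dic}, $\CSP_2(\mathcal{F})$ is also tractable when $\mathcal{F}\subseteq\mathscr{L}$, and there exist sets $\mathcal{F}\subseteq\mathscr{L}$ with $\mathcal{F}\not\subseteq\mathscr{P}$ and $\mathcal{F}\not\subseteq\mathscr{A}_2^d$ for $d=1,2$. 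For such $\mathcal{F}$ your reduction $\CSP_2(\mathcal{F})\leqslant_T\CSP_2(\mathcal{F},\neq_2)$ yields nothing, since the left side is tractable. The whole point of the $\neq_2$ in the statement is that $(\neq_2)\notin\mathscr{L}$, so its presence is what kills the local-affine (and, for larger $k$, the analogous) tractable families; a reduction that simply discards $\neq_2$ cannot exploit this. Your fallback sketch has the same defect: reducing to $\CSP_2(\mathcal{F})$ after a diagonal twist still leaves the $\mathscr{L}$ escape hatch open.

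The paper's proof does not factor $\neq_2$ out; it uses it throughout. The argument is by induction on $k$ (not on arity). The base cases $k=1,2$ follow by applying Theorems~\ref{csp-dic} and~\ref{even-csp-dic} to $\mathcal{F}\cup\{\neq_2\}$, where $\neq_2\notin\mathscr{L}$ eliminates the extra tractable class. For $k\geq 3$ the paper builds gadgets that essentially require $\neq_2$: the ``$(k-1)$-multiple'' operation connects a variable through $\neq_2$ to $(=_k)$ to flip and replicate it, and this is the mechanism (Lemmas~\ref{connect-equality-d}, \ref{monotone}, \ref{non-affine-transformable}) for normalizing supports to a bundle form from which one either extracts a generalized equality of arity $<k$ (Lemma~\ref{general-equality-induction}) or manufactures signatures $f_1,\ldots,f_{k'}$ whose product witnesses hardness at level $k/k'$ (Lemma~\ref{induction}). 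None of this structure is visible if you throw $\neq_2$ away at the outset.
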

\begin{proof}
We will give a proof in Section \ref{sec-cspd}. This result should be of independent interest.
\end{proof}

\begin{remark}
Let $\mathcal{F}$ be a set of real-valued signatures. It is easy to see that when $\mathcal{F}$ does not satisfy the conditions in Theorem~\ref{main-thr}, $\widehat{\mathcal{F}}$ does not satisfy the tractable conditions of Theorem \ref{cspk-dic-thr}, therefore $\CSP_k(\neq_2, \widehat{\mathcal{F}})$ is \#P-hard. 
Furthermore,  we consider $\widehat{Q}\widehat{\mathcal{F}}$ for any orthogonal matrix $Q\in \mathbb{R}^{2\times 2}$.
Let $\widehat{f}\in \widehat{\mathcal{F}}$ be a signature of arity $n$.
By  definition, we have $$\widehat{Q}\widehat{f}=[Z^{-1}Q(Z^{-1})^{\tt T}]^{\otimes n}[(Z^{-1})^{\otimes n}f]=(Z^{-1})^{\otimes n}[Q(Z^{-1})^{\tt T}Z^{-1}]^{\otimes n} f.$$
Note that $2(Z^{-1})^{\tt T}Z^{-1}=\left[\begin{smallmatrix}
1 & 0\\
0 & -1\\
\end{smallmatrix}\right]$. Thus,  $2Q(Z^{-1})^{\tt T}Z^{-1}$ is also a real orthogonal matrix, denoted by $Q_{-1}$. Therefore, up to a scalar $\widehat{Q}\widehat{f}=(Z^{-1})^{\otimes n}(Q_{-1}^{\otimes n}f)=\widehat{Q_{-1}f}$. 
Thus, $\widehat{Q}\widehat{\mathcal{F}}=\widehat{Q_{-1}\mathcal{F}}$.
By Lemma \ref{hard-sign},  $Q_{-1}\mathcal{F}$ does not satisfy conditions (\ref{main-thr}) when $\mathcal{F}$ does not satisfy conditions (\ref{main-thr}), hence
$\CSP_k(\neq_2, \widehat{Q_{-1}\mathcal{F}})$ is \#P-hard.
Therefore $\CSP_k(\neq_2, \widehat{Q}\widehat{\mathcal{F}})$ is   \#P-hard.
\end{remark}

Combine the above result and Lemma \ref{all-k-equ}, we have the following result.
\begin{corollary}\label{cspk-dic}
Let $\mathcal{F}$ be a set of real-valued signatures.
For any $k\geqslant 3$ and any orthogonal matrix $Q$, 
$\Holant(\neq_2 \mid =_k, \widehat{Q}{\widehat{\mathcal{F}}})$ is \#P-hard when $\mathcal{F}$ does not satisfy the conditions in Theorem~\ref{main-thr}.
\end{corollary}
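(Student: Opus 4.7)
The plan is to combine Lemma~\ref{all-k-equ} with the $\CSP_k$ dichotomy of Theorem~\ref{cspk-dic-thr} and then verify that the tractable conditions of that dichotomy fail for $\{\neq_2\} \cup \widehat{Q}\widehat{\mathcal{F}}$. First I would apply Lemma~\ref{all-k-equ} to obtain
\[\CSP_k(\neq_2, \widehat{Q}\widehat{\mathcal{F}}) \leqslant_T \Holant(\neq_2 \mid =_k, \widehat{Q}\widehat{\mathcal{F}}),\]
so it suffices to show the left-hand side is \#P-hard. By Theorem~\ref{cspk-dic-thr}, the \#P-hardness of $\CSP_k(\neq_2, \widehat{Q}\widehat{\mathcal{F}})$ follows once we rule out that $\{\neq_2\} \cup \widehat{Q}\widehat{\mathcal{F}} \subseteq \mathscr{P}$ and that $\{\neq_2\} \cup \widehat{Q}\widehat{\mathcal{F}} \subseteq \mathscr{A}_k^d$ for any $d \in [k]$.

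The key tool for the verification is the rewriting trick of the preceding remark. For any $\widehat f \in \widehat{\mathcal{F}}$ of arity $n$,
\[\widehat{Q}\widehat{f} = (Z^{-1})^{\otimes n}\bigl(Q(Z^{-1})^{\tt T}Z^{-1}\bigr)^{\otimes n} f,\]
and since $2(Z^{-1})^{\tt T}Z^{-1} = \operatorname{diag}(1,-1)$, the matrix $Q_{-1} := 2Q(Z^{-1})^{\tt T}Z^{-1}$ is a real orthogonal matrix (up to a scalar). Hence, up to an irrelevant global scalar, $\widehat{Q}\widehat{\mathcal{F}} = \widehat{Q_{-1}\mathcal{F}}$. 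By Lemma~\ref{hard-sign}, since $\mathcal{F}$ violates conditions~(\ref{main-thr}), so does $Q_{-1}\mathcal{F}$; writing $\mathcal{G} = Q_{-1}\mathcal{F}$, this gives in particular $\widehat{\mathcal{G}} \not\subseteq \mathscr{P}$, $\widehat{\mathcal{G}} \not\subseteq \mathscr{A}$, $H\mathcal{G} \not\subseteq \mathscr{P}$, and $T_\alpha \mathcal{G} \not\subseteq \mathscr{A}$.

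Now I would check the two tractable possibilities from Theorem~\ref{cspk-dic-thr}. The case $\widehat{\mathcal{G}} \subseteq \mathscr{P}$ is ruled out immediately by the above. For the affine family, I need $T_k^d \widehat{\mathcal{G}} \not\subseteq \mathscr{A}$ for every $d \in [k]$. Since $T_k^d = \operatorname{diag}(1, \rho^d)$ with $\rho$ a $4k$-th primitive root of unity, the condition $T_k^d \widehat{\mathcal{G}} \subseteq \mathscr{A}$ would mean $\mathcal{G}$ is $\mathscr{A}$-transformable via the composed transformation $T_k^d Z^{-1}$; but $(=_2)$ pulls back to $\neq_2$ under $Z^{-1}$, and then $T_k^d$ preserves this binary signature only when the diagonal entries satisfy $\rho^d \cdot 1 = $ appropriate value, putting strong constraints on $d$ modulo small integers. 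Tracing these constraints, they would force $\mathcal{G}$ to be $\mathscr{A}$-transformable or $\mathscr{L}$-transformable in the original Boolean basis, contradicting the failure of conditions~(\ref{main-thr}) for $\mathcal{G}$.

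The main obstacle I anticipate is the final step: carefully identifying which transformations $T_k^d Z^{-1}$ actually carry $(=_2)$ into $\mathscr{A}$ (so that the compatibility constraint of Definition~\ref{def:prelim:trans} is satisfied), and then matching the resulting transformations against the list of tractable transformations $\{I, Z^{-1}, H, T_\alpha, \ldots\}$ appearing in Lemma~\ref{hard-sign}. This is essentially a finite bookkeeping argument over the orbits of $(=_2)$ under diagonal transformations composed with $Z^{-1}$, and it is made clean by the fact that $\rho^{4k} = 1$ forces only finitely many cases to examine, each of which yields one of the transformations already excluded by Lemma~\ref{hard-sign}.
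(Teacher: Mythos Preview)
Your overall approach matches the paper's: reduce via Lemma~\ref{all-k-equ} to $\CSP_k(\neq_2,\widehat{Q}\widehat{\mathcal{F}})$, rewrite $\widehat{Q}\widehat{\mathcal{F}}=\widehat{Q_{-1}\mathcal{F}}$ with $Q_{-1}$ real orthogonal, invoke Lemma~\ref{hard-sign} to say $\mathcal{G}=Q_{-1}\mathcal{F}$ still violates conditions~(\ref{main-thr}), and then apply Theorem~\ref{cspk-dic-thr}. This is exactly what the paper does in the Remark preceding the corollary.

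However, your treatment of the last step is both overcomplicated and contains a mistake. You claim that $(\neq_2)$ is preserved under $T_k^d$ ``only when the diagonal entries satisfy $\rho^d\cdot 1=\ldots$,'' and then anticipate a finite bookkeeping case analysis. This is wrong: for \emph{every} $d$, the signature $(\neq_2)$ is antidiagonal as a $2\times 2$ matrix, so $(T_k^d)^{-1}$ acting on both sides sends it to $\rho^{-d}\cdot(\neq_2)$, a scalar multiple of $(\neq_2)$. Since $\mathscr{A}$ is closed under nonzero scalars (the $\lambda$ in Definition~\ref{definition-affine}) and $(\neq_2)\in\mathscr{A}$, we get $(=_2)\bigl((T_k^d Z^{-1})^{-1}\bigr)^{\otimes 2}\in\mathscr{A}$ for \emph{all} $d\in[k]$. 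Hence if $T_k^d\widehat{\mathcal{G}}\subseteq\mathscr{A}$ for any $d$, then $\mathcal{G}$ is $\mathscr{A}$-transformable via $T=T_k^dZ^{-1}$, contradicting that $\mathcal{G}$ fails conditions~(\ref{main-thr}). No case split on $d$ is needed, and $\mathscr{L}$-transformability never enters. The paper simply says this step is ``easy to see,'' and it is---once you recognize that $(\neq_2)$ is fixed up to scalar by every diagonal transformation.
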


\subsection{The hardness of Holant($\Delta_0, \mathcal F$): the base case}
In the following two subsections, 
we will show that $\Holant(\Delta_0, \mathcal F)$ is {\rm \#}P-hard if $\mathcal F$ does not satisfy conditions (\ref{main-thr}). Since  
$\mathcal F\not\subseteq\mathscr{T}$, there is a signature $f \in \mathcal{F}$ of arity $n\geqslant 3$ that is not in $\mathscr{T}$. 
Recall that $f \notin \mathscr{T}$ means that the state of $|f\rangle$ has multipartite entanglement.
We will prove our claim by induction on the arity $n$.

In this subsection, we focus on the base case that $f$ has arity $3$. In other words, there is a multipartite entangled $3$-qubit.
We first give two conditions that $\Delta_1 = (0, 1)$ can be easily realized from a signature of arity not necessarily $3$ by pinning (Lemma \ref{01-by-pin}) or interpolation (Lemma \ref{01-by-inter}). 
Then, we have $\Holant^c( \mathcal F)\leqslant_T\Holant(\Delta_0, \mathcal F)$,  and we are done by the hardness of $\Holant^c( \mathcal F)$.
For cases in which $\Delta_1$ cannot be realized from a ternary signature $f$, we show that $f$ has special forms. Then, we can use this $f$ to realize $=_3$ or $=_4$, and reduce the problem from $\CSP(Q\mathcal F)$, $\CSP_2(Q\mathcal F)$ or $\CSP_3(\neq_2, \widehat{Q}\widehat{\mathcal F})$ for some orthogonal matrix $Q$. Based on the known hardness results, we will finish the proof for the base case that $f$ has arity $3$ (Lemma \ref{01-base-3}). 

\begin{lemma}\label{01-by-pin}
Let $f\in \mathcal{F}$ be a nonzero signature and $f^{\vec{0}}=0$. Then $\Holant^c(\mathcal F)\leqslant_T\Holant(\Delta_0, \mathcal F)$. 
\end{lemma}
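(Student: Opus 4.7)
The plan is to realize the unary signature $\Delta_1 = (0,1)$ as the signature of an $\{\Delta_0\}\cup\mathcal{F}$-gate; once $\Delta_1$ is available we have $\Holant^c(\mathcal{F}) = \Holant(\Delta_0, \Delta_1, \mathcal{F}) \leqslant_T \Holant(\Delta_0, \mathcal{F})$, which is exactly the claimed reduction. The key structural fact is that because $f$ is nonzero yet vanishes at the all-zero input, its support contains a minimum-Hamming-weight string that we can isolate by pinning away the $0$-coordinates with $\Delta_0$.

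Concretely, I would first choose $\alpha \in \mathscr{S}(f)$ with minimum Hamming weight; such $\alpha$ exists and satisfies $k := {\rm wt}(\alpha) \geqslant 1$ because $f \not\equiv 0$ and $f^{\vec{0}} = 0$. Let $I = \{i : \alpha_i = 0\}$. Pinning each variable $x_i$ with $i \in I$ to $0$ using copies of $\Delta_0$ yields a signature $g$ of arity $k$ on the remaining variables. For any $\beta \in \{0,1\}^k$, let $\gamma \in \{0,1\}^n$ be obtained from $\beta$ by inserting $0$'s at the positions in $I$; then $\gamma_i \leqslant \alpha_i$ coordinatewise, so by the minimality of ${\rm wt}(\alpha)$ we have $f^\gamma = 0$ unless $\gamma = \alpha$, i.e., $\beta = \vec{1}_k$. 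Hence the only nonzero entry of $g$ is $g^{\vec{1}_k} = f^\alpha \neq 0$, so $g = f^\alpha \cdot \Delta_1^{\otimes k}$.

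If $k = 1$, then $g$ is already a nonzero scalar multiple of $\Delta_1$ and we are done. If $k \geqslant 2$, then $g$ is a nonzero real-valued signature admitting the real factorization $(f^\alpha \Delta_1) \otimes \Delta_1 \otimes \cdots \otimes \Delta_1$, so by Lemma~\ref{lin-wang} we may replace $g$ by its tensor factors without changing the complexity; in particular $\Delta_1$ becomes realizable. In either case $\Delta_1$ is available in $\Holant(\Delta_0, \mathcal{F})$, which completes the reduction. I do not anticipate any serious obstacle in this argument: the only two points to verify are the minimality argument that collapses the support of $g$ to the single point $\vec{1}_k$, and the applicability of Lemma~\ref{lin-wang} to a tensor-product signature, both of which are routine.
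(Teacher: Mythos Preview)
Your argument is correct and somewhat more direct than the paper's. The paper proves the lemma by induction on the arity of $f$: in the inductive step it considers all pinnings $f_i^0$; either one of them is nonzero (and still vanishes at $\vec{0}$, so the induction hypothesis applies), or all of them vanish identically, in which case Lemma~\ref{all-zero} forces $f = \lambda \cdot \Delta_1^{\otimes n}$ and Lemma~\ref{lin-wang} extracts $\Delta_1$. Your approach bypasses the induction entirely by locating a minimum-weight support point $\alpha$ up front and pinning all of its zero coordinates at once; the minimality of ${\rm wt}(\alpha)$ immediately collapses the resulting arity-$k$ signature to $f^\alpha \cdot \Delta_1^{\otimes k}$. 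Both routes end with the same invocation of Lemma~\ref{lin-wang}, but yours isolates $\Delta_1^{\otimes k}$ in a single step rather than peeling off one variable at a time, which makes the structure of the argument more transparent.
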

\begin{proof}
We prove this by induction on the arity $n$ of $f$.

If $n=1$,  we have $f=(0, \lambda)$ for  some $\lambda\neq 0$ since $f\not\equiv 0$. Clearly, $\Delta_1$ is realizable from $f$.

Assuming our claim is true when $n=k$, 
we consider the case that $n=k+1$. 
For all indices $i\in [n]$, consider signatures $f_i^0$ realized from $f$ by pinning variable $x_i$ to $0$. We know $f_i^0$ is signature of arity $k$ and $f_i^0(\vec{0}_{k})=f(\vec{0}_{k+1})=0$.
\begin{itemize}
    \item 
If there is an index $i$ such that $f_i^0\not\equiv 0$, then by induction hypothesis,
we have 
$\Holant^c(\mathcal F)\leqslant_T\Holant(\Delta_0, f_i^0, \mathcal F)\leqslant_T\Holant(\Delta_0, \mathcal F).$
\item Otherwise, $f_i^0\equiv 0$   for all indices $i$. Then, by Lemma \ref{all-zero}, we have $f=\lambda(0, 1)^{\otimes n}$ for some $\lambda\neq 0$ since $f\not\equiv 0$. Thus, $\Delta_1$ is realizable from $f$ by Lemma \ref{lin-wang}.
\end{itemize}
Thus, we have 
$\Holant^c(\mathcal F)\leqslant_T\Holant(\Delta_0, \mathcal F).$
\end{proof}

Now for all indices $i$, we consider signatures $\frak m_{i}f$ realized from $f$ by mating. We give a condition by which $\Delta_1$ can be realized from $\frak m_{i}f$ by interpolation. 

\begin{lemma}\label{01-by-inter}
Let $f\in \mathcal{F}$ be a nonzero signature of arity $n\geqslant 2$. If there is an index $i$ such that $M(\frak m_{i}f)$ (as a 2-by-2 matrix) is not the identity matrix up to a scalar ($M(\frak m_{i}f)\neq \mu_i I_2$), then 
\begin{itemize}
    \item there is an unary signature $a(x_i)$ on variable $x_i$ such $a(x_i)\mid f$, or
\item $\Holant^c(\mathcal F)\leqslant_T\Holant(\Delta_0, \mathcal F)$. 
\end{itemize}
\end{lemma}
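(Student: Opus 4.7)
\emph{Proof plan.} The plan is to analyze the $2 \times 2$ matrix $M(\mathfrak{m}_i f)$ using equation~\eqref{m-form}, which gives $M(\mathfrak{m}_i f) = M_i(f) M_i(f)^{\tt T}$. Since $f$ is real-valued, this is a real symmetric positive-semidefinite matrix whose diagonal entries are $|{\bf f}_i^0|^2$ and $|{\bf f}_i^1|^2$ and whose off-diagonal entry is $\langle {\bf f}_i^0, {\bf f}_i^1 \rangle$. The hypothesis $M(\mathfrak{m}_i f) \neq \mu_i I_2$ forces the two eigenvalues to be distinct non-negative reals, so exactly one of two possibilities occurs: either $M(\mathfrak{m}_i f)$ is singular with rank one (since $f \not\equiv 0$ rules out rank zero), or it is invertible with two distinct strictly positive eigenvalues $\lambda_1, \lambda_2$.

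\emph{Singular case.} Using $\operatorname{rank}(M_i(f) M_i(f)^{\tt T}) = \operatorname{rank}(M_i(f))$, the two rows ${\bf f}_i^0, {\bf f}_i^1$ of $M_i(f)$ must be linearly dependent. If one row is zero, say ${\bf f}_i^0 = 0$, then $f = \Delta_1 \otimes g$ on variable $x_i$; symmetrically, if ${\bf f}_i^1 = 0$ then $f = \Delta_0 \otimes g$. Otherwise both rows are nonzero and there is a scalar $c \neq 0$ with ${\bf f}_i^1 = c\,{\bf f}_i^0$, giving $f = (1, c) \otimes g$ with $(1, c)$ a unary on $x_i$. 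In every sub-case a nonzero unary signature $a(x_i)$ divides $f$, yielding the first alternative of the conclusion.

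\emph{Invertible case.} Now $\lambda_1 \lambda_2 \neq 0$, and since $\lambda_1, \lambda_2$ are distinct positive reals, $|\lambda_1/\lambda_2| \neq 1$. I would split on whether $\Delta_0$ is an eigenvector of $M(\mathfrak{m}_i f)$. If $\langle {\bf f}_i^0, {\bf f}_i^1 \rangle \neq 0$, then $M(\mathfrak{m}_i f) \Delta_0$ has nonzero second coordinate, hence $\Delta_0$ is not an eigenvector, and Lemma~\ref{unary-inter} applied with $g = \mathfrak{m}_i f$ and $h = \Delta_0$ interpolates the unary $\Delta_1$. If instead $\langle {\bf f}_i^0, {\bf f}_i^1 \rangle = 0$, then $M(\mathfrak{m}_i f)$ is already diagonal with distinct positive diagonal entries; Lemma~\ref{2by2-interpolation} with $P = I$ realizes the binary signature whose matrix is $\operatorname{diag}(0, 1) = \Delta_1 \otimes \Delta_1$, and Lemma~\ref{lin-wang} then extracts $\Delta_1$ by factorization. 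Once $\Delta_1$ is available, combining it with $\Delta_0$ gives $\Holantc(\mathcal{F}) = \Holant(\Delta_0, \Delta_1, \mathcal{F}) \leqslant_T \Holant(\Delta_0, \mathcal{F})$, the second alternative.

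The main subtlety is the eigenvector split in the invertible case: Lemma~\ref{unary-inter} fails precisely when the off-diagonal entry of $M(\mathfrak{m}_i f)$ vanishes, so there one must detour through Lemma~\ref{2by2-interpolation} to produce $\Delta_1 \otimes \Delta_1$ and then factor. The positive-semidefiniteness of $M(\mathfrak{m}_i f)$, which rests on $f$ being real-valued so that mating by $(=_2)$ represents the real inner product, is what cleanly restricts the spectral picture to these two situations and eliminates more complicated configurations that would arise for general complex signatures.
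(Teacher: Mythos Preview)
Your proposal is correct and follows essentially the same route as the paper: split on whether $M(\mathfrak{m}_i f)$ is singular (linear dependence of rows gives a unary factor) or invertible (then branch on whether the off-diagonal entry vanishes, invoking Lemma~\ref{2by2-interpolation} in the diagonal case and Lemma~\ref{unary-inter} otherwise). Your observation that $M_i(f)M_i(f)^{\tt T}$ is positive-semidefinite, hence has nonnegative eigenvalues, is a slightly cleaner way to conclude $|\lambda_1/\lambda_2|\neq 1$ in the invertible case than the paper's use of the trace to rule out $\lambda_1/\lambda_2=-1$, but otherwise the arguments coincide.
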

\begin{proof}
We denote $$M(\frak m_{i}f)=\left[\begin{matrix}
|{\bf f}_i^0|^2 &  \langle {\bf f}_i^0, {\bf f}_i^1 \rangle\\
\langle {\bf f}_i^0, {\bf f}_i^1 \rangle & |{\bf f}_i^1|^2
\end{matrix}\right] ~~~~\text{ by }~~~~
\left[\begin{matrix}
a & b \\
b & c\\
\end{matrix}\right].$$
Since $f$ is real, $M(\frak m_{i}f)$ is real symmetric, and thus diagonalizable with real eigenvalues.
We first consider the case that
 $M(\frak m_{i}f)$ is degenerate.
Then, we have 
$|\langle {\bf f}_i^0, {\bf f}_i^1 \rangle|^2=|{\bf f}_i^0|^2|{\bf f}_i^1|^2$, so  ${\bf f}_i^0$ and ${\bf f}_i^1$ are linearly dependent by Cauchy-Schwarz.
Since $f\not\equiv 0$, either  ${\bf f}_i^0$ and ${\bf f}_i^1$  is nonzero. 
Assume  ${\bf f}_i^0$ is nonzero  (the other case is similar). Then, we have   ${\bf f}_i^1=c \cdot {\bf f}_i^0$ for some constant $c$. 
It follows that   $f=a(x_i)\otimes  {\bf f}_i^0$, for a unary signature
$a(x_i)=(1, c)$. 

Now we assume $M(\frak m_{i}f)$ has  rank 2,  then we have $a, c>0$.
We consider the value of $b$. 
\begin{itemize}
    \item
If $b=0$, then $M(\frak m_{i}f)=\left[\begin{smallmatrix}
a & 0 \\
0 & c\\
\end{smallmatrix}\right]$. Clearly, $a\neq c$ since $M(\frak m_{i}f)$ is not $I_2$ up to a scalar.  Given $a\neq c$ and $\frac a c >0$, we have $|\frac a c|\neq  1$. By Lemma \ref{2by2-interpolation}, we can realize $(0, 0, 0, 1)=(0, 1)^{\otimes 2}$ 
from $\frak m_{i}f$ by interpolation. 
Then, by Lemma \ref{lin-wang}, we can realize $\Delta_1=(0, 1)$ by factorization.

\item Otherwise, $b\neq 0$. Clearly, we know $(1, 0)^{\tt T}$ is not an eigenvector of $M(\frak m_{i}f)$. Suppose $M(\frak m_{i}f)=P^{-1}\left[\begin{smallmatrix}
\lambda_1 & 0\\
0 & \lambda_2\\
\end{smallmatrix}\right]P$, where $\lambda_1$ and $\lambda_2$ are two real  eigenvalues of $M(\frak m_{i}f)$.
Since $M(\frak m_{i}f)$ has  rank 2 and $M(\frak m_{i}f)$ is not $I_2$ up to a scalar, we have $\lambda_1\lambda_2\neq 0$ and $\lambda_1\neq\lambda_2$.
Also, 
by the trace formula, $\lambda_1+\lambda_2=a+c> 0$. Thus $\frac{\lambda_1}{\lambda_2}\neq -1$. Then we have $|\frac{\lambda_1}{\lambda_2}|\neq 1.$
By Lemma \ref{unary-inter}, we can realize $\Delta_1=(0, 1)$ by interpolation.
\end{itemize}
Thus, we have $\Holant^c(\mathcal F)\leqslant_T\Holant(\Delta_0, \mathcal F)$.
\end{proof}

 Moreover, if for all indices $i$, $M(\frak m_{i}f)=\mu_i I_2$,
 then we show all $\mu_i$ have the same value.

\begin{corollary}\label{first-ortho}
Let $f\in \mathcal{F}$ be an irreducible signature of arity $n\geqslant 2$. Then we have 
\begin{itemize}
    \item $\Holant^c(\mathcal F)\leqslant_T\Holant(\Delta_0, \mathcal F)$, or
    \item there exists some $\mu\neq 0$ such that for all indices $i$, $M(\frak{m}_if)= \mu I_2$ i.e., $\frak{m}_if=\mu \cdot (=_2)$.
    \end{itemize}
\end{corollary}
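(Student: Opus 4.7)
The plan is to apply Lemma~\ref{01-by-inter} at each index $i$ and exploit irreducibility to rule out the unary-divisor alternative, then extract a uniform value of $\mu$ from a trace identity.

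First, observe that because $f$ is irreducible and has arity $n\geqslant 2$, no unary signature $a(x_i)$ can divide $f$. Indeed, if $a(x_i)\mid f$ then by the definition of divisibility either $f=a(x_i)\otimes h$ for some signature $h$, or $f=\lambda\cdot a(x_i)$. The latter is impossible because $f$ and $a(x_i)$ have different arities; and in the former $h$ must be nonzero (else $f\equiv 0$, contradicting irreducibility), giving a nontrivial tensor factorization, again contradicting irreducibility. So for every index $i$ the first alternative of Lemma~\ref{01-by-inter} is ruled out.

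Now I would consider two cases. If there exists some $i$ with $M(\mathfrak{m}_i f)\neq \mu_i I_2$, then Lemma~\ref{01-by-inter} forces the second alternative, namely $\Holant^c(\mathcal{F})\leqslant_T \Holant(\Delta_0,\mathcal{F})$, and we are done. Otherwise, for every index $i$ there is a scalar $\mu_i$ with $M(\mathfrak{m}_i f)=\mu_i I_2$, i.e.\ $\lvert {\bf f}_i^0\rvert^2=\lvert {\bf f}_i^1\rvert^2=\mu_i$ and $\langle {\bf f}_i^0,{\bf f}_i^1\rangle=0$.

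The last step is to show all these $\mu_i$ coincide and are nonzero. Taking the trace on both sides of $M(\mathfrak{m}_i f)=\mu_i I_2$ gives
\[
2\mu_i \;=\; \lvert {\bf f}_i^0\rvert^2+\lvert {\bf f}_i^1\rvert^2 \;=\; \sum_{\alpha\in\{0,1\}^n} f(\alpha)^2 \;=\; \lVert f\rVert^2,
\]
which is independent of $i$. Setting $\mu=\lVert f\rVert^2/2$ yields the common value, and $\mu\neq 0$ since $f$ is irreducible and hence nonzero (and real-valued, so $\lVert f\rVert^2>0$). This gives $\mathfrak{m}_i f=\mu\cdot(=_2)$ for every $i$, as required.

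No step here looks like a real obstacle: the only substantive ingredient is Lemma~\ref{01-by-inter}, and the remaining work is the structural observation about irreducibility plus the one-line trace computation. The mild subtlety worth spelling out in the write-up is why the unary-divisor case of Lemma~\ref{01-by-inter} is incompatible with irreducibility at arity $n\geqslant 2$.
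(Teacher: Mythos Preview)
Your proposal is correct and follows essentially the same approach as the paper: apply Lemma~\ref{01-by-inter} at each index, use irreducibility to eliminate the unary-divisor alternative, and then observe that all $\mu_i$ coincide because $2\mu_i=\|f\|^2$. The paper phrases this last step as ``connecting the two dangling variables of $\mathfrak{m}_i f$'' (which fully mates two copies of $f$), but that is exactly your trace computation.
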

\begin{proof}
In the above proof, 
if $f$ is irreducible, either  $\Holant^c(\mathcal F)\leqslant_T\Holant(\Delta_0, \mathcal F)$, or
  $M(\frak m_{i}f)=\mu_i\left[\begin{smallmatrix}
1 & 0\\
0 & 1\\
\end{smallmatrix}\right]$ for every index $i$, where $\mu_i=|{\bf f}_i^0|^2=|{\bf f}_i^1|^2 \not = 0.$
    If we  further connect the two dangling variables $x_i$ of $\frak m_{i}f$, which totally connects the corresponding pairs of variables in  two copies of $f$, we get a value $2\mu_i$. 
This value  does not depend on the particular index $i$.
Thus, all $\mu_{i}$ have the same value for $i\in [n]$.
    We denote this value by $\mu$.
    \end{proof}
    
    We introduce the following key property, which we call \emph{first order orthogonality}.
    \begin{definition}[First order orthogonality]\label{def-first-order-orth}
       Let $f$ be a complex-valued signature of arity $n \geqslant 2$, we say it satisfies  first order orthogonality if there exists some $\mu\neq 0$ such that for all indices $i\in [n]$, the entries of $f$ satisfy the following equations
\begin{equation}\label{ee1}
    |{\bf f}_{i}^{0}|^2=|{\bf f}_{i}^{1}|^2=\mu, \text{ and } \langle{\bf f}_{i}^{0}, {\bf f}_{i}^{1}\rangle =0.
\end{equation}
    \end{definition}
    \begin{remark}
    When $f$ is a real-valued signature, it satisfies  first order orthogonality is precisely that there is $\mu \not =0$ such that for all indices $i$, $M(\frak{m}_if)= \mu I_2$. 
Now, we consider $\widehat{f}$. Since $\widehat{(=_2)}=Z^{-1}(=_2)=(\neq_2)$,
    we know  that for every index $i$, $\widehat{\frak{m}}_i\widehat{f}=\widehat{\frak{m}_if}= \widehat{\mu  (=_2)}=\mu(\neq_2)$.
    Thus, 
$$M(\widehat{\frak{m}}_i\widehat{f})
=\left[\begin{matrix}
\langle \widehat{{\bf f}}_i^0, \widehat{{\bf f}}_i^1 \rangle & |\widehat{{\bf f}}_i^0|^2\\
|\widehat{{\bf f}}_i^1|^2 & \langle \widehat{{\bf f}}_i^1, \widehat{{\bf f}}_i^0 \rangle
\end{matrix}\right]=
\mu \left[\begin{matrix}
0 & 1 \\
 1 & 0
\end{matrix}\right].
$$
This implies that $\widehat{f}$ also satisfies the first order orthogonality. The reversal is also true. Therefore, $f$ satisfies the first order orthogonality is equivalent to $\widehat{f}$ satisfying it. 
Later, based on  first order orthogonality of $\widehat{f}$, we will carve out  second order orthogonality.
For a real valued binary signature,  it satisfies first order orthogonality if and only if its (2-by-2) signature matrix is an orthogonal matrix. Thus, a binary signature with {\sc ars} satisfies first order orthogonality if and only if it has parity.

Although the first order orthogonality is well-defined for any complex valued signature, when $f$ is not real-valued ($\widehat{f}$ does not satisfy {\sc ars}), we can not say anything about $\frak{m}_if$ and $\widehat{\frak{m}}_i\widehat{f}$. The properties of $\frak{m}_if$ and $\widehat{\frak{m}}_i\widehat{f}$ crucially depend on $f$ being real ($\widehat{f}$ satisfying {\sc ars}).
\end{remark}
    
  First order orthogonality implies some non-trivial properties.
 Consider the vector ${\bf f}_{i}^{0}$. We can pick a second variable $x_j$ and separate ${\bf f}_{i}^{0}$ into two vectors ${\bf f}_{ij}^{00}$ and ${\bf f}_{ij}^{01}$ according to $x_j=0$ or $1$. 
   Then $$|{\bf f}_{i}^{0}|^2=|{\bf f}_{ij}^{00}|^2+|{\bf f}_{ij}^{01}|^2=\mu.$$
    Similarly, we have $$|{\bf f}_{j}^{1}|^2=|{\bf f}_{ij}^{01}|^2+|{\bf f}_{ij}^{11}|^2=\mu.$$
    Comparing the above two equations, we have 
    \begin{equation}\label{ee2}
        |{\bf f}_{ij}^{00}|^2=|{\bf f}_{ij}^{11}|^2.
        \end{equation}
        This is ture for all pairs of indices $\{i, j\}$.
    Similarly, by considering  $$|{\bf f}_{j}^{0}|^2=|{\bf f}_{ij}^{00}|^2+|{\bf f}_{ij}^{10}|^2=\mu,$$
    we have 
    \begin{equation}\label{ee3}
        |{\bf f}_{ij}^{01}|^2=|{\bf f}_{ij}^{10}|^2,
          \end{equation}
          for all pairs $\{i, j\}$.
    
    Now, we are ready to finish the proof of the base case $n=3$ using the above equations. 
    A binary signature satisfies the first order orthogonality iff  its signature matrix is  orthogonal,  and in  this case, we say the binary signature is orthogonal.
    \begin{lemma}[Base case $n=3$]\label{01-base-3}
Let $f\in \mathcal{F}$ be a signature of arity $3$ and $f\notin \mathscr{T}$. Then $\Holant(\Delta_0, \mathcal F)$ is {\rm \#}P-hard unless $\mathcal F$ satisfies conditions (\ref{main-thr}).
\end{lemma}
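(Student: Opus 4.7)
The plan is to branch on the alternative provided by Corollary~\ref{first-ortho}. In the first branch, $\Delta_1$ is realizable from $\Holant(\Delta_0,\mathcal{F})$, giving $\Holantc(\mathcal{F}) \leqslant_T \Holant(\Delta_0,\mathcal{F})$; then Theorem~\ref{dic-holantc} supplies \#P-hardness, because each of its tractable exceptions is a special case of conditions~(\ref{main-thr}) via Definition~\ref{def:prelim:trans} (e.g.\ $\widehat{\mathcal{F}}\subseteq\mathscr{P}$ corresponds to $\mathscr{P}$-transformability with $T=Z^{-1}$, noting $(=_2)Z^{\otimes 2}=(\neq_2)\in\mathscr{P}$; $H\mathcal{F}\subseteq\mathscr{P}$ works similarly because $H$ is orthogonal; $T_\alpha\mathcal{F}\subseteq\mathscr{A}$ works because $\mathrm{diag}(1,-\ii)\in\mathscr{A}$).

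The remaining branch is that the fixed ternary $f\in\mathcal{F}$ with $f\notin\mathscr{T}$ (hence irreducible) satisfies first order orthogonality. After normalizing $\mu=1$, I would combine the nine quadratic equations $M_i(f)M_i(f)^{\T}=I_2$ for $i=1,2,3$ with the derived identities (\ref{ee2}) and (\ref{ee3}) and the freedom to apply a diagonal orthogonal holographic transformation $Q\in O(2,\mathbb{R})$, which preserves $(=_2)$ and yields $\Holant(\Delta_0,\mathcal{F})\equiv_T\Holant(Q\Delta_0, Q\mathcal{F})$. A Schmidt-type decomposition of the $4\times 2$ matrix $M_{\{1,2\},3}(f)$, whose columns are orthonormal in $\mathbb{R}^4$, exposes a $2$-dimensional subspace $V\subseteq\mathbb{R}^2\otimes\mathbb{R}^2$; the analogous first order orthogonality conditions on the remaining two variables, together with irreducibility, should force $f$ into one of a very short list of $Q^{\otimes 3}$-orbits, the prototypical case being the GHZ orbit $f=\lambda(|000\rangle+|111\rangle)=\lambda\cdot(=_3)$, possibly together with a $Z$-twisted variant realizing $(=_3)$ only in the hat picture.

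For each canonical form, explicit gadgetry finishes the job. In the GHZ orbit, $Qf$ is $\lambda\cdot(=_3)\in Q\mathcal{F}$ directly, and merging two copies of $Qf$ along a single variable realizes $M_i(Qf)^{\T}M_i(Qf)=\mathrm{diag}(1,0,0,1)=(=_4)$ since $M_i(Qf)$ has rows $\lambda\cdot e_1$ and $\lambda\cdot e_4$. I would then reduce from $\CSP(Q\mathcal{F})$ via Lemma~\ref{all-equ}, from $\CSP_2(Q\mathcal{F})$ via Lemma~\ref{all-even-equ}, or (in the $Z$-twisted subcase) from $\Holant(\neq_2\mid =_3,\widehat{Q}\widehat{\mathcal{F}})$ via Lemma~\ref{all-k-equ} and Corollary~\ref{cspk-dic}. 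In every subcase, Lemma~\ref{hard-sign} transfers the failure of conditions~(\ref{main-thr}) from $\mathcal{F}$ to $Q\mathcal{F}$ (or $\widehat{Q}\widehat{\mathcal{F}}$), and Theorem~\ref{hard-result} then delivers \#P-hardness of $\Holant(\Delta_0,\mathcal{F})$.

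The main obstacle is the structural classification in the middle step. The nine quadratic constraints live in an eight-dimensional ambient space, but the available symmetry is only the one-parameter diagonal action $Q^{\otimes 3}$ with $Q\in O(2,\mathbb{R})$, far smaller than the three-parameter $O(2)^3$ independent action typically used in entanglement classification. Ruling out $Q^{\otimes 3}$-orbits beyond GHZ and its $Z$-twist --- and exhibiting, in each surviving orbit, a concrete small gadget built from $f$ and $\Delta_0$ that realizes $(=_3)$ or $(=_4)$ --- is where the real work lies; one must track carefully how the constraints $|{\bf f}_i^0|^2=|{\bf f}_i^1|^2$ and $\langle{\bf f}_i^0,{\bf f}_i^1\rangle=0$ couple across the three unfoldings to eliminate any would-be ``exotic'' family.
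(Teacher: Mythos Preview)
Your overall framing is sound: splitting via Corollary~\ref{first-ortho} and invoking the $\Holantc$ dichotomy (Theorem~\ref{dic-holantc}) in the first branch is exactly how the paper begins. The gap is in the second branch. You propose to classify all real ternary $f$ satisfying first order orthogonality directly, up to a simultaneous $Q^{\otimes 3}$ action with $Q\in O(2,\mathbb{R})$, and you correctly flag this as the main obstacle---but you do not carry it out, and the Schmidt-decomposition sketch does not obviously collapse to a finite list. There is also a slip in the normalization: a general $Q\in O(2)$ does not fix $\Delta_0$, so passing to $\Holant(Q\Delta_0,Q\mathcal{F})$ loses the pinning signature; only the discrete sign matrices $\mathrm{diag}(\pm1,\pm1)$ preserve $\Delta_0$ up to scalar, which is far less freedom than ``one-parameter''.

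The paper sidesteps the global classification by one extra idea that you are missing. Since $\Delta_0$ is available, first use Lemma~\ref{01-by-pin} to assume $f^{000}\neq 0$ (normalize to $1$), then pin each variable in turn to obtain the three \emph{binary} signatures $f_1^0,f_2^0,f_3^0$, and apply the orthogonality alternative (Corollary~\ref{first-ortho}) \emph{again} to each of these binaries. Each $f_i^0$ is then either reducible, hence of the form $(1,a)\otimes(1,b)$, or has an orthogonal $2\times2$ matrix $\left[\begin{smallmatrix}1&a\\ \epsilon a&-\epsilon\end{smallmatrix}\right]$. Letting $N\in\{0,1,2,3\}$ count how many of the three are orthogonal gives four concrete cases; in each, the shared entries among $f_1^0,f_2^0,f_3^0$ together with equations (\ref{ee1})--(\ref{ee3}) reduce the eight entries of $f$ to a one- or two-parameter family that collapses by elementary algebra. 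From the resulting explicit $f$ the paper realizes $(=_3)$ or $(=_4)$ by a single mating or an explicit holographic transformation (by $\mathrm{diag}(1,-1)$, $H$, or $Z^{-1}$), and finishes via Lemma~\ref{all-equ}, Lemma~\ref{all-even-equ}, or Corollary~\ref{cspk-dic}. This recursion-to-binary step is what makes the problem finite; your proposal attempts the full ternary classification instead and does not get there.
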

\begin{proof}
Since $f$ is a ternary signature and  $f\notin \mathscr{T}$, we know $f$ is irreducible. 
If $f^{000}=0$ or $f$ does not satisfy the first order orthogonality,
then by Lemma \ref{01-by-pin} or Corollary \ref{first-ortho}, we have $\Holant^c(\mathcal F)\leqslant_T\Holant(\Delta_0, \mathcal F)$.
By Theorem \ref{hard-result}, $\Holant^c(\mathcal F)$ is \#P-hard when $\mathcal F$ does not satisfy  conditions (\ref{main-thr}), and hence $\Holant(\Delta_0, \mathcal F)$ is {\rm \#}P-hard. 
Therefore, we may assume $f^{000}=1$ by normalization and $f$ satisfies  first order orthogonality and thus equations (\ref{ee1}), (\ref{ee2}) and (\ref{ee3}).
    
    We consider binary signatures $f_{1}^{0}$, $f_{2}^{0}$ and $f_{3}^{0}$ realized by pinning. 
    If there is an index $i$ such that the binary signature $f_i^0$ is irreducible and not orthogonal, then by Corollary \ref{first-ortho} we are done.
Otherwise, $f_{1}^{0}$, $f_{2}^{0}$ and $f_{3}^{0}$ are all either reducible or orthogonal. Let $N$ be the number of orthogonal signatures among $f_{1}^{0}$, $f_{2}^{0}$ and $f_{3}^{0}$. According to 
$N = 0, 1, 2$ or $3$, there are four cases. 
 
 \begin{itemize}
\item $N=0$.
Then $f_{1}^{0}$, $f_{2}^{0}$ and $f_{3}^{0}$ are all reducible. So, $f_{1}^{0}$ is of the form $(1, a ,b, ab)$, and so are $f_{2}^{0}$ and $f_{3}^{0}$. Thus $f$ has the following matrix $$M_{1, 23}(f)=\left[\begin{matrix}
1 & a & b & ab\\
c & ac & bc & d\\
\end{matrix}\right].$$
By the equation $|{\bf f}_{12}^{01}|^2=|{\bf f}_{12}^{10}|^2$ from (\ref{ee3}), we have $$b^2+a^2b^2=c^2+a^2c^2.$$
Then, $(1+a^2)(b^2-c^2)=0$. Being real, we have $1+a^2> 0$, and thus $b^2=c^2$. Similarly by symmetry, we have $a^2=b^2=c^2$. 
By the equation $|{\bf f}_{12}^{00}|^2=|{\bf f}_{12}^{11}|^2$ from (\ref{ee2}),  we have $$1+a^2=b^2c^2+d^2.$$ 
Then, $d^2=1+a^2-a^4$.
By the equation $\langle{\bf f}_{1}^{0}, {\bf f}_{1}^{1}\rangle =0$ from  (\ref{ee1}), we have $$c+a^2c+b^2c+abd=0.$$ 
Then, $c(1+2a^2)=-abd$. Taking squares of both sides,  we have $$a^2(1+4a^2+4a^4)=a^4d^2.$$ Plug in $d^2=1+a^2-a^4$, and we have $$a^2(1+4a^2+4a^4-a^2-a^4+a^6)=a^2(1+a^2)^3=0.$$
Since $1+a^2>0$, we have $a^2=0$, and hence $b^2=c^2=0$ and $d^2=1$.
\begin{itemize}

\item If $d=1$, then $f$ has the signature matrix 
$\left[\begin{smallmatrix}
1 & 0 & 0 & 0\\
0 & 0 & 0 & 1\\
\end{smallmatrix}\right]$, which is $(=_3)$. Then, by Lemma \ref{all-equ}, 
we can realize all equality signatures $(=_k)$.
Thus, we have
$$\CSP(\mathcal{F})\leqslant_T\Holant(=_3, \mathcal F)\leqslant_T\Holant(\Delta_0, \mathcal{F}).$$ By Theorem \ref{hard-result}, we know $\CSP(\mathcal{F})$ is \#P-hard when $\mathcal{F}$ does not satisfy  conditions (\ref{main-thr}), and hence $\Holant(\Delta_0, \mathcal{F})$ is \#P-hard.

\item Otherwise, $d=-1$. We perform a holographic transformation by the orthogonal matrix $Q_1=\left[\begin{smallmatrix}
1 & 0\\
0 & -1\\
\end{smallmatrix}
\right]$. Note that $$(=_2)(Q_1^{-1})^{\otimes 2}=(=_2) ~~~~\text{ and }~~~~ Q_1^{\otimes 3}f=(=_3).$$ Thus, the holographic transformation by $Q_1$ yields 
$$\holant{=_2}{f, \mathcal F}\equiv_T \holant{=_2}{=_3, Q_1\mathcal F}.$$
Again by Lemma \ref{all-equ},
we have  $\CSP(Q_1\mathcal{F})\leqslant_T\Holant(\Delta_0, \mathcal{F}).$
By Theorem \ref{hard-result}, we know that $\CSP(Q_1\mathcal{F})$ is \#P-hard when $\mathcal{F}$ does not satisfy  conditions (\ref{main-thr}), and hence $\Holant(\Delta_0, \mathcal{F})$ is \#P-hard.
\end{itemize}
\item $N=1$.
Without loss of generality, we may assume $f_1^0$ is orthogonal and  $f_2^0$ and $f_3^0$ are reducible. Then $f_1^0$ has the form $(1, a, \epsilon a, -\epsilon)$, $f_2^0$ has the form $(1, a, b, ab)$  and $f_3^0$ has the form $(1, \epsilon a, b, \epsilon ab)$, for some $\epsilon =\pm1$. Therefore,  for some value $x$, $f$ has the signature matrix,
$$M(f)=\left[\begin{matrix}
1 & a & \epsilon a & -\epsilon\\
b & ab & \epsilon ab & x\\
\end{matrix}\right].$$

By the equation $|{\bf f}_{12}^{01}|^2=|{\bf f}_{12}^{10}|^2$ from (\ref{ee3}), we have $$(\epsilon a)^2+(-\epsilon)^2 = b^2 + (ab)^2.$$
Thus $(1+a^2) (1 - b^2) =0$.
So $b^2=1$.
By the equation $|{\bf f}_{12}^{00}|^2=|{\bf f}_{12}^{11}|^2$ from (\ref{ee2}), we have
$$1+a^2=(\epsilon ab)^2+x^2=a^2+x^2.$$
Then, $x^2=1$.
By the equation $\langle{\bf f}_{1}^{0}, {\bf f}_{1}^{1}\rangle =0$  from (\ref{ee1}), we have 
\begin{equation}\label{simple-eqn-p14}
b+a^2b+\epsilon^2a^2b-\epsilon x=0.
\end{equation}
Then, $\epsilon x=b(1+2a^2)$. Taking squares of both sides,  we have $1=(1+2a^2)^2$, which implies that $a=0$. So by (\ref{simple-eqn-p14}), we have $b-\epsilon x=0$, and thus $x=\frac{b}{\epsilon}=\epsilon b$.
It follows that $M(f)=\left[\begin{smallmatrix}
1 & 0 & 0 & -\epsilon\\
b & 0 & 0 & \epsilon b\\
\end{smallmatrix}\right]$, with $b^2=\epsilon^2 =1$.

Mating variable $x_1$ of one copy of $f$ with variable $x_1$ of another copy of $f$ (with $x_2$ and $x_3$ as dangling variables), we get a 4-ary signature $\frak{m}_{23}f$ with the signature matrix
$$M(\frak{m}_{23}f)=M_{x_2x_3, x_1}(f)M_{x_1, x_2x_3}(f)=\left[\begin{matrix}
1 & b\\
0 & 0\\
0 & 0\\
-\epsilon & \epsilon b\\
\end{matrix}\right]\left[\begin{matrix}
1 & 0 & 0 & -\epsilon\\
b & 0 & 0 & \epsilon b\\
\end{matrix}\right]=
\left[\begin{matrix}
2 & 0 & 0 & 0\\
0 & 0 & 0 & 0\\
0 & 0 & 0 & 0\\
0 & 0 & 0 & 2\\
\end{matrix}\right]
=2M(=_4).$$
Therefore, we can realize $(=_4)$, and then by Lemma \ref{all-even-equ} we can realize all equality signatures $(=_{2k})$ of even arity.
Thus, $$\CSP_2(\mathcal{F})\leqslant_T\Holant(=_4, \mathcal F)\leqslant_T\Holant(\Delta_0, \mathcal{F}).$$ 
By Theorem \ref{hard-result}, we know $\CSP_2(\mathcal{F})$ is \#P-hard when $\mathcal{F}$ does not satisfy  conditions \ref{main-thr}, and hence $\Holant(\Delta_0, \mathcal{F})$ is \#P-hard.

\item $N=2$.
Without loss of generality, we may assume $f_2^0$ and $f_3^0$ are orthogonal, and $f_1^0$ is reducible. Then, $f_2^0$ has the form $(1, \epsilon_1 a, a, -\epsilon_1)$ where $\epsilon_1=\pm 1$, $f_3^0$ has the form $(1, \epsilon_2 a, a, -\epsilon_2)$ where $\epsilon_2=\pm1$, and $f_1^0$ has the form $(1, \epsilon_1a, \epsilon_2a, \epsilon_1\epsilon_2a^2)$. Then for some $x$, $f$ has the form
$$M(f)=\left[\begin{matrix}
1 & \epsilon_1a & \epsilon_2 a & \epsilon_1\epsilon_2a^2\\
a & -\epsilon_1 & -\epsilon_2 & x\\
\end{matrix}\right].$$

By the equation $|{\bf f}_{12}^{01}|^2=|{\bf f}_{12}^{10}|^2$, we have $$(\epsilon_2a)^2+(\epsilon_1\epsilon_2a^2)^2
=a^2+(-\epsilon_1)^2.$$
So we get $a^4=1$.
Since $a$ is real, we have $a = \pm 1$.
By the equation $\langle{\bf f}_{1}^{0}, {\bf f}_{1}^{1}\rangle =0$, we have 
$$a-\epsilon_1^2a-\epsilon_2^2a+\epsilon_1\epsilon_2x=-a+\epsilon_1\epsilon_2x=0.$$
Then, $x=\frac{a}{\epsilon_1\epsilon_2}=\epsilon_1\epsilon_2a$.
By mating we get  $\frak{m}_{23}f$, and we have
$$M(\frak{m}_{23}f)=
\left[\begin{matrix}
1 & a\\
\epsilon_1a & -\epsilon_1\\
\epsilon_2a & -\epsilon_2\\
\epsilon_1\epsilon_2 & \epsilon_1 \epsilon_2 a\\
\end{matrix}\right]\left[\begin{matrix}
1 & \epsilon_1a & \epsilon_2 a & \epsilon_1\epsilon_2\\
a & -\epsilon_1 & -\epsilon_2 & \epsilon_1\epsilon_2a\\
\end{matrix}\right]=
2\left[\begin{matrix}
1 & 0 & 0 & \epsilon_1\epsilon_2\\
0 & 1 & \epsilon_1\epsilon_2 & 0\\
0 & \epsilon_1\epsilon_2 & 1 & 0\\
\epsilon_1\epsilon_2 & 0 & 0 & 1\\
\end{matrix}\right].$$
\begin{itemize}
    \item 
If $\epsilon_1\epsilon_2=1$, then $\frak{m}_{23}f$ is 2 times the {\sc Is-Even} signature,
which takes value 1 on  all inputs of even weight, and 0 otherwise.
Note that, for $H =  \frac{1}{\sqrt{2}} \left[\begin{smallmatrix}
1 & 1\\
-1 &  1\\
\end{smallmatrix}\right]$  we have
$$(=_2)(H^{-1})^{\otimes 2}=(=_2) ~~~~\text{ and }~~~~ H^{\otimes 4} (\frak{m}_{23}f) =4(=_4).$$
Thus, a holographic transformation by $H$ yields 
$$\holant{=_2}{\frak{m}_{23}f, \mathcal F}\equiv_T \holant{=_2}{=_4, H\mathcal F}.$$
By Lemma \ref{all-even-equ}, we have 
$$\CSP_2(H\mathcal{F})\leqslant_T\holant{=_2}{=_4, H\mathcal F}\leqslant_T\Holant(\Delta_0, \mathcal{F}).$$
By Theorem \ref{hard-result}, $\CSP_2(H\mathcal{F})$ is \#P-hard when $\mathcal{F}$ does not satisfy conditions (\ref{main-thr}), and hence $\Holant(\Delta_0, \mathcal{F})$ is \#P-hard.

\item Otherwise, $\epsilon_1\epsilon_2=-1$. Then $g(y_1, y_2, y_3, y_4) = \frak{m}_{23}f$ can be normalized as $\left[\begin{smallmatrix}
1 & 0 & 0 & -1\\
0 & 1 & -1 & 0\\
0 & -1 & 1 & 0\\
-1 & 0 & 0 & 1\\
\end{smallmatrix}\right]$, where the row index is $y_1y_2$ and column index is $y_3y_4 \in \{0, 1\}^2$, both listed lexicographically.
After a permutation of variables, we have $M_{y_1y_3, y_2y_4}(\frak{m}_{23}f)=\left[\begin{smallmatrix}
1 & 0 & 0 & 1\\
0 & -1 & -1 & 0\\
0 & -1 & -1 & 0\\
1 & 0 & 0 & 1\\
\end{smallmatrix}\right].$ Connecting variables $y_2, y_4$ of a copy of $\frak{m}_{23}f$ with variables $y_1, y_3$ of another copy of $\frak{m}_{23}f$
respectively, we get a signature with the signature matrix 
$$M_{y_1y_3, y_2y_4}(\frak{m}_{23}f)M_{y_1y_3, y_2y_4}(\frak{m}_{23}f)=
\left[\begin{smallmatrix}
1 & 0 & 0 & 1\\
0 & -1 & -1 & 0\\
0 & -1 & -1 & 0\\
1 & 0 & 0 & 1\\
\end{smallmatrix}\right]
\left[\begin{smallmatrix}
1 & 0 & 0 & 1\\
0 & -1 & -1 & 0\\
0 & -1 & -1 & 0\\
1 & 0 & 0 & 1\\
\end{smallmatrix}\right]
=
2\left[\begin{smallmatrix}
1 & 0 & 0 & 1\\
0 & 1 & 1 & 0\\
0 & 1 & 1 & 0\\
1 & 0 & 0 & 1\\
\end{smallmatrix}\right].
$$
Now perform a holographic transformation by $H$, and we get $(=_4)$,
which 
implies that $\Holant(\Delta_0, \mathcal{F})$ is \#P-hard when $\mathcal{F}$ does not satisfy conditions (\ref{main-thr}).
\end{itemize}
\item $N=3$.
Then for some values $a$,  $x$ and  $\epsilon_1$,  $\epsilon_2 =\pm 1$, the  signature $f$ has the signature matrix 
$$M(f)=\left[\begin{matrix}
1 & \epsilon_1a & \epsilon_2 a & -\epsilon_1\epsilon_2\\
a & -\epsilon_1 & -\epsilon_2 & x\\
\end{matrix}\right].$$
By the equation $\langle{\bf f}_{1}^{0}, {\bf f}_{1}^{1}\rangle =0$, we have $$a-\epsilon_1^2a-\epsilon_2^2a-\epsilon_1\epsilon_2x=-a-\epsilon_1\epsilon_2x=0.$$
Hence, $x
=-\epsilon_1\epsilon_2a$.
A holographic transformation by $Z^{-1}$ yields 
$$\holant{=_2}{f, \Delta_0, \mathcal F}\equiv_T \holant{\neq_2}{\widehat{f}, \widehat{\Delta_0}, \widehat{\mathcal F}}.$$
Note that
$\widehat{\Delta_0}=Z^{-1}(1, 0)^{\tt T}=(1, 1)^{\tt T}$,
and a simple calculation shows
$$M(\widehat{f})=Z^{-1}M(f)((Z^{-1})^{\tt T})^{\otimes 2}=
\left[\begin{smallmatrix}
1-a\ii & 0\\
0 & 1+a\ii\\
\end{smallmatrix}\right]
\left[\begin{smallmatrix}
(1+\epsilon_1)(1+\epsilon_2) & ~(1-\epsilon_1)(1+\epsilon_2) & ~(1+\epsilon_1)(1-\epsilon_2) & ~(1-\epsilon_1)(1-\epsilon_2)\\
(1-\epsilon_1)(1-\epsilon_2) & ~(1+\epsilon_1)(1-\epsilon_2) & ~(1-\epsilon_1)(1+\epsilon_2) & ~(1+\epsilon_1)(1+\epsilon_2)\\
\end{smallmatrix}\right].$$
\begin{itemize}
    \item 
If $\epsilon_1=\epsilon_2=1$, then up to a constant,  $M(\widehat{f})=\left[\begin{smallmatrix}
1-a\ii & 0 & 0 & 0\\
0 & 0 & 0 & 1+a\ii\\
\end{smallmatrix}\right]$. 
Let $\widehat{Q_2}=\left[\begin{smallmatrix}
\sqrt[3]{1+a\ii} & 0\\
0 & \sqrt[3]{1-a\ii}\\
\end{smallmatrix}\right]$. Then $(\widehat{Q_2})^{\otimes 3}\widehat{f}$ has the signature matrix $(1+a^2)\left[\begin{smallmatrix}
1 & 0 & 0 & 0\\
0 & 0 & 0 & 1\\
\end{smallmatrix}\right]$.
Thus, a holographic transformation by $\widehat{Q_2}$ 
yields 
$$\holant{\neq_2}{\widehat{f}, \widehat{\Delta_0}, \widehat{\mathcal F}}\equiv_T \holant{\neq_2}{=_3, \widehat{Q_2}\widehat{\Delta_0},  \widehat{Q_2}\widehat{\mathcal F}}.$$
Thus, we have
$$\holant{\neq_2}{=_3, \widehat{Q_2}\widehat{\Delta_0},  \widehat{Q_2}\widehat{\mathcal F}} \leqslant_T \Holant(\Delta_0,  \mathcal F).$$
By Corollary \ref{cspk-dic}, we know $\holant{\neq_2}{=_3, \widehat{Q_2}\widehat{\Delta_0},  \widehat{Q_2}}$ is \#P-hard when $\mathcal F$ does not satisfy  conditions (\ref{main-thr}), and hence $\Holant(\Delta_0,  \mathcal F)$ is \#P-hard.

\item
Otherwise, $M(\widehat{f})$ has the signature matrix 
$\left[\begin{smallmatrix}
0 & 0 & 0 & 1-a\ii\\
1+a\ii & 0 & 0 & 0\\
\end{smallmatrix}\right]$ up to a permutation of variables. 
Connecting $\widehat{f}$ with $\widehat{\Delta_0}=(1, 1)$ using $\neq_2$, we get a binary signature $\widehat g$ with matrix 
$$M(\widehat g)=[1, 1]
\left[\begin{matrix}
0 & 1 \\
1 & 0\\
\end{matrix}\right]
\left[\begin{matrix}
0 & 0 & 0 & 1-a\ii\\
1+a\ii & 0 & 0 & 0\\
\end{matrix}\right]=(1+a\ii, 0, 0, 1-a\ii).$$

Connecting one variable of $\widehat{f}$ with one variable of $\widehat{g}$ using $\neq_2$, we get a signature $\widehat h$ with the signature matrix 
$$M(\widehat h)=\left[\begin{matrix}
1+a\ii & 0 \\
0 & 1-a\ii\\
\end{matrix}\right]
\left[\begin{matrix}
0 & 1 \\
1 & 0\\
\end{matrix}\right]
\left[\begin{matrix}
0 & 0 & 0 & 1-a\ii\\
1+a\ii & 0 & 0 & 0\\
\end{matrix}\right]=
\left[\begin{matrix}
(1+a\ii)^2 & 0 & 0 & 0\\
0 & 0 & 0 & (1-a\ii)^2\\
\end{matrix}\right].$$

Then, a holographic transformation by $\widehat{Q_3}=\left[\begin{smallmatrix}
\sqrt[3]{(1-a\ii)^2} & 0\\
0 & \sqrt[3]{(1+a\ii)^2}\\
\end{smallmatrix}\right]$ yields 
$$\holant{\neq_2}{\widehat{h}, \widehat{\Delta_0},  \widehat{\mathcal F}}\equiv_T \holant{\neq_2}{=_3, \widehat{Q_3}\widehat{\Delta_0},  \widehat{Q_3}\widehat{\mathcal F}}.$$
Then similarly by Corollary \ref{cspk-dic}, we have $\Holant(\Delta_0,  \mathcal F)$ is \#P-hard.
\end{itemize}
\end{itemize}
Thus, $\Holant(\Delta_0, \mathcal F)$ is {\rm \#}P-hard unless $\mathcal F$ satisfies conditions (\ref{main-thr}).
\end{proof}

\subsection{The hardness of Holant($\Delta_0, \mathcal F$): the induction step}
Now, we consider the inductive step.
The general strategy is that we start with a signature $f\in \mathcal{F}$ of arity $n\geqslant 4$ that is not in $\mathscr{T}$, and realize a signature $g$ of arity $n-1$ or $n-2$ by pinning or merging that is also not in $\mathscr{T}$. 
By a sequence of reductions (that is constant in length independent of the problem instance size), we can realize a signature $h$ of arity $3$ that is not in $\mathscr{T}$. Then we are done. In other words, we want to show that the multipartite entanglement is preserved under projections onto $|0\rangle$ or forming self-loops by $|\phi^{+}\rangle$.

For all indices $i$ and all pairs of indices $\{j, k\}$, consider $f_i^0$ and $\partial_{jk}f$.  If there exist $i$ or $\{j, k\}$ such that $f_i^0$ or $\partial_{jk}f \notin \mathscr{T}$, 
then we can realize
$g=f_i^0$ or $\partial_{jk}f$ which has arity $n-1$ or $n-2$, and we are done.
Otherwise, $f_i^0$ and $\partial_{jk}f \in \mathscr T$ for all $i$ and all $\{j, k\}$. 
 We denote this property 
 by $f\in \int\mathscr T$. 
 Under the assumption that $f\in \int\mathscr T$, 
 our goal is to show that we can realize $\Delta_1$ and hence we are done by the hardness of $\Holant^c(\mathcal{F})$,
 or there is an unary signature $a(x_u)$ or binary signature $b(x_v, x_w)$ such that $a(x_u) \mid f$
or $b(x_v, x_w) \mid f.$
Then, we have $f=a(x_u)\otimes g$ or $f=b(x_v, x_w) \otimes g$ for some $g$ of arity $n-1$ or $n-2$. By the definition of $\mathscr T$, we know $g \notin \mathscr T$ since $f \notin \mathscr T$. By Lemma \ref{lin-wang}, we can realize $g$ by factorization, and we are done.
When $n\geqslant 5$, the above induction proof can be achieved by the interplay of the unique factorization, and the commutivity of $f_i^0$ (pinning) and $\partial_{jk}f$ (merging) operations 
on disjoint indices (Lemmas \ref{unary-factor} and \ref{binary-factor}).
For $n=4$, the proof requires more work  (Lemma \ref{unary-n=4}); we need to combine the induction proof and  first order orthogonality to handle it.   

We denote the property that $f_i^0\in \mathscr T_1$ 
for all $i$ 
by $\int_1 \mathscr{T}_1$. 
We carry out our induction proof by the following lemmas.


\begin{lemma}\label{all-div}
Let $f$ be a signature of arity $n \geqslant 3$. If there exists a nonzero signature $g$, the scope of which is a subset of the scope of $f$, 
such that $g \mid f_i^0$ for all indices $i$ disjoint with the scope of $g$ and furthermore, $g \mid \partial_{jk}f$ for some pair of indices $\{j, k\}$ disjoint with the scope of $g$, then $g \mid f$.  (Note that if $\partial_{jk}f \equiv 0$ then $g \mid \partial_{jk}f$ is satisfied.)
\end{lemma}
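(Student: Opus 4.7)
The plan is to generalize the technique of Lemma~\ref{lem-division}, with $\partial_{jk}f$ playing the role of $f_j^1$, and reduce to showing that every ``$A$-slice'' of $f$ is a scalar multiple of $g$. Let $A$ denote the scope of $g$ and $B=[n]\setminus A$; the hypothesis forces $|B|\geqslant 2$. I view $f$ as a 2-dimensional array with row index $\alpha\in\{0,1\}^B$ and column index $\beta\in\{0,1\}^A$. It suffices to show that each row $f(\alpha,\cdot)$ is a scalar multiple of $g$, for then defining $h(\alpha)$ to be that scalar yields $f=h\otimes g$ and hence $g\mid f$.

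First I would handle every row indexed by some $\alpha\in\{0,1\}^B$ that has at least one zero coordinate. If $\alpha_i=0$ for $i\in B$, the hypothesis $g\mid f_i^0$ (which by convention covers the case $f_i^0\equiv 0$) gives $f_i^0 = h^{(i)}\otimes g$ for some $h^{(i)}$ on $B\setminus\{i\}$, whence $f(\alpha,\beta) = h^{(i)}(\alpha|_{B\setminus\{i\}})\, g(\beta)$ for all $\beta\in\{0,1\}^A$. Taking the union over $i\in B$ covers every row except $\alpha=\vec{1}_B$.

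To handle that remaining all-ones row I would apply $g\mid \partial_{jk}f$. Let $\alpha^*\in\{0,1\}^B$ be the vector with $\alpha^*_j=\alpha^*_k=0$ and all other coordinates equal to $1$. For any $\beta\in\{0,1\}^A$,
\begin{equation*}
(\partial_{jk}f)(\vec{1}_{B\setminus\{j,k\}},\beta) = f(\alpha^*,\beta) + f(\vec{1}_B,\beta).
\end{equation*}
The left side is a scalar multiple of $g(\beta)$, where the scalar is permitted to be $0$ (covering $\partial_{jk}f\equiv 0$). The first term on the right is a scalar multiple of $g(\beta)$ by the previous step, since $\alpha^*_j=0$. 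Subtraction shows $f(\vec{1}_B,\cdot)$ is also a scalar multiple of $g$, completing the verification that every row is, and hence $g\mid f$.

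There is no real obstacle here; the only mildly delicate point is bookkeeping around the convention that $g$ divides the zero signature, which is exactly what allows the $\partial_{jk}f\equiv 0$ branch to be absorbed seamlessly. An equivalent algebraic presentation, in the style of the proof of Lemma~\ref{lem-division}, would construct an auxiliary linear combination of pinnings that vanishes on all needed slices and then invoke Lemma~\ref{all-zero}; either phrasing fits smoothly into the paper's surrounding arguments, and my preference is the row-by-row formulation above since it makes the conceptual picture (divisibility as a statement about rank-one structure of the $B\times A$ matrix of $f$) most transparent.
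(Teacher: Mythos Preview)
Your proof is correct and is essentially the same argument as the paper's, repackaged: the paper constructs the auxiliary signature $f' = bf_u^0 - af_u^1$ (for unary $g=(a,b)$ on $x_u$) and invokes Lemma~\ref{all-zero} to show $f'\equiv 0$, which is precisely your row-by-row analysis unwound into a single linear combination. A minor advantage of your formulation is that it handles $g$ of arbitrary arity uniformly, whereas the paper writes out only the unary case and declares the general case ``essentially the same.''
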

\begin{proof}
We may assume $f$ is nonzero, for otherwise the conclusion trivially holds.
We now prove this for a unary signature $g=(a, b)$. We assume $g$ is on the variable $x_u$.
Consider the signature $f'= bf_u^0-af_u^1$.
Clearly, $x_j$ and $x_k$ are in the scope of $f'$. Thus, $f'$ has arity at least $2$. 
For every $i\neq u$, we have $f_i^0=(a, b)\otimes h$ for some $h$. Then, $(f_i^0)_u^0=a\cdot h$, $(f_i^0)_u^1=b\cdot h$, and hence
$$(f')_i^0= (bf_u^0-af_u^1)_i^0=bf_{ui}^{00}-af_{ui}^{10}=b(f_{i}^{0})_u^0-a(f_{i}^{0})_u^1=ba\cdot h-ab\cdot h \equiv 0.$$
Moreover, there are indices $j, k\neq u$ such that $g(x_u) \mid \partial_{jk}f$. Then, $\partial_{jk}f=(a, b)\otimes h'$, for some $h'$.
Then, we have $(\partial_{jk}f)_u^0=a\cdot h'$, $(\partial_{jk}f)_u^1= b \cdot h'$, and hence 
$$\partial_{jk}(f')=\partial_{jk}(bf_u^0-af_u^1)=b(\partial_{jk}f)_u^0-a(\partial_{jk}f)_u^1=ba\cdot h'-ab\cdot h'\equiv 0.$$
By Lemma \ref{all-zero}, we have $f'\equiv 0$. Thus, we have $f_u^0 : f_u^1 = a : b$, and hence $g(x_u)\mid f$.

For a signature $g$ of arity $\geqslant n-2$, the proof is essentially the same, which we omit here.
\end{proof}

\begin{lemma}\label{unary-factor}
Let $f$ be a signature of arity $n\geqslant 5$, $f\notin \mathscr T$, $f \in \int \mathscr T$ and $f \in \int_1 \mathscr T_1$. 
Then there is a unary signature $a(x_u)$ such that $a(x_u)\mid f$, or $\Delta_1$ is realizable from $f$.

\end{lemma}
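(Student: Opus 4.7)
The plan is to split into two cases depending on whether $f(\vec{0}_n) = 0$ or not. In the easy case $f(\vec{0}_n) = 0$, since $f\not\equiv 0$ (otherwise $f\in\mathscr{T}$), Lemma~\ref{01-by-pin} immediately yields that $\Delta_1$ is realizable from $f$ and $\Delta_0$, and we are done. So the substantive case is $f(\vec{0}_n)\neq 0$, and I would normalize $f(\vec{0}_n)=1$.

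In the substantive case I would extract an explicit form for $f$. Since each $f_i^0\in\mathscr{T}_1$ and $f_i^0(\vec{0}_{n-1})=1$, after renormalizing each unary factor I can write $f_i^0=\bigotimes_{j\neq i}(1,b_{ij})$. Applying unique prime factorization (Lemma~\ref{unique}) to the nonzero $f_{ii'}^{00}=(f_i^0)_{i'}^0=(f_{i'}^0)_i^0$ forces $b_{ij}=b_{i'j}$ whenever $j\notin\{i,i'\}$, so the scalars come from a single vector $(b_j)$. Then $f$ agrees with $\bigotimes_j(1,b_j)$ on every input of Hamming weight less than $n$, which gives
\[
f=\bigotimes_{i=1}^{n}(1,b_i)+d\,(0,1)^{\otimes n},\qquad d=f(\vec{1}_n)-\prod_i b_i.
\]
The hypothesis $f\notin\mathscr{T}$ forces $d\neq 0$, since otherwise $f\in\mathscr{T}_1\subseteq\mathscr{T}$.

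Next I would analyze the merged signature, which directly expands as
\[
\partial_{jk}f=(1+b_jb_k)\bigotimes_{i\neq j,k}(1,b_i)+d\,(0,1)^{\otimes n-2}.
\]
If some pair $\{j,k\}$ satisfies $b_jb_k=-1$, then $\partial_{jk}f=d\,(0,1)^{\otimes n-2}$ is a nonzero tensor product of copies of $\Delta_1$, so realizing it via the merging gadget and applying Lemma~\ref{lin-wang} gives $\Delta_1$. Otherwise I would contradict the hypothesis $f\in\int\mathscr{T}$ by showing $\partial_{jk}f\notin\mathscr{T}$. For any variable $x_u$ of $\partial_{jk}f$ (so $u\notin\{j,k\}$), the row operation on $M_u(\partial_{jk}f)$ giving $(\text{row }x_u{=}1)-b_u(\text{row }x_u{=}0)=d\,(0,1)^{\otimes n-3}$ is nonzero because $d\neq 0$ and $n-3\geq 2$; thus $M_u$ has rank at least $2$, ruling out any unary factor at $x_u$. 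Analogously, for any two variables $x_u,x_{u'}$ distinct from $x_j,x_k$, a row operation on $M_{\{u,u'\}}(\partial_{jk}f)$ yields $(\text{row }(1,1))-b_ub_{u'}(\text{row }(0,0))=d\,(0,1)^{\otimes n-4}$, nonzero, so $M_{\{u,u'\}}$ has rank at least $2$, ruling out any binary factor on $(x_u,x_{u'})$. Since every variable of $\partial_{jk}f$ must lie inside a unary or binary factor for a tensor product decomposition in $\mathscr{T}$, the contradiction is reached.

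The main obstacle will be the clean extraction of the explicit form $f=\bigotimes_i(1,b_i)+d(0,1)^{\otimes n}$ via unique prime factorization and then the uniform rank argument that rules out every candidate unary and binary factor of $\partial_{jk}f$; the argument needs to be insensitive to whether some $b_i$'s vanish, because the supports of the two summands are nested in a delicate way. The overall structure mirrors Case~1 of the proof of Theorem~\ref{tho-entangle}, with the crucial substitution that the merging operation $\partial_{jk}$, which is always available via the edge equality $(=_2)$, takes the role of pinning-to-$1$, which is unavailable in the $\Holant(\Delta_0,\mathcal{F})$ setting.
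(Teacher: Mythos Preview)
Your proposal is correct and takes a genuinely different route from the paper. The paper argues abstractly via Lemma~\ref{all-div}: it fixes a unary factor $a(x_u)$ of some $f_r^0$, propagates it to every $f_i^0$ by UPF on the nonzero $f_{ir}^{00}$, then to some $\partial_{jk}f$ via commutativity of pinning and merging, and concludes $a(x_u)\mid f$. You instead extract the closed form $f=\bigotimes_i(1,b_i)+d\,(0,1)^{\otimes n}$ directly from $f\in\int_1\mathscr T_1$ and $f(\vec 0)\neq 0$, and then analyze $\partial_{jk}f$ explicitly. Your argument in fact establishes something sharper than the stated disjunction: under the full hypotheses $\Delta_1$ is \emph{always} realizable, since your Case~2B exhibits $\partial_{jk}f\notin\mathscr T$ and therefore cannot occur once $f\in\int\mathscr T$ is assumed. (Consistently, the paper's conclusion $a(x_u)\mid f$ with $a(0)\neq 0$ would force $f=a(x_u)\otimes f_u^0/a(0)\in\mathscr T_1$, contradicting $f\notin\mathscr T$, so that disjunct is vacuous here as well.) The trade-off is modularity: your route is more concrete and sidesteps Lemma~\ref{all-div} entirely, while the paper's divisor framework is uniform---the same template drives the companion Lemma~\ref{binary-factor}, where an explicit form of $f$ would be much harder to pin down.
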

\begin{proof}
As $f \notin \mathscr T$, $f$ is nonzero. 
We may further assume $f_i^0\not\equiv 0$ for all indices $i$. Otherwise, we have $f^{\vec{0}}=0$. Then, by Lemma \ref{01-by-pin}, we can realize $\Delta_1$. 
By the same reason, we may further assume $f_{ij}^{00}\not\equiv 0$ for all pairs of indices $\{i, j\}$.


For some arbitrary index $r$, we consider $f_r^0$. Since $f_r^0\in \mathscr{T}_1$, there exists some unary signature $a(x_u)$ such that $a(x_u)\mid f_r^0$. We show $a(x_u)\mid f$.
 Consider $f_i^0$ for all indices $i\neq u, r$.
 Since $f_i^0\in \mathscr{T}_1$, 
 there is a unary signature $a'(x_u)$ such that $a'(x_u)\mid f_i^0$, and hence we have $a'(x_u)\mid (f_i^0)_r^0.$
On the other hand, 
since $a(x_u)\mid f_r^0$, we also have $a(x_u)\mid (f_r^0)_i^0$. 
Note that the pinning operations on different variables commute. Thus, we have $(f_r^0)_i^0=(f_i^0)_r^0$, and we know it is a nonzero signature. Then, by UPF (Lemma \ref{unique}), we have $a(x_u)\sim a'(x_u)$. Thus, $a(x_u)\mid f_i^0$ for all indices $i\neq u$.

Then, we show  $a(x_u)\mid \partial_{jk}f$ for some arbitrary pair of indices $j, k\neq u$.
If $\partial_{jk}f\equiv 0$, then we have $a(x_u)\mid \partial_{jk}f$ and hence $a(x_u)\mid f$ by Lemma \ref{all-div}.
Next, we assume $ \partial_{jk}f\not\equiv 0$. Similarly, if for some index $i\neq j, k$, we have $(\partial_{jk}f)_i^0\equiv 0$, then we have $\partial_{jk}f(\vec{0})=0$ and hence by Lemma \ref{01-by-inter},
we can realize $\Delta_1$.
Otherwise, $(\partial_{jk}f)_i^0\not\equiv 0$ for all $i\notin \{j, k\}$. Recall that $\partial_{jk}f \in \mathscr{T}$. We show the variable $x_u$ must appear in a unary signature in the UPF of $\partial_{jk}f \in \mathscr{T}$.
\begin{itemize}
\item For a contradiction, suppose there is an irreducible binary signature $b(x_u, x_v)$ such that $b(x_u, x_v)\mid \partial_{jk}f$. Since $f$ has arity $n\geqslant 5$, we can pick some index $i \notin \{u, v, j, k\}$ such that $b(x_u, x_v)\mid (\partial_{jk}f)_i^0$. 
Note that $(\partial_{jk}f)_i^0=\partial_{jk}(f_i^0)\not\equiv 0$ by the commutativity of pinning and merging.
Thus, $\partial_{jk}(f_i^0)$ has an irreducible binary tensor divisor $b(x_u, x_v)$.
However, $f_i^0\in \mathscr{T}_1$ and so is $\partial_{jk}(f_i^0)$. By UPF, we get a contradiction.
    \item 
Thus, there is a unary signature $a''(x_u)$ such that $a''(x_u)\mid \partial_{jk}f$. Pick some index $i\notin \{u, j, k\}$, and we have $a''(x_u)\mid (\partial_{jk}f)_i^0$. We also have $a(x_u)\mid \partial_{jk}(f_i^0)$ since $a(x_u)\mid f_i^0$.
Again, $(\partial_{jk}f)_i^0=\partial_{jk}(f_i^0)\not\equiv 0$ by commutativity. 
Then by UPF, we have $a''(x_u) \sim a(x_u)$. Thus, $a(x_u)\mid f$ by Lemma \ref{all-div} and we are done.
\end{itemize}
\end{proof}

\begin{lemma}\label{binary-factor}
Let $f$ be a signature of arity $n\geqslant 5$, $f\notin \mathscr T$, $f \in \int \mathscr T$ and $f \notin \int_1 \mathscr T_1$. Then, there is an irreducible binary signature $b(x_v, x_w)$ such that 
$b(x_v, x_w) \mid f$, or $\Delta_1$ is realizable from $f$.
\end{lemma}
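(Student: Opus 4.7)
The plan is to follow the same interplay of unique prime factorization and commuting pinning/merging gadgets used in Lemma~\ref{unary-factor}, but with an irreducible binary factor in place of a unary. By hypothesis $f \notin \int_1 \mathscr{T}_1$, so some index $k$ and irreducible binary $b(x_v, x_w)$ with $k \notin \{v, w\}$ satisfy $b(x_v, x_w) \mid f_k^0$; the goal is to extend this to $b \mid f_i^0$ for every $i \notin \{v, w\}$ and to $b \mid \partial_{j\ell}f$ for some $\{j, \ell\}$ disjoint from $\{v, w\}$, then invoke Lemma~\ref{all-div} to conclude $b \mid f$. Throughout the argument we may assume $f_i^0$, $f_{ij}^{00}$, and any $\partial_{j\ell}f$ that arises are nonzero: any such identically-zero signature forces either $f^{\vec{0}} = 0$ or $(\partial_{j\ell}f)^{\vec{0}} = 0$, whereupon Lemma~\ref{01-by-pin} realizes $\Delta_1$ and we are done.

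For the first divisibility, fix $i \notin \{v, w, k\}$. Since $f_i^0 \in \mathscr{T}$ is nonzero, its UPF contains a unique factor involving $x_v$, either a unary $a(x_v)$ or an irreducible binary $b'(x_v, x_{w'})$. Pinning $x_k = 0$ commutes with pinning $x_i = 0$, so that factor (or, when $w' = k$, its restriction $b'(x_v, 0)$) transfers into the UPF of $f_{ik}^{00} \not\equiv 0$, where it must coexist with the $b(x_v, x_w)$ inherited from $f_k^0$. By Corollary~\ref{cor-upf} these two $x_v$-containing factors must have equal scope, which rules out the unary case $a(x_v)$ and the subcase $w' = k$ with $b'(x_v, 0) \not\equiv 0$ (scope mismatch between binary $b$ and a unary); the alternative $b'(x_v, 0) \equiv 0$ makes $b'$ reducible, another contradiction; leaving $w' = w$ and $b' \sim b$, so $b(x_v, x_w) \mid f_i^0$.

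For the second divisibility, pick any $\{j, \ell\} \subseteq [n] \setminus \{v, w\}$ (possible since $n - 2 \geqslant 3$) and any $i \in [n] \setminus \{v, w, j, \ell\}$ (possible since $n \geqslant 5$). Writing $f_i^0 = b(x_v, x_w) \otimes h_i$, the commutativity of pinning and merging yields $(\partial_{j\ell}f)_i^0 = \partial_{j\ell}(f_i^0) = b(x_v, x_w) \otimes \partial_{j\ell}h_i$, so $b \mid (\partial_{j\ell}f)_i^0$. If $(\partial_{j\ell}f)_i^0 \equiv 0$ then $(\partial_{j\ell}f)^{\vec{0}} = 0$ and Lemma~\ref{01-by-pin} supplies $\Delta_1$; otherwise, the same UPF/Corollary~\ref{cor-upf} analysis applied to $\partial_{j\ell}f \in \mathscr{T}$ (with $i$ now playing the role of the auxiliary pinning index) forces the $x_v$-containing factor of $\partial_{j\ell}f$ to be $\sim b(x_v, x_w)$. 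Lemma~\ref{all-div} then concludes $b(x_v, x_w) \mid f$. The main obstacle is the UPF bookkeeping: the exceptional subcases where the $x_v$-containing factor is unary or shares a coordinate with the index being pinned (so that pinning turns it into a unary on $x_v$) must each be ruled out, either by a scope mismatch with $b$, by reducibility of the putative irreducible binary, or by collapsing into a zero-entry case handled by Lemma~\ref{01-by-pin}.
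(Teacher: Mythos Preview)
Your proof is correct and follows essentially the same route as the paper's: identify an irreducible binary factor $b(x_v,x_w)$ of some $f_k^0$, propagate it to every $f_i^0$ via UPF on the common restriction $f_{ik}^{00}$, then to some $\partial_{j\ell}f$ via the commutativity of pinning and merging, and conclude by Lemma~\ref{all-div}. One small slip: when $\partial_{j\ell}f\equiv 0$, Lemma~\ref{01-by-pin} does not apply (it requires a \emph{nonzero} signature), but in that case $b\mid\partial_{j\ell}f$ holds trivially and you can pass directly to Lemma~\ref{all-div}.
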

\begin{proof}
Since $f \notin \int_1 \mathscr T_1$, but  $f \in \int \mathscr T$,  there is some index $r$ such that $f^0_r$ is  nonzero and has an irreducible binary signature factor $b(x_v, x_w)$.
We will show this $b(x_v, x_w)$ divides $f$.
Again, we may assume $f_{i}^{0}\not\equiv 0$ and $f_{ij}^{00}\not\equiv 0$ for all $i$  and all  $\{i, j\}$. Otherwise, we can realize $\Delta_1$ by Lemma \ref{01-by-pin}.

Consider $f_i^0$ for all indices $i\notin \{v, w, r\}$. Since $f_i^0\in \mathscr{T}$ and $f_i^0 \not\equiv 0$, there is either a unary signature $a(x_v)$ or an irreducible binary signature $b'(x_v, x_{w'})$ such that $a(x_v)\mid f_i^0$ or $b'(x_v, x_{w'})\mid f_i^0$. 
We also have $b(x_v, x_w)\mid (f_r^0)_i^0$ since $b(x_v, x_w)\mid f_r^0$. Again, we have $(f_r^0)_i^0=(f_i^0)_r^0\not\equiv 0$.
Then by UPF, we know that the  unary signature $a(x_v)$ does not exist, and it must be $b'(x_v, x_{w'})\mid f_i^0$ and $b(x_v, x_w)=b'(x_v, x_{w'})$. 
Thus, we have $b(x_v, x_w)\mid f_i^0$ for all $i\notin \{v, w\}$.

Then, for an arbitrary pair of indices $\{j, k \}$ disjoint with $\{v, w\}$, we show $b(x_v, x_w) \mid \partial_{jk}f$.
Again, we may assume $\partial_{jk}f\not\equiv 0$ 
(for otherwise $b(x_v, x_w) \mid \partial_{jk}f$ is proved) and furthermore $(\partial_{jk}f)_i^0\not\equiv 0$ for all $i$ disjoint with $\{j, k\}$, for otherwise, we can realize $\Delta_1$.
 Since $f$ has arity $n\geqslant 5$, we can pick some index $i \notin \{u, v, j, k\}$ such that $b(x_v, x_w)\mid \partial_{jk}(f_i^0)$ due to $b(x_v, x_w)\mid f_i^0$. 
 Recall that $\partial_{jk}f\in \mathscr{T}$, we consider the UPF of $\partial_{jk}f$. Using a similar argument as in   the previous paragraph, we have $b(x_v, x_w)\mid \partial_{jk}f$ by UPF.
\end{proof}

Combining the above two lemmas,  we have the following result.
\begin{lemma}[Inductive step for $n\geqslant 5$]\label{01-induction}
If $f\in \mathcal{F}$ is a signature of arity $n\geqslant 5$ and $f\notin \mathscr T$, then
\begin{itemize}
    \item 
$\Holant^c(\mathcal{F})\leqslant_T \Holant(\Delta_0, \mathcal F)$ or
\item there is a signature $g\notin \mathscr T$ of arity $n-1$ or $n-2$ such that $\Holant(\Delta_0, g, \mathcal{F})\leqslant_T \Holant(\Delta_0, \mathcal{F})$.
\end{itemize}
\end{lemma}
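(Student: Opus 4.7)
The plan is to proceed by a case split on the structure of $f$ under the two available gadget operations: pinning a variable to $0$ using $\Delta_0$ (which yields $f_i^0$) and merging a pair of variables via $(=_2)$ in the $\Holant(=_2 \mid \cdot)$ setting (which yields $\partial_{jk}f$). First, I would ask whether there exists some index $i$ with $f_i^0 \notin \mathscr{T}$, or some pair $\{j,k\}$ with $\partial_{jk} f \notin \mathscr{T}$. If so, the corresponding signature $g$ has arity $n-1$ or $n-2$ (both at least $3$ since $n \geqslant 5$), lies outside $\mathscr{T}$, and is realizable from $\{\Delta_0, f\}$ via the corresponding gadget. This immediately gives the second alternative.

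Otherwise, $f \in \int\mathscr{T}$, meaning every pinning $f_i^0$ and every merging $\partial_{jk}f$ lies in $\mathscr{T}$. I would then distinguish two sub-cases according to whether $f \in \int_1 \mathscr{T}_1$, i.e., whether every $f_i^0$ is a tensor product of only unary signatures. In the first sub-case, Lemma \ref{unary-factor} applies directly and yields either a unary divisor $a(x_u) \mid f$, or else the realization of $\Delta_1$. In the second sub-case, Lemma \ref{binary-factor} applies and yields either an irreducible binary divisor $b(x_v, x_w) \mid f$, or again the realization of $\Delta_1$.

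To finish, I would consolidate the outputs. If either sub-case produces $\Delta_1$, then $\{\Delta_0, \Delta_1\}$ is available together with $\mathcal{F}$, so $\Holantc(\mathcal{F}) \leqslant_T \Holant(\Delta_0, \mathcal{F})$, giving the first alternative. If instead a tensor divisor of $f$ is produced, we obtain a factorization $f = a(x_u) \otimes g$ or $f = b(x_v, x_w) \otimes g$, where $g$ has arity $n-1$ or $n-2$, both $\geqslant 3$. By Lemma \ref{lin-wang}, $g$ is realizable from $f$ by factorization, so $\Holant(\Delta_0, g, \mathcal{F}) \leqslant_T \Holant(\Delta_0, \mathcal{F})$. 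Since $\mathscr{T}$ is closed under tensor factors (it consists of all tensor products of unary and binary signatures), $f \notin \mathscr{T}$ forces $g \notin \mathscr{T}$, yielding the second alternative.

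The main obstacle is not the present lemma itself, which is essentially a consolidation of Lemmas \ref{unary-factor} and \ref{binary-factor}, but rather making sure the hypotheses of those two supporting lemmas are in place. In particular, the arity bound $n \geqslant 5$ is crucial: it is precisely what allows one to choose auxiliary indices outside the scopes $\{u\}$, $\{v,w\}$, or $\{j,k\}$ when invoking the unique prime factorization (Lemma \ref{unique}) together with the commutativity of pinning and merging on disjoint indices. Edge situations that could in principle obstruct the reduction, such as $f^{\vec{0}} = 0$, are absorbed by Lemma \ref{01-by-pin} which directly realizes $\Delta_1$, so the analysis really does reduce to the above two-way branch.
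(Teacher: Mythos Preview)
Your proposal is correct and follows essentially the same approach as the paper: the paper's proof of this lemma is literally the one-line statement ``Combining the above two lemmas, we have the following result,'' where the two lemmas are precisely Lemmas~\ref{unary-factor} and~\ref{binary-factor}, and the surrounding discussion matches your case split on $f \in \int\mathscr{T}$ and $f \in \int_1\mathscr{T}_1$ together with the factorization via Lemma~\ref{lin-wang}. Your write-up simply makes explicit the consolidation that the paper leaves implicit.
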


Now, the only case left for the induction proof
is when $f$ is a signature of arity $4$.
We deal with it by using 
the first order orthogonality condition.
    \begin{lemma}[Inductive step $n=4$]\label{unary-n=4}
Let $f\in \mathcal{F}$ be a signature of arity $4$ and $f\notin \mathscr{T}$. Then 
\begin{itemize}
    \item 
$\Holant^c(\mathcal{F})\leqslant_T \holant{=_2}{\Delta_0, \mathcal F}$, or
\item $\CSP_2(\mathcal{F})\leqslant_T 
\holant{=_2}{=_4, \mathcal F}\leqslant_T 
\holant{=_2}{\Delta_0, \mathcal F}$, or
\item there is a signature $g\notin \mathscr T$ of arity $3$ such that $\Holant(\Delta_0, g, \mathcal{F})\leqslant_T \Holant(\Delta_0, \mathcal{F})$.
\end{itemize}
\end{lemma}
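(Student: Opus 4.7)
The plan is to progressively reduce to either a ternary signature not in $\mathscr{T}$ (giving option~3), a situation where $\Delta_1$ becomes realizable from $\Delta_0$ (giving option~1), or a highly constrained form of $f$ from which $(=_4)$ can be constructed (giving option~2).

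First come the easy reductions. If $f_i^0 \notin \mathscr{T}$ for some $i$, take $g = f_i^0$: a ternary not in $\mathscr{T}$ realized from $f$ by pinning $x_i$ with $\Delta_0$, which is option~3. Otherwise $f_i^0 \in \mathscr{T}$ for every $i$. If $f$ is reducible, the only possibility (since $f \notin \mathscr{T}$ has arity~4, while any binary-binary or finer split would place $f$ in $\mathscr{T}$) is $f = a(x_u) \otimes g$ with $g$ an irreducible ternary; Lemma~\ref{lin-wang} realizes $g$ and gives option~3. So henceforth $f$ is irreducible. If $f^{\vec{0}} = 0$, Lemma~\ref{01-by-pin} supplies $\Holant^c(\mathcal{F}) \leqslant_T \Holant(\Delta_0, \mathcal{F})$, which is option~1. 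If $f$ fails first-order orthogonality, Corollary~\ref{first-ortho} (applicable since $f$ is irreducible) again yields option~1. From this point on, $f$ is irreducible, $f^{\vec{0}} \neq 0$, satisfies first-order orthogonality with parameter $\mu \neq 0$, and every $f_i^0$ lies in $\mathscr{T}$.

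The structural claim I would establish is that $\mathscr{S}(f) \subseteq \{0000, 1111, 0011, 1100\}$ up to a permutation of the four variables. The proof adapts the case analysis of Theorem~\ref{tho-entangle}, whose original form uses both $f_i^0 \in \mathscr{T}$ and $f_i^1 \in \mathscr{T}$; in our setting only the former is at hand, and the latter is replaced by the quantitative first-order-orthogonality identities $|{\bf f}_{ij}^{00}|^2 = |{\bf f}_{ij}^{11}|^2$, $|{\bf f}_{ij}^{01}|^2 = |{\bf f}_{ij}^{10}|^2$, and $\langle {\bf f}_i^0, {\bf f}_i^1 \rangle = 0$ from equations~(\ref{ee1})--(\ref{ee3}). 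In the sub-case where every $f_i^0 \in \mathscr{T}_1$, parameterize $f_i^0$ as a tensor of unaries, match consistently via $(f_i^0)_j^0 = (f_j^0)_i^0$, then use the norm equations together with orthogonality to force every unary parameter to vanish for real $f$, yielding $\mathscr{S}(f) \subseteq \{0^4, 1^4\}$. In the sub-case where some $f_i^0$ has an irreducible binary factor $b(x_v, x_w)$, the UPF bookkeeping of Theorem~\ref{tho-entangle} Case~2 shows $b(x_v, x_w) \mid f_j^0$ for all $j \notin \{v, w\}$; since $f$ is irreducible one has $b \nmid f$, and via Lemma~\ref{all-div} combined with orthogonality this pins $\mathscr{S}(f)$ into the claimed configuration. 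Threading the UPF bookkeeping through without $f_i^1$ is the main technical obstacle.

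With the support restricted, write $f = a|0000\rangle + b|1111\rangle + c|0011\rangle + d|1100\rangle$ after a suitable permutation; first-order orthogonality forces $a^2 = b^2$ and $c^2 = d^2$, so $b = \epsilon_1 a$ and $d = \epsilon_2 c$ for some $\epsilon_1, \epsilon_2 \in \{\pm 1\}$. A direct computation shows $M(\mathfrak{m}_{12} f)$ has diagonal entries $\mu$ at $(00,00)$ and $(11,11)$, off-diagonal entry $ad + bc = ac(\epsilon_1 + \epsilon_2)$ at $(00,11)$ and $(11,00)$, and zero elsewhere. When $c = 0$ or $\epsilon_1 = -\epsilon_2$, the off-diagonal vanishes and $\mathfrak{m}_{12} f = \mu \cdot (=_4)$, giving option~2. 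When $\epsilon_1 = \epsilon_2$ and $a = \pm c$, a direct identity exhibits $f$ as a tensor product of two Bell-state signatures (one of $\phi^+ \otimes \phi^+$, $\phi^- \otimes \phi^-$, $\phi^+ \otimes \phi^-$, $\phi^- \otimes \phi^+$) on $\{x_1, x_2\}$ and $\{x_3, x_4\}$, contradicting $f \notin \mathscr{T}$. In the remaining case ($\epsilon_1 = \epsilon_2$ and $a \neq \pm c$), pinning $x_1 = 0$ gives $f_1^0 = |0\rangle \otimes (a|00\rangle + c|11\rangle)$; Lemma~\ref{lin-wang} extracts the binary factor, whose signature matrix is diagonal with real eigenvalues $a, c$ satisfying $|a/c| \neq 1$, so Lemma~\ref{2by2-interpolation} interpolates the binary signature $(0,0,0,1)$ and a further factorization produces $\Delta_1$, yielding option~1.
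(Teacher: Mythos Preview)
Your overall plan and your endgame analysis (once the support of $f$ is restricted to $\{0000,1111,0011,1100\}$) are correct and cleanly handled. However, the heart of your argument---the ``structural claim'' that first-order orthogonality together with $f_i^0\in\mathscr{T}$ forces this support restriction---is left as a sketch, and you yourself flag it as the main obstacle. In Case~2 especially, your appeal to Lemma~\ref{all-div} ``combined with orthogonality'' is not an argument: Lemma~\ref{all-div} would give $b(x_v,x_w)\mid f$, contradicting irreducibility, so what you actually need is to show that its hypothesis \emph{fails} in a controlled way, and then extract the support restriction from that failure plus the norm/orthogonality identities. That is a genuine computation you have not done. (It can in fact be carried out: after deriving the rank-one-perturbation form $M_{12,34}(f)=(1,x,y,z)^{\tt T}\otimes(1,a,b,c)+w\,e_{11,11}$, the four equations $\langle{\bf f}_i^0,{\bf f}_i^1\rangle=0$ force $a^2=b^2$ and $x^2=y^2$, and combining these with the norm equations $|{\bf f}_i^0|^2=|{\bf f}_i^1|^2$ over $\mathbb{R}$ eventually forces $a=b=x=y=0$; but this is several pages of case analysis, not a one-line consequence of Theorem~\ref{tho-entangle}.)

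The paper takes a different route that sidesteps this entirely. It derives the rank-one-perturbation form $M_{12,34}(f)=(1,x,y,z)^{\tt T}\otimes(1,a,b,c)+w\,(0,1)^{\otimes 2}\otimes(0,1)^{\otimes 2}$ using only $f_i^0\in\mathscr{T}$ (no orthogonality). Then, if any of $a,b,x,y$ is nonzero---say $x\neq 0$---it \emph{realizes the unary $(1,x)$} by pinning three variables of $f$ to $0$, and connects this unary back into the $x_2$-slot of $f$. The resulting ternary $g$ inherits the rank-one-perturbation form with a nonzero perturbation, and a short UPF argument shows $g\notin\mathscr{T}$, giving option~3 directly. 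Only in the residual case $a=b=x=y=0$ does the paper invoke first-order orthogonality, and there the analysis is essentially what you wrote. So the paper trades your hard structural lemma for a single clever gadget (feed a realized unary back into $f$); this is the idea you are missing.
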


\begin{proof}
First, we may assume $f$ is irreducible. Otherwise, we consider its irreducible factors. 
Since $f\notin \mathscr{T}$, it has an irreducible factor $g$ of arity $3$ such that $g\notin \mathscr{T}$. By Lemma \ref{lin-wang}, $g$ is realizable from $f$ by factorization, and the lemma is proved.
Also we may assume $f^{0000}=1$ after normalization and $f$ satisfies  first order orthogonality; otherwise, by Lemma \ref{01-by-pin} and Corollary \ref{first-ortho}, we are done. 
We consider signatures $f_i^0$ realized by pinning $x_i$ to 0 in $f$,
for all $i$. If there is $i$ such that the ternary signature $f_i^0 \notin \mathscr{T}$, then  we are done, since $f_i^0$ has arity 3. Also, since $f$ has arity 4,
$\partial_{ij}f$ is a binary signature for any pair of indices $\{i,j\}$.
Hence $\partial_{ij}f \in \mathscr{T}$.
Thus, we may assume $f\in \int\mathscr{T}$. 

\begin{itemize}
    \item If $f \in \int_1\mathscr{T}_1$, then there are three unary signatures such that $f_1^0=a_1(x_2)\otimes a_2(x_3) \otimes a_3(x_4)$. By the same proof in Lemma \ref{unary-factor}, we have $a_2(x_3)\mid f_2^0$ and $a_3(x_4) \mid f_2^0$. Thus, $a_2(x_3)\otimes a_3(x_4)\mid f_2^0$.
    \item Otherwise, there is an index $i$ such that $f_i^0$ has an irreducible binary factor.
    Without loss of generality, we assume that $f_1^0=a_1(x_2) \otimes b_1(x_3, x_4)$ where $b_1(x_3, x_4)$ is irreducible. By the same proof  as in Lemma \ref{binary-factor}, we have $b_1(x_3, x_4)\mid f_2^0$.
\end{itemize}
Therefore, in both cases, there is a binary signature $b(x_3, x_4)$, which may be reducible, i.e., $b(x_3, x_4)= a_2(x_3)\otimes a_3(x_4)$, such that $b(x_3, x_4)\mid f_1^0$ and $b(x_3, x_4)\mid f_2^0$.
Thus, we have $$f_1^0= a_1(x_2) \otimes b(x_3, x_4) ~~~~\text{ and }~~~~ f_2^0= a'_1(x_1) \otimes b(x_3, x_4).$$
By a normalization we may let $b(x_3, x_4)=(1, a, b, c)$, $a_1(x_2)=(1, x)$ and $a'_1(x_1)=(1, y)$. Then, $f$ has the  signature matrix, for some $z, z_1, z_2, z_3$ $$M_{12,34}(f)=\begin{bmatrix}
1 & a & b & c\\
x & ax & bx & cx\\
y & ay & by & cy\\
z & z_1 & z_2 & z_3\\
\end{bmatrix}.$$

Then, we consider the signature $f_3^0$. It has the  signature matrix 
$$M_{12,4}(f_3^0)=\begin{bmatrix}
1 & a \\
x & ax\\
y & ay\\
z & z_1\\
\end{bmatrix}.$$
We have $f_3^0\in \mathscr{T}$, and nonzero. In its unique factorization,
if $x_2$ and $x_4$ belong to an irreducible binary signature, then
$(f_3^0)_1^0$, which has the signature matrix 
$M_{2,4}(f_{13}^{00})= \left[ \begin{smallmatrix}
1 & a \\
x & ax\\
\end{smallmatrix}\right]$,  would have been an irreducible binary signature, a contradiction.
Similarly $x_1$ and $x_4$ do not belong to an irreducible binary signature
in the unique factorization of $f_3^0$. Therefore $x_4$ appears in a
unary signature in the factorization of $f_3^0$.
It follows that
$z_1=az$. Similarly from $f_4^0\in \mathscr{T}$, we can prove
 $z_2=bz$. We also write $z_3$ as $cz+w$. Thus, we have
$$M_{12,34}(f)=(1, x, y, z)^{\tt T}\otimes(1, a, b, c)+w ({(0, 1)^{\tt T}})^{\otimes 2} {\otimes}(0, 1)^{\otimes 2}.$$
We know $w\neq 0$ since $f \notin \mathscr{T}.$
By pinning any 3 of the 4 variables to 0,
we can realize four unary signatures $(1, a), (1, b), (1, x)$ and $(1, y)$. For example, $(1, x)$ can be realized from $f$ by pinning variables $x_1, x_3$ and $x_4$ to $0$.
\begin{itemize}
    \item 
Suppose $a, b, x, y$ are not all zero, say $x \neq 0$.
We connect the unary $(1, x)$ with the variable $x_2$ of $f$, and we get a signature $g$ with the signature matrix 
$$M_{1,34}(g)=\begin{bmatrix}
1+x^2 & a(1+x^2) & b(1+x^2) & c(1+x^2)\\
y+xz & a(y+xz) & b(y+xz) & c(y+xz)+xw\\
\end{bmatrix}.$$
Clearly, $1+x^2\neq 0$. By normalization, we have 
$$M_{1,34}(g)=\begin{bmatrix}
1 & a & b & c\\
x' & ax' & bx' & cx'+w'\\
\end{bmatrix},$$
where $x'=\frac{y+xz}{1+x^2}$ and $w'=\frac{xw}{1+x^2}$,
and $w' \neq 0$ since $xw\neq 0$.
Thus
\begin{equation}\label{rank-1-pertubation-g}
    g = (1, x')_{x_1} \otimes (1,a,b,c)_{x_3,x_4} + w' (0, 1)^{\otimes 3}.
\end{equation}

We claim that  $g \notin \mathscr{T}$.
Otherwise consider the unique factorization of $g$ in $\mathscr{T}$.
By the same proof above for $f_3^0 \in \mathscr{T}$,
we can see that $x_1$ of $g$ cannot appear in an irreducible binary
signature, either with $x_3$ or with $x_4$,
as a tensor factor in the unique prime factorization of $g$.
Hence $x_1$ must appear in a unary signature in this
factorization. This  would imply that $w'=0$, by the form of 
$M_{1,34}(g)$,
a contradiction.

It follows that   $g \notin \mathscr{T}$, 
and we are done.

%
%

\item Otherwise, $a=b=x=y=0$. Then, we know 
$$M_{12,34}(f)=\begin{bmatrix}
1 & 0 & 0 & c\\
0 & 0 & 0 & 0\\
0 & 0 & 0 & 0\\
z & 0 & 0 & z_3\\
\end{bmatrix}.$$
Here, we write $z_3$ as $cz+w$. 
By equation (\ref{ee2}), we have $1+c^2=z^2+z^2_3$ and $1+z^2=c^2+z^2_3$. Thus, we have $c^2=z^2$ and $z_3^2=1$. 
By pinning variables $x_1$ and $x_2$ to 0, we can realize the binary signature $(1, 0, 0, c)$. If it is not reducible or orthogonal, then by Lemma \ref{01-by-inter} we can realize $\Delta_1$. Otherwise, we have $c=0$ or $c=\pm 1$. Similarly, we have $z=0$ or $z=\pm 1$. 
As we already have $c^2=z^2$, we get $c=z=0$ or $c^2=z^2=1$.
We consider the signature $\frak{m}_{34}f$ realized by mating variables $x_3, x_4$ of $f$. We have 
$$M(\frak{m}_{34}f)=M_{12,34}(f)(M^{\tt}_{12,34}(f))^{\tt T}=\begin{bmatrix}
1+c^2 & 0 & 0 & z+cz_3\\
0 & 0 & 0 & 0\\
0 & 0 & 0 & 0\\
z+cz_3 & 0 & 0 & z^2+z^2_3\\
\end{bmatrix}.$$
If $c=z=0$, then we have $M(\frak{m}_{34}f)=\left[\begin{smallmatrix}
1 & 0 & 0 & 0\\
0 & 0 & 0 & 0\\
0 & 0 & 0 & 0\\
0 & 0 & 0 & 1\\
\end{smallmatrix}\right]=M(=_4).$ 
Otherwise, $c^2=z^2=1$. Also, we know $z_3\neq cz$ since 
$f\notin \mathscr{T}$.
Note that $z_3^2 = 1$ and $(cz)^2=1$. This implies that $z_3=-cz$. Then, we have $z+cz_3=z-c^2z=z-z=0$. Thus, we have $M(\frak{m}_{34}f)=\left[\begin{smallmatrix}
2 & 0 & 0 & 0\\
0 & 0 & 0 & 0\\
0 & 0 & 0 & 0\\
0 & 0 & 0 & 2\\
\end{smallmatrix}\right]=2M(=_4).$ 

Therefore, we can realize $(=_4)$ from $f$, and then by Lemma \ref{all-even-equ} we can realize all equality signatures $=_{2k}$ of even arity. Thus, we have $$\CSP_2(\mathcal{F})\leqslant_T\Holant(=_4, \mathcal F)\leqslant_T\Holant(\Delta_0, \mathcal{F}).$$ 
This completes the proof of the lemma.
\end{itemize}
\end{proof}

\begin{theorem}\label{01-dic}
$\Holant(\Delta_0, \mathcal F)$ is {\rm \#}P-hard unless $\mathcal F$ satisfies 
the tractable conditions (\ref{main-thr}).
\end{theorem}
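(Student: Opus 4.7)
The plan is to perform strong induction on the arity $n \geq 3$ of a witness signature $f \in \mathcal{F}$ with $f \notin \mathscr{T}$. Since $\mathcal{F}$ violates conditions~(\ref{main-thr}), in particular $\mathcal{F} \not\subseteq \mathscr{T}$, so such an $f$ exists, of arity at least $3$ (since any unary or binary signature is trivially in $\mathscr{T}$). The proof amounts to stitching together the three lemmas already established in this subsection, together with the hardness results from Section~\ref{sec-hard-result}.

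For the base case $n = 3$, I would invoke Lemma~\ref{01-base-3} directly: under the hypothesis that $\mathcal{F}$ does not satisfy conditions~(\ref{main-thr}), it already yields $\Holant(\Delta_0, \mathcal{F})$ is \#P-hard. For the inductive step with $n \geq 5$, I would apply Lemma~\ref{01-induction}. One branch gives $\Holant^c(\mathcal{F}) \leq_T \Holant(\Delta_0, \mathcal{F})$, and \#P-hardness follows immediately from Theorem~\ref{hard-result} (which says $\Holant^c(\mathcal{F})$ is \#P-hard whenever $\mathcal{F}$ violates the tractability conditions). The other branch realizes some $g \notin \mathscr{T}$ of arity $n-1$ or $n-2$, and I would then apply the inductive hypothesis to $\mathcal{F} \cup \{g\}$. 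Note that the enlarged set $\mathcal{F} \cup \{g\}$ still violates the tractability conditions, since being tractable under any of the four conditions is monotone: if a superset satisfied such a condition, so would $\mathcal{F}$ itself.

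The case $n = 4$ requires Lemma~\ref{unary-n=4}. It produces one of three alternatives: a reduction from $\Holant^c(\mathcal{F})$ or from $\CSP_2(\mathcal{F})$, both of which are \#P-hard by Theorem~\ref{hard-result}; or the realization of a ternary $g \notin \mathscr{T}$, to which the base case $n = 3$ applies. Since the descent strictly reduces arity (by $1$ or $2$) at each step, the induction terminates in $O(n)$ steps at either the base case or an explicit hardness reduction.

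The main obstacle in the overall argument is not located in this final assembly, which is essentially bookkeeping, but rather in the inductive step itself, especially Lemma~\ref{unary-n=4} for arity $4$, where the purely UPF-based arity reduction used for $n \geq 5$ breaks down. There the exceptional $4$-qubit states of the form $a|0000\rangle + b|1111\rangle + c|0011\rangle + d|1100\rangle$ (the obstruction flagged in Theorem~\ref{preservation}) force us to combine unique prime factorization with the first order orthogonality machinery and the mating gadget $\frak{m}_{34}f$ to extract an $(=_4)$. Beyond invoking that lemma and citing Theorem~\ref{hard-result}, the proof of Theorem~\ref{01-dic} itself needs no new ideas.
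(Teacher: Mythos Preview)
Your proposal is correct and follows essentially the same approach as the paper: strong induction on the arity of a witness $f \notin \mathscr{T}$, with the base case handled by Lemma~\ref{01-base-3}, the inductive step by Lemmas~\ref{01-induction} and~\ref{unary-n=4}, and the terminal hardness supplied by Theorem~\ref{hard-result}. The paper's proof is terser (it merges the $n=4$ and $n\geqslant 5$ cases into a single inductive step citing both lemmas at once), but the logical structure is identical, including the implicit monotonicity observation that adjoining the realized $g$ to $\mathcal{F}$ cannot create tractability.
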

\begin{proof}
Assume $\mathcal F$ does not satisfy conditions (\ref{main-thr}).
Then  $\mathcal F \not \subseteq  \mathscr T$. There is a signature $f \in \mathcal{F}$ of arity $n\geqslant 3$ that is not in $\mathscr{T}$.
If $n=3$, then by Lemma \ref{01-base-3}, we are done.

Suppose our statement is true for $3 \leqslant n\leqslant k$. Consider $n=k+1\geqslant 4$. By Lemmas \ref{01-induction} and  \ref{unary-n=4}, we have 
$\Holant^c(\mathcal{F})$, or $\CSP_2(\mathcal{F})$, or $\holant{=_2}{\Delta_0, g, \mathcal F}\leqslant_T \holant{=_2}{\Delta_0, \mathcal F}$ for some $g\notin \mathscr T$ of arity $k-1$ or $k$ at least $3$.
By Theorem \ref{hard-result} and the induction hypothesis, we know $\Holant^c(\mathcal{F})$, $\CSP_2(\mathcal{F})$ and $\holant{=_2}{\Delta_0, g, \mathcal F}$ are all \#P-hard when $\mathcal F$ does not satisfy  conditions (\ref{main-thr}), and hence $\holant{=_2}{\Delta_0, \mathcal F}$ is \#P-hard.
\end{proof}
\subsection{The proof of Theorem \ref{odd-dic}}
\begin{proof}
The tractability is known by Theorem \ref{main-thr}.

By Lemma \ref{odd-red}, for some orthogonal matrix $Q$, we have $\Holant(=_2\mid\Delta_0, Q\mathcal{F})\leqslant_T\Holant(=_2\mid\mathcal{F})$ or
    $\holant{\neq_2}{=_{2k+1}, \widehat{Q}\widehat{\mathcal F}}\leqslant_T\Holant(=_2\mid \mathcal{F})$.
    Since $\mathcal{F}$ does not satisfy conditions (\ref{odd-red}), we know $Q\mathcal{F}$ also does not satisfy  conditions (\ref{odd-red}). Then, by Theorem \ref{01-dic} and Corollary \ref{cspk-dic}, we have $\Holant(=_2\mid\Delta_0, Q\mathcal{F})$ and  $\holant{\neq_2}{=_{2k+1}, \widehat{Q}\widehat{\mathcal{F}}}$ are both \#P-hard. Hence, $\Holant(=_2\mid \mathcal{F})$ is \#P-hard.
    \end{proof}

\subsection{The proof of Theorem \ref{cspk-dic-thr}}\label{sec-cspd}
In this subsection,  we will prove Theorem~\ref{cspk-dic-thr}.
For $k=1$, Theorem~\ref{cspk-dic-thr} follows Theorem~\ref{csp-dic}.
Note that $(\neq_2)\notin\mathscr{L}$.
Thus for $k=2$, Theorem~\ref{cspk-dic-thr} follows Theorem~\ref{even-csp-dic}.
This implies that Theorem~\ref{cspk-dic-thr} has been proved for $k=1, 2$.

We will use the notations $\frak i^2=-1, \alpha^2=\frak i$ and $\beta^2=\alpha$ in this section.
And we use $[1, 0, \cdots, 0, a]_r$ to denote a general equality signature of arity $r$ in the following.

Note that
\[\#\operatorname{CSP}_k(\neq_2, \mathcal{F})\equiv_T \operatorname{Holant}({\mathcal EQ}_k | \neq_2, \mathcal{F}).\]
Moreover, by the following two gadgets,
we have
$(\neq_2), {\mathcal EQ}_k$ on both sides i.e.,
\begin{equation}\label{two-forms}
\#\operatorname{CSP}_k(\neq_2, \mathcal{F})\equiv \operatorname{Holant}({\mathcal EQ}_k, \neq_2|{\mathcal EQ}_k, \neq_2, \mathcal{F}).
\end{equation}

\setlength{\unitlength}{5mm}
\begin{picture}(24,10)(-3,-2)\label{const}
\put(1,3){\line(2, 1){4}}
\put(1,3){\line(1,0){5}}
\put(1,3){\line(2,-1){4}}
\put(4.5,2.9){\tiny{$\blacksquare$}}
\put(4,1.2){\tiny{$\blacksquare$}}
\put(4,4.5){\tiny{$\blacksquare$}}
\put(0.9, 2.8){\Large{$\bullet$}}
\put(3.3,1.9){\Huge{$\vdots$}}
\qbezier(15.1, 3)(18, 6)(21.2, 3)
\qbezier(15.2, 3)(18, 0)(21.2, 3)
\put(12,3){\line(1,0){12}}
\put(15,2.8){\Large{$\bullet$}}
\put(18,2.8){\tiny{$\blacksquare$}}
\put(18,4.3){\tiny{$\blacksquare$}}
\put(18,1.3){\tiny{$\blacksquare$}}
\put(18,1.8){\Huge{$\vdots$}}
\put(21, 2.8){\Large{$\bullet$}}
\put(-2.5,-0.5){Fig 1. In an instance of $\operatorname{Holant}({\mathcal EQ}_k | \neq_2, \mathcal{F})$, the first gadget realizes
$(=_{nk})$ on the RHS.}
\put(-2.5, -1.5){The squares are labeled by $(\neq_2)$ and the round vertices are labeled by $(=_{nk})$. The second }
\put(-2.5, -2.5){gadget realizes $(\neq_2)$ on the LHS, where  the 
two round vertices are labeled by $(=_{k})$ on the RHS.}
\end{picture}

\vspace{.3in}

The following two lemmas show that in $\#\operatorname{CSP}_k(\neq_2, \mathcal{F})$ the pinning signatures $\Delta_0 = [1, 0]$ and $\Delta_1 = [0, 1]$ are freely available.
The first lemma is from \cite{huang-lu-real}.
It shows that we have $[1, 0]^{\otimes k}, [0, 1]^{\otimes k}$ freely in $\#\operatorname{CSP}_k(\mathcal{F})$.
\begin{lemma}\label{pinning}
\[\#\operatorname{CSP}_k(\mathcal{F}, [1, 0]^{\otimes k}, [0, 1]^{\otimes k})\leq_T\#\operatorname{CSP}_k(\mathcal{F}). \]
\end{lemma}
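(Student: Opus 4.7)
My plan is to establish the reduction via polynomial interpolation, realizing the pinnings $[1,0]^{\otimes k}$ and $[0,1]^{\otimes k}$ as the constant term of a bivariate polynomial whose other evaluations are recoverable through the $\#\operatorname{CSP}_k(\mathcal{F})$ oracle.

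Given an instance $\Omega$ of $\#\operatorname{CSP}_k(\mathcal{F}, [1,0]^{\otimes k}, [0,1]^{\otimes k})$ containing $p$ copies of $[1,0]^{\otimes k}$ and $q$ copies of $[0,1]^{\otimes k}$, I would replace each $[1,0]^{\otimes k}$ by the weighted equality $[1,0,\ldots,0,a]_k = [1,0]^{\otimes k} + a\,[0,1]^{\otimes k}$ and each $[0,1]^{\otimes k}$ by $[b,0,\ldots,0,1]_k$. The resulting partition function $Z(\Omega; a, b)$ is a bivariate polynomial of degree at most $p$ in $a$ and at most $q$ in $b$, and $Z(\Omega) = Z(\Omega; 0, 0)$ is its constant term. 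If I can produce polynomially many parameter values $(a_i, b_j)$ for which the substituted instance becomes an instance of $\#\operatorname{CSP}_k(\mathcal{F})$ evaluable by the oracle, bivariate Lagrange interpolation recovers $Z(\Omega; 0, 0)$ in polynomial time.

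The essential sub-construction realizes the $k$-ary signature $[1,0,\ldots,0,a_s]_k$ for a sequence of distinct values $a_s$. Starting from the freely available $(=_k) = [1,0,\ldots,0,1]_k$, I would take a larger equality $(=_{nk}) \in \mathcal{EQ}_k$, group its $nk$ variables into $n$ blocks of $k$ each, attach $n - 1$ of the blocks to small capping subgadgets, and leave the remaining block as the arity-$k$ output. Because the support of $(=_{nk})$ consists only of the all-zero and all-one strings, the resulting signature is automatically of the form $[\alpha,0,\ldots,0,\beta]_k$, and varying the caps controls the ratio $\alpha/\beta$.

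The main obstacle will be generating enough distinct ratios. Gadgets built purely from $\mathcal{EQ}_k$ without any help from $\mathcal{F}$ always yield the symmetric ratio $1:1$, since every equality signature is itself invariant under $0 \leftrightarrow 1$ swap; hence the asymmetry must ultimately be sourced from $\mathcal{F}$. If $\mathcal{F}$ itself contributes only $0$--$1$ symmetric signatures (so the entire instance has a global swap symmetry), then $[1,0]^{\otimes k}$ and $[0,1]^{\otimes k}$ give the same contribution, and the pinned partition function is extracted from the unpinned one by a direct symmetry argument. Otherwise a single asymmetric signature in $\mathcal{F}$, composed with equalities of varying arity $nk$, suffices to generate a family of pairwise distinct ratios parameterized by $n$, from which polynomially many interpolation values can be selected by a standard algebraic genericity check. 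This is the case analysis carried out in \cite{huang-lu-real}.
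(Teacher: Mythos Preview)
The paper does not supply its own proof of this lemma; it simply attributes the result to \cite{huang-lu-real} and moves on. Your outline is a faithful high-level reconstruction of that reference's interpolation strategy: replace the pinnings by weighted generalized equalities, view the partition function as a polynomial in the weights, and recover the constant term by interpolation, with a separate symmetry argument when every signature in $\mathcal{F}$ is invariant under global bit-flip.

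One point worth tightening: your plan to manufacture $[1,0,\ldots,0,a_s]_k$ by capping $(=_{nk})$ with ``small subgadgets'' is where the real work hides. In the bipartite setting $\operatorname{Holant}(\mathcal{EQ}_k \mid \mathcal{F})$ the only caps available on the RHS are signatures from $\mathcal{F}$ themselves, whose arities you do not control, so generating an unbounded family of distinct ratios from a single asymmetric $f\in\mathcal{F}$ requires a more careful construction than ``vary $n$''. The argument in \cite{huang-lu-real} handles this not by directly building many weighted equalities but by the power-sum trick: merge all $0$-pins into one large equality and all $1$-pins into another, take $t$ disjoint copies of the instance sharing those two equalities, and observe that the oracle value is $\sum_{a,b\in\{0,1\}} Z(\Omega;a,b)^t$; varying $t$ recovers the multiset $\{Z(\Omega;a,b)\}$, and a final disambiguation step (again using any asymmetric $f$, or the bit-flip symmetry when none exists) identifies which value corresponds to $(a,b)=(0,1)$. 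Your sketch and this argument are morally the same interpolation idea, but the latter sidesteps the arity-matching issue entirely.
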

The second lemma is from \cite{wang-lin}.
It shows that we can remove the tensor powers of $[1, 0]^{\otimes k}, [0, 1]^{\otimes k}$.
\begin{lemma}\label{remove-duplicate-pinning}
\[\#\operatorname{CSP}_k(\mathcal{F}, [1, 0], [0, 1])\leq_T\#\operatorname{CSP}_k(\mathcal{F}, [1, 0]^{\otimes k}, [0, 1]^{\otimes k}).\]
\end{lemma}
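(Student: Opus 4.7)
The plan is to exploit a simple ``$k$-copies'' observation and then use polynomial interpolation to overcome the resulting $k$-th root ambiguity. Given an instance $\Omega$ of $\CSP_k(\mathcal{F}, [1,0], [0,1])$ with $n_0$ copies of the unary $[1,0]$ and $n_1$ copies of $[0,1]$, I would first form the $k$-fold disjoint union $\Omega^{(k)}$, whose partition function is $Z(\Omega)^k$. Since $\Omega^{(k)}$ contains $k n_0$ copies of $[1,0]$ and $k n_1$ copies of $[0,1]$ --- both multiples of $k$ --- the unaries can be bundled in groups of $k$: each group is replaced by a single $[1,0]^{\otimes k}$ whose $k$ edges are routed to the original attachment points of the bundled unaries (one edge per copy of $\Omega$). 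Doing the same for the $[0,1]$ unaries produces a legal instance of $\CSP_k(\mathcal{F}, [1,0]^{\otimes k}, [0,1]^{\otimes k})$ whose value is exactly $Z(\Omega)^k$, which the oracle can evaluate in one call.

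The core obstacle is then to recover $Z(\Omega)$ from $Z(\Omega)^k$, since complex $k$-th roots are ambiguous up to a $k$-th root of unity. I would resolve this via polynomial interpolation. Conceptually replace the target unaries by parameterized ones $u(t) = (1, t)$ and $v(s) = (s, 1)$, so that the partition function becomes a bivariate polynomial $P(t, s)$ of total degree at most $n_0 + n_1$ whose value at $(0, 0)$ is $Z(\Omega)$. To obtain evaluations of $P$ at further points, consider families of modified instances in which selected occurrences of $[1,0]$ are swapped to $[0,1]$ (and vice versa) before the $k$-copies bundling is applied; provided the swap pattern keeps the total counts divisible by $k$, the bundling still produces a legal instance whose value encodes $P$ at a controlled point. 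One may also insert ``phantom'' closed components --- e.g.\ a fresh variable with equality $(=_k)$ whose $k$ slots all feed into a single $[1,0]^{\otimes k}$ vertex, contributing the scalar $1$ --- to freely tune parities between the swap classes. Combining enough such instances yields a Vandermonde-type linear system whose solution is $P(0, 0)$.

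The hard part will be producing \emph{enough} distinct, computable evaluations to pin down $P$. A parity argument shows that every vertex on the LHS contributes arity in $k\mathbb{Z}$ and every tensor pinning contributes arity exactly $k$, so no \emph{local} gadget built from $\EQ_k$, $[1,0]^{\otimes k}$, and $[0,1]^{\otimes k}$ can simulate a generic unary $(1,t)$; the interpolation must therefore be carried out at the level of whole-instance substitutions rather than signature replacements. The delicate work is in organizing these global substitutions: one must verify that the achievable combinations of swap patterns and phantom components give a collection of evaluation points $(t_\alpha, s_\beta)$ in sufficiently general position for the induced Vandermonde matrix to be nonsingular, so that polynomially many oracle calls to $\CSP_k(\mathcal{F}, [1,0]^{\otimes k}, [0,1]^{\otimes k})$ recover $Z(\Omega)$.
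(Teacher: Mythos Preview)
The paper does not prove this lemma itself; it is cited from Lin and Wang, whose tensor-factorization technique (recorded here as Lemma~\ref{lin-wang}) is the intended route.

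Your $k$-fold disjoint union correctly produces a legal target instance with value $Z(\Omega)^k$, and that is indeed the natural first move. The gap is in the recovery step, and it is not merely ``delicate'' --- as written it cannot close. First, swapping a $[1,0]$ for a $[0,1]$ is not a specialization of your polynomial $P(t,s)$: with $u(t)=(1,t)$ no value of $t$ gives $(0,1)$, so the swapped instances are not evaluations of $P$ at new points and there is no Vandermonde system to invert. Second, since you swap \emph{before} taking $k$ copies, every oracle call you describe returns some $Z_{\vec e}^{\,k}$ for a swap pattern $\vec e$, and your phantom components multiply this by $1$. Over $\mathbb{C}$ the family $\{Z_{\vec e}^{\,k}\}$ does not determine any individual $Z_{\vec e}$: each is known only up to a $k$-th root of unity, and the values for distinct $\vec e$ need have no exploitable algebraic relation. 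Even allowing different swap patterns in different copies only yields monomials $\prod_{\vec e} Z_{\vec e}^{\,m_{\vec e}}$ whose exponent vectors satisfy a fixed congruence mod $k$ (the total number of $0$-pins and of $1$-pins across all copies must each be divisible by $k$); a short lattice check shows the exponent vector isolating $Z(\Omega)$ alone is never achievable for $k\ge 2$ --- already with $n_0=1$, $n_1=0$ the computable quantities are exactly $(A^k)^j(B^k)^{m-j}$. In Lin and Wang's nonnegative-weight setting the $k$-th root is unique and this ambiguity evaporates; for the complex-valued signature sets to which the lemma is applied in this paper, the ambiguity is genuine and your scheme does not resolve it.
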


Firstly, we prove Theorem~\ref{cspk-dic-thr} for the case that there exists a general equality of arity less than $k$ in $\mathcal{F}$.

\begin{lemma}\label{general-equality-induction}
Let $f=[1, 0, \cdots, 0, a]_r$ with $r<k$ and $a\neq 0$, and $\mathcal{F}$ be a signature set.
Then $\operatorname{\#CSP}_k(\mathcal{F}, \neq_2, f)$ is $\#\operatorname{P}$-hard except for the following cases
\begin{itemize}
\item $\mathcal{F}\subseteq\mathscr{P}$;
\item there exists $d\in[k]$ such that $\mathcal{F}\cup \{f\}\subseteq\mathscr{A}_k^d$,
\end{itemize}
which can be computed in polynomial time.
\end{lemma}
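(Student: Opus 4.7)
The plan is to induct on $k$, the modulus that controls the available equalities in $\mathcal{EQ}_k$. The base cases $k=1$ and $k=2$ are already recorded: Theorem \ref{csp-dic} handles $k=1$, and Theorem \ref{even-csp-dic} handles $k=2$ (noting $(\neq_2)\notin\mathscr{L}$). For the inductive step I assume the dichotomy for every smaller modulus. The central observation is that having a general equality $f = [1,0,\ldots,0,a]_r$ of arity $r<k$ should let me ``chip away'' the modulus $k$ down to $d := \gcd(r,k) < k$.

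Tractability is the easier direction. If $\mathcal{F}\subseteq\mathscr{P}$, then $f\in\mathscr{P}$ as well, since every $[1,0,\ldots,0,a]_r$ factors pointwise as a product of $(=_2)$'s times a unary weight $(1,a^{1/r})^{\otimes r}$; tractability then follows from Theorem \ref{main-thr}. If $\mathcal{F}\cup\{f\}\subseteq\mathscr{A}_k^d$ for some $d\in[k]$, then applying the holographic transformation by $T_k^{-d}$ sends every signature to $\mathscr{A}$, and one uses the known affine-tractability for $\CSP_k$ after checking that the pinning side $\mathcal{EQ}_k$ is preserved by the transformation.

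For hardness, suppose $\mathcal{F}$ satisfies neither tractable condition. By (\ref{two-forms}), $(\neq_2)$ is available on the LHS, so I can move $f$ across by the gadget that attaches a fresh LHS $(\neq_2)$ to each of the $r$ variables of $f$; a direct calculation yields the signature $a\cdot[1,0,\ldots,0,a^{-1}]_r$ on the LHS. With $(=_k)$ and $[1,0,\ldots,0,a^{-1}]_r$ both sitting on the LHS and $(\neq_2)$ on the RHS, I iteratively chain them by $(\neq_2)$-merging. By Bézout, choose nonnegative integers $u,v$ with $|uk-vr|=d$, and build a tree gadget that uses $u$ copies of $(=_k)$ and $v$ copies of $[1,0,\ldots,0,a^{-1}]_r$, linked through $(\neq_2)$'s so that precisely $d$ dangling edges survive. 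A careful bookkeeping of which global assignments are consistent shows the resulting signature has the form $[1,0,\ldots,0,a^*]_d$ with $a^*$ a nonzero monomial in $a^{\pm1}$. Since $d\mid k$, combining this arity-$d$ equality with $(=_k)$ on the LHS yields all equalities of arity $d,2d,3d,\ldots$ (up to scalar weights), so that effectively the LHS now carries $\mathcal{EQ}_d$. This gives a polynomial-time equivalence $\CSP_k(\mathcal{F},\neq_2,f)\equiv_T \CSP_d(\mathcal{F},\neq_2,f')$ for some $f'$ of arity at most $r$, and I can invoke the inductive hypothesis at $d<k$.

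The main obstacle will be the middle step: executing the Bézout-style gadget so that $a^*$ is genuinely nonzero (no algebraic cancellation), and matching up the tractable families across moduli. In particular, I must verify that the failure of $\mathcal{F}\cup\{f\}\subseteq\mathscr{A}_k^d$ for every $d\in[k]$ translates, after reduction, to the failure of the smaller-modulus tractable conditions for $\CSP_d(\mathcal{F},\neq_2,f')$; this is most delicate in the boundary case $d=1$, where the reduction lands in pure $\CSP$ and the hardness must be lifted via Theorem \ref{csp-dic}, so one needs to track how the transformation $T_k^{-d}$ relates to membership in $\mathscr{A}$ and $\mathscr{P}$.
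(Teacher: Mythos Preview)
Your overall induction-on-$k$ strategy matches the paper's, but the implementation diverges and has a real gap. The paper avoids the Bézout/$\gcd$ route entirely: it simply performs one step of Euclidean division, writing $k = tr + r_1$ with $0 < r_1 \le r$, and then plugs $\ell t$ copies of $f$ (on the RHS) into $(=_{\ell k})$ (on the LHS) to leave $\ell r_1$ dangling edges carrying $[1,0,\ldots,0,a^{t\ell}]_{\ell r_1}$. No $(\neq_2)$-flipping, no tree gadget, no sign bookkeeping; the nonvanishing of $a^*$ is automatic. Your Bézout gadget, by contrast, is only asserted (``A careful bookkeeping \ldots shows''), and with $(\neq_2)$ in the chain each link flips the variable, so the claimed $[1,0,\ldots,0,a^*]_d$ form does not drop out without a genuine construction; this is exactly the step you flag as ``the main obstacle,'' and it is not done.

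More importantly, you are missing the step that makes the reduction land in an honest $\CSP_{r_1}$. After the construction above, the LHS carries \emph{weighted} equalities $\mathcal{EQ}^a_{r_1}=\{[1,0,\ldots,0,a^{t\ell}]_{\ell r_1}\}_\ell$, not $\mathcal{EQ}_{r_1}$. The paper fixes this with the diagonal holographic transformation $T=\left[\begin{smallmatrix}1&0\\0&a^{t/r_1}\end{smallmatrix}\right]$, which sends $\mathcal{EQ}^a_{r_1}$ to $\mathcal{EQ}_{r_1}$ and pushes the reduction to $\#\mathrm{CSP}_{r_1}(T\mathcal{EQ}_k,\neq_2,T^{\otimes r}f,T\mathcal{F})$; the translation of the tractable conditions back to modulus $k$ then goes through the observation that $T\mathcal{EQ}_k\subseteq\mathscr{A}_{r_1}^{d'}$ forces $(\gamma^{d'}a^{t/r_1})^{4k}=1$, i.e.\ the combined diagonal is a power of $\rho$, so membership in some $\mathscr{A}_{r_1}^{d'}$ pulls back to membership in some $\mathscr{A}_k^{d}$. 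Your sketch skips both the normalization and this pull-back argument. (Minor aside: in your tractability paragraph, $[1,0,\ldots,0,a]_r\in\mathscr{P}$ does not require fractional powers; write it as a chain of $(=_2)$'s times a single unary $(1,a)$ on one variable.)
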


\begin{proof}
Note that the lemma has been proved for the cases $k=1, 2$ by Theorem~\ref{csp-dic}
and Theorem~\ref{even-csp-dic}. We will prove the lemma by induction on $k$ in the following.
Let $k=tr+r_1$ with $0<r_1\leq r$. Note that $k>r$, so $t\geq 1$.
In $\operatorname{Holant}(\mathcal{EQ}_k, \neq_2|\mathcal{EQ}_k, \neq_2, f, \mathcal{F})$,
connecting $\ell t$ copies of $[1, 0, \cdots, 0, a]_r$ to ($=_{\ell k}$)
we get $[1, 0, \cdots, 0, a^{t\ell}]_{\ell r_1}$ for $\ell=1, 2, \cdots$, i.e.,
\[\operatorname{Holant}(\mathcal{EQ}^a_{r_1}, \neq_2|\mathcal{EQ}_k, \neq_2, f, \mathcal{F})
\leq_T
\operatorname{Holant}(\mathcal{EQ}_k, \neq_2|\mathcal{EQ}_k, \neq_2, f, \mathcal{F}),\]
where $\mathcal{EQ}^a_{r_1}=\{[1, 0, \cdots, 0, a^{t}]_{r_1},
[1, 0, \cdots, 0, a^{2t}]_{2r_1}, \cdots, [1, 0, \cdots, 0, a^{\ell t}]_{\ell r_1}, \cdots\}$. Let
$T=\left[\begin{smallmatrix}1 & 0\\0 & a^{\frac{t}{r_1}}\end{smallmatrix}\right]$, then
$T^{-1}(\mathcal{EQ}_{r_1}^a)=\mathcal{EQ}_{r_1}$. Thus after the holographic transformation using $T$, we have
\[\operatorname{Holant}(\mathcal{EQ}_{r_1}|T\mathcal{EQ}_k, \neq_2, T^{\otimes r}f, T\mathcal{F})
\leq_T
\operatorname{Holant}(\mathcal{EQ}_k, \neq_2|\mathcal{EQ}_k, \neq_2, f, \mathcal{F}),\]
i.e.,
\[\operatorname{\#CSP}_{r_1}(T\mathcal{EQ}_k, \neq_2, T^{\otimes r}f, T\mathcal{F})
\leq_T
\operatorname{\#CSP}_k(\neq_2, f, \mathcal{F}).\]
By induction, if $\{T\mathcal{EQ}_k, T\mathcal{F}, T^{\otimes r}f\}\nsubseteq\mathscr{P}$ and
$\{T\mathcal{EQ}_k, T\mathcal{F}, T^{\otimes r}f\}\nsubseteq\mathscr{A}_{r_1}^{d'}$ for any $d'\in[r_1]$,
then
$\operatorname{\#CSP}_{r_1}(T\mathcal{EQ}_k, \neq_2, T^{\otimes r}f, T\mathcal{F})$
is \#P-hard. Thus $\operatorname{\#CSP}_k(\neq_2, f, \mathcal{F})$ is \#P-hard.

Otherwise, if $\{T\mathcal{EQ}_k, T\mathcal{F}, T^{\otimes r}f\}\subseteq\mathscr{P}$, then $\mathcal{F}\subseteq\mathscr{P}$
since $T$ is a diagonal matrix. Moreover,
if $\{T\mathcal{EQ}_k\bigcup T\mathcal{F}\bigcup T^{\otimes r}f\}\subseteq\mathscr{A}_{r_1}^{d'}$ for some $d'\in[r_1]$,
let $T'=\left[\begin{smallmatrix}1 & 0\\
0  & \gamma^{d'}\end{smallmatrix}\right]$,
where
 $\gamma$ is a 4$r_1$-th primitive root of unary, so, $\gamma^{4r_1}=1$, then
 $T'^{\otimes k}T^{\otimes k}(=_k)\in\mathscr{A}$,
 $T'^{\otimes r}T^{\otimes r}f\in\mathscr{A}$
 and  $T'T\mathcal{F}\subseteq\mathscr{A}$.
 Firstly, by $T'^{\otimes k}T^{\otimes k}(=_k)\in\mathscr{A}$,
 we have $(\gamma^{d'k}a^{\frac{kt}{r_1}})^4=1$.
 This implies that $\gamma^{d'}a^{\frac{t}{r_1}}$ is a $4k$-th root of unity,
 i.e., there exists $d\in[k]$ such that $\gamma^{d'}a^{\frac{t}{r_1}}=\rho^{d}$, then
  $T'T=
 \left[\begin{smallmatrix}
 1 & 0\\
 0 & \rho^d
 \end{smallmatrix}\right]$.
 Thus $\mathcal{F}\subseteq \mathscr{A}_k^d$ and $f\in\mathscr{A}_k^d$.
 This finishes the proof.
\end{proof}



\begin{definition}
Let $f=(f_{i_1i_2\cdots i_n}), g=(g_{i_1i_2\cdots i_n})$ be two $n$-ary signatures,
then $fg$ is an $n$-ary signature and $(fg)_{i_1i_2\cdots i_n}=f_{i_1i_2\cdots i_n}g_{i_1i_2\cdots i_n}$ for any $i_1i_2\cdots i_n\in\{0, 1\}^n$.
In particular,  $f^k=(f^k_{i_1i_2\cdots i_n})$, and for a signature set $\mathcal{F}$,
$\mathcal{F}^k=\{f^k|f\in\mathcal{F}\}$.
\end{definition}

\begin{definition}
If $f$ has affine support of rank $r$, and $X=\{x_{j_1}, x_{j_2}, \cdots, x_{j_r}\}$ is a set of free variables, then $\underline{f_{X}}$ is the compressed signature of $f$ for $X$ such that $\underline{f_X}(x_{j_1}, x_{j_2}, \cdots, x_{j_r})=f(x_1, x_2, \cdots, x_n),$ where $(x_1x_2\cdots x_n)$ is in the support of $f$. When it is clear from the context, we omit $X$ and use $\underline{f}$ to denote $\underline{f_X}$.
\end{definition}

Note that if $f$ has affine support, then $f\in\mathscr{A}$ iff $\underline{f}\in\mathscr{A}$.

 Theorem~\ref{cspk-dic-thr} has been proved for $k=1, 2$.
We will prove it by induction on $k$.
The following lemma is the main tool to do the induction.
More precisely, the following lemma shows that if $f=f_1f_2\cdots f_{k'}$ for some $k'|k$, then we can simulate
$\#\operatorname{CSP}_{\frac{k}{k'}}(\neq_2, f)$ by $\#\operatorname{CSP}_k(\neq_2, f_1, f_2, \cdots, f_{k'})$.
If $f\notin\mathscr{P}$ and $f\notin\mathscr{A}_{\frac{k}{k'}}^{d'}$ for any $d'\in[\frac{k}{k'}]$, then by induction,
$\#\operatorname{CSP}_{\frac{k}{k'}}(\neq_2, f)$ is \#P-hard. Thus $\#\operatorname{CSP}_k(\neq_2, f_1, f_2, \cdots, f_{k'})$ is \#P-hard.

\begin{lemma}\label{induction}
Let $\mathcal{F}$ be a signature set. Signatures $f_1, f_2, \cdots, f_{k'}$ has the same arity and $f=f_1f_2\cdots f_{k'}$, where $k'|k$,
 then
\[\#\operatorname{CSP}_{\frac{k}{k'}}(f, \mathcal{F}^{k'})\leq_T\#\operatorname{CSP}_k(f_1, f_2, \cdots, f_{k'}, \mathcal{F}).\]
\end{lemma}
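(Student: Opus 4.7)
The plan is a direct gadget reduction. Using the bipartite reformulation in (\ref{two-forms}), view an instance of $\operatorname{\#CSP}_{k/k'}(f, \mathcal{F}^{k'})$ as a bipartite signature grid $\Omega=(H,\pi)$ whose LHS carries equality signatures of arity divisible by $k/k'$ and whose RHS carries either $f$ or some $g^{k'}$ with $g\in\mathcal{F}$. From $\Omega$ I will construct, in polynomial time, a signature grid $\Omega'$ of $\operatorname{\#CSP}_k(f_1,\ldots,f_{k'},\mathcal{F})$ with $\Holant_{\Omega'}=\Holant_{\Omega}$.

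The construction is to ``$k'$-thicken'' $\Omega$. First I would replace every edge $e$ of $\Omega$ by $k'$ parallel copies $e^{(1)},\ldots,e^{(k')}$. Each RHS vertex $v$ labeled by $f$ and incident to edges $e_1,\ldots,e_n$ is replaced by $k'$ new RHS vertices $v^{(1)},\ldots,v^{(k')}$, where $v^{(i)}$ is labeled by $f_i$ and is incident to $e_1^{(i)},\ldots,e_n^{(i)}$. Analogously, each RHS vertex labeled by $g^{k'}$ becomes $k'$ copies all labeled by $g$, matched by the index $i$. Finally, each LHS equality vertex $=_{mk/k'}$ incident to $e_1,\ldots,e_{mk/k'}$ in $\Omega$ is replaced by a single equality $=_{mk}$ incident to all $mk$ edges $\{e_j^{(i)}\}_{i,j}$ in $\Omega'$; since $k'\mid k$, the arity $mk$ is a legitimate multiple of $k$.

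For correctness, the new LHS equalities force every $k'$-tuple $(e^{(1)},\ldots,e^{(k')})$ to take a common value, so nonzero-weight assignments $\sigma'$ of $\Omega'$ are in bijection with assignments $\sigma$ of $\Omega$ via $\sigma'(e^{(i)})=\sigma(e)$. Under this bijection, the contribution of each replaced $f$-vertex is $\prod_{i=1}^{k'}f_i(\sigma|_{E(v)})=f(\sigma|_{E(v)})$ by the hypothesis $f=f_1f_2\cdots f_{k'}$, and the contribution of each replaced $g^{k'}$-vertex is $\prod_{i=1}^{k'}g(\sigma|_{E(v)})=g(\sigma|_{E(v)})^{k'}=g^{k'}(\sigma|_{E(v)})$. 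Summing over $\sigma$ gives $\Holant_{\Omega'}=\Holant_{\Omega}$. The reduction is local, many-one, and polynomial-time, so there is no serious technical obstacle; the only bookkeeping is to ensure that the LHS arities $mk$ created in $\Omega'$ lie in $\mathcal{EQ}_k$, which is immediate from $k'\mid k$.
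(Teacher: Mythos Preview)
Your proposal is correct and is essentially the same as the paper's proof: both ``$k'$-thicken'' the instance by replacing each variable (equivalently, each edge in the bipartite view) by $k'$ copies, turning each LHS $=_{m k/k'}$ into $=_{mk}$ and splitting each RHS $f$-node into $f_1,\ldots,f_{k'}$ and each $g^{k'}$-node into $k'$ copies of $g$. The paper's proof is stated in three lines at the variable level while you work explicitly in the bipartite Holant picture and spell out the bijection, but the construction and correctness argument are identical.
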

\begin{proof}
In an instance of $\#\operatorname{CSP}_{\frac{k}{k'}}(f, \mathcal{F}^{k'})$,
by expanding each variable $x$ to $k'$ copies of $x$, replacing each $h^{k'}\in\mathcal{F}^{k'}$ by $k'$
copies of $h$, and replacing $f$  by $f_1, f_2, \cdots, f_{k'}$, we get an instance of $\#\operatorname{CSP}_k(f_1, f_2, \cdots, f_{k'}, \mathcal{F})$
and its value is same as the value of the instance of $\#\operatorname{CSP}_{\frac{d}{d'}}(f, \mathcal{F}^{k'})$.
This finishes the proof.
\end{proof}

If  the support of $f$ is not affine, then the support of $f^k$ is not affine.
By Lemma~\ref{induction},
\[\#\operatorname{CSP}(f^k)\leq_T\#\operatorname{CSP}_k(f).\]
Moreover, by Theorem~\ref{csp-dic}, $\#\operatorname{CSP}(f^k)$ is \#P-hard. Thus
$\#\operatorname{CSP}_k(f)$ is \#P-hard
and we finish the proof of Theorem~\ref{cspk-dic-thr}.
So in the following, we assume all the signatures have affine supports.

By the definition of $\mathscr{P}$, every signature $f\in\mathscr{P}$ has a decomposition as a product of signatures
over disjoint of variables, where each factor has support contained in a pair of antipodal points: There exists a partition
$X=\{x_1, x_2, \cdots, x_n\}=\bigcup_{j=1}^{\ell}X_j$, and a signature $f_j$ on $X_j$ such that
$f(X)=\prod_{j=1}^{\ell}f_j(X_j)$, and for all $1\leq j\leq \ell,$ the support of $f_j$
is contained in $\{\alpha_j, \bar{\alpha_j}\}$ for some $\alpha_j\in\{0, 1\}^{|X_j|}$.

The proof of the following lemma totally follows the the proof of Lemma~4.8 and Lemma~4.9 of \cite{CLX-HOLANTC}.
We just generalize it from $k=2$ to general $k$.
\begin{lemma}\label{non-product}
If $f\nsubseteq\mathscr{P}$ but $f^k\in\mathscr{P}$, then for any $d$, in $\#\operatorname{CSP}_{k}(\mathcal{F})$, we
can construct $h_1, h_2, \cdots, h_d$ such that $h=h_1h_2\cdots h_d\notin\mathscr{P}$, i.e.,
\[\#\operatorname{CSP}_{k}(h, \mathcal{F})\leq_T\#\operatorname{CSP}_{k}(\mathcal{F}).\]
\end{lemma}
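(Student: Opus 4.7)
I exploit the hypothesis $f^k\in\mathscr{P}$ to observe that the $\{0,1\}$-indicator of $\mathscr{S}(f)$ itself lies in $\mathscr{P}$ and is realizable in $\#\operatorname{CSP}_k(\mathcal{F})$. This indicator then serves as a ``padding'' factor that multiplies $f$ pointwise without altering it, so that an arbitrary-length product of factors collapses to $f$ itself, which by hypothesis is already not in $\mathscr{P}$.

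Since $\mathscr{S}(f)=\mathscr{S}(f^k)$ and $f^k\in\mathscr{P}$, the support $\mathscr{S}(f)$ factors as a product $\prod_j S_j$, where $X=\bigsqcup_j X_j$ is the $\mathscr{P}$-block decomposition of $f^k$ and each $S_j$ is one of $\{\alpha_j\}$, $\{\bar{\alpha}_j\}$, or $\{\alpha_j,\bar{\alpha}_j\}$. Let $\chi_f$ be the $\{0,1\}$-indicator of $\mathscr{S}(f)$. Then $\chi_f=\prod_j\chi_{S_j}$, and each $\chi_{S_j}$ is either a product of unary pinnings (when $S_j$ is a single point) or a product of pairwise $(=_2)$ and $(\neq_2)$ constraints among the variables of $X_j$, determined by whether the corresponding bits of $\alpha_j$ agree. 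In every case $\chi_f\in\mathscr{P}$.

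Next I realize $\chi_f$ inside $\#\operatorname{CSP}_k(\mathcal{F})$. The $(\neq_2)$ factors are directly available on the RHS, and the unary pinnings $\Delta_0,\Delta_1$ are supplied by Lemmas~\ref{pinning} and~\ref{remove-duplicate-pinning}. The binary equality $(=_2)(x_a,x_b)$, not present in $\mathcal{EQ}_k$ for $k\geqslant 3$, is realized by introducing a single auxiliary CSP variable $z$ together with the two constraints $(\neq_2)(x_a,z)$ and $(\neq_2)(z,x_b)$: summation over $z\in\{0,1\}$ gives $\sum_z[x_a\neq z][z\neq x_b]=[x_a=x_b]$. Composing these building blocks block-by-block yields $\chi_f$ as a gadget in $\#\operatorname{CSP}_k(\mathcal{F})$.

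With $\chi_f$ in hand, for any $d\geqslant 1$ I set $h_1:=f$ and $h_2=h_3=\cdots=h_d:=\chi_f$. All $h_i$ share the variable set of $f$, so the pointwise product is well-defined. Because $\chi_f$ is $\{0,1\}$-valued with $\chi_f\equiv 1$ on $\mathscr{S}(f)$ and $f$ vanishes outside $\mathscr{S}(f)$, we have
\[
h \;=\; h_1h_2\cdots h_d \;=\; f\cdot\chi_f^{d-1} \;=\; f \;\notin\; \mathscr{P},
\]
as required. The induced reduction $\#\operatorname{CSP}_k(h,\mathcal{F})\leqslant_T\#\operatorname{CSP}_k(\mathcal{F})$ follows by replacing every $h$-vertex in an instance with the product gadget built from $h_1,\ldots,h_d$. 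The main technical obstacle---and the only place where the $k=2$ argument of \cite{CLX-HOLANTC} requires any generalization---is the arity bookkeeping for this product gadget: each CSP variable already carries a $(=_k)$ of arity $k$, so to accommodate $d$ simultaneous constraints on the same variable one inflates the local equality to $(=_{\ell k})\in\mathcal{EQ}_k$ for sufficiently large $\ell$ and disposes of the $\ell k-d$ spare slots using pinning or $(\neq_2)$-chains. This bookkeeping is routine and extends verbatim from $k=2$ to arbitrary $k$.
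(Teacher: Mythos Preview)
Your padding idea is appealing, but the construction of $\chi_f$ has a genuine gap at precisely the step you flag as ``routine bookkeeping.'' In $\#\operatorname{CSP}_k$ every variable must occur a multiple of $k$ times. Your auxiliary variable $z$ for the $(=_2)(x_a,x_b)$ gadget is used exactly twice (once in each $(\neq_2)$); for $k\geqslant 3$ this is illegal. Inflating $z$'s equality to $(=_{\ell k})$ does not help: pinning any spare slot of $z$ to $0$ or $1$ fixes $z$ itself (since all slots of $(=_{\ell k})$ are equal), so the intended equality $[x_a=x_b]$ degenerates to a pinning of both $x_a$ and $x_b$. Routing spare slots through $(\neq_2)$-chains only relocates the degree defect to the new chain variables, each of which now appears fewer than $k$ times. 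In short, you cannot manufacture a free binary $(=_2)$ constraint in $\#\operatorname{CSP}_k(\neq_2,\mathcal{F})$ for $k\geqslant 3$ by these means, and without it your $\chi_f$ is not realizable in the generality you need.

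The paper sidesteps this obstacle by first extracting (via the argument of Lemmas~4.8--4.9 of \cite{CLX-HOLANTC}, with $2$ replaced by $k$) a rank-$2$ signature $g$ whose compressed form is $(1,a,b,-ab)$ with $ab\neq 0$; this uses only pinning, which \emph{is} available. Pinning $g$ further yields unary pieces $u_1,u_2$ with compressed forms $(1,a)$ and $(1,b)$, and their tensor product $u=u_1\otimes u_2$ has the same arity as $g$. Taking $h_1=g$ and $h_2=\cdots=h_d=u$ gives $h$ with compressed form $(1,a^d,b^d,-a^d b^d)\notin\mathscr{P}$. The point is that the padding factor $u$ is built from $g$ by pinning alone---no $(=_2)$ is ever needed---so every $h_i$ is a legitimate gadget in $\#\operatorname{CSP}_k$.
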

\begin{proof}
Just replacing $2$ by $k$ in the proof of Lemma~4.8 and Lemma~4.9 of \cite{CLX-HOLANTC},
we can construct a rank-2 signature $g$ from $f$ in $\#\operatorname{CSP}_{k}(\mathcal{F})$
such that $g$ has the support
$(x_1)_{k_1}(x_2)_{k_2}$ and $\underline{g}=(1, a, b, -ab)$ up to a nonzero scalar, where $ab\neq 0$.
By pinning all the $x_2=0$, we get a signature $u_1$ has the form $(x_1)_{k_1}$
and $\underline{u_1}=(1, a)$,  and by pinning all the $x_1=0$, we get a signature $u_2$ has the form $(x_2)_{k_2}$
and $\underline{u_2}=(1, b)$.
Let $u=u_1\otimes u_2$.
For any $d$, let $h_1=g, h_2=\cdots=h_k=u$, then
$h=h_1h_2\cdots h_d$ has the compressed signature $(1, a^k, b^k, -a^kb^k)$ that is not in $\mathscr{P}$.
\end{proof}

Since $\mathcal{F}\nsubseteq\mathscr{P}$, by Lemma~\ref{non-product} and Lemma~\ref{induction}, we have
\[\#\operatorname{CSP}(h, \neq_2, \mathcal{F}^k)\leq \#\operatorname{CSP}_k(\neq_2, \mathcal{F}),\]
where $h\notin\mathscr{P}$.
If $\mathcal{F}^k\nsubseteq\mathscr{A}$, then $\#\operatorname{CSP}(h, \neq_2, \mathcal{F}^k)$ is \#P-hard by Theorem~\ref{csp-dic}.
This implies that $\#\operatorname{CSP}_k(\neq_2, \mathcal{F})$ is \#P-hard.
So in the following, we assume that $\mathcal{F}^k\subseteq\mathscr{A}$.

\begin{lemma}\label{rho-power}\cite{Cai-Fu-csp}
If a signature $f$ has affine support and $f(x_1, x_2, \cdots, x_n)$ is a power of $\rho$ for any $(x_1x_2\cdots x_n)$ in the support of $f$, then
 there exists multilinear polynomial $Q(x_{i_1}, x_{i_2}, \cdots, x_{i_r})$
such that $f(x_1, x_2, \cdots, x_n)=\rho^{Q(x_{i_1}, x_{i_2}, \cdots, x_{i_r})}$, where
$\{x_{i_1}, x_{i_2}, \cdots, x_{i_r}\}$ is a set of free variables.
\end{lemma}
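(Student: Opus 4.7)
The plan is to reduce the statement to the classical fact that every function on the Boolean cube has a unique multilinear representation over any commutative ring, applied here with the ring $\mathbb{Z}_{4k}$ indexing the powers of $\rho$.

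First I would choose free variables. Let $S \subseteq \{0,1\}^n$ be the affine support of $f$ and let $r$ be its affine rank. Pick indices $i_1, \ldots, i_r$ so that the projection $S \to \{0,1\}^r$ given by $(x_1, \ldots, x_n) \mapsto (x_{i_1}, \ldots, x_{i_r})$ is a bijection; the remaining coordinates are $\mathbb{Z}_2$-affine functions of the free variables. Because $\rho$ is a primitive $4k$-th root of unity, its powers form a cyclic group of order $4k$, so I can define $g : \{0,1\}^r \to \mathbb{Z}_{4k}$ by the relation $f = \rho^{g(a_1, \ldots, a_r)}$ at the unique support point with $(x_{i_1}, \ldots, x_{i_r}) = (a_1, \ldots, a_r)$; this is well-defined because every value of $f$ on $S$ is a power of $\rho$.

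Next I would construct a multilinear polynomial $Q$ with coefficients in $\mathbb{Z}_{4k}$ that agrees with $g$, using Möbius inversion. Set
\[Q(y_1, \ldots, y_r) = \sum_{T \subseteq [r]} c_T \prod_{j \in T} y_j, \qquad c_T \equiv \sum_{U \subseteq T} (-1)^{|T|-|U|} \, g(\mathbf{1}_U) \pmod{4k},\]
where $\mathbf{1}_U \in \{0,1\}^r$ is the indicator of $U$. A direct inclusion-exclusion on the Boolean lattice yields $Q(\mathbf{1}_V) = g(\mathbf{1}_V)$ for every $V \subseteq [r]$, so $Q$ represents $g$ pointwise on $\{0,1\}^r$. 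Hence, on the support of $f$, we have $f(x_1, \ldots, x_n) = \rho^{Q(x_{i_1}, \ldots, x_{i_r})}$, which is the desired formula.

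The only substantive step is the Möbius--Zhegalkin inversion, and it is entirely classical: the $2^r$-dimensional free $\mathbb{Z}_{4k}$-module of functions $\{0,1\}^r \to \mathbb{Z}_{4k}$ is spanned by the $2^r$ multilinear monomials, and the change-of-basis matrix from the monomial basis to the point-evaluation basis is unimodular (upper-triangular with $\pm 1$ on the diagonal with respect to the subset-lattice order), so the inversion is valid over any commutative ring. The lemma has no interaction with the Holant-specific machinery---it is a pure statement about $\mathbb{Z}_{4k}$-valued functions on Boolean domains---so I do not anticipate any serious obstacle.
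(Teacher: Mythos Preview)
Your argument is correct. The paper does not actually prove this lemma; it is quoted from \cite{Cai-Fu-csp} and used as a black box, so there is no ``paper's own proof'' to compare against. Your reduction to the standard M\"obius/Zhegalkin expansion over $\mathbb{Z}_{4k}$ is exactly the right idea and is the canonical way to establish such a statement: once the free variables parametrize the affine support bijectively, the exponent function $g:\{0,1\}^r\to\mathbb{Z}_{4k}$ is well-defined because $\rho$ has order $4k$, and every such function is a multilinear polynomial with coefficients in $\mathbb{Z}_{4k}$ by the inclusion--exclusion formula you wrote. Lifting the coefficients from $\mathbb{Z}_{4k}$ to integer representatives then gives a polynomial in $\mathbb{Z}[x_{i_1},\ldots,x_{i_r}]$, matching how the paper subsequently uses the lemma.
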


Since $\mathcal{F}^k\subseteq\mathscr{A}$, for any signature $f\in\mathcal{F}$ of rank $r$, without loss of generality,
assume that $\{x_1, x_2,\cdots, x_r\}$ is a set of free variables of $f$,
by Lemma~\ref{rho-power},
there exists a multilinear polynomial $Q(x_1, x_2, \cdots, x_r)\in\mathbb{Z}[X]$
such that $\underline{f}(x_1, x_2, \cdots, x_r)=\rho^{Q(x_1, x_2, \cdots, x_r)}$ up to a scalar.


\begin{definition}
Suppose $f$ has affine support of rank $r$ with $\{x_1, x_2, \cdots, x_r\}$ as a  set of free variables. We use all non-empty combinations $\sum_{j=1}^ra_jx_j (a_j\in\mathbb{Z}_2,$ not all zero$)$ of $x_1, x_2, \cdots, x_r$ as the names of bundles of $f$. The type of each bundle is a possibly empty multiset of $``+"$ and $``-"$, and is defined as follows: For every input variable $x_k (1\leq k\leq n)$ of $f$ there is a unique bundle named $\sum_{j=1}^ra_jx_j $ such that on the support of $f$, $x_k$ is either always equal to $\sum_{j=1}^ra_jx_j \pmod 2$ or always equal to $\sum_{j=1}^ra_jx_j+1 \pmod 2$. In the former case we add a $``+"$, and in the latter case we add a $``-"$ to the bundle type for the bundle named $\sum_{j=1}^ra_jx_j $, and we say the variable $x_k$  belong to this bundle.

All input variables are partitioned into bundles.
If the number of variables in each bundle is multiple $\ell$ for some integer $\ell$, then we say $f$ has $\ell$-type support.
\end{definition}
We can list a signature's input variables, by listing all its non-empty bundles followed by the bundle type. For example, $f(x_1(++), x_2(++), (x_1+x_2)(--))$ has rank 2, arity 6 and its support is 2-type.

Connecting one variable $x_i$ of a signature to $(=_k)$ using $(\neq_2)$,
is equivalent to replace $x_i$ by $(k-1)$ copies of $\bar{x_i}$.
We call this operation is $(k-1)$-multiple.
If the variables in the same bundle is greater than $k$, by connecting $d$ copies of them to $=_k$, we make these $k$ variables disappear. We call this operation is collation.

\begin{definition}
Let $S$ be a subset of $[k]$. If $f\in\mathscr{A}_k^d$ iff $k\in S$,
then we call $f$ is an $S$-affine signature.
In particular, if $S$ is exactly all the even (odd) numbers of $[k]$, then we call $f$ is an even-affine (odd-affine) signature.
\end{definition}

\begin{lemma}\label{connect-equality-d}
Let $f$ be a signature of affine support.
By doing $(k-1)$-multiple and collation to $f$ we get a new signature $f'$, then $f\in\mathscr{P}$ iff $g\in\mathscr{P}$; $f\in\mathscr{A}_k^d$ iff $g\in\mathscr{A}_k^d$ for any $d\in[k]$.
\end{lemma}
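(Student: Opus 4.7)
I will treat the two operations separately, and for each operation verify that it preserves membership in $\mathscr{P}$ and in $\mathscr{A}_k^d$ (for every $d\in[k]$) in both directions. Throughout, I will think of each operation in terms of its effect on the \emph{bundle structure} of the affine support and on the phase function sitting on that support.

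First, consider the $(k-1)$-multiple operation, which replaces a variable $x_i$ by $k-1$ new variables $y_1,\ldots,y_{k-1}$, with the induced constraint $y_1=\cdots=y_{k-1}=\bar{x_i}$ on the support. The new variables therefore join the bundle that contained $x_i$, but with the opposite sign, producing $f'$. For $\mathscr{P}$: the factor of $f$ containing $x_i$, whose support is a pair of antipodal points, simply acquires $k-1$ additional coordinates that are tied to $\bar{x_i}$, yielding a factor of $f'$ still supported on a pair of antipodal points; this gives $f\in\mathscr{P}\Rightarrow f'\in\mathscr{P}$. Conversely, any product-type factorization of $f'$ must keep $y_1,\ldots,y_{k-1}$ inside a single factor (since they form a bundle), and collapsing this bundle back to a single $x_i$ variable produces a product-type factorization of $f$.

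For $\mathscr{A}_k^d$, I compare the $T_k^d$-twisted phases on the supports of $f$ and $f'$. On $f$ the $x_i$-coordinate contributes a factor $\rho^{d\,x_i}=\rho^{d(1-y)}=\rho^d\rho^{-dy}$, whereas the $y_j$-coordinates in $T_k^df'$ contribute $\rho^{d(k-1)y}$. Their ratio is $\rho^{-d}\bigl(\rho^{dk}\bigr)^{y}$. Since $\rho$ is a $4k$-th primitive root of unity, $\rho^{dk}$ is a fourth root of unity (a power of $\ii$), so the correction is $\ii^{\text{const}+(\text{linear in }y)}$. Thus the quadratic multilinear polynomial $Q$ in the $\mathscr{A}$-description of $T_k^df$ absorbs the change cleanly (no new cross terms, so the ``even cross coefficient'' constraint is automatic), and $T_k^df\in\mathscr{A}\Leftrightarrow T_k^df'\in\mathscr{A}$.

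Next, consider the collation operation, which removes $k$ same-sign variables $x_{j_1},\ldots,x_{j_k}$ from a single bundle by tying them (via $\neq_2$) to one copy of $=_k$. On the support of $f$ these $k$ variables take a common value $b$ expressible as a $\mathbb{Z}_2$-linear form in the free variables, and the $(=_k)$-vertex evaluates to $1$ on $\bar b,\ldots,\bar b$ with no phase contribution; hence the resulting signature $f'$ is the restriction of $f$ obtained by collapsing those $k$ slots. For $\mathscr{P}$, the bundle merely shrinks and the product-type factor is preserved in both directions. For $\mathscr{A}_k^d$, removing those $k$ coordinates subtracts $\rho^{d\cdot k\cdot b}=(\rho^{dk})^b$ from the $T_k^d$-phase; as above $\rho^{dk}$ is a fourth root of unity, so this is $\ii^{(\text{linear in free variables})}$ and is absorbed into the $\ii^{Q}$ factor of $\mathscr{A}$. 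Conversely $f$ is recovered from $f'$ by inserting a fresh bundle of $k$ same-sign variables equal to the bundle value, which is again an affine (indeed $\mathscr{A}$-preserving) operation.

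The main obstacle, as the sketch reveals, is not logical but bookkeeping: one must verify that every phase shift introduced or removed by these operations lies in the subgroup $\langle\ii\rangle$ so that it can be absorbed as a linear term in the polynomial $Q$ without creating odd-coefficient cross terms. The clean fact that $\rho^{dk}\in\{\pm 1,\pm\ii\}$, coming from $\rho$ being $4k$-th primitive, is precisely what makes this absorption automatic, and this is the only place where the specific value of $k$ (and the definition of $\mathscr{A}_k^d$ via $T_k^d$) is used.
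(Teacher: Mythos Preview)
Your argument is essentially correct, but it takes a genuinely different route from the paper's. The paper gives a two-line closure argument: both $(\neq_2)$ and $(=_k)$ lie in $\mathscr{P}$ and in $\mathscr{A}_k^d$ for every $d\in[k]$, so the forward direction ($f\in\mathscr{C}\Rightarrow g\in\mathscr{C}$) is immediate from closure of these classes under gadget construction. For the converse, the paper observes that the operation is \emph{reversible by the same ingredients}: taking the $k-1$ copies of $\bar{x}$ in $g$ and connecting them through $k-1$ copies of $(\neq_2)$ to a fresh $(=_k)$ leaves one dangling edge equal to $x$, recovering $f$ exactly; hence $g\in\mathscr{C}\Rightarrow f\in\mathscr{C}$ by the same closure. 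Collation is handled ``similarly''.

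Your approach instead computes the $T_k^d$-twisted phase directly and uses $\rho^{dk}\in\langle\ii\rangle$ to absorb the discrepancy into $Q$ (and the overall scalar $\lambda$). This is valid and makes the mechanism explicit, at the cost of more bookkeeping. Two small points to tighten: (i) your reverse $\mathscr{P}$ argument for the $(k-1)$-multiple (``any product-type factorization of $f'$ must keep $y_1,\ldots,y_{k-1}$ inside a single factor'') is not literally true, since the $\mathscr{P}$ definition permits variable reuse across binary factors; the cleanest fix is exactly the paper's reverse-gadget observation. (ii) The constant $\rho^{-d}$ you isolate is generally not a fourth root of unity, so it is absorbed into the scalar $\lambda$ of the affine form rather than into $Q$; your text slightly conflates the two. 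Neither point is fatal. The trade-off: the paper's proof is shorter, uniform across $\mathscr{P}$ and $\mathscr{A}_k^d$, and avoids any phase computation; your proof exposes precisely \emph{why} the specific diagonal twist $T_k^d$ (with $\rho$ a $4k$-th root) interacts well with arity-$k$ equalities.
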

\setlength{\unitlength}{5mm}
\begin{picture}(24,8)(-18,-2)\label{const}
\put(1,3){\line(2, 1){4}}
\put(1,3){\line(1,0){5}}
\put(-4,3){\line(1,0){5}}
\put(1,3){\line(2,-1){4}}
\put(-1.5,2.9){\tiny{$\blacksquare$}}
\put(-2.8,3.2){$x$}
\put(-0.3,3.2){$\bar{x}$}
\put(5,3.2){$\bar{x}$}
\put(3.8,4.8){$\bar{x}$}
\put(4.5,1.5){$\bar{x}$}
\put(-4.2,2.9){\small{$\blacktriangle$}}
\put(-4,3){\line(-1, 0){5}}
\put(-4,3){\line(-2, 1){4}}
\put(-4,3){\line(-2, -1){4}}
\put(0.9, 2.8){\Large{$\bullet$}}
\put(3.3,1.9){\Huge{$\vdots$}}
\put(-7,1.9){\Huge{$\vdots$}}
\put(-17.5,-0.5){Fig. 2 Transforming the variable $x$ to $k-1$ copies of $\bar{x}$ by connecting $(=_k)$ using $(\neq_2)$.
}
\put(-17.5,-1.5){The triangle, square and bullet are labeled by $f$, $(\neq_2)$ and $(=_k)$ respectively.}
\end{picture}

\begin{proof}
We prove the lemma for $(k-1)$-multiple operation for the case $f\in\mathscr{A}_k^d$.
Other cases are similar and we omit them here.
Note that $(\neq_2)\in\mathscr{A}_k^d$ and $(=_k)\in\mathscr{A}_k^d$ for each $d\in[k]$.
Thus if $f\in\mathscr{A}_k^d$ for some $d$, then $g\in\mathscr{A}_k^d$ by the closure of $\mathscr{A}$.
Conversely, note that the variable $x$ of $f$ which is connected to $(=_k)$ using $(\neq_2)$ is flipped into $(k-1)$ copies of $\bar{x}$ in $g$.
By connecting these $(k-1)$ copies of $\bar{x}$ to $(=_k)$ using $(k-1)$ copies of $(\neq_2)$, we get $f'$ that is same as $f$.
Thus if $g\in\mathscr{A}_k^d$, then $f'\in\mathscr{A}_k^d$ and so does $f$.
\end{proof}

By Lemma~\ref{connect-equality-d}, we can flip the variable $\bar{x}$  to $x$
in the support of $f$ at the same time. Thus we can assume that there is no $\bar{x}$.
\begin{definition}
Let $f$ be an $n$-ary signature which has support of rank $r$.
If there exists a set of free variables, without loss of generality, assume that it is  $\{x_{1}, x_{2}, \cdots, x_{r}\}$,
such that all the bundles have type ``+", i.e., \[f(x_1(\underbrace{++\cdots+}_{n_1})x_2(\underbrace{++\cdots+}_{n_2})\cdots (x_1+x_2+\cdots+x_r)(\underbrace{++\cdots+}_{n_{12\cdots r}})),\] where $n_1+n_2+\cdots+n_{12\cdots r}=n$,
then we call $f$ is monotone. And we denote its support
by
  \[(x_1)_{n_1}(x_2)_{n_2}\cdots (x_r)_{n_r}(x_1+x_2)_{n_{12}}\cdots(x_1+x_2+\cdots +x_r)_{n_{12\cdots r}}.\]
\end{definition}


%

\begin{lemma}\label{monotone}
If $f\notin\mathscr{A}_k^d$ for some $d\in[k]$, then 
we have
\[\#\operatorname{CSP}_k(\neq_2, g, \mathcal{F})\leq_T\#\operatorname{CSP}_k(\neq_2, f, \mathcal{F}), \]
where $g$
is monotone, $g\nsubseteq\mathscr{A}_k^d$ and has the same rank as $f$.
\end{lemma}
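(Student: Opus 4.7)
The plan is to construct $g$ by applying only the $(k-1)$-multiple operation to every variable of $f$ whose bundle type is ``$-$''. Fix a set of free variables $\{x_1,\ldots,x_r\}$ for the affine support of $f$, and for each nonempty $A\subseteq[r]$ let $B_A=\sum_{j\in A}x_j$ be the corresponding bundle name. Each input variable of $f$ belongs to a unique bundle with a type of either ``$+$'' or ``$-$''.

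For every variable $x_i$ of $f$ of type ``$-$'' in its bundle $B_A$, I will apply the $(k-1)$-multiple gadget: attach $x_i$ to one port of a fresh $(=_k)$ via a single $(\neq_2)$. As explained right after Lemma~\ref{connect-equality-d} and illustrated in Fig.~2, this replaces $x_i$ by $k-1$ copies of its complement $\bar{x_i}$. Since $x_i\equiv B_A+1\pmod 2$ on the support of $f$, each new copy satisfies $\bar{x_i}\equiv B_A\pmod 2$, so they all sit in the bundle $B_A$ with type ``$+$''. Performing this in turn on every ``$-$''-typed variable of $f$ produces a signature $g$ in which every bundle has only ``$+$''-typed variables, i.e.\ $g$ is monotone.

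It remains to check three properties. (i) Rank: the set of free variables is never touched, only dependent variables (and their multiplicities) are altered, so $\rd(g)=\rd(f)=r$. (ii) Non-membership: Lemma~\ref{connect-equality-d} says the $(k-1)$-multiple preserves both $\mathscr{P}$- and $\mathscr{A}_k^d$-membership \emph{in both directions}, so applying it finitely many times still preserves non-membership; hence $g\notin\mathscr{A}_k^d$ because $f\notin\mathscr{A}_k^d$. (iii) Reducibility: by the equivalence~(\ref{two-forms}) the signatures $(=_k)$ and $(\neq_2)$ are freely available on both sides in $\#\operatorname{CSP}_k(\neq_2,f,\mathcal{F})$, so each gadget used in the construction is legal, and therefore $g$ is realizable from $f$ in this framework, giving
\[\#\operatorname{CSP}_k(\neq_2,g,\mathcal{F})\leq_T \#\operatorname{CSP}_k(\neq_2,f,\mathcal{F}).\]

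There is no real obstacle: the construction is local (one variable at a time), and Lemma~\ref{connect-equality-d} already does the algebraic work of showing the relevant tractability classes are closed under the operation. The only thing to be careful about is bookkeeping of bundle types, which is immediate because changing the multiplicity of one variable within a bundle does not affect the type of any other variable.
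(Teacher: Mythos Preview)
Your proposal is correct and follows essentially the same approach as the paper: apply the $(k-1)$-multiple operation to each variable of type ``$-$'' so as to flip it into $k-1$ copies of its complement (type ``$+$''), and invoke Lemma~\ref{connect-equality-d} to preserve non-membership in $\mathscr{A}_k^d$ and rank. The paper's proof is terser but identical in substance.
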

\begin{proof}
Without loss of generality, assume that $\{x_1, x_2, \cdots, x_r\}$ is a set of free variables of $f$.
If there exists variable $x_i$ such that $x_i=a_1x_1+a_2x_2+\cdots +a_rx_r+1$ in the support of $f$,
then a ($k-1$)-multiple to $x_i$, we get a new signature $g$ which is not in
  $\mathscr{A}_k^d$ by Lemma~\ref{connect-equality-d}.
  Moreover, $g$ has affine support and $\{x_1, x_2, \cdots, x_r\}$ is a set of free variables and
  the variable $x_i$ is flipped into $k-1$ copies of  $\bar{x}_i=a_1x_1+a_2x_2+\cdots +a_rx_r$.
\end{proof}

By Lemma~\ref{monotone}, in the following we can assume that all the signatures are monotone.
The following lemma will simplify the signatures further by reducing ranks.

\begin{lemma}\label{rank-reducing}
Let $f$ be a signature and $\mathcal{F}$ be a signature set.
If $f\notin\mathscr{A}_k^d$ for some $d\in[k]$, then we have
\[\#\operatorname{CSP}_k(g, \mathcal{F}, [1, 0], [0, 1])\leq_T\#\operatorname{CSP}_k(f, \mathcal{F}, [1, 0], [0, 1]), \]
where $g\notin\mathscr{A}_k^d$ and $g$ has rank at most 3.
\end{lemma}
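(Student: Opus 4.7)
The plan is to induct on the rank $r$ of $f$. For the base case $r \leq 3$ we simply set $g = f$. For $r \geq 4$, the aim is to exhibit a single pinning (realized by attaching $[1,0]$ or $[0,1]$ to some physical variable of $f$) whose result $f'$ has rank $r-1$ and still satisfies $f' \notin \mathscr{A}_k^d$; since pinning one physical variable cuts the affine support by exactly one independent linear constraint, $f'$ has rank exactly $r - 1$. The induction hypothesis then produces a rank-$\leq 3$ signature $g \notin \mathscr{A}_k^d$ that is reducible from $f$ using $[1,0]$, $[0,1]$ and $\mathcal{F}$, completing the proof.

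The crux is the existence statement: if $r \geq 4$ and every one-variable pinning $f|_{x_i = c}$ (for $i \in [r]$, $c \in \{0,1\}$) lies in $\mathscr{A}_k^d$, then $f$ itself lies in $\mathscr{A}_k^d$. I would prove its contrapositive using the normal-form machinery from the preceding lemmas. Since $\mathcal{F}^k \subseteq \mathscr{A}$, $f$ has affine support of rank $r$; after the monotonization of Lemma~\ref{monotone}, Lemma~\ref{rho-power} yields $\underline{f}(x_1, \ldots, x_r) = \rho^{P(x_1, \ldots, x_r)}$ (up to a scalar) for some multilinear $P \in \mathbb{Z}_{4k}[x_1, \ldots, x_r]$. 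Let $W(x_1, \ldots, x_r) = \sum_{\emptyset \neq S \subseteq [r]} n_S(\bigoplus_{i \in S} x_i)$ be the multilinear weight polynomial of the support. Using $\rho^k = \ii$, the condition $f \in \mathscr{A}_k^d$ (i.e., $T_k^d f \in \mathscr{A}$) translates to the congruence $P + d \cdot W \equiv k \cdot Q \pmod{4k}$ for some multilinear quadratic $Q$ with all cross-term coefficients even.

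The hypothesis that each pinning is in $\mathscr{A}_k^d$ supplies, for every $(i, c)$, such a quadratic $Q^{i,c}$ for the restricted exponent. From the multilinear identity $P = (1 - x_i) P|_{x_i = 0} + x_i P|_{x_i = 1}$ (and similarly for $W$), I reconstruct $P + dW$ modulo $4k$ as an explicit combination of $k Q^{i,0}$ and $k Q^{i,1}$, producing a multilinear polynomial of degree at most three with $P + dW \equiv k \tilde Q \pmod{4k}$. Two things must then be checked: (i) the cubic part of $\tilde Q$ vanishes, and (ii) every cross-term coefficient of the quadratic part is even. I run the reconstruction for two distinct indices $i \neq j$, which is possible because $r \geq 4$ leaves cross-terms unaffected by the chosen pin. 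Coefficients of quadratic cross terms $x_\ell x_m$ with $\{\ell, m\} \cap \{i\} = \emptyset$ are inherited directly from $Q^{i, c}$ and hence are even; cubic coefficients $x_\ell x_m x_p$ are pinned down by comparing the reconstructions across indices and are forced to $0 \pmod{4k}$. This yields $f \in \mathscr{A}_k^d$, contradicting $f \notin \mathscr{A}_k^d$.

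The main obstacle is the verification of (i) and (ii), and the indispensability of the bound $r \geq 4$: with $r = 3$ a single pinning eliminates half of the variables of a cubic monomial, leaving a quadratic which may trivially satisfy the even-cross-term condition even when the cubic coefficient of $P + dW$ is not $0 \pmod{4k}$; with $r = 2$ textbook examples such as $\ii^{x_1 x_2}$ show that all one-variable pinnings can live in $\mathscr{A}$ while the signature itself does not. This is precisely why the lemma reduces only to rank at most $3$, and why the induction must be stopped there and handed off to the rank-$\le 3$ case analysis in the subsequent parts of the paper.
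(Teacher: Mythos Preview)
Your approach is correct and genuinely different from the paper's. The paper does not induct on the rank; instead it performs the holographic transformation by $T_k^d$ once, writes the (transformed) compressed signature as $\rho^{\widehat Q}$, and then locates a single ``bad'' coefficient in $\widehat Q$: a linear coefficient $a_i\not\equiv 0\pmod k$, or a quadratic coefficient $a_{j\ell}\not\equiv 0\pmod{2k}$, or a minimal-degree monomial in the degree-$\ge 3$ part with coefficient $\not\equiv 0\pmod{4k}$. It then pins \emph{all} other free variables at once (to $0$, respectively to $0$ and~$1$ appropriately) to land directly on a rank~$1$, $2$, or $3$ witness.

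Your route replaces this one-shot pinning by an inductive local-to-global principle: if $r\ge 4$ and every pinning $f|_{x_i=c}$ of a free variable lies in $\mathscr{A}_k^d$, then $f\in\mathscr{A}_k^d$. That principle is true, and in fact admits a cleaner verification than the $\tilde Q$-reconstruction you sketch. Writing $\widehat Q=\sum_S b_S x^S$, the hypothesis on $f|_{x_i=0}$ gives, for every $S$ with $i\notin S$, the congruence $b_S\equiv 0$ modulo $k$, $2k$, $4k$ according as $|S|=1,2,\ge 3$; combining with $f|_{x_i=1}$ gives the same for $b_{S\cup\{i\}}$. For any $T$ with $|T|<r$ pick $i\notin T$ and read off the condition for $b_T$ directly; for $T=[r]$ use $S=T\setminus\{i\}$ with $|S|=r-1\ge 3$. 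This bypasses the need to reconcile two different $\tilde Q^{(i)}$'s and makes your items (i) and (ii) immediate.

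What each approach buys: the paper's argument is a single constructive step and makes the role of the three congruence thresholds ($k$, $2k$, $4k$) completely explicit; yours isolates a clean closure statement (``rank $\ge 4$ and all one-variable pinnings in $\mathscr{A}_k^d$ implies $f\in\mathscr{A}_k^d$'') that may be reusable elsewhere, at the cost of iterating $r-3$ times. Both rely on the same background assumptions already in force at this point of the paper (affine support and $\mathcal{F}^k\subseteq\mathscr{A}$, so that $\underline f$ has the $\rho$-exponential form), and both pin only free variables, so the rank drop by exactly one is unproblematic.
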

\begin{proof}
Without loss of generality, assume $\{x_1, x_2, \cdots, x_r\}$ is a set of free variables of $f$,
and $\underline{f}(x_1, x_2, \cdots, x_r)=\rho^{Q(x_1, x_2, \cdots, x_r)}$
 by Lemma~\ref{rho-power} up to a scalar, where $Q(x_{1}, x_{2}, \cdots, x_{r})$ is a multilinear polynomial.
By the holographic transformation using $T_k^d=\left[\begin{smallmatrix}
1 & 0\\
0 & \rho^k
\end{smallmatrix}\right]$, we have
\[
\operatorname{Holant}(E^d_k|\widehat{f}, \widehat{\mathcal{F}}, [1, 0], [0, 1])\equiv_T\#\operatorname{CSP}_k(f, \mathcal{F}, [1, 0], [0, 1]),
\]
where $\widehat{f}=(T_k^d)^{\otimes arity(f)}f$, $\widehat{\mathcal{F}}=T_k^d\mathcal{F}$, and $E^d_k={\mathcal EQ}_k(T_k^d)^{-1}$.
Note that $\widehat{f}$ has the same support as $f$ and the ratio of $f(x_1, x_2, \cdots, x_n)$ and $\widehat{f}(x_1, x_2, \cdots, x_n)$
is a power of $\rho$ for any $(x_1x_2\cdots x_n)$ in the support of $f$.
Thus
 there exists a multilinear polynomial $\widehat{Q}(x_1, x_2, \cdots, x_r)$
such that $\underline{\widehat{f}}(x_1, x_2, \cdots, x_r)=\rho^{\widehat{Q}(x_1, x_2, \cdots, x_r)}$.
Assume that
\begin{equation}\label{multylinear}
Q(x_1, x_2, \cdots, x_r)=\sum_{1\leq i\leq r}a_ix_i+\sum_{1\leq j, \ell\leq r}a_{j\ell}x_jx_{\ell}+P(x_1, x_2, \cdots, x_r),
\end{equation}
where $P(x_1, x_2, \cdots, x_r)$ is a polynomial and all the terms have power at least 3.
In (\ref{multylinear}),
\begin{itemize}
\item If there exists $a_i\not\equiv 0\pmod k$, we pin all free variables to 0 by $[1, 0]$ except $x_i$, then we get a rank-1 signature
 that is not in $\mathscr{A}$.

\item If there exists $a_{j\ell}\not\equiv 0\pmod {2k}$, we pin all free variables to 0 by $[1, 0]$ except $x_j, x_{\ell}$, then we get a rank-2 signature
that is not in $\mathscr{A}$.

\item Finally, if $P(x_1, x_2, \cdots, x_r)\not\equiv 0 \pmod{4k}$, suppose the monomial $M$ has the minimum degree, among all monomials in $P$
whose coefficient that is nonzero
modulo $4k$. We pin all free variables which are not in $M$ to 0 by $[1, 0]$ and pin the variables in $M$ to 1
by $[0, 1]$
except 3 of them, then we get a rank-3 function
that is not in $\mathscr{A}$.
\end{itemize}
If $a_i\equiv 0\pmod k$, $a_{j\ell}\equiv 0\pmod {2k}$ for all $a_i$ and $a_{j\ell}$, and $P(x_1, x_2, \cdots, x_r)\equiv 0\pmod{4k}$,
then $\widehat{f}\in\mathscr{A}$. This is a contradiction.
In total, we always can get a non-affine signature of degree at most 3 in
$\operatorname{Holant}(E^d_k|\widehat{f}, \widehat{\mathcal{F}}, [1, 0], [0, 1]).$
This means we can get a signature $g\notin \mathscr{A}_k^d$  of degree at most 3 in
$\#\operatorname{CSP}_k(f, \mathcal{F}, [1, 0], [0, 1]).$
This finishes the proof.
\end{proof}

\begin{remark}
Let $f$ be a rank-3 signature with
\[\underline{f}(x_1, x_2, x_3)=\frak i^{a_1x_1+a_2x_2+a_3x_3+a_{12}x_1x_2+a_{13}x_1x_3+a_{23}x_2x_3+a_{123}x_1x_2x_3}\]
and the support $(x_1)_{k_1}(x_2)_{k_2}(x_3)_{k_3}(x_1+x_2)_{k_{12}}(x_1+x_3)_{k_{13}}(x_2+x_3)_{k_{23}}(x_1+x_2+x_3)_{k_{123}}$,
and $g$ be a signature with
\[\underline{g}(y_1, y_2)=\frak i^{b_1y_1+b_2y_2+a_3x_3+b_{12}y_1y_2}\]
and the support $(y_1)_{k'_1}(y_2)_{k'_2}(y_1+y_2)_{k'_{12}}$.
Assume that $k_{12}+k'_1<\ell k, k_{13}+k'_2<\ell k, k_{23}+k'_{12}<\ell k$ for some integer $\ell$.
By connecting the variable bundles $(y_1), (y_2), (y_1+y_2)$ of $g$ to
the variables bundles $(x_1+x_2), (x_1+x_3), (x_2+x_3)$ of $f$ by $(=_{\ell k})$
to get a new signature $f'$.
Note that \[\underline{f'}(x_1, x_2, x_3)=\underline{f}(x_1, x_2, x_3)\underline{g}(x_1+x_2, x_1+x_3, x_2+x_3),\] i.e.,
\[\underline{f'}(x_1, x_2, x_3)=\frak i^{a_1x_1+a_2x_2+a_3x_3+a_{12}x_1x_2+a_{13}x_1x_3+a_{23}x_2x_3+a_{123}x_1x_2x_3
+b_1(x_1+x_2)+b_2(x_1+x_3)+b_{12}(x_1+x_2)(x_1+x_3)}.\]
Let \[\underline{f'}(x_1, x_2, x_3)=\frak i^{c_1x_1+c_2x_2+c_3x_3+c_{12}x_1x_2+c_{13}x_1x_3+c_{23}x_2x_3+c_{123}x_1x_2x_3}.\]
We emphasis that \[a_{123}=c_{123}.\]
In the following, we will connect the variables of $g$ to $f$ using $(=_{\ell k})$ in different ways.
We emphasis that the coefficient of $x_1x_2x_3$ will not be changed.
\end{remark}

The proof of the following lemma follows the proof of Lemma~4.17 of \cite{CLX-HOLANTC}.
We just generalize it from \#CSP$_2$ to \#CSP$_k$.
\begin{lemma}\label{h-hhh}
Let $f$ be a rank-3 signature which has the support
\[(x_1)_{\frac{k}{2}}(x_2)_{\frac{k}{2}}(x_3)_{\frac{k}{2}}(x_1+x_2)_{\frac{k}{2}}(x_1+x_3)_{\frac{k}{2}}(x_2+x_3)_{\frac{k}{2}}(x_1+x_2+x_3)_{\frac{k}{2}},\]
and the compressed signature
\[\underline{f}(x_1, x_2, x_3)=\frak i^{a_1x_1+a_2x_2+a_3x_3+2a_{12}x_1x_2+2a_{13}x_1x_3+2a_{23}x_2x_3+2a_{123}},\] where $a_{123}$ is odd,
and $g$ be a rank-2 signature that has the support
\[(y_1)_{\frac{k}{2}}(y_2)_{\frac{k}{2}}(y_1+y_2)_{\frac{k}{2}},\]
and the compressed signature
\[\underline{g}(y_1, y_2)=\frak i^{b_1y_1+b_2y_2+2b_{12}y_1y_2}.\]
Then \[\#\operatorname{CSP}_k(f')\leq\#\operatorname{CSP}_k(f, g),\]
where $f'\notin\mathscr{A}$ is a rank-3 signature and has $k$-type support.
\end{lemma}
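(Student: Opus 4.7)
The plan is to build $f'$ by gluing one copy of $f$ together with seven copies of $g$ through seven $(=_{3k})$-gadgets, one for each bundle of $f'$. The key observation, highlighted in the remark preceding the lemma, is that since each $\underline{g^{(S)}}$ is quadratic in its two inputs, no copy of $g$ we attach can affect the coefficient of the cubic monomial $x_1x_2x_3$ in $\underline{f'}$; the non-affine cubic term $2a_{123}$ of $\underline{f}$ will thus be transferred intact to $\underline{f'}$, while $g$'s role is purely to reshape the bundle sizes from $k/2$-type to $k$-type.

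The seven bundle labels of $f$ (the seven nonzero linear forms in $x_1,x_2,x_3$) together with the seven $2$-dimensional subspaces of $\mathbb{Z}_2^3$ form a Fano plane: each label lies in exactly three subspaces, and each subspace contains exactly three labels. I will index seven copies of $g$ by the subspaces $S$; for each $S=\{a,b,a+b\}$, the bundles $(y_1),(y_2),(y_1+y_2)$ of $g^{(S)}$ are identified with $a,b,a+b$ respectively. Then, for each bundle label $\ell$, I form one $(=_{3k})$-gadget on $3k$ edges, assembled from the $k/2$ edges of $f$'s $(\ell)$-bundle, the $k/2$ edges from each of the three $g^{(S)}$'s whose subspace contains $\ell$ (totaling $3k/2$), and $k$ dangling edges. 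The $(=_{3k})$ forces all $3k$ edges equal, and on the support this common value is $\ell(x_1,x_2,x_3)$, so the $k$ dangling edges form the $(\ell)$-bundle of $f'$. All edges of $f$ and of each $g^{(S)}$ are thereby consumed internally, and since $(=_{3k})\in\mathcal{EQ}_k$, this is a legitimate reduction $\#\operatorname{CSP}_k(f')\leq_T\#\operatorname{CSP}_k(f,g)$.

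The resulting $f'$ then has rank $3$ with $k$-type support
\[
(x_1)_k(x_2)_k(x_3)_k(x_1+x_2)_k(x_1+x_3)_k(x_2+x_3)_k(x_1+x_2+x_3)_k,
\]
and compressed signature $\underline{f'}=\underline{f}\cdot\prod_S\underline{g^{(S)}}(a_S,b_S)$, each factor being quadratic in $x_1,x_2,x_3$. Consequently the coefficient of $x_1x_2x_3$ in $\underline{f'}$ equals that of $\underline{f}$, namely $2a_{123}\not\equiv 0\pmod 4$, forcing $f'\notin\mathscr{A}$. The main obstacle is the combinatorial bookkeeping: verifying that the Fano-plane incidence yields exactly $k/2+3\cdot k/2=2k$ internal edges per bundle, leaving exactly $k$ dangling edges out of the $3k$-ary equality, and that every edge of $f$ and every edge of each $g^{(S)}$ is claimed by exactly one gadget. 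Once this count is checked, the cubic-coefficient argument via the remark finishes the proof.
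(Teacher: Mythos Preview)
Your construction is correct, and the cubic-coefficient argument via the preceding remark goes through exactly as you describe: each $\underline{g}(a_S,b_S)$, with $a_S,b_S$ linear forms in $x_1,x_2,x_3$, reduces (using $x_i^2=x_i$) to a polynomial of degree at most $2$, so the $x_1x_2x_3$ coefficient of $\underline{f'}$ remains $2a_{123}\not\equiv 0\pmod 4$, giving $f'\notin\mathscr{A}$ on a $k$-type support.

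Your route, however, differs from the paper's. The paper uses only \emph{three} copies of $g$ and six $(=_{2k})$-gadgets: it sets $u_1=x_1,\,u_2=x_2+x_3$ for $g_1$, and cyclically for $g_2,g_3$. For each of the six labels $x_1,x_2,x_3,x_1+x_2,x_1+x_3,x_2+x_3$ one $(=_{2k})$ absorbs $k/2$ edges from $f$ and $k/2$ from a single $g_i$, leaving $k$ dangling. The seventh bundle $(x_1+x_2+x_3)$ receives no equality gadget at all: the $k/2$ edges of $f$ together with the $(u_1+u_2),(v_1+v_2),(w_1+w_2)$ bundles of $g_1,g_2,g_3$ (each forced to equal $x_1+x_2+x_3$ on the support) simply dangle, yielding a bundle of size $2k$. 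The resulting support is $(x_1)_k\cdots(x_2+x_3)_k(x_1+x_2+x_3)_{2k}$, still $k$-type. Your Fano-plane construction with seven copies of $g$ and seven $(=_{3k})$-gadgets is more symmetric and gives every bundle size exactly $k$, at the cost of using more copies of $g$ and larger equalities. Both constructions rely on the same invariant---the cubic coefficient is untouched by attaching rank-$2$ signatures---so the argument that $f'\notin\mathscr{A}$ is identical.
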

\begin{proof}
We take one copy of $f((x_1)_{\frac{k}{2}}(x_2)_{\frac{k}{2}}(x_3)_{\frac{k}{2}}(x_1+x_2)_{\frac{k}{2}}(x_1+x_3)_{\frac{k}{2}}(x_2+x_3)_{\frac{k}{2}}(x_1+x_2+x_3)_{\frac{k}{2}})$
and three copies of $g$: $g_1((u_1)_{\frac{k}{2}}(u_2)_{\frac{k}{2}}(u_1+u_2)_{\frac{k}{2}})$,
$g_2((v_1)_{\frac{k}{2}}(v_2)_{\frac{k}{2}}(v_1+v_2)_{\frac{k}{2}})$,
$g_3((w_1)_{\frac{k}{2}}(w_2)_{\frac{k}{2}}(w_1+w_2)_{\frac{k}{2}})$.
We connect the variables $\underbrace{x_1x_1\cdots x_1}_{\frac{k}{2}}$ of $f$ and $\underbrace{u_1u_1\cdots u_1}_{\frac{k}{2}}$ of $g_1$ by $(=_{2k})$
as the following figures shows.

\setlength{\unitlength}{5mm}
\begin{picture}(24,10)(-18,-2)\label{combining}
\put(-3,5.9){\tiny{$\blacksquare$}}
\put(-2.8, 6){\line(-1, 1){1.5}}
\put(-2.8, 6){\line(-2, 1){1.5}}
\put(-2.8, 6){\line(-1, -1){1.5}}
\put(-4,5.35){\Huge{$\vdots$}}
\put(-2,6.35){$x_1$}
\put(-2,4.9){$x_1$}
\put(-3,1.9){\small{$\blacktriangle$}}
\put(-2.8, 2){\line(-1, 1){1.5}}
\put(-2.8, 2){\line(-2, 1){1.5}}
\put(-2.8, 2){\line(-1, -1){1.5}}
\put(-4,1.35){\Huge{$\vdots$}}
\put(-2,2.8){$u_1$}
\put(-2,1.5){$u_1$}
\put(0.9, 3.8){\Large{$\bullet$}}
\put(1,4){\line(1, 1){1.5}}
\put(1,4){\line(2, 1){1.5}}
\put(1.1,4.1){\line(1, -1){1.5}}
\put(2,3.5){\Huge{$\vdots$}}
\qbezier(-2.9, 6)(0, 6.5)(1, 4.1)
\qbezier(-2.8, 6)(-2, 4)(1, 4.1)
\qbezier(-2.8, 1.9)(0, 2)(1, 4.1)
\qbezier(-2.8, 1.9)(-2, 4.2)(1, 4.1)
\put(-1,2.8){\Huge{$\vdots$}}
\put(-1,4.7){\Huge{$\vdots$}}
\put(-15.5,-1){Fig. 3 The square, triangle and bullet is labeled by $f$, $g$ and $(=_{2k})$ respectively.}
\end{picture}
\newline
Similarly, we connect the variables
\begin{center}
$\underbrace{x_2x_2\cdots x_2}_{\frac{k}{2}}$ and $\underbrace{v_1v_1\cdots v_1}_{\frac{k}{2}}$,

$\underbrace{x_3x_3\cdots x_3}_{\frac{k}{2}}$ and $\underbrace{w_1w_1\cdots w_1}_{\frac{k}{2}}$,

$\underbrace{(x_2+x_3)(x_2+x_3)\cdots (x_2+x_3)}_{\frac{k}{2}}$ and $\underbrace{u_2u_2\cdots u_2}_{\frac{k}{2}}$,

$\underbrace{(x_1+x_3)(x_1+x_3)\cdots (x_1+x_3)}_{\frac{k}{2}}$ and $\underbrace{v_2v_2\cdots v_2}_{\frac{k}{2}}$,

$\underbrace{(x_1+x_2)(x_1+x_2)\cdots (x_1+x_2)}_{\frac{k}{2}}$ and $\underbrace{w_2w_2\cdots w_2}_{\frac{k}{2}}$
\end{center}
by $(=_{2k})$ respectively.
Then we get a signature $f'$ which has the support \[(x_1)_k(x_2)_k(x_3)_k(x_1+x_2)_k(x_1+x_3)_k(x_2+x_3)_k(x_1+x_2+x_3)_{2k}\]
and
\[\underline{f'}(x_1, x_2, x_3)=\frak i^{c_1x_1+c_2x_2+c_3x_3+2c_{12}x_1x_2+2c_{13}x_1x_3+2c_{23}x_2x_3+2c_{123}x_1x_2x_3}\]
for some $c_1, c_2, c_3, c_{12}, c_{13}, c_{23}$.
But  $c_{123}=a_{123}$ by the remark.
Since $a_{123}$ is odd, $f'$ is not affine and we finish the proof.
\end{proof}

In the following proof, we often use the following obvious fact: If $f$ has $k$-type support and $f\notin\mathscr{A}_k^{d_0}$ for some $d_0\in[k]$,
then $f\notin\mathscr{A}_k^{d}$ for any $d\in[k]$.

\begin{lemma}\label{non-affine-transformable}
If $\mathcal{F}\nsubseteq\mathscr{A}_k^d$ for any $d\in[k]$,
 then in $\operatorname{\#CSP}_k(\neq_2, \mathcal{F})$ we can construct $[1, 0, \cdots, 0, a]_r$,  where $a\neq 0$ and $0< r< k$,
 or for some $1\leq k'<k$ and $k'|k$, we can construct
 $h$ such that $h\notin\mathscr{A}_{\frac{k}{k'}}^{d'}$ for any $d'\in[\frac{k}{k'}]$, and
 $h=f_1f_2\cdots f_{k'}$, where $f_1, f_2, \cdots, f_{k'}$
 can be constructed in $\operatorname{\#CSP}_k(\neq_2, \mathcal{F})$.
\end{lemma}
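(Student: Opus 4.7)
The plan is to leverage the reductions already established to either extract a small general equality by pinning, or to exhibit the claimed product factorization.

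First, by combining Lemma \ref{monotone} and Lemma \ref{rank-reducing}, together with the availability of the pinning signatures $[1,0]$ and $[0,1]$ from Lemmas \ref{pinning} and \ref{remove-duplicate-pinning}, I may assume without loss of generality that there is a monotone signature $f$ constructible in $\operatorname{\#CSP}_k(\neq_2, \mathcal{F})$ of rank at most $3$ with $f\notin\mathscr{A}_k^d$ for every $d\in[k]$. The proof then splits on the bundle structure of $f$.

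The first case is when some bundle of $f$ has size $r$ with $0<r<k$. By pinning every variable in every other bundle to $0$ using $[1,0]$, the chosen bundle survives as $r$ variables forced to take a common value; after normalization the resulting signature is exactly $[1,0,\ldots,0,a]_r$ with $a\ne 0$, giving the first alternative. In the remaining case, every bundle size of $f$ is a multiple of $k$, i.e.\ $f$ has $k$-type support. Using $\mathcal{F}^k\subseteq\mathscr{A}$ (established before this lemma is invoked) together with Lemma \ref{rho-power}, I would write $\underline{f}(x_1,\ldots,x_r)=\rho^{Q(x_1,\ldots,x_r)}$ for a multilinear integer polynomial $Q$ of degree at most $3$. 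The condition that $f\notin\mathscr{A}_k^d$ for every $d\in[k]$ then translates into the statement that no shift of the linear coefficients of $Q$ by a multiple of $d$ (the effect of the holographic transformation $T_k^d$) brings $Q$ into $4k\mathbb{Z}$-compliance with the definition of $\mathscr{A}$.

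Now the key step is to pick the correct divisor $k'\mid k$ with $k'<k$, and realize the factorization $h=f_1f_2\cdots f_{k'}$ entrywise. The natural choice of $k'$ comes from the obstruction itself: the smallest $k'$ such that reducing the arithmetic of $Q$ modulo $4k/k'$ still displays a non-affine coefficient. Each $f_i$ is then realized by connecting copies of $f$ (and auxiliary signatures drawn from $\mathcal{F}$, $\neq_2$, and $\mathcal{EQ}_k$) via equality signatures on the RHS, modelled on the construction of Lemma \ref{h-hhh}; the gadgets are designed so that the entrywise product has compressed form $\rho^{k' Q'}$, where the coefficient of the top-degree monomial of $Q'$ inherits the residue that witnesses $f\notin\mathscr{A}_k^d$, and therefore $h\notin\mathscr{A}_{k/k'}^{d'}$ for any $d'\in[k/k']$.

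The main obstacle is the bookkeeping in that last step: one must separately handle each possible support shape of the reduced signature $f$ (rank $1$, $2$, or $3$, always with $k$-type support) and, in each case, explicitly exhibit both the divisor $k'$ and the gadget realizing each $f_i$. The rank-$3$ case with full seven-bundle support is the most delicate, and the construction of Lemma \ref{h-hhh} is the prototype: it shows how carefully connecting variable bundles through $(=_{\ell k})$ combines the coefficients of the multilinear polynomial while preserving the top-degree coefficient, which is what guarantees non-affineness of $h$ modulo $4k/k'$. Analogous (but simpler) gadgets handle the rank-$1$ and rank-$2$ cases, so the overall argument reduces to verifying that in every subcase the chosen $k'$ and gadget construction produce an $h$ with the required non-affine behavior.
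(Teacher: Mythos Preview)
Your proposal has two genuine gaps that prevent it from going through.

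First, your initial reduction is not justified. Lemma~\ref{rank-reducing} takes a single $d$ and produces $g\notin\mathscr{A}_k^d$ of rank at most~$3$, for \emph{that} $d$. It does not produce one signature violating $\mathscr{A}_k^d$ simultaneously for every $d\in[k]$. The paper never makes this claim; instead it tracks, for each reduced signature, the set of $d$ for which it fails to lie in $\mathscr{A}_k^d$ (this is the even-affine/odd-affine dichotomy and the sets $\overline{\mathcal{F}}^{\operatorname{even}}$), and eventually combines \emph{two} signatures with complementary behavior to get one $h$ that fails for all $d'$.

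Second, your Case~1/Case~2 dichotomy is the wrong one, and your Case~1 argument is broken. If a rank-$2$ monotone $f$ has support $(x_1)_{k_1}(x_2)_{k_2}(x_1+x_2)_{k_{12}}$ and you ``pin every variable in every other bundle to $0$'' to isolate $(x_1)$, you force $x_2=0$ \emph{and} $x_1+x_2=0$, hence $x_1=0$: nothing survives. The only legal move is to pin a free variable (say $x_2=0$), which \emph{merges} bundles and yields a rank-$1$ signature of arity $k_1+k_{12}$, not $k_1$. The paper does exactly this, and the crucial observation is that the resulting arities can all be $\equiv 0\pmod k$ even when no individual bundle size is: this forces $k_1\equiv k_2\equiv k_{12}\equiv k/2\pmod k$ (and in rank~$3$, also the $k/4$ possibilities (\ref{qqq})--(\ref{333})). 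These $k/2$- and $k/4$-type supports are precisely the hard cases, and they are where the product construction $h=f_1\cdots f_{k'}$ and Lemma~\ref{h-hhh} are actually needed. Your Case~2 (all bundle sizes divisible by $k$) is in fact the \emph{easy} case: there $h=f$ with $k'=1$ already works, since $k$-type support plus $f\notin\mathscr{A}_k^{d_0}$ for one $d_0$ gives $f\notin\mathscr{A}_k^{d}$ for all $d$. So the substantive content of the lemma---the $k/2$ and $k/4$ bundle-size cases and how to combine signatures across different $d$---is not addressed.
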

\begin{proof}
By Lemma~\ref{monotone}, we can assume all the signatures in $\mathcal{F}$ are monotone.
Moreover,  we can assume that all the signature in $\mathcal{F}$ has rank less than 4 by Lemma~\ref{rank-reducing}.
Then by Lemma~\ref{pinning} and Lemma~\ref{remove-duplicate-pinning}, the pinning signatures $[1, 0], [0, 1]$
are available freely in $\#\operatorname{CSP}_k(\neq_2, \mathcal{F})$.

Assume that $f\in\mathcal{F}$ and $f\notin\mathscr{A}_k^d$.

 If $f$ has rank-1,  we can assume that
 $f=[x, 0, \cdots, 0, y]$ since $f$ is monotone.
 If $xy=0$, then $f\in\mathscr{A}_k^d$ for any $d\in[k]$, this is a contradiction.
 Thus up to a scalar, we can assume that $f=[1, 0, \cdots, 0, a]$ with $a\neq 0$.
 Let $r$ be the arity of $f$.
 If $r\not\equiv 0\pmod k$, then we can assume that $r=r_1+\ell_1k$, where  $0< r_1< k$.
 After collations we get $[1, 0, \cdots, 0, a]$ of arity $r_1$
 and finish the proof.
 If $r\equiv 0\pmod k$, we can assume that
  $f=[1, 0, \cdots, 0, a]_{\ell k}$. Since $f$ has $k$-type support and $f\notin\mathscr{A}^d_k$ for some $d$,
  we have $f\notin\mathscr{A}$. Then we are done by letting $h=f$.


  If $f$ has rank 2, we can assume that $f$ has the support $(x_1)_{k_1}(x_2)_{k_2}(x_1+x_2)_{k_{12}}$
 since $f$ is monotone.
By pinning $x_1=0$, we have a rank-1 signature $f_1$ whose support has the form $(x_2)_{k_2+k_{12}}$, i.e.,
$f_1=[f_1(0, 0, \cdots, 0), 0, \cdots, 0, f_1(1, 1, \cdots, 1)]$, where $f_1(0, 0, \cdots, 0)=\underline{f}(0, 0)$
and $f_1(1, 1, \cdots, 1)=\underline{f}(0, 1)$
are both nonzero. Thus up to a scalar, we can assume that $f_1=[1, 0, \cdots, 0, a]$ with $a\neq 0$.
Moreover,
 by the proof for the rank-1 case, if $k_2+k_{12}\not\equiv 0\pmod k$, then we are done.
 Thus
 \begin{equation}\label{rank-2-equations-1}
 k_2+k_{12}\equiv 0\pmod k.
 \end{equation}
Then by pinning $x_2=0$ and $x_1+x_2=0$, we have
\begin{equation}\label{rank-2-equations}
\begin{split}
&k_1+k_{12}\equiv 0 \pmod k,\\
&k_1+k_{2}\equiv 0 \pmod k.
\end{split}
\end{equation}
By (\ref{rank-2-equations-1}) and the first equation of (\ref{rank-2-equations}), we have $k_1\equiv k_{2}\pmod k$.
Then by the second equation, we have $2k_1\equiv 0\pmod k.$
This implies that $k_1\equiv 0\pmod k$ or $k_1\equiv \frac{k}{2}\pmod k$.
So
we have \[k_1\equiv k_2\equiv k_{12}\equiv 0\pmod k, ~~
\operatorname{or}~~k_1\equiv k_2\equiv k_{12}\equiv \frac{k}{2}\pmod k.\]
If $k_1\equiv k_2\equiv k_{12}\equiv 0\pmod k$, then we are done since the support of $f$ has $k$-type
and $f\notin\mathscr{A}$.
Now we can assume that the support of $f$ has the form $(x_1)_{\frac{k}{2}}(x_2)_{\frac{k}{2}}(x_1+x_2)_\frac{k}{2}$
after collations.
If $f^2\notin \mathscr{A}$, then we are done by letting $h=f^2$.
Otherwise, we can assume that
 \[\underline{f}(x_1, x_2)=\alpha^{b_1x_1+b_2x_2+2b_{12}x_1x_2}\]
  up to a scalar.
  By pinning $x_2=0$ to $f$, we get $[1, 0, \cdots, 0, \alpha^{b_1}]_k$.
If $b_1\not\equiv 0\pmod 2$, then $[1, 0, \cdots, 0, \alpha^{b_1}]\notin\mathscr{A}$ and has $k$-type.
Thus we finish the proof.
So we have $b_1\equiv 0\pmod 2$. Then by pinning $x_2=0, x_1+x_2=0$, we have
$b_2\equiv 0\pmod 2$ and $2b_{12}\equiv 0\pmod 2$ respectively.
Now we can assume that \[\underline{f}(x_1, x_2)=\frak i^{b_1x_1+b_2x_2+b_{12}x_1x_2}.\]
Note that if $b_{12}\equiv 0\pmod 2$ then $f$ is even-type affine, and $b_{12}\equiv 1\pmod 2$, then $f$ is odd-type affine.

$\mathbf{Claim:}$
There is no rank-1 signature in $\mathcal{F}$ which is not in $\mathscr{A}_k^d$ for some $d\in[k]$.
If there exists a rank-2 signature $f\in\mathcal{F}$ such that $f\notin\mathscr{A}_k^d$ for some $d\in[k]$,
then $f$ has the support $(x_1)_{k_1}(x_2)_{k_2}(x_1+x_2)_{k_{12}}$
with $k_1\equiv k_2\equiv k_{12}\equiv \frac{k}{2}\pmod k$ and \[\underline{f}(x_1, x_2)=\alpha^{b_1x_1+b_2x_2+2b_{12}x_1x_2},\]
where $b_1\equiv b_2\equiv 0\pmod 2.$

 If $f$ has rank-3, then $f$ has the support
  \[(x_1)_{k_1}(x_2)_{k_2}(x_3)_{k_3}(x_1+x_2)_{k_{12}}(x_1+x_3)_{k_{13}}(x_2+x_3)_{k_{23}}(x_1+x_2+x_3)_{k_{123}}\]
    since $f$ is monotone.
By pinning $x_1=0, x_2=0, x_3=0$ to $f$, we get three signatures which have the support
\begin{equation*}
\begin{split}
&(x_2)_{k_2+k_{12}}(x_3)_{k_3+k_{13}}(x_2+x_3)_{k_{23}+k_{123}},\\
&(x_1)_{k_1+k_{12}}(x_3)_{k_3+k_{23}}(x_1+x_3)_{k_{13}+k_{123}},\\
&(x_1)_{k_1+k_{13}}(x_2)_{k_2+k_{23}}(x_1+x_2)_{k_{12}+k_{123}}
\end{split}
\end{equation*}
respectively.
By the $\mathbf{Claim}$, we have
\begin{equation}\label{rank-3-equations-1}
\begin{split}
&k_2+k_{12}\equiv k_3+k_{13}\equiv k_{23}+k_{123}\equiv 0\pmod{\frac{k}{2}},\\
&k_1+k_{12}\equiv k_3+k_{23}\equiv k_{13}+k_{123}\equiv 0\pmod {\frac{k}{2}},\\
&k_1+k_{13}\equiv k_2+k_{23}\equiv k_{12}+k_{123}\equiv 0\pmod {\frac{k}{2}}.
\end{split}
\end{equation}
and
\begin{equation}\label{rank-3-equations-1-modd}
\begin{split}
&k_2+k_{12}\equiv k_3+k_{13}\equiv k_{23}+k_{123}\pmod{k},\\
&k_1+k_{12}\equiv k_3+k_{23}\equiv k_{13}+k_{123}\pmod {k},\\
&k_1+k_{13}\equiv k_2+k_{23}\equiv k_{12}+k_{123}\pmod {k}.
\end{split}
\end{equation}

By (\ref{rank-3-equations-1})
 we have \[k_1\equiv k_2\equiv k_3\equiv k_{123}\equiv -k_{12}\equiv -k_{13}\equiv -k_{23}\pmod{\frac{k}{2}}.\]
Moreover,  by pinning $x_1+x_2=0$, we get the signature which has the support
 \[(x_1)_{k_1+k_2}(x_3)_{k_3+k_{123}}(x_1+x_3)_{k_{13}+k_{23}}.\]
By the $\mathbf{Claim}$,
\[k_1+k_2\equiv k_3+k_{123}\equiv k_{13}+k_{23}\equiv 0\pmod {\frac{k}{2}}.\]
This implies that \[2k_1\equiv  2k_{12}\equiv 0\pmod{\frac{k}{2}}.\]

So after collations, by (\ref{rank-3-equations-1-modd}),  the support of $f$ has to be one of the following forms:
\begin{equation}\label{ddd}
(x_1)_{\epsilon_1k}(x_2)_{\epsilon_2k}(x_3)_{\epsilon_3k}
(x_1+x_2)_{\epsilon_{12}k}(x_1+x_3)_{\epsilon_{13}k}(x_2+x_3)_{\epsilon_{23}k}(x_1+x_2+x_3)_{{\epsilon_{123}k}},\\
\end{equation}
\begin{equation}\label{hhh}
(x_1)_{\frac{k}{2}}(x_2)_{\frac{k}{2}}(x_3)_{\frac{k}{2}}(x_1+x_2)_{\frac{k}{2}}
(x_1+x_3)_{\frac{k}{2}}(x_2+x_3)_{\frac{k}{2}}(x_1+x_2+x_3)_{\frac{k}{2}},
\end{equation}
\begin{equation}\label{dhd}
(x_1)_{\epsilon_1k}(x_2)_{\epsilon_2k}(x_3)_{\epsilon_3k}(x_1+x_2)_{\frac{k}{2}}(x_1+x_3)_{\frac{k}{2}}(x_2+x_3)_{\frac{k}{2}}
(x_1+x_2+x_3)_{\epsilon_{123}k},\\
\end{equation}
\begin{equation}\label{hdh}
(x_1)_{\frac{k}{2}}(x_2)_{\frac{k}{2}}(x_3)_{\frac{k}{2}}
(x_1+x_2)_{\epsilon_{12}k}(x_1+x_3)_{\epsilon_{13}k}(x_2+x_3)_{\epsilon_{23}k}(x_1+x_2+x_3)_{\frac{k}{2}},\\
\end{equation}
\begin{equation}\label{qqq}
(x_1)_{\frac{k}{4}}(x_2)_{\frac{k}{4}}(x_3)_{\frac{k}{4}}(x_1+x_2)_{\frac{k}{4}}(x_1+x_3)_{\frac{k}{4}}(x_2+x_3)_{\frac{k}{4}}(x_1+x_2+x_3)_{\frac{k}{4}},
\end{equation}
\begin{equation}\label{q3q}
(x_1)_{\frac{k}{4}}(x_2)_{\frac{k}{4}}(x_3)_{\frac{k}{4}}(x_1+x_2)_{\frac{3k}{4}}(x_1+x_3)_{\frac{3k}{4}}(x_2+x_3)_{\frac{3k}{4}}(x_1+x_2+x_3)_{\frac{k}{4}},
\end{equation}
\begin{equation}\label{3q3}
(x_1)_{\frac{3k}{4}}(x_2)_{\frac{3k}{4}}(x_3)_{\frac{3k}{4}}(x_1+x_2)_{\frac{k}{4}}(x_1+x_3)_{\frac{k}{4}}(x_2+x_3)_{\frac{k}{4}}(x_1+x_2+x_3)_{\frac{3k}{4}},
\end{equation}
\begin{equation}\label{333}
(x_1)_{\frac{3k}{4}}(x_2)_{\frac{3k}{4}}(x_3)_{\frac{3k}{4}}(x_1+x_2)_{\frac{3k}{4}}(x_1+x_3)_{\frac{3k}{4}}(x_2+x_3)_{\frac{3k}{4}}(x_1+x_2+x_3)_{\frac{3k}{4}},
\end{equation}
where $\epsilon_i=0$ or 1.

Note that except for the cases (\ref{ddd}) and (\ref{hhh}),
we always have a rank-2 signature $g$ by pinning $x_1=0$ to $f$ which has support
$(y_1)_{\frac{k}{2}}(y_2)_{\frac{k}{2}}(y_1+y_2)_{\frac{k}{2}}$ after collations.
If $f$ has the support as (\ref{ddd}), then we are done by letting $h=f$ since it is $k$-type.

In the following, we firstly assume all the signatures in $\mathcal{F}$ have the form (\ref{hhh}).
Otherwise, by the $\mathbf{Claim}$, we can
assume that there is a rank-2 signature $g\in\mathcal{F}$ which has the support $(y_1)_{\frac{k}{2}}(y_2)_{\frac{k}{2}}(y_1+y_2)_{\frac{k}{2}}$.

 Firstly, assume that
  all the signatures in $\mathcal{F}$ have
   the support (\ref{hhh}).
 If $f$ has the support (\ref{hhh}) and $f^2\notin\mathscr{A}$, then we are done by letting $h=f^2$.
 Otherwise, we can assume that
 \[\underline{f}(x_1, x_2, x_3)=\alpha^{a_1x_1+a_2x_2+a_3x_3+2a_{12}x_1x_2+2a_{13}x_1x_3+2a_{23}x_2x_3+4a_{123}x_1x_2x_3}.\]
 By pinning $x_1=0$, we have a rank-2 signature $g_1$ with the support $(x_2)_k(x_3)_k(x_2+x_3)_k$ and
 \[\underline{g_1}(x_2, x_3)=\alpha^{a_2x_2+a_2x_2+a_3x_3+2a_{23}x_2x_3}.\]
 If $g_1\notin\mathscr{A}$, then we are done by letting $h=g_1$.
 Otherwise, $a_2\equiv a_3\equiv a_{23}\equiv 0\pmod 2$.
 Moreover, by pinning $x_2=0, x_3=0$, we have
 $a_1\equiv a_2\equiv  a_3\equiv a_{12}\equiv a_{13}\equiv a_{23}\equiv 0\pmod 2$.
 This implies that
 \[\underline{f}(x_1, x_2, x_3)=\frak i^{a_1x_1+a_2x_2+a_3x_3+2a_{12}x_1x_2+2a_{13}x_1x_3+2a_{23}x_2x_3+2a_{123}x_1x_2x_3}.\]
 Let $\widehat{f}=\left[\begin{smallmatrix}
 1 & 0\\
 0 & \rho^d
 \end{smallmatrix}\right]^{\otimes \frac{7}{2}k}f$, then
 \[\underline{\widehat{f}}(x_1, x_2, x_3)=\rho^{\frac{dk}{2}(4x_1+4x_2+4x_3+4x_1x_2+4x_1x_3+4x_2x_3+4x_1x_2x_3)}f(x_1, x_2, x_3),\]
 i.e.,
  \[\underline{\widehat{f}}(x_1, x_2, x_3)=\frak i^{(2d+a_1)x_1+(2d+a_2)x_2+(2d+a_3)x_3+
  2(d+a_{12})x_1x_2+2(d+a_{13})x_1x_3+2(d+a_{23})x_2x_3+2(d+a_{123})x_1x_2x_3}.\]
  So $f$ is even-type if $a_{123}$ is even and $f$ is odd type if $a_{123}$ is odd.
  Since $\mathcal{F}\nsubseteq\mathscr{A}_k^d$ for any $k\in[k]$, there at least two signatures $f_1, f_2\in\mathcal{F}$ and
   \[\underline{f}_i(x_1, x_2, x_3)=\frak i^{a^{(i)}_1x_1
   +a^{(i)}_2x_2+a^{(i)}_3x_3+2a^{(i)}_{12}x_1x_2+2a^{(i)}_{13}x_1x_3+2a^{(i)}_{23}x_2x_3+2a^{(i)}_{123}x_1x_2x_3}\]
   for $i=1, 2$, where $a^{(1)}_{123}$ is odd and $a_{123}^{(2)}$ is even.
   Let $h=f_1f_2$, then $h$ is $\frac{k}{2}$-type and
   \begin{equation*}
   \begin{split}
   &\underline{h}(x_1, x_2, x_3)=\\
   &\frak i^{(a^{(1)}_1+a^{(2)}_1)x_1+(a^{(1)}_2+a^{(2)}_2)x_2+(a^{(1)}_3+a^{(2)}_3)x_3
   +2(a^{(1)}_{12}+a^{(2)}_{12})x_1x_2+2(a^{(1)}_{13}+a^{(2)}_{13})x_1x_3
   +2(a^{(1)}_{23}+a^{(2)}_{23})x_2x_3+2(a^{(1)}_{123}+a^{(2)}_{123})x_1x_2x_3}.
   \end{split}
   \end{equation*}
   Since $a^{(1)}_{123}+a^{(2)}_{123}$ is odd, $h\notin\mathscr{A}$.
   Moreover, $h$ has $\frac{k}{2}$-type and we are done.

Now we assume that there exists a rank-2 signature $g\in\mathcal{F}$.
By the $\mathbf{Claim}$, $g$ is $\frac{k}{2}$-type and
\[\underline{g}(y_1, y_2)=\frak i^{b_1x_1+b_2x_2+b_{12}x_1x_2},\]
where $b_1\equiv b_2\equiv 0\pmod 2.$
Depending on whether $b_{12}$ is even or odd, $g$ is even-affine or odd-affine.
We assume that $b_{12}$ is even.
 After the holographic transformation using
$\left[\begin{smallmatrix}
1 & 0\\
0 & \rho
\end{smallmatrix}\right]$, the following proof can work for the case that $b_{12}$ is odd and we omit it here.
Since $g$ is even-affine, for each even $d\in[k]$, there exists $f\in\mathcal{F}$ such that $f\notin\mathscr{A}_k^d$
since $\mathcal{F}\nsubseteq\mathscr{A}_k^d$ for any $d\in[k]$.
Let
\begin{center}
$\overline{\mathcal{F}}^{\operatorname{even}}=\{f\in\mathcal{F}|f\notin\mathscr{A}_k^d$ for some even $d\in[k]\}$.
\end{center}
 If there exists another rank-2 signature $g'\in\overline{\mathcal{F}}^{\operatorname{even}}$, by the $\mathbf{Claim}$,
 $g'$ is $\frac{k}{2}$-type and
 \[\underline{g'}=\frak i^{b'_1x_1+b'_2x_2+b'_{12}x_1x_2},\]
 where $b'_{12}$ is odd. Then we are done by letting $h=gg'$.
 \begin{itemize}
 \item If there exists a rank-3 signature $f\in\overline{\mathcal{F}}^{\operatorname{even}}$ of the type (\ref{ddd}),
 then we are done by letting $h=f$.
 If there exists a rank-3 signature $f\in\overline{\mathcal{F}}^{\operatorname{even}}$ of the type (\ref{hhh}),
 then we are done by Lemma~\ref{h-hhh} since we have $f$ and $g$ in hand.
\item
 Assume that there is a signature $f\in\overline{\mathcal{F}}^{\operatorname{even}}$ that has the support (\ref{dhd}).
 If $f^2\notin\mathscr{A}$, then we are done by letting $h=f^2$. Otherwise, we can assume that
\[\underline{f}(x_1, x_2, x_3)=\alpha^{a_1x_1+a_2x_2+a_3x_3+2a_{12}x_1x_2+2a_{13}x_1x_3+2a_{23}x_2x_3+4a_{123}x_1x_2x_3}.\]
By pinning $x_1=0$, we get a rank-2 signature $g''$. After collations, $g''$
has the support \[(x_2)_{\frac{k}{2}}(x_3)_{\frac{k}{2}}(x_2+x_3)_{\frac{k}{2}}\] and
\[\underline{g''}(x_2, x_3)=\alpha^{a_2x_2+a_3x_3+2a_{23}x_2x_3}.\]
If $g''\notin\mathscr{A}$, then we are done by letting $h=gg'$.
Otherwise, we have $a_{2}\equiv a_3\equiv a_{23}\equiv 0\pmod 2$.
Similarly, by pinning $x_2=0, x_3=0$, we have $a_1\equiv a_{2}\equiv a_3\equiv a_{12}\equiv a_{13}\equiv a_{23}\equiv 0\pmod 2$.
This implies that
\[\underline{f}(x_1, x_2, x_3)=\frak i^{a_1x_1+a_2x_2+a_3x_3+2a_{12}x_1x_2+2a_{13}x_1x_3+2a_{23}x_2x_3+2a_{123}x_1x_2x_3}.\]
Let $\widehat{f}=\left[\begin{smallmatrix}
1 & 0\\
0 & \rho^d
\end{smallmatrix}\right]^{\otimes \frac{11}{2}k}f$, then
\[\underline{\widehat{f}}=\rho^{\frac{dk}{2}(6x_1+6x_2+6x_3+6x_1x_2+6x_1x_3+6x_2x_3+8x_1x_2x_3)}f(x_1, x_2, x_3),\]
i.e.,
\[\underline{\widehat{f}}=\frak i^{(3d+a_1)x_1+(3d+a_2)x_2+(3d+a_3)x_3+
  (3d+2a_{12})x_1x_2+(3d+2a_{13})x_1x_3+(3d+2a_{23})x_2x_3+(4d+2a_{123})x_1x_2x_3}.\]
 Since $f\notin\mathscr{A}_k^d$ for some even $d$, $a_{123}$ is odd.
 By connecting the $\frac{k}{2}$ copies of variables $y_1, y_2, y_1+y_2$ of $g$
 to the $\frac{k}{2}$ copies of variables of $x_1+x_2, x_1+x_3, x_2+x_3$ of $f$ using $(=_{2k})$,
 we get a $k$-type signature $f'$ and
 \[\underline{f'}=\frak i^{a'_1x_1+a'_2x_2+a'_3x_3+2a'_{12}x_1x_2+2a'_{13}x_1x_3+2a'_{23}x_2x_3+2a_{123}x_1x_2x_3}.\]
 Since $a_{123}$ is odd, we have $f'\notin\mathscr{A}$. Then we are done  by letting $h=f'$.
\item
 Assume that there is a signature $f\in\overline{\mathcal{F}}^{\operatorname{even}}$ which has the support (\ref{hdh}).
By the same statement as the case (\ref{dhd}), we can assume that
\[\underline{f}(x_1, x_2, x_3)=\frak i^{a_1x_1+a_2x_2+a_3x_3+2a_{12}x_1x_2+2a_{13}x_1x_3+2a_{23}x_2x_3+2a_{123}x_1x_2x_3}.\]
Let $\widehat{f}=\left[\begin{smallmatrix}
1 & 0\\
0 & \rho^d
\end{smallmatrix}\right]^{\otimes \frac{11}{2}k}f$, then
\[\underline{\widehat{f}}=\rho^{\frac{dk}{2}(6x_1+6x_2+6x_3+6x_1x_2+6x_1x_3+6x_2x_3+4x_1x_2x_3)}f(x_1, x_2, x_3),\]
i.e.,
\[\underline{\widehat{f}}=\frak i^{(3d+a_1)x_1+(3d+a_2)x_2+(3d+a_3)x_3+
  (3d+2a_{12})x_1x_2+(3d+2a_{13})x_1x_3+(3d+2a_{23})x_2x_3+(2d+2a_{123})x_1x_2x_3}.\]
 Since $f\notin\mathscr{A}_k^d$ for some even $k$, $a_{123}$ is odd.
 By connecting the $\frac{k}{2}$ copies of variables $y_1+, y_2, y_1+y_2$ of $g$
 to the $\frac{k}{2}$ copies of variables of $x_1+x_2, x_1+x_3, x_2+x_3$ of $f$ using $(=_{2k})$,
 we get a $\frac{k}{2}$-type signature $f''$ which has the support
 \[(x_1)_{\frac{k}{2}}(x_2)_{\frac{k}{2}}(x_3)_{\frac{k}{2}}(x_1+x_2)_{\frac{k}{2}}(x_1+x_3)_{\frac{k}{2}}(x_2+x_3)_{\frac{k}{2}}(x_1+x_2+x_3)_{\frac{k}{2}}\] after collations
 and
 \[\underline{f''}=\frak i^{a'_1x_1+a'_2x_2+a'_3x_3+2a'_{12}x_1x_2+2a'_{13}x_1x_3+2a'_{23}x_2x_3+2a_{123}x_1x_2x_3}.\]
 Note that $a_{123}$ is odd. Then we are done  by Lemma~\ref{h-hhh} since we have $g$ and $f''$ in hand..
\end{itemize}

 Now we can assume that there are no signatures that have the support (\ref{ddd}), (\ref{hhh}), (\ref{hdh}) or (\ref{dhd}) in
$\overline{\mathcal{F}}^{\operatorname{even}}$,
i.e., all the signatures in $\overline{\mathcal{F}}^{\operatorname{even}}$ have the supports
(\ref{qqq}), (\ref{3q3}), (\ref{q3q}) or (\ref{333}).
For $f\in\overline{\mathcal{F}}^{\operatorname{even}}$,
if $f^4\notin\mathscr{A}$, then we are done by letting $h=f^4.$
Otherwise, we have
\[\underline{f}(x_1, x_2, x_3)=\beta^{a_1x_1+a_2x_2+a_3x_3+2a_{12}x_1x_2+2a_{13}x_1x_3+2a_{23}x_1x_2x_3+4a_{123}x_1x_2x_3}.\]
Moreover, by pinning $x_1=0$, we get a rank-2 signature $u$ that has the support
\[(x_2)_{\frac{k}{2}}(x_3)_{\frac{k}{2}}(x_2+x_3)_{\frac{k}{2}}\]
and
\[\underline{u}=\beta^{a_2x_2+a_3x_3+2a_{23}x_2x_3}.\]
If one of $a_2, a_3, a_{23}$ is nonzero modulo 4, then we are done by letting $h=gg'$.
Otherwise, we have $a_2\equiv a_3\equiv a_{23}\equiv 0\pmod 4$.
Moreover, by pinning $x_2=0, x_3=0$, we have
$a_2\equiv a_2\equiv a_3\equiv a_{12}\equiv a_{13}\equiv a_{23}\equiv 0\pmod 4$, i.e.,
\[\underline{f}(x_1, x_2, x_3)=\frak i^{a_1x_1+a_2x_2+a_3x_3+2a_{12}x_1x_2+2a_{13}x_1x_3+2a_{23}x_1x_2x_3+a_{123}x_1x_2x_3}.\]
Note that
\begin{itemize}
\item if $a_{123}$ is odd, then $f\notin\mathscr{A}_k^d$ for any even $d\in[k]$;
\item if $a_{123}\equiv 0\pmod 4$, then $f\notin\mathscr{A}_k^d$ for $d\equiv 2\pmod 4$ and
$f\in\mathscr{A}_k^d$ for $d\equiv 0\pmod 4$;
\item if $a_{123}\equiv 2\pmod 4$, then $f\notin\mathscr{A}_k^d$ for $d\equiv 0\pmod 4$ and
$f\in\mathscr{A}_k^d$ for $d\equiv 2\pmod 4$.
\end{itemize}

Firstly, assume that $a_{123}$ is odd.
Take two copies of $f$ with free variable set $\{x_1, x_2, x_3\}$ and $\{y_1, y_2, y_3\}$.
By connecting the bundles $(x_1)$ and $(y_1)$, $(x_2)$ and $(y_2)$,
$(x_3)$ and $(y_3)$, $(x_1+x_2)$ and $(y_1+y_2)$, $(x_1+x_3)$ and $(y_1+y_3)$, $(x_2+x_3)$ and $(y_2+y_3)$, $(x_1+x_2+x_3)$ and $(y_1+y_2+y_3)$
by $(=_{2k})$ respectively, we get a $\frac{k}{2}$-type signature $f'''$ and
\[\underline{f'''}(x_1, x_2, x_3)=\frak i^{2a_1x_1+2a_2x_2+2a_3x_3+4a_{12}x_1x_2+4a_{13}x_1x_3+4a_{23}x_1x_2x_3+2a_{123}x_1x_2x_3}.\]
Note that $a_{123}$ is odd. Then by Lemma~\ref{h-hhh}, we are done since we have $g$ and $f'''$ in hand.

If $a_{123}\equiv 0\pmod 4$, since $\mathcal{F}\nsubseteq\mathscr{A}_k^d$ for any $k\in[k]$,
there exists another signature $f^*\in\overline{\mathcal{F}}^{\operatorname{even}}$
such that $f^*$ is $\frac{k}{4}$-type and
\[\underline{f^*}(x_1, x_2, x_3)=\frak i^{a^*_1x_1+a^*_2x_2+a^*_3x_3+2a^*_{12}x_1x_2+2a^*_{13}x_1x_3+2a^*_{23}x_1x_2x_3+a^*_{123}x_1x_2x_3},\]
where $a^*\equiv 2\pmod 4$. Take one copy of $f$ with free variables $\{x_1, x_2, x_3\}$ and
one copy of $f^*$ with free variables $\{y_1, y_2, y_3\}$.
By connecting the bundles $(x_1)$ and $(y_1)$, $(x_2)$ and $(y_2)$,
$(x_3)$ and $(y_3)$, $(x_1+x_2)$ and $(y_1+y_2)$, $(x_1+x_3)$ and $(y_1+y_3)$, $(x_2+x_3)$ and $(y_2+y_3)$, $(x_1+x_2+x_3)$ and $(y_1+y_2+y_3)$
by $(=_{2k})$ respectively, we get a $\frac{k}{2}$-type signature $f''''$ and
\[\underline{f''''}(x_1, x_2, x_3)=\frak i^{c_1x_1+c_2x_2+c_3x_3+2c_{12}x_1x_2+2c_{13}x_1x_3+2c_{23}x_1x_2x_3+c_{123}x_1x_2x_3},\]
where $c_i=a_i+a^*_i, c_{jk}=a_{jk}+a^*_{jk}$ for $1\leq i\leq 3, 1\leq j<k\leq 3$ and $c_{123}=a_{123}+a^*_{123}$.
Note that $c_{123}\equiv 2\pmod 4$. Then by Lemma~\ref{h-hhh}, we are done since we have $g$ and $f''''$ in hand.
\end{proof}

Now we can prove Theorem~\ref{cspk-dic-thr}.
\begin{proof}
We will prove the theorem by induction on $k$.
Note that the theorem has been proved for the cases $k=1, 2$ by Theorem~\ref{csp-dic}
and Theorem~\ref{even-csp-dic}.
In the following we assume that $k\geq 3.$

If $\mathcal{F}\subseteq\mathscr{P}$ or $\mathcal{F}\subseteq\mathscr{A}_k^d$ for some $k\in[k]$, the tractability is obvious.
Then we assume that $\mathcal{F}\nsubseteq\mathscr{P}$ and $\mathcal{F}\nsubseteq\mathscr{A}_k^d$ for any $d\in[k]$.

For any $d'\in[k]$, by Lemma~\ref{non-product} and $\mathcal{F}\nsubseteq{\mathscr{P}}$,  we can construct
$g\notin\mathscr{P}$ and there exist $g_1, g_2, \cdots, g_{k'}$ such that $g=g_1g_2\cdots g_{k'}$, where
the signatures $g_1, g_2, \cdots, g_{k'}$ can be constructed in \#CSP$^k(\neq_2, \mathcal{F})$, i.e.,
\begin{equation}\label{nproduct}
\#\operatorname{CSP}_{k}(\neq_2, g)\leq_T\#\operatorname{CSP}_{k}(\neq_2, \mathcal{F}).
\end{equation}

By $\mathcal{F}\nsubseteq\mathscr{A}_k^d$ and Lemma~\ref{non-affine-transformable}, we can construct general equality
$[1, 0, \cdots, 0, a]$ of arity $r<k$, or we can construct $h$ such that for some $k'|k$, $k'>1$
$h\notin\mathscr{A}_{\frac{k}{k'}}^{d'}$ for any $d'\in[\frac{k}{k'}]$, and
there exist $f_1, f_2, \cdots, f_{k'}$
such that $h=f_1, f_2, \cdots, f_{k'}$.

If we have $[1, 0, \cdots, 0, a]$ of arity $r<k$, then we are done by Lemma~\ref{general-equality-induction}.
Otherwise, we can construct $f_1, f_2, \cdots, f_{\frac{k}{k'}}$ in \#CSP$^k(\neq_2, \mathcal{F})$
such that $h=f_1, f_2, \cdots, f_{k'}$ , i.e.,
\begin{equation}\label{naffine}
\#\operatorname{CSP}_{k}(\neq_2, h, \mathcal{F})\leq_T\#\operatorname{CSP}_{k}(\neq_2, \mathcal{F}),
\end{equation}
where $f\notin\mathscr{A}_{\frac{k}{k'}}^{d'}$ for any $d'\in[\frac{k}{k'}]$.

By  (\ref{nproduct}), (\ref{naffine}) and Lemma~\ref{induction},
we have
\[\#\operatorname{CSP}_{\frac{k}{k'}}(\neq_2, g, h)\leq_T\#\operatorname{CSP}_{k}(\neq_2, \mathcal{F}),\]
where $g\notin\mathscr{P}$ and $h\notin\mathscr{A}_{\frac{k}{k'}}^{d'}$ for any $d'\in[\frac{k}{k'}]$.
Then by induction, $\#\operatorname{CSP}_{\frac{k}{k'}}(\neq_2, f, g)$ is \#P-hard. Thus $\#\operatorname{CSP}_{k}(\neq_2, \mathcal{F})$ is \#P-hard.
\end{proof}

\bibliography{main}{}


\end{document}